%% file: Qbipartite_2_NIP.tex
\documentclass[11pt]{amsart}

\makeatletter
\usepackage{amssymb}
\usepackage{enumitem}
\usepackage{array}
\usepackage{booktabs}
\usepackage{colortbl}
\newcolumntype{R}[1]{>{\raggedright\arraybackslash}p{#1}}
\usepackage{xcolor}
\usepackage[margin=1.2in]{geometry}
\usepackage[skins,breakable]{tcolorbox}
\usepackage{tikz}
\usetikzlibrary{arrows.meta}
\usepackage[mathcal]{euscript}

\definecolor{deepblue}{RGB}{47,62,85}
\definecolor{midblue}{RGB}{65,99,137}
\definecolor{urlpurple}{RGB}{98,74,131}
\definecolor{palepurple}{RGB}{236,230,246}
\colorlet{keyframe}{urlpurple!65!white}
\colorlet{keyfill}{palepurple!28!white}

\usepackage{hyperref}
\hypersetup{colorlinks=true, linkcolor=deepblue, citecolor=midblue, urlcolor=urlpurple,
  pdftitle={Classification of bipartite observables II: NIP observables and Todor\v{c}evi\'c's trichotomy},
  pdfauthor={Eduardo Due\~nez and Jos\'e Iovino}}

\RequirePackage{amssymb}
\RequirePackage{mathrsfs}

\newcommand{\CC}{\mathbb{C}}

\newcommand{\NN}{\mathbb{N}}

\newcommand{\QQ}{\mathbb{Q}}
\newcommand{\RR}{\mathbb{R}}

\newcommand{\ZZ}{\mathbb{Z}}

\newcommand{\cR}{\mathcal{R}}
\newcommand{\cS}{\mathcal{S}}
\newcommand{\cT}{\mathcal{T}}

\newcommand{\fA}{\mathfrak{A}}

\newcommand{\fD}{\mathfrak{D}}

\makeatletter
\RequirePackage{amsmath}
\RequirePackage{amssymb}
\RequirePackage{mathtools}

\DeclareMathOperator{\Tr}{Tr}

\newcommand{\one}{\mathbf{1}}

\newcommand{\oA}{\otimes_{\mathsf{A}}}
\newcommand{\oB}{\otimes_{\mathsf{B}}}

\makeatletter
\RequirePackage[nosort]{cleveref}

\newcommand{\di@crefnames}[3]{\crefname{#1}{#2}{#3}\Crefname{#1}{#2}{#3}}
\di@crefnames{section}{Section}{Sections}
\di@crefnames{subsection}{Section}{Sections}
\di@crefnames{subsubsection}{Section}{Sections}
\di@crefnames{chapter}{Chapter}{Chapters}
\di@crefnames{part}{Part}{Parts}
\di@crefnames{appendix}{Appendix}{Appendices}
\di@crefnames{table}{Table}{Tables}
\di@crefnames{figure}{Figure}{Figures}
\di@crefnames{theorem}{Theorem}{Theorems}
\di@crefnames{keytheorem}{Theorem}{Theorems}
\di@crefnames{proposition}{Proposition}{Propositions}
\di@crefnames{corollary}{Corollary}{Corollaries}
\di@crefnames{lemma}{Lemma}{Lemmas}
\di@crefnames{fact}{Fact}{Facts}
\di@crefnames{conjecture}{Conjecture}{Conjectures}
\di@crefnames{definition}{Definition}{Definitions}
\di@crefnames{example}{Example}{Examples}
\di@crefnames{remark}{Remark}{Remarks}
\di@crefnames{question}{Question}{Questions}
\di@crefnames{axiom}{Axiom}{Axioms}
\di@crefnames{notation}{Notation}{Notations}
\di@crefnames{exercise}{Exercise}{Exercises}

\AtBeginDocument{%
}
\makeatletter
\def\@currname{di-theorem-rudin}\def\@curroptions{}
\RequirePackage{amsthm}
\RequirePackage{aliascnt}

\def\di@rudin@parent{section}
\relax

\newtheoremstyle{rudin}%
  {\topsep}{\topsep}{\itshape}{}{\bfseries}{.}{ }%
  {\thmname{#1}\thmnumber{ #2}\thmnote{ (#3)}}
\newtheoremstyle{rudindef}%
  {\topsep}{\topsep}{\upshape}{}{\bfseries}{.}{ }%
  {\thmname{#1}\thmnumber{ #2}\thmnote{ (#3)}}

\theoremstyle{rudin}
\ifx\di@rudin@parent\@empty
  \newtheorem{theorem}{Theorem}
\else
  
\fi
\newcommand{\DiSharedTheorem}[2]{%
  \newaliascnt{#1}{theorem}%
  \newtheorem{#1}[#1]{#2}%
  \aliascntresetthe{#1}%
}
\DiSharedTheorem{proposition}{Proposition}
\DiSharedTheorem{corollary}{Corollary}
\DiSharedTheorem{lemma}{Lemma}
\DiSharedTheorem{fact}{Fact}
\theoremstyle{rudindef}
\DiSharedTheorem{definition}{Definition}
\DiSharedTheorem{example}{Example}
\DiSharedTheorem{remark}{Remark}
\DiSharedTheorem{question}{Question}
\makeatletter
\RequirePackage{xcolor}
\RequirePackage{tikz}
\usetikzlibrary{arrows.meta}

\providecolor{deepblue}{RGB}{30,78,140}
\providecolor{midblue}{RGB}{52,120,190}
\providecolor{urlpurple}{RGB}{120,81,169}
\providecolor{palepurple}{RGB}{236,230,246}

\newcommand{\VennRing}[3]{({(#1+#2)/2},0) ellipse ({(#2-#1)/2} and #3)}

\newcommand{\HeptaCol}[4]{%
  \fill[#3, rounded corners=4pt] (#1,0.85) rectangle ++(3.65,-9.05);
  \node[vtext, font=\footnotesize\bfseries, align=center, anchor=south] at ({#1+1.825},-0.08) {#2};
  \node[vtext, font=\scriptsize\itshape] at ({#1+1.825},-0.24) {#4};}
\newcommand{\HeptaPanel}[5]{%
  \draw[white, line width=1pt, rounded corners=3pt] (#1+0.1,#2-0.05) rectangle ++(3.45,-3.65);
  \node[vtext, font=\footnotesize\bfseries] at ({#1+1.825},#2-2.1) {#3};
  \node[vtext, text width=3.3cm, align=center, anchor=north] at ({#1+1.825},#2-2.28) {\textit{#4}\\[1pt] #5};}
\newcommand{\HeptaBump}[4]{\fill[bump!55] (#1,#4) rectangle ++(#2,#3); \draw[bump, line width=0.4pt] (#1,#4) rectangle ++(#2,#3);}
\makeatletter

\theoremstyle{rudin}
\DiSharedTheorem{keytheorem}{Theorem}
\tcolorboxenvironment{keytheorem}{%
  colback=keyfill, colframe=keyframe, boxrule=0.4pt, arc=1pt,
  left=6pt, right=6pt, top=2pt, bottom=2pt,
  before skip=10pt, after skip=10pt}

\theoremstyle{rudindef}

\crefname{enumi}{Question}{Questions}
\Crefname{enumi}{Question}{Questions}

\newcommand{\Cp}{C_p}
\newcommand{\Bone}{B_1}
\makeatother

\title[Classification of bipartite observables II]{Classification of bipartite observables II:\\ NIP observables and Todor\v{c}evi\'c's trichotomy}
\author{Eduardo Due\~nez}
\address{Department of Mathematics, The University of Texas at San Antonio, One UTSA
Circle, San Antonio, TX 78249, USA}
\email{Eduardo.Duenez@utsa.edu}

\author{Jos\'e Iovino}
\address{Department of Mathematics, The University of Texas at San Antonio, One UTSA
Circle, San Antonio, TX 78249, USA}
\email{jose.iovino@utsa.edu}
\date{\today}
\thanks{This version: \today. The most recent version of this paper is available at \url{https://utsa-quantum-math.github.io/resources/bipartite-observables-2-nip/}; it may supersede the version deposited on arXiv.}
\subjclass[2020]{Primary 03C45, 54D30, 81P15; Secondary 03C66, 03E15, 26A21, 46L30, 54C35, 82B10, 82B20, 82B26}
\keywords{Bipartite observables, stability, NIP, dividing lines, Rosenthal compacta, Todorčević's trichotomy, Argyros--Dodos--Kanellopoulos heptachotomy, first Baire class, orders of limits, Lee--Yang theorem, transverse-field Ising model}

\begin{document}

\begin{abstract}
  This is the second paper in a series that reads the dividing lines of Shelah’s classification theory as properties of bipartite quantum observables.
  For an observable of two independently prepared systems, Alice’s and Bob’s, we study its idealized responses: the expectation values of Alice’s limiting preparations, as functions of Bob’s preparations, which serve as probes, with Bob’s system idealized first.
  We show that the observable is NIP, that is, no sequence of Alice’s preparations lets Bob’s probes realize every finite pattern of high and low values at a fixed positive gap, exactly when all idealized responses are of the first Baire class, pointwise limits of sequences of continuous functions;
  they then form a separable Rosenthal compactum.
  Because Alice can mix preparations, this compactum is convex, and hereditary separability forces it to be metrizable, so that NIP observables occupy three of the four topological classes defined by metrizability and Todorčević’s trichotomy.
  Comparisons of photon numbers, of positions and of events in spacetime realize these three classes, and trees of preparations realize the seven minimal families of the Argyros--Dodos--Kanellopoulos heptachotomy.
  For the ferromagnetic transverse-field Ising model at positive temperature, the finite-volume magnetizations with offset longitudinal fields form a stable family when the spontaneous magnetization vanishes, and a family whose closure is first countable but not hereditarily separable when it does not.
\end{abstract}

\maketitle

\setcounter{tocdepth}{1}
\tableofcontents

%#
%## SECTION: Introduction
%#
\section{Introduction}\label{sec:intro}

In Part~I~\cite{Duenez-Iovino:2026a} we studied when two independently prepared quantum systems, Alice’s and Bob’s, can be idealized in either order without changing the expectation value of a bipartite observable.
We work, as there, with separable Hilbert spaces and bounded self-adjoint observables.
An actual preparation is described by a density operator, which defines a state of the algebra of bounded operators on its system’s Hilbert space;
the remaining states, which arise as limits of actual preparations in the topology of convergence of expectation values, serve as the mathematical model of idealized preparations.
An observable is \emph{stable} when its expectation value does not depend on which system is idealized first, that is, when, for all sequences of actual preparations of the two systems, the two iterated limits of its expectation values agree whenever both exist.
In Part~I we identify this condition with properties from analysis, topology, probability and model theory.
Stable examples include the swap test, which estimates the overlap of two prepared states, Hong--Ou--Mandel coincidences, in which two photons meeting at a beam splitter leave by different ports, and discrete equality tests, such as “equal photon numbers”;
these observables are even uniformly stable, with quantitative bounds independent of the dimensions of the Hilbert spaces.
In this paper we ask what regularity the idealized responses retain when the two orders of idealization give different answers.

Other observables, no less natural, are of a different kind:
for a quantity of Alice’s system, given by a self-adjoint operator
$A$, and a quantity of Bob’s, given by a self-adjoint operator $B$, the order comparison $[A<B]$ is the orthogonal projection that represents the event “Bob’s value of $B$ exceeds Alice’s value of $A$”.
The comparison $[A<B]$ is unstable whenever $A$ and
$B$ have pure point spectra, that is, orthonormal bases of eigenvectors, and their eigenvalues are unbounded above, as for the observables “Bob’s system has more photons than Alice’s” and “Bob’s oscillator has the higher energy”:
if Alice and Bob prepare eigenstates of ever higher energy, the probability that Bob’s energy is the higher is $1$ when Bob’s system is idealized first and $0$ when Alice’s system is.
The order of the idealizations matters because an order relation is built into the observable, and observables of this kind supply the principal examples of this paper.

An order is not yet independence.
Call a sequence of Alice’s preparations \emph{independent} for an observable if there are real numbers
$a<b$ such that, for any two disjoint finite sets of terms of the sequence, a single preparation of Bob’s system, a \emph{probe}, gives expectation values at most $a$ on the first set and at least $b$ on the second.
The existence of such a sequence is the \emph{independence property for Alice’s responses};
its absence is \emph{NIP for Alice’s responses}, where NIP abbreviates “no independence property”.
Unqualified NIP will refer to this direction, since exchanging Alice and Bob can change the property (see \cref{cor:registers}).
In \cref{sec:orders} we show that $[A<B]$ is uniformly NIP for all self-adjoint operators $A$ and $B$:
for each gap $b-a$, the size of a finite set of preparations on which probes realize every pattern is bounded by a number depending only on the gap.
An independent sequence occurs instead for the fringe observable, whose local detector response follows an interference pattern (see \cref{prop:fringes}).
If Alice’s preparations produce fringes of wavenumbers $1,8,8^2,\dots$ on Bob’s screen, in units in which the screen has length $2\pi$, then every finite choice of bright and dark responses is realized by a position of Bob’s detector.

\enlargethispage{3pt}
The division of first-order theories, sets of axioms in a formal language, by properties such as stability and NIP, which model theorists call dividing lines, is the subject of Shelah’s classification theory~\cite{Shelah:1990a}.
Part~I and the present paper study these properties for the expectation predicate of a fixed observable in its preparation structure, the structure formed by the preparations of the two systems together with this predicate:
stability in Part~I and the weaker condition NIP here.
Uniform stability and uniform NIP require this predicate to have the corresponding property in every model of the theory of the preparation structure, and stability or NIP of the entire theory requires it for every formula.
In model theory, an infinite linear order is the prototype of an unstable NIP structure, that is, of a set equipped with relations and operations whose theory is unstable and NIP.
The rational order $(\QQ,<)$ is unstable, yet it is NIP:
over a finite set of parameters it has only polynomially many types in the size of the set, a type being a complete description of a single element in terms of the parameters.
The ordered field of real numbers is likewise unstable and NIP, and so, more generally, is every o-minimal structure, that is, every ordered structure in which every set of elements defined by a formula with parameters
(a definable set) is a finite union of points and intervals;
the o-minimal structures are the best-known unstable NIP structures.\footnote{Other standard examples are the ordered group of integers
  $(\ZZ,+,<)$, whose theory is Presburger arithmetic, and, more generally, ordered abelian groups~\cite{Gurevich-Schmitt:1984};
  the fields of
  $p$-adic numbers, for every prime $p$;
  and algebraically closed valued fields
  (see~\cite[Chapter~2]{Simon:2015b}).
  For o-minimal
  structures see~\cite{Pillay-Steinhorn:1986a,Knight-Pillay-Steinhorn:1986};
  the real field with the exponential function is o-minimal by a theorem of Wilkie~\cite{Wilkie:1996}.}

In these model-theoretic examples the formulas are two-valued, even when the underlying set is a continuum.
For observables the formulas are real-valued, because they are expectation values, and a fixed positive gap separates the high and low values in the independence property.
Real-valued forms of NIP already occur in quantum theory through the fat-shattering dimension governing the learnability of quantum states and measurements in finite dimensions~\cite{Aaronson:2007,Cheng-Hsieh-Yeh:2015}.
Here the object of classification is the entire compactum of idealized responses in the topology of pointwise convergence.
The relevant structure theory is that of Rosenthal compacta, compact spaces homeomorphic to pointwise compact families of functions of the first Baire class on a Polish space (a separable, completely metrizable space).
The passage from finite patterns to the regularity of this compactum is the dictionary on which the classification rests;
we state it, with the results that follow from it, in \cref{sec:main}.
The classification retains both the real values of expectation predicates and their affine dependence on probabilistic mixtures, and we formulate it in Keisler’s model theory of real-valued structures~\cite{Keisler:2023} (see \cref{sec:logical-framework}).

\paragraph{Related work.}
Model theory has met operator algebras and quantum theory along other routes.
Farah, Hart and Sherman developed the continuous model theory of C$^*$-algebras,
the algebras of bounded operators on a Hilbert space that are closed under adjoints and in the norm, and of tracial von Neumann algebras, those that are closed under adjoints and in the weak operator topology
(the topology of convergence of all matrix elements) and carry a trace, a faithful normal state that vanishes on commutators;
in this setting they studied stability~\cite{Farah-Hart-Sherman:2013};
there the structures are the algebras themselves, and stability is a property of their theories, whereas here the object is a single observable together with its idealized responses.
Zilber gave a model-theoretic semantics of the canonical commutation relation~\cite{Zilber:2016b}.
On the side of real-valued NIP, Ben Yaacov showed that NIP is preserved under Keisler randomization,
the passage from a theory to a theory of random variables with values in its models~\cite{Yaacov:2009}, whence averages of NIP relations over random parameters are again NIP;
the bound of \cref{prop:uniform-orders} is an elementary, quantitative instance of this phenomenon for relations between quantities measured by Alice and by Bob.
Hrushovski and Pillay studied invariant measures in NIP theories~\cite{Hrushovski-Pillay:2011b},
Simon connected NIP with Rosenthal compacta~\cite{Simon:2015a}, and Glasner and Megrelishvili used Todorčević’s trichotomy to build a hierarchy of tame dynamical systems~\cite{Glasner-Megrelishvili:2022}.
To our knowledge, none of these works considers the orders of limits of a bipartite observable,
the compactum of its idealized responses, or the place of physical observables in Todorčević’s trichotomy and in the Argyros--Dodos--Kanellopoulos heptachotomy;
Part~I concerns the first of these topics, and the present paper concerns all three.

\paragraph{Organization of the paper.}
We state the main results in \cref{sec:main}.
The facts about Rosenthal compacta that we need, and a classification of countable NIP families of continuous functions that we derive from them (see \cref{thm:nipclasses}), occupy \cref{sec:prelim,sec:classification}.
In \cref{sec:responses,sec:levels} we attach to each observable its compactum of idealized responses, prove \cref{thm:levels}, and show that, for the compactum of idealized responses of an NIP observable, hereditary separability implies metrizability
(see \cref{thm:convex} and \cref{cor:nip2}).
The examples follow in \cref{sec:orders,sec:hepta};
spontaneous magnetization, classical and quantum, is the subject of \cref{sec:ssb}, and two further orders of limits are treated in \cref{sec:spectral}.
Open questions are collected in \cref{sec:questions}.
Standard arguments are cited, and sketched where the adaptation is not routine.

%#
%## SECTION: Main results
%#
\section{Main results}\label{sec:main}

\noindent The results rest on a dictionary between finite patterns of expectation values and the regularity of idealized responses, which we now describe.
Fix a bounded self-adjoint observable $C$ on the tensor product $H_1\otimes H_2$ of Alice’s and Bob’s separable Hilbert spaces.
For each density operator $\rho$ on $H_1$, the response on actual probes is the function $r_\rho\colon\fD(H_2)\to[-\|C\|,\|C\|]$ given by $r_\rho(\sigma)=\varphi_C(\rho,\sigma)=\Tr((\rho\otimes\sigma)C)$, where $\fD(H_2)$ denotes Bob’s density operators.
Averaging against Alice’s preparation gives an effective observable $T_C(\rho)$ on $H_2$, characterized by $\Tr(\sigma T_C(\rho))=\Tr((\rho\otimes\sigma)C)$ for every density operator $\sigma$ on $H_2$.
The probe algebra $\fA_C$ is the smallest unital norm-closed algebra containing these effective observables and their adjoints, and its state space $Y_C$ is the probe space.
An actual probe $\sigma\in\fD(H_2)$ determines the state $\mathfrak{p}_\sigma\colon\fA_C\to\CC$ given by $\mathfrak{p}_\sigma(a)=\Tr(\sigma a)$, an element of $Y_C$.
For fixed $C$ and $\rho$, the completed response is the function $\mathfrak{r}_\rho\colon Y_C\to[-\|C\|,\|C\|]$ given by $\mathfrak{r}_\rho(\omega_2)=\omega_2(T_C(\rho))$;
it satisfies $\mathfrak{r}_\rho(\mathfrak{p}_\sigma)=r_\rho(\sigma)=\varphi_C(\rho,\sigma)$ for every actual probe $\sigma$.
In \cref{lem:probe-space} we show that $Y_C$ is compact metrizable in the topology of pointwise convergence on $\fA_C$, that the records $\mathfrak{p}_\sigma$ of actual probes are dense in it, and that every completed response $\mathfrak{r}_\rho$ of an actual preparation is continuous.
We call the members of the pointwise closure of the completed responses of actual preparations the \emph{idealized responses}, including the completed responses themselves.
In \cref{lem:closure} we identify this closure with the responses represented by all states of $B(H_1)$, with Bob’s system idealized first;
distinct states may represent the same response.

The compactum of idealized responses of an NIP observable is a separable Rosenthal compactum (see \cref{lem:separable}).
A space is \emph{hereditarily separable} if every subspace has a countable dense subset, and \emph{first countable} if every point has a countable base of neighborhoods.
For Rosenthal compacta, metrizability implies hereditary separability, which implies first countability (see \cref{sec:classification}), and Todorčević’s trichotomy~\cite{Todorcevic:1999} supplies a canonical witness for the failure of each converse:
a non-metrizable separable Rosenthal compactum that is hereditarily separable contains a copy of the split interval;
one that is first countable but not hereditarily separable contains a copy of the Alexandroff duplicate of the Cantor set or of its analogue for the split Cantor set;
and one that is not first countable contains a copy of the one-point compactification of a discrete set of size continuum.
The split interval is the unit interval with each point doubled into two adjacent points, in the order topology, and the split Cantor set is the Cantor set doubled in the same way;
the Alexandroff duplicate of the Cantor set adds to the Cantor set an isolated twin of each of its points, in such a way that the twins of a sequence of points converging to a point distinct from all of them converge to the same point.
We call the four classes of separable Rosenthal compacta so obtained, the metrizable ones and the three classes of the trichotomy, the \emph{four topological classes}.
An NIP family with separable closure is NIP$^{\mathrm{fc}}$, NIP$^{\mathrm{hs}}$ or NIP$^{\mathrm{met}}$ when its closure is, respectively, first countable, hereditarily separable or metrizable (see \cref{sec:classification});
in particular, NIP$^{\mathrm{met}}$ means that one countable set of probes distinguishes all idealized responses (see \cref{prop:countable-probes,prop:levels-separation}).

Argyros, Dodos and Kanellopoulos refined the trichotomy to the level of families, in their heptachotomy~\cite{Argyros-Dodos-Kanellopoulos:2008b}.
Index a pointwise bounded family of pairwise distinct functions on a Polish space by the dyadic tree, the set of finite sequences of zeros and ones, and suppose that its pointwise closure consists of functions of the first Baire class.
Then there is a dyadic subtree, a set of nodes that,
ordered by extension, forms a copy of the dyadic tree, such that the subfamily indexed by it, reindexed by the dyadic tree through the isomorphism that also preserves the lexicographic order, is equivalent to one of exactly seven minimal families;
two families indexed by the same set are \emph{equivalent} when the correspondence between them extends to a homeomorphism of their pointwise closures,
and a family is \emph{minimal} when it is equivalent to its restriction to every dyadic subtree, reindexed in the same way.
For an NIP observable, a tree of Alice’s preparations with pairwise distinct responses gives such a family,
for instance a tree of successively sharper localizations of a particle, and the heptachotomy shows that, up to passing to a subtree, there are seven ways, and only seven, in which the responses of such a tree can approach their idealized limits.
Through the connection among NIP, the Vapnik--Chervonenkis dimension
(the two-valued form of the fat-shattering dimension) and Rosenthal compacta, the trichotomy also classifies NIP formulas whose parameters range over a compact metrizable space~\cite[Section~4.2]{Duenez-Iovino-MatosWiederhold-Salvetti-Tall:2026}.

The classification of observables and of families of responses is drawn in \cref{fig:venn},
with the physical systems that inhabit each class and the minimal families that each class contains, and in \cref{fig:hepta} each of the seven minimal families of the heptachotomy is followed to its idealized limits.
The classes are listed in \cref{tab:classes} and the minimal families in \cref{tab:families}, each with a physically natural example and with the feature of the example that the class or family isolates.
The two tables distinguish the classification of a whole response space from that of a selected tree of responses.

\begin{figure}[ht]
  \centering
  \makebox[\textwidth][c]{\input{venn_body.tex}}
  \caption{The classes of observables and of families of responses, nested from uniformly stable to NIP, with non-NIP examples outside the nested classes. Physical examples and the minimal families $D_1,\dots,D_7$ are placed according to the classes of their response closures.}
  \label{fig:venn}
\end{figure}

\begin{table}[ht]
  \centering
  \small
  \renewcommand{\arraystretch}{1.3}
  \setlength{\tabcolsep}{4pt}
  \arrayrulecolor{urlpurple!55!white}
  \begin{tabular}{R{1.9cm}R{1.95cm}R{4.0cm}R{4.4cm}r}
    \toprule
    \rowcolor{palepurple}
    Class                                   & Topological criterion or witness & Meaning for responses           & Physical setting                 & Result                                    \\
    \midrule
    stable                                  & continuous       & idealizations commute           & swap test; gapped transverse-current response        & \ref{thm:sc}                       \\
    NIP$^{\mathrm{met}}$                                 & metrizable       & one countable set of probes separates all responses   & photon-number comparison; spin-chain scanning             & \ref{thm:orders}, \ref{cor:registers}                   \\
    \rowcolor{black!4}
    $\mathrm{NIP}^{\mathrm{hs}}\setminus\mathrm{NIP}^{\mathrm{met}}$ & split interval   & hereditarily separable but nonmetrizable   & absent for full observable compacta              & \ref{thm:convex}                   \\
    $\mathrm{NIP}^{\mathrm{fc}}\setminus\mathrm{NIP}^{\mathrm{hs}}$ & $\hat{D}(2^\NN)$ or $\hat{D}(S(2^\NN))$ & each response has its own countable determining probes       & positions; quadratures; Ising with offsets    & \ref{thm:orders}, \ref{prop:cv-helly}, \ref{thm:qssb}                     \\
    $\mathrm{NIP}\setminus\mathrm{NIP}^{\mathrm{fc}}$   & $\hat{A}(2^\NN)$ & some response has no countable determining set of probes & causal order; phases on a circle & \ref{thm:causal}, \ref{thm:phases} \\
    not NIP                                 & $\beta\NN$       & one infinite preparation sequence admits every finite high/low pattern at a fixed positive gap          & interference fringes             & \ref{prop:fringes}                 \\
    \bottomrule
  \end{tabular}

  \vspace{8pt}
  \caption{From the regularity of idealized responses to their physical interpretation, class by class.}
  \label{tab:classes}
\end{table}

\begin{table}[ht]
  \centering
  \small
  \renewcommand{\arraystretch}{1.3}
  \setlength{\tabcolsep}{5pt}
  \arrayrulecolor{urlpurple!55!white}
  \begin{tabular}{R{1.3cm}R{2.45cm}R{4.4cm}R{5.93cm}}
    \toprule
    \rowcolor{palepurple}
    Family & Closure             & Pattern of idealization  & Selected preparations                            \\
    \midrule
    $D_1$  & $A(2^{<\NN})$       & escape to infinity       & photon number states                   \\
    $D_2$  & $2^{\le\NN}$        & infinite configurations  & a spin chain read site by site         \\
    \addlinespace[3pt]
    $D_3$  & $\hat{S}_+(2^\NN)$  & approach from the right  & a particle in a box                    \\
    $D_4$  & $\hat{S}_-(2^\NN)$  & approach from the left   & a particle in a box                    \\
    \addlinespace[3pt]
    $D_6$  & $\hat{D}(2^\NN)$    & twins over a Cantor set  & a spin chain, sites and templates      \\
    $D_7$  & $\hat{D}(S(2^\NN))$ & approach from both sides & a particle in a box, half on each side \\
    \addlinespace[3pt]
    $D_5$  & $\hat{A}(2^\NN)$    & a point-supported limit  & light cones; templates                 \\
    \bottomrule
  \end{tabular}

  \vspace{8pt}
  \caption{The seven minimal families $D_1,\dots,D_7$ of the heptachotomy and their closures, the canonical compacta of \cref{sec:classification}, grouped, as in \cref{fig:hepta}, by the four topological classes (see \cref{thm:hepta-natural}).}
  \label{tab:families}
\end{table}

\begin{figure}[ht]
  \centering
  \makebox[\textwidth][c]{\input{hepta_body.tex}}
  \caption{Schematic realizations of the seven minimal families of the heptachotomy, grouped by the class, metrizable or one of the three classes of Todorčević’s trichotomy, that contains their closures (see \cref{thm:hepta-natural}). The NIP level at each class is the largest level it reaches.}
  \label{fig:hepta}
\end{figure}

The results behind the tables are the following.
\begin{enumerate}[label=\textup{(\Alph*)},leftmargin=*]
\item
  \emph{The dictionary.}
  An observable is stable exactly when every idealized response is continuous, and NIP exactly when every idealized response is of the first Baire class, that is, a pointwise limit of a sequence of continuous functions;
  the idealized responses of an NIP observable form a Rosenthal compactum, and every one of them is the limit of a sequence of responses of actual preparations (see \cite[Theorem~3F]{Bourgain-Fremlin-Talagrand:1978}).
  When the observable is not NIP, some sequence of Alice’s actual preparations is independent, and the pointwise closure of the responses of a subsequence is a copy of $\beta\NN$, the space of ultrafilters on the natural numbers (see \cref{thm:levels}).
\item
  \emph{Mixtures force metrizability.}
  Alice can mix two preparations, that is, make one or the other at random with given probabilities, and the response of the mixture is the corresponding convex combination of the two responses;
  hence the idealized responses form a convex set.
  A separable convex Rosenthal compactum that is hereditarily separable is metrizable (see \cref{thm:convex});
  hence NIP$^{\mathrm{hs}}$ observables are NIP$^{\mathrm{met}}$ (see \cref{cor:nip2}), and NIP observables fall into three of the four topological classes: the metrizable compacta and the last two classes of the trichotomy (see \cref{thm:observable-classes}).
\item
  \emph{Natural examples.}
  The comparison of photon numbers is NIP$^{\mathrm{met}}$;
  the comparison of positions or of quadratures of light (the two field amplitudes of a mode, which behave like a position and a momentum) is NIP$^{\mathrm{fc}}$
  but not NIP$^{\mathrm{hs}}$;
  and the causal order of a spacetime with one dimension of space and the comparison of two phases on a circle are NIP but not NIP$^{\mathrm{fc}}$ (see \cref{thm:orders,thm:causal,thm:phases} and \cref{prop:cv-helly}).
  Each minimal family is realized by a tree of preparations of one of these observables or of an observable that probes a chain of spins, by reading it site by site, by matching it with a template, or both (see \cref{thm:hepta-natural}).
  The scanning observable, which reads the chain site by site, is NIP for Alice’s responses but not for Bob’s, so NIP is not symmetric for observables (see \cref{cor:registers}), although uniform NIP is (see \cref{rem:symmetry}).
\item
  \emph{The quantum Ising model.}
  For the Ising model in a transverse field, a lattice of spins coupled along one axis in a field perpendicular to it, consider the family of magnetizations of finite systems, as functions of the longitudinal field
  (the field along the axis of the couplings), together with an offset field that moves the transition point
  (the field at which the magnetization changes sign).
  At positive temperature this family is stable, with a metrizable closure, when the spontaneous magnetization vanishes, and it is NIP$^{\mathrm{fc}}$ but not NIP$^{\mathrm{hs}}$ when the spontaneous magnetization is positive;
  the family without offsets is NIP$^{\mathrm{met}}$ in both cases (see \cref{thm:qssb}).
  The proof uses the real analyticity of the pressure (the limit per site of the logarithm of the partition function, the normalizing constant of the thermal equilibrium state) away from zero longitudinal field, which we derive from a Lee--Yang theorem for this model:
  the partition function of each finite system has no zeros off the imaginary axis of the complex longitudinal field (see \cref{prop:q-ly,prop:qanalytic}).
  We include a proof of this statement and discuss its relation to the quantum Lee--Yang theorems of Asano~\cite{Asano:1970b}, Suzuki and Fisher~\cite{Suzuki-Fisher:1971b}, and their formulation by Fröhlich and Rodriguez~\cite{Frohlich-Rodriguez:2012b}.
\item
  {\sloppy\emph{Orders of limits.}
  In the classical and quantum Ising models at positive temperature, the finite-volume magnetizations, with or without offset fields, form an unstable family exactly when the spontaneous magnetization is positive:
  a magnetization can remain when the volume tends to infinity before the external field is switched off, while reversing these limits gives zero.
  For the spectral families considered here, the Banks--Casher relation~\cite{Banks-Casher:1980} identifies the obstruction to stability with the limiting density of eigenvalues near zero of the Dirac operator governing quarks in a gluon field,
  and the criterion of Scalapino, White and Zhang~\cite{Scalapino-White-Zhang:1993}, which distinguishes superconductors, metals and insulators by their current responses, identifies it with the dynamic current-response limit minus the static one (see \cref{thm:ssb,thm:qssb,thm:chiral,thm:sc});
  for these families, stability is equivalent to the vanishing of the limiting valence condensate of the chosen ensemble, or to the agreement of the static and dynamic current-response limits, respectively.\par}
\end{enumerate}

The examples explain where the classes come from.
The compactum of idealized responses of the causal order fails to be first countable because an event that occurs at the time of Bob’s detection can influence a detection only at its own position.
The comparison of two phases lies in the same class:
on a circle, for each pair of antipodal points there is an idealized response that differs from the constant response of the uniform distribution only on probes that give weight to one of the two points.
Trees of preparations realize all seven minimal families, although the full compactum is convex, because the closure of the responses of a tree is a closed subset of that compactum and need not be convex.
At each translated transition point of the Ising model with positive spontaneous magnetization, the responses of the two oppositely magnetized pure phases form a split pair, two functions that differ only at the transition point, and the response of their equal mixture is the midpoint of the pair;
these midpoints give the same obstruction to hereditary separability as in the proof of \cref{thm:convex}.
\Cref{thm:convex} itself is a special case of a known result on locally connected Rosenthal compacta (see \cref{rem:convex-literature}), of which we give a direct proof.
Throughout, the structure theorems supply canonical witnesses for the response compacta, while the location of each physical example in its class is proved directly.

%#
%## SECTION: A family of continuous functions is NIP exactly when its closure consists of Baire-1 functions
%#
\section{A family of continuous functions is NIP exactly when its closure consists of Baire-1 functions}\label{sec:prelim}

\noindent We begin with the topological dichotomy on which the paper rests:
a pointwise bounded family of continuous functions on a compact metrizable space either has the independence property or has its pointwise closure inside the functions of the first Baire class.
In most of the paper the functions will be responses of Alice’s preparations and the points will be Bob’s probes.

%#
%### SUBSECTION: Functions of the first Baire class
%#
\subsection{Functions of the first Baire class}
Throughout, the set $\NN$ of natural numbers contains $0$.
For a topological space $Y$, let $\Cp(Y)$ be the set of continuous functions $Y\to\RR$ with the topology of pointwise convergence.
A function $f\colon Y\to\RR$ is of the \emph{first Baire class}, or \emph{Baire-1}, if it is the pointwise limit of a sequence of continuous functions, and we write
$\Bone(Y)$ for the set of these functions, again with the topology of pointwise convergence.
For $\cR\subseteq\RR^Y$, we write $\overline{\cR}$ for the closure of $\cR$ in $\RR^Y$, and we call $\cR$ \emph{pointwise bounded} if $\sup_{r\in\cR}|r(y)|<\infty$ for every $y\in Y$.

We use the classical characterizations of Baire-1 functions on Polish spaces, the separable, completely metrizable spaces:
condition~(3) of \cref{fact:baire} below is Baire’s characterization, and condition~(2) concerns $F_\sigma$ sets, the countable unions of closed sets.

\begin{fact}[see {\cite[Section~10]{Todorcevic:1997}}]\label{fact:baire} Let $Y$ be a Polish space and $f\colon Y\to\RR$ a function.
  The following are equivalent:
  \begin{enumerate}[label=\textup{(\arabic*)}]
  \item
    the function $f$ is Baire-1;
  \item
    for every open $U\subseteq\RR$, the set $f^{-1}[U]$ is $F_\sigma$;
  \item
    for every nonempty closed $F\subseteq Y$, the restriction $f|_F$ has a point of continuity.
  \end{enumerate}
  Moreover, if $f$ is not Baire-1, then there are nonempty countable sets $Z_0,Z_1\subseteq Y$ with $\overline{Z_0}=\overline{Z_1}$ and reals $a<b$ such that $f\le a$ on $Z_0$ and $f\ge b$ on $Z_1$.
\end{fact}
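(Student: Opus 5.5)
The plan is to prove the cycle of implications (1)$\Rightarrow$(2)$\Rightarrow$(3)$\Rightarrow$(1), and then read off the ``moreover'' clause from the failure of (3). Throughout we use that a Polish space $Y$ is metrizable, so closed sets are $G_\delta$ and open sets are $F_\sigma$, and that every closed subset of $Y$ is again Polish and hence a Baire space.

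For (1)$\Rightarrow$(2), write $f=\lim_n f_n$ with each $f_n$ continuous. It suffices to treat the basic case $U=(c,\infty)$; the case $(-\infty,c)$ is symmetric, and a general open set is a countable union of intervals. We have $f(y)>c$ exactly when there are $k,m$ such that $f_n(y)\ge c+1/k$ for all $n\ge m$. Each set $\bigcap_{n\ge m}\{f_n\ge c+1/k\}$ is closed, so $f^{-1}[U]$ is a countable union of closed sets.

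For (2)$\Rightarrow$(3), fix a nonempty closed $F\subseteq Y$ and rational intervals $(p,q)$. Each preimage $f^{-1}[(p,q)]\cap F$ is $F_\sigma$ in $F$. Enumerate the countably many such pieces, $\{f>p\}\cap F$ and $\{f<q\}\cap F$ with $p,q$ rational, each written as a countable union of closed sets $K$. The boundary $\partial_F K$ of each such $K$ is closed and nowhere dense in $F$, so by the Baire category theorem there is a point $y\in F$ outside every one of these countably many boundaries. If $f(y)>p$, then $y$ lies in some $K\subseteq\{f>p\}$ and is not on its boundary, so $y$ lies in the interior of $K$ relative to $F$. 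Thus $f|_F>p$ near $y$, and likewise for $q$, so $y$ is a point of continuity of $f|_F$.

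For (3)$\Rightarrow$(1), the main obstacle, I would follow Baire's transfinite argument. For $\varepsilon>0$, define the derivative $F\mapsto F'$ by removing from $F$ all points at which the oscillation of $f|_F$ is less than $\varepsilon$. By (3), $F'$ is a proper closed subset whenever $F\ne\emptyset$, so iterating through the countable ordinals reaches $\emptyset$ at some countable stage. This yields a partition of $Y$ into countably many differences of closed sets, on each of which $f$ oscillates by less than $\varepsilon$. Hence $f$ is a uniform limit, within $\varepsilon$, of functions that are piecewise constant on these $\Delta^0_2$ pieces, and such functions are Baire-1. Letting $\varepsilon=1/n$ and using that uniform limits of Baire-1 functions are Baire-1 gives (1). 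For the last step I would cite the standard fact rather than reprove it, since this is the routine part of \cite[Section~10]{Todorcevic:1997}.

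For the ``moreover'' clause, suppose $f$ is not Baire-1. By (3) there is a nonempty closed $F$ on which $f|_F$ has no point of continuity. Then the sets $\{y\in F:\operatorname{osc}_{f|_F}(y)\ge 1/n\}$ are closed and their union is $F$, so by the Baire category theorem one of them has nonempty interior $V$ relative to $F$. Next choose rationals $a<b$ with $b-a$ small compared to $1/n$, such that both $\{f\le a\}$ and $\{f\ge b\}$ are dense in some relatively open $W\subseteq V$. This requires one more Baire-category shrinking over a countable grid of pairs $(a,b)$. Finally, take $F_0=\overline{W}$ and countable sets $Z_0\subseteq\{f\le a\}\cap W$ and $Z_1\subseteq\{f\ge b\}\cap W$, each dense in $W$; these exist because $Y$ is second countable. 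Then $\overline{Z_0}=\overline{Z_1}=\overline{W}$, and $f\le a$ on $Z_0$ and $f\ge b$ on $Z_1$, as required.
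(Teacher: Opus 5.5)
The paper gives no proof of this fact; it only cites \cite[Section~10]{Todorcevic:1997}, so your proposal has to stand on its own. Your arguments for (1)$\Rightarrow$(2) and (2)$\Rightarrow$(3) are correct. The reduction of the ``moreover'' clause to a closed $F$ without points of continuity is also sound. The grid step works, for instance with $\delta=1/(4n)$:
\begin{itemize}
\item some set $\{f\in[k\delta,(k+1)\delta)\}\cap V$ is dense in a nonempty relatively open $W_1\subseteq V$;
\item the oscillation bound then makes $\{f\ge(k+\frac32)\delta\}\cup\{f\le(k-\frac12)\delta\}$ dense in $W_1$;
\item one of these two sets is dense in a smaller nonempty open $W$.
\end{itemize}

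\textbf{The gap is in (3)$\Rightarrow$(1)}, and the ``moreover'' clause depends on it. The pieces $F_\alpha\setminus F_{\alpha+1}$ of your derivative sequence only satisfy a local condition: $f|_{F_\alpha}$ has oscillation less than $\varepsilon$ at each of their points. This does not bound the range of $f$ on a piece. If $f$ is continuous, then already $F_1=\emptyset$, the partition is the single piece $Y$, and no constant function is within $\varepsilon$ of a nonconstant $f$. So the inference ``hence $f$ is within $\varepsilon$ of a function piecewise constant on these pieces'' is false as written.

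You need a further subdivision. Fix a countable base, and for each $\alpha$ list as $U_{\alpha,0},U_{\alpha,1},\dots$ the basic open sets $U$ with $\sup_{U\cap F_\alpha}f-\inf_{U\cap F_\alpha}f<\varepsilon$. Put
\[
P_{\alpha,j}=(F_\alpha\setminus F_{\alpha+1})\cap U_{\alpha,j}\setminus\textstyle\bigcup_{i<j}U_{\alpha,i}.
\]
These countably many sets partition $Y$. They are Boolean combinations of open and closed sets, hence both $F_\sigma$ and $G_\delta$. By construction $f$ varies by less than $\varepsilon$ on each of them.

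\textbf{The claim that such step functions are Baire-1 also needs an argument.} It is a special case of (2)$\Rightarrow$(1), which is not part of your cycle. With infinitely many pieces you cannot just add characteristic functions, because a pointwise convergent series of Baire-1 functions need not be Baire-1 in general. A direct route is as follows.
\begin{itemize}
\item If $A$ is both $F_\sigma$ and $G_\delta$, then $\chi_A$ is Baire-1. Write $A=\bigcup_nC_n$ and $Y\setminus A=\bigcup_nD_n$ with increasing closed sets, and take continuous $u_n$ equal to $1$ on $C_n$ and $0$ on $D_n$; then $u_n\to\chi_A$.
\item Replace $f$ by the bounded function $\arctan\circ f$. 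Group the pieces by which interval of length $1/m$ contains the chosen value. Each of the finitely many groups is a countable union of pieces whose complement is also such a union, so it is again both $F_\sigma$ and $G_\delta$.
\item The step function is therefore a uniform limit of finite linear combinations of such $\chi_A$, hence Baire-1. So is $\arctan\circ f$, as a uniform limit of such step functions.
\item To recover $f$, truncate continuous approximants $h_k$ of $\arctan\circ f$ to $[-\frac\pi2+\frac1k,\frac\pi2-\frac1k]$ and then compose with $\tan$.
\end{itemize}
Alternatively, cite (2)$\Rightarrow$(1) (Lebesgue--Hausdorff) explicitly for this step. With these two repairs the cycle closes and the ``moreover'' clause follows as you describe.
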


The last clause of \cref{fact:baire} lets us test membership in $\Bone(Y)$ on countable sets, which is what makes pointwise closures manageable.

\begin{lemma}\label{lem:relcompact}
  Let $Y$ be a Polish space and $\cR\subseteq\Bone(Y)$ a pointwise bounded family.
  The following are equivalent:
  \begin{enumerate}[label=\textup{(\roman*)}]
  \item
    the family $\cR$ is relatively compact in $\Bone(Y)$;
  \item
    every sequence in $\cR$ has a cluster point in $\Bone(Y)$;
  \item
    we have $\overline{\cR}\subseteq\Bone(Y)$.
  \end{enumerate}
\end{lemma}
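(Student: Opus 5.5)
The natural route is the cycle (i)$\Rightarrow$(ii)$\Rightarrow$(iii)$\Rightarrow$(i), with relative compactness understood in the subspace $\Bone(Y)$ of $\RR^Y$. The two implications (i)$\Rightarrow$(ii) and (iii)$\Rightarrow$(i) are formal. For the first, if $\cR$ is contained in a compact $K\subseteq\Bone(Y)$, every sequence in $\cR$ has a cluster point in $K$. For the last, pointwise boundedness puts $\overline{\cR}$ inside the product $\prod_{y}[-M_y,M_y]$ with $M_y=\sup_{r\in\cR}|r(y)|$. This product is compact by Tychonoff, so $\overline{\cR}$ is compact, and if it lies in $\Bone(Y)$ it witnesses relative compactness there.

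The substance is (ii)$\Rightarrow$(iii), which I will prove by contraposition using the last clause of \cref{fact:baire}. Suppose $f\in\overline{\cR}$ is not Baire-1. Take countable nonempty $Z_0,Z_1\subseteq Y$ with $\overline{Z_0}=\overline{Z_1}$ and reals $a<b$ such that $f\le a$ on $Z_0$ and $f\ge b$ on $Z_1$. Put $Z=Z_0\cup Z_1$, which is countable. The key point is that $\RR^Z$ restricted to the bounded box is metrizable. Since $f$ lies in the pointwise closure of $\cR$, its restriction $f|_Z$ lies in the closure of $\{r|_Z:r\in\cR\}$ in a compact metrizable space. Hence there is a sequence $(r_n)$ in $\cR$ with $r_n\to f$ pointwise on $Z$.

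By (ii), $(r_n)$ has a cluster point $g\in\Bone(Y)$. A cluster point of $(r_n)$ in $\RR^Y$ restricts on $Z$ to a cluster point of $(r_n|_Z)$, and that sequence converges to $f|_Z$. Hence $g|_Z=f|_Z$, so $g\le a$ on $Z_0$ and $g\ge b$ on $Z_1$.

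Now set $F=\overline{Z_0}=\overline{Z_1}$, a nonempty closed set. By \cref{fact:baire}(3), $g|_F$ has a point of continuity $y\in F$. Every neighborhood of $y$ meets both $Z_0$ and $Z_1$, since $Z_0$ and $Z_1$ are dense in $F$. Continuity of $g|_F$ at $y$ would then force $g(y)\le a$ and $g(y)\ge b$, which is impossible because $a<b$. This contradicts $g\in\Bone(Y)$, so $\overline{\cR}\subseteq\Bone(Y)$.

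The step needing care is the extraction of a sequence converging to $f$ on $Z$. This is where countability of the witnesses from \cref{fact:baire} is essential, since $\overline{\cR}$ itself need not be sequentially determined. The remaining steps are routine.
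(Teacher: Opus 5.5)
Your proof is correct and follows the paper's own (Bourgain--Fremlin--Talagrand) argument: the two formal implications via compactness and Tychonoff's theorem, and (ii)$\Rightarrow$(iii) by taking a sequence in $\cR$ that converges to $f$ on the countable set $Z_0\cup Z_1$, whose cluster points would agree with $f$ there and so have no point of continuity on $\overline{Z_0}=\overline{Z_1}$. You also fill in details that the paper's sketch leaves implicit: the extraction of that sequence from the metrizability of $\RR^Z$ for countable $Z$, and the fact that a convergent sequence in a Hausdorff space has its limit as its only cluster point.
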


\begin{proof}[Sketch]
  This is the Bourgain--Fremlin--Talagrand argument~\cite{Bourgain-Fremlin-Talagrand:1978}.
  The implication (i)$\Rightarrow$(ii) is clear, and (iii)$\Rightarrow$(i) is Tychonov’s theorem.
  For (ii)$\Rightarrow$(iii), suppose that some $f\in\overline\cR$ is not Baire-1, and take $Z_0$, $Z_1$, $a$ and $b$ as in \cref{fact:baire}.
  A sequence in $\cR$ converging to $f$ on the countable set $Z_0\cup Z_1$ has no cluster point in $\Bone(Y)$, because such a point would agree with $f$ on $Z_0\cup Z_1$ and would therefore have no point of continuity on $\overline{Z_0}=\overline{Z_1}$.
\end{proof}

%#
%### SUBSECTION: The independence property
%#
\subsection{The independence property}
The obstruction to Baire-1 closures is combinatorial, as \cref{thm:dichotomy} below shows:
it is a sequence of functions whose sublevel and superlevel sets meet in every finite pattern.

\begin{definition}\label{def:ip}
  Let $Y$ be a set.
  A family $\cR\subseteq\RR^Y$ has the \emph{independence property} if there are a sequence
  $(r_n)_{n\in\NN}$ in $\cR$ and reals $a<b$ such that, for all disjoint finite sets $E,F\subseteq\NN$, there is $y\in Y$ with $r_n(y)\le a$ for $n\in E$ and $r_n(y)\ge b$ for $n\in F$.
  Otherwise, the family $\cR$ is \emph{NIP}.
\end{definition}

On the Cantor space $\{0,1\}^\NN$ of infinite sequences of zeros and ones, the family of coordinate functions
$p_n$, given by $p_n(y)=y(n)$ for $n\in\NN$, has the independence property, and the closure of $\{p_n:n\in\NN\}$ in $\RR^{\{0,1\}^\NN}$ is a copy of $\beta\NN$, the space of ultrafilters on $\NN$.
The next two facts say that this example is the only obstruction.

\begin{fact}[Rosenthal’s dichotomy;
    see {\cite{Rosenthal:1974,Todorcevic:1997}}]\label{fact:rosenthal}
  Let $Y$ be a Polish space and $(r_n)_{n\in\NN}$ a pointwise bounded sequence in $\Cp(Y)$.
  Then the sequence $(r_n)$ has either a pointwise convergent subsequence or a subsequence whose closure in $\RR^Y$ is homeomorphic to $\beta\NN$.
\end{fact}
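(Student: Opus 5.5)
The plan is a Ramsey-theoretic reduction to a statement about one gap $[a,b]$, followed by a construction of $\beta\NN$ inside the closure. Put $c_y=\sup_n|r_n(y)|<\infty$. A subsequence $(r_n)_{n\in M}$, for $M\subseteq\NN$ infinite, converges pointwise exactly when, for every pair of rationals $a<b$, no $y$ satisfies both $r_n(y)\le a$ for infinitely many $n\in M$ and $r_n(y)\ge b$ for infinitely many $n\in M$. For fixed $a<b$, colour each infinite $M$ according to whether such an \emph{oscillation point} $y$ exists for the sequence indexed by $M$. The colouring is analytic, since it is a projection along $Y$ of a Borel condition, and so it has the Baire property in the Ellentuck topology. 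The Galvin--Prikry--Silver theorem then gives an infinite $M$ homogeneous for this colour. Diagonalizing over an enumeration of the countably many rational pairs produces a single infinite $M$ homogeneous for every pair. If $M$ is homogeneous for the colour ``no oscillation point'' for every pair, then $(r_n)_{n\in M}$ is the convergent subsequence. Otherwise there are $a<b$ such that every infinite $L\subseteq M$ admits an oscillation point across $[a,b]$.

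In the second case, set $A_n=\{r_n\le a\}$ and $B_n=\{r_n\ge b\}$. These are disjoint closed subsets of $Y$, and every infinite subsequence of the pairs $(A_n,B_n)_{n\in M}$ fails to converge: some point lies in $A_n$ infinitely often and in $B_n$ infinitely often. Rosenthal's combinatorial lemma on sequences of pairs of disjoint sets then yields a further subsequence that is Boolean independent. That is, after shrinking $a<b$ slightly and relabelling, for all disjoint finite $E,F$ the set $\bigcap_{n\in E}A_n\cap\bigcap_{n\in F}B_n$ is nonempty. This step is a diagonal argument that again uses the Ramsey property, now on finite patterns.

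It remains to show that the closure of an independent sequence of continuous functions is homeomorphic to $\beta\NN$, possibly after a further subsequence. Every ultrafilter $\mathcal U$ yields a pointwise limit $\lim_{\mathcal U}r_n$ by boundedness, and $\mathcal U\mapsto\lim_{\mathcal U}r_n$ is a continuous surjection from $\beta\NN$ onto the closure. The work lies in injectivity. Given disjoint infinite $S,T\subseteq\NN$, I need a point $y$ with $r_n(y)\le a$ for all but finitely many $n\in S$ and $r_n(y)\ge b$ for all but finitely many $n\in T$. Finite patterns alone do not give this, because $Y$ is not compact.

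This is the step I expect to be the main obstacle. The first thing I would try is Todorčević's approach. Fix a complete metric on $Y$ and pass to a subsequence along which one builds a Cantor scheme of nonempty open sets $U_s$ of vanishing diameter, indexed by finite $0$--$1$ sequences $s$, with $\overline{U_{s\frown i}}\subseteq U_s$. The requirement is that points of $U_s$ satisfy the strict pattern $r_{n_k}<a$ or $r_{n_k}>b$ according to $s(k)$, for all $k<|s|$. Independence of the remaining tail of the sequence, relative to the open set $U_s$, is what permits each splitting step; strict inequalities and continuity keep the sets open. Completeness then makes every infinite branch converge to a point realizing the infinite pattern. This gives injectivity, and compactness of $\beta\NN$ makes the map a homeomorphism.
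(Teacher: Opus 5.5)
The Ramsey reduction in your first paragraph is sound. The oscillation colouring is analytic, hence completely Ramsey by Silver's theorem, and oscillation is a tail property, so the diagonalization over rational pairs works. In the second case you therefore get an infinite $M$ and $a<b$ such that every infinite $L\subseteq M$ has an oscillation point across $[a,b]$. The paper only cites this fact, so I judge your argument on its own.

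The gap is in your last two paragraphs. The reduction ``it remains to show that the closure of an independent sequence of continuous functions is homeomorphic to $\beta\NN$, possibly after a further subsequence'' is false on a Polish space. Let $Y$ be the countable discrete space of finite subsets of $\NN$, with the discrete metric, and let $r_n(u)=1$ if $n\in u$ and $r_n(u)=0$ otherwise. For disjoint finite $E,F\subseteq\NN$ the point $u=F$ realizes the pattern with $a=0$ and $b=1$, so the sequence is independent. Yet every subsequence converges pointwise to $0$.

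The same example shows why your mechanism for the Cantor scheme fails. Independence of a tail relative to $U_s$ does make the splitting trivial. It says nothing about smaller open sets, however, so it does not let you shrink $U_{s^\frown i}$ to small diameter while keeping a tail independent there. In the example, every open set of diameter less than $1$ is a single point and carries a single pattern. Without vanishing diameters, completeness yields no branch points. Your subsequence does inherit the hereditary oscillation property of Step~1, but your sketch never uses it, and this is exactly where it is needed; Step~2 trades it for mere finite independence.

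The invariant to carry is the hereditary property itself, localized. At level $k$, keep an infinite $M_k$ such that every infinite $L\subseteq M_k$ has an oscillation point across $[a,b]$ in $U_s$, for every $s$ of length $k$. Two diagonal arguments maintain this. Each uses complete Ramseyness of the analytic colourings ``$L$ has an oscillation point in $W$'' together with tail invariance of oscillation.

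\emph{Localization.} Suppose $W=\bigcup_jV_j$ and $M$ has the property for $W$. Take $M\supseteq N_1\supseteq N_2\supseteq\cdots$ with $N_j$ homogeneous for $V_j$. If every $N_j$ were negative, a diagonal set would have no oscillation point in any $V_j$, hence none in $W$, which is impossible. So some $V_j$ works for an infinite subset of $M$, and you may choose the $V_j$ of small diameter with closures inside $W$.

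\emph{Splitting.} Fix $a<a'<b'<b$, enumerate $M_k=\{m_0<m_1<\cdots\}$, and take $N_0\supseteq N_1\supseteq\cdots$ inside $M_k$. Make $N_j$ homogeneous for the $2^{k+1}$ colourings attached to the sets $U_s\cap\{r_{m_j}<a'\}$ and $U_s\cap\{r_{m_j}>b'\}$. If all of these are positive for some $j$, take $n_k=m_j$. Otherwise, choose a negative colouring for each $j$. By pigeonhole, some infinite subset $L$ of a diagonal set consists of members $m_j$ whose chosen colouring has the same node $s$ and the same side, say $<a'$. An oscillation point $y\in U_s$ of $L$ satisfies $r_{m_j}(y)\le a$ for some $m_j\in L$. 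Then $y$ is an oscillation point in $U_s\cap\{r_{m_j}<a'\}$ of the infinite set $L\cap N_j\subseteq N_j$, a contradiction.

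Alternating the two lemmas gives a scheme with a common index $n_k$ at each level and vanishing diameters. Each branch then yields a point realizing the exact infinite pattern with $a'<b'$. This separates all pairs of ultrafilters, principal ones included; your ``all but finitely many'' version does not separate principal from nonprincipal ones. With this invariant, Rosenthal's lemma in Step~2 becomes unnecessary.
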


\begin{fact}[Bourgain--Fremlin--Talagrand;
    see {\cite[Theorem~4G]{Bourgain-Fremlin-Talagrand:1978}}]\label{fact:bft} Let $Y$ be a Polish space and $\cR\subseteq\Cp(Y)$ a pointwise bounded family.
  Then $\overline{\cR}\subseteq\Bone(Y)$ if and only if, for every sequence $(r_n)_{n\in\NN}$ in $\cR$ and all reals $a<b$, there is $I\subseteq\NN$ such that no $y\in Y$ satisfies $r_n(y)\le a$ for all $n\in I$ and $r_n(y)\ge b$ for all $n\notin I$.
\end{fact}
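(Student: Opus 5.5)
The plan is to prove the two implications separately. For sufficiency we build a Cantor scheme of open sets inside the common closure supplied by \cref{fact:baire}. For necessity we combine \cref{fact:rosenthal} with the Fréchet property of Rosenthal compacta.

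\emph{The pattern condition implies $\overline{\cR}\subseteq\Bone(Y)$.} We argue by contraposition. Suppose some $f\in\overline{\cR}$ is not Baire-1. Take $Z_0,Z_1$ and $a<b$ as in the last clause of \cref{fact:baire}, let $K=\overline{Z_0}=\overline{Z_1}$, and fix $a<a'<b'<b$.

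Since $Z=Z_0\cup Z_1$ is countable and $f$ lies in the pointwise closure of $\cR$, a diagonal argument gives a sequence $(s_m)$ in $\cR$ with $s_m\to f$ pointwise on $Z$. Fix a complete compatible metric on $Y$. We then choose indices $m_0<m_1<\cdots$ together with nonempty open sets $U_\tau$, for $\tau\in\{0,1\}^{<\NN}$, all meeting $K$, as follows.
\begin{itemize}
\item Start with $U_\emptyset=Y$.
\item At stage $k$ there are finitely many sets $U_\tau$ with $|\tau|=k$, each meeting $K$. Since $K=\overline{Z_0}=\overline{Z_1}$, each of them contains points $z_0^\tau\in Z_0$ and $z_1^\tau\in Z_1$.
\item Because $s_m(z_i^\tau)\to f(z_i^\tau)$, a single large $m_k$ gives $s_{m_k}(z_0^\tau)<a'$ and $s_{m_k}(z_1^\tau)>b'$ for all such $\tau$ at once.
\item Using the continuity of $s_{m_k}$, take small open balls $U_{\tau0}\ni z_0^\tau$ and $U_{\tau1}\ni z_1^\tau$. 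Choose them with diameter less than $2^{-k}$, with $\overline{U_{\tau i}}\subseteq U_\tau$, with $s_{m_k}<a'$ on $\overline{U_{\tau0}}$, and with $s_{m_k}>b'$ on $\overline{U_{\tau1}}$.
\end{itemize}

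Now set $r_k=s_{m_k}$ and let $I\subseteq\NN$ be arbitrary. Follow the branch $\tau$ that takes value $0$ at level $k$ exactly when $k\in I$. The corresponding nested closed sets have diameters tending to $0$, so by completeness they meet in a single point $y$. This point satisfies $r_k(y)\le a'$ for $k\in I$ and $r_k(y)\ge b'$ for $k\notin I$. Since $I$ was arbitrary, the pattern condition fails for the sequence $(r_k)$ and the reals $a'<b'$.

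The delicate point, and the step I expect to be the main obstacle, is making one index $m_k$ work simultaneously for all $2^k$ open sets at level $k$. This works only because there are finitely many of them and convergence holds at each of the finitely many chosen points of $Z$. The same design keeps every branch point in $K$, so that the splitting can continue.

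\emph{$\overline{\cR}\subseteq\Bone(Y)$ implies the pattern condition.} Suppose the pattern condition fails for $(r_n)$ and $a<b$, so that for every $I\subseteq\NN$ some $y_I$ realizes the pattern. Then no subsequence $(r_{n_j})$ converges pointwise: taking $I$ to be the set of $n_j$ with $j$ even, the values $r_{n_j}(y_I)$ alternate between $\le a$ and $\ge b$. By \cref{fact:rosenthal}, some subsequence therefore has closure homeomorphic to $\beta\NN$.

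If we had $\overline{\cR}\subseteq\Bone(Y)$, this copy of $\beta\NN$ would sit inside a Rosenthal compactum. Every point in the closure of a countable set in such a compactum is the limit of a sequence from that set (the Bourgain--Fremlin--Talagrand theorem~\cite[Theorem~3F]{Bourgain-Fremlin-Talagrand:1978} quoted in \cref{sec:main}). A nonprincipal ultrafilter would then be the limit of a sequence of points of $\NN$. This is impossible in $\beta\NN$, where a sequence of points of $\NN$ that converges must be eventually constant, and so this contradiction completes the proof.
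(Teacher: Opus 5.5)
Your proof is correct. The paper gives no proof of this statement; it imports it from \cite[Theorem~4G]{Bourgain-Fremlin-Talagrand:1978}. So the useful comparison is with how the paper uses the same ingredients elsewhere.

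Your first direction begins as the sketch of \cref{lem:relcompact} does, with a sequence from $\cR$ converging to $f$ on $Z_0\cup Z_1$. That sketch only argues that this sequence has no Baire-1 cluster point. Your Cantor scheme goes further: it extracts a subsequence realizing \emph{every} infinite pattern, strictly, with $a'<b'$. Since it needs only \cref{fact:baire} and completeness of $Y$, it shows something the paper does not spell out. The half of \cref{thm:dichotomy} saying that a non-Baire-1 limit forces the independence property holds on every Polish $Y$. Compactness of $Y$ is needed only for the converse, to pass from finite to infinite patterns.

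Your second direction is the argument the paper itself runs in the proofs of \cref{thm:levels}(c) and \cref{lem:transfer}. Realizing all infinite patterns rules out convergent subsequences, \cref{fact:rosenthal} then yields a copy of $\beta\NN$, and $\beta\NN$ is not Fréchet--Urysohn.

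Three minor remarks on the second direction.
\begin{itemize}
\item You use implicitly that $\overline{\cR}$ is compact. This is Tychonoff's theorem applied to the pointwise bounded family.
\item \cref{fact:rosenthal} is dispensable. A compact Fréchet--Urysohn space is sequentially compact, so \cite[Theorem~3F]{Bourgain-Fremlin-Talagrand:1978} alone already gives a convergent subsequence of $(r_n)$.
\item Your last step does not depend on the homeomorphism onto $\beta\NN$ carrying the subsequence onto $\NN$. The Fréchet--Urysohn property passes to subspaces and through homeomorphisms.
\end{itemize}

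What your route buys is that \cref{fact:bft} becomes a consequence of \cref{fact:baire}, \cref{fact:rosenthal} and the Fréchet--Urysohn theorem, all of which the paper cites anyway. The paper's citation buys brevity in the preliminaries.
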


When the space $Y$ is compact, the condition in \cref{fact:bft} reduces to a statement about finite patterns, and we obtain the dichotomy announced at the beginning of this section.

\begin{keytheorem}\label{thm:dichotomy}
  Let $Y$ be a compact metrizable space and $\cR\subseteq\Cp(Y)$ a pointwise bounded family.
  Then exactly one of the following holds:
  \begin{enumerate}[label=\textup{(\roman*)}]
  \item
    we have $\overline{\cR}\subseteq\Bone(Y)$;
  \item
    the family $\cR$ has the independence property.
  \end{enumerate}
\end{keytheorem}

The following proof is the classical passage from infinite to finite patterns by compactness (see~\cite[Section~1]{Duenez-Iovino-MatosWiederhold-Salvetti-Tall:2026}).

\begin{proof}
  For a sequence $(r_n)$ in $\cR$, reals $a<b$ and $I\subseteq\NN$, the sets $\{y:r_n(y)\le a\}$ for $n\in I$ and $\{y:r_n(y)\ge b\}$ for $n\notin I$ are closed, because each function $r_n$ is continuous.
  By compactness of $Y$, these closed sets have empty intersection if and only if finitely many of them do.
  Hence the condition in \cref{fact:bft} fails for a given sequence $(r_n)$ and given reals $a<b$ exactly when every finite pattern is realized, that is, exactly when $(r_n)$, $a$ and $b$ witness the independence property.
  By \cref{fact:bft}, alternative (i) therefore fails exactly when alternative (ii) holds.
\end{proof}

We shall also need to replace a space of probes by a dense subset, and the replacement does not affect the independence property.

\begin{lemma}\label{lem:nipclosure}
  Let $Y$ be a topological space, $\cR\subseteq\Cp(Y)$ a family and $L\subseteq Y$ a dense set.
  Then the family $\cR$ is NIP if and only if the family $\{r|_L:r\in\cR\}$ of restrictions is NIP.
\end{lemma}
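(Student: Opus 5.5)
We prove the two implications separately, and only one of them needs an argument. Suppose first that the family $\{r|_L:r\in\cR\}$ of restrictions has the independence property. Then there are a sequence $(r_n)$ in $\cR$ and reals $a<b$ such that every finite pattern is realized by some point $y\in L$. Since $L\subseteq Y$, the same sequence, the same reals and the same realizing points witness the independence property for $\cR$ itself. Hence if $\cR$ is NIP, so is the family of restrictions. This direction uses neither continuity nor density.

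For the converse, suppose that $\cR$ has the independence property, witnessed by $(r_n)$ and $a<b$. The plan is to realize each finite pattern on $L$ by widening the gap. Set $\varepsilon=(b-a)/3$, $a'=a+\varepsilon$ and $b'=b-\varepsilon$, so that $a'<b'$. Fix disjoint finite sets $E,F\subseteq\NN$, and pick $y\in Y$ with $r_n(y)\le a$ for $n\in E$ and $r_n(y)\ge b$ for $n\in F$. Consider the set
\[
U=\{z\in Y: r_n(z)<a' \text{ for } n\in E,\ r_n(z)>b' \text{ for } n\in F\}.
\]
It is a finite intersection of preimages of open sets under continuous functions, so it is open, and it contains $y$. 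Because $L$ is dense, $U$ meets $L$ in some point $z$. This point realizes the pattern $(E,F)$ for the restrictions with the reals $a'<b'$. The reals $a'$ and $b'$ depend only on $a$ and $b$, not on the pattern, so $(r_n|_L)$ together with $a'<b'$ witnesses the independence property for the restrictions.

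The only point requiring care is this loss in the gap. Non-strict inequalities at $y$ need not persist at nearby points of $L$, which is why we pass to the strict inequalities defining the open set $U$, at the cost of the uniform margin $\varepsilon$. Finiteness of $E\cup F$ is what makes $U$ open, and the definition of the independence property only ever involves finite patterns. With these two observations the argument is routine, and no compactness or metrizability of $Y$ is needed.
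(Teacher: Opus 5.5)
Your proposal is correct and follows the paper's proof: the direction from restrictions to $\cR$ is trivial, and for the converse you shrink the gap to $a<a'<b'<b$ so that each finite pattern holds strictly on a nonempty open set, which meets the dense set $L$. Your explicit choice $\varepsilon=(b-a)/3$ is just one instance of the paper's arbitrary $a'<b'$.
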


\begin{proof}[Sketch]
  If the restrictions have the independence property, then so does $\cR$.
  Conversely, if $(r_n)$ and $a<b$ witness the independence property of $\cR$, and $a<a'<b'<b$, then each finite pattern holds strictly, with $a'$ and $b'$, on a nonempty open set, which meets $L$;
  hence the restrictions witness the independence property with the reals $a'<b'$.
\end{proof}

\begin{remark}[Symmetry]\label{rem:symmetry}
  For a formula of a first-order theory, NIP is symmetric in the two variables:
  the formula $\varphi(x,y)$ is NIP if and only if the formula $\varphi^{\mathrm{opp}}$, given by $\varphi^{\mathrm{opp}}(y,x)=\varphi(x,y)$, is NIP (see~\cite[Chapter~II, Section~4]{Shelah:1990a} and~\cite[Lemma~2.5]{Simon:2015b}).
  Assouad’s bound makes the symmetry quantitative.
  The Vapnik--Chervonenkis (VC) dimension of $\varphi$ is the largest size of a finite set of elements every subset of which is cut out by some instance
  $\varphi(x,c)$ of the formula, for a parameter $c$ that depends on the subset;
  if $\varphi$ has finite VC dimension $d$, then the formula $\varphi^{\mathrm{opp}}$ has VC dimension less than $2^{d+1}$~\cite{Assouad:1983}.
  Assouad’s argument is finite and combinatorial, and it applies, with the reals $a<b$ fixed, to real-valued functions.
  For uniform NIP, which bounds, for each gap $b-a$, the size of the finite sets of functions on which every pattern of values at most
  $a$ and at least $b$ is realized at some point (see \cref{rem:uniform} below), the roles of Alice and Bob are therefore interchangeable.
  The notion of \cref{def:ip}, which asks for an infinite sequence inside a single family, is not symmetric:
  the relation $n\in u$ between numbers $n$ and finite sets $u\subseteq\NN$ has the independence property in the variable $n$ but not in the variable $u$.
\end{remark}

%#
%### SUBSECTION: Rosenthal compacta
%#
\subsection{Rosenthal compacta}
A \emph{Rosenthal compactum} is a compact space homeomorphic to a compact subset of $\Bone(Y)$, with the topology of pointwise convergence, for some Polish space $Y$.
By \cref{thm:dichotomy} and \cref{lem:relcompact}, if $Y$ is compact metrizable and $\cR\subseteq\Cp(Y)$ is pointwise bounded and NIP, then the closure $\overline{\cR}$ is a Rosenthal compactum.
The classification of Rosenthal compacta (see \cref{sec:classification}) will therefore classify the NIP families that arise below.

%#
%### SUBSECTION: Historical remarks
%#
\subsection{Historical remarks}\label{sec:hist-prelim}
Rosenthal’s dichotomy grew out of his $\ell_1$ theorem~\cite{Rosenthal:1974};
Bourgain, Fremlin and Talagrand proved \cref{fact:bft} in~1978~\cite{Bourgain-Fremlin-Talagrand:1978}, and Godefroy named Rosenthal compacta~\cite{Godefroy:1980} (see also~\cite{Arkhangelskii:1992,Tkachuk:2011}).
The independence property was isolated by Shelah~\cite{Shelah:1971b,Shelah:1990a};
the polynomial bound behind its $\{0,1\}$-valued case,
the Sauer--Shelah lemma, was found independently by Sauer~\cite{Sauer:1972} and Shelah~\cite{Shelah:1972b},
and, a little earlier, by Vapnik and Chervonenkis~\cite{Vapnik-Chervonenkis:1971,Vapnik-Chervonenkis:1974}, in their work on the uniform convergence of frequencies to probabilities, which became a foundation of learning theory~\cite{Valiant:1984}.
The connection between NIP and Rosenthal compacta is due to Simon~\cite{Simon:2015a}, and its use for limits of computations goes back to our joint work with Matos-Wiederhold, Salvetti and Tall~\cite{Duenez-Iovino-MatosWiederhold-Salvetti-Tall:2026}.

%#
%## SECTION: Separable Rosenthal compacta fall into finitely many canonical classes
%#
\section{Separable Rosenthal compacta fall into finitely many canonical classes}\label{sec:classification}

\noindent The Baire-1 characterization reduces the analysis of NIP families to the topology of their pointwise closures.
For a countable, pointwise bounded NIP family of continuous functions on a compact metrizable space, this closure is a separable Rosenthal compactum.
Todorčević’s trichotomy distinguishes the nonmetrizable cases by hereditary separability and first countability;
the canonical form of the trichotomy identifies minimal families whose closures embed into the corresponding compacta.
We state these results together with the Argyros--Dodos--Kanellopoulos heptachotomy, following their notation~\cite{Argyros-Dodos-Kanellopoulos:2008b}, and formulate the resulting levels of NIP for use with response families.

%#
%### SUBSECTION: Three topological invariants
%#
\subsection{Three topological invariants}
Recall that a space is hereditarily separable if each of its subspaces has a countable dense subset, and first countable if each of its points has a countable base of neighborhoods.
Every compact metrizable space is hereditarily separable, and by a theorem of Pol every hereditarily separable Rosenthal compactum is first countable (see~\cite[Section~10]{Debs:2013}).
For a Rosenthal compactum $K$ we therefore have:
\[
  K\text{ metrizable}\ \Longrightarrow\ K\text{ hereditarily separable}\
  \Longrightarrow\ K\text{ first countable}.
\]
For separable Rosenthal compacta, the failure of each converse is witnessed by a canonical subspace, and these canonical subspaces are pointwise closures of families indexed by the dyadic tree, the set of finite sequences of zeros and ones
(see \cref{fact:tricho} below).

%#
%### SUBSECTION: The canonical compacta
%#
\subsection{The canonical compacta}
The minimal families encode localization, one-sided limits and paired observations by functions on the dyadic tree and its boundary.
Let $\{0,1\}^{<\NN}$ be the set of finite sequences of zeros and ones, and $\{0,1\}^\NN$ the Cantor space of infinite such sequences.
For $t\in\{0,1\}^{<\NN}$, write $|t|$ for its length and $t\sqsubset\sigma$ when $\sigma\in\{0,1\}^\NN$ extends $t$.
The function $v_t\colon\{0,1\}^\NN\to\{0,1\}$ is $1$ on the clopen set $V_t=\{\sigma\in\{0,1\}^\NN:t\sqsubset\sigma\}$ and $0$ elsewhere.
The points $t^\frown0^\infty$ and $t^\frown1^\infty$ extend $t$ by zeros and by ones, respectively.
Let $\prec$ be the lexicographic order on $\{0,1\}^\NN$, with $\preceq$ its reflexive version.
For $\sigma\in\{0,1\}^\NN$, define $x^+_\sigma,x^-_\sigma\colon\{0,1\}^\NN\to\{0,1\}$ as the characteristic functions of $\{\zeta:\sigma\preceq\zeta\}$ and $\{\zeta:\sigma\prec\zeta\}$, respectively.
For topological spaces $X_1,X_2$, with disjoint sum $X_1\oplus X_2$, and real functions $f\colon X_1\to\RR$ and $g\colon X_2\to\RR$, let $(f,g)\colon X_1\oplus X_2\to\RR$ take the value $f(x)$ at the copy of $x\in X_1$ and $g(x)$ at the copy of $x\in X_2$.
Since a point of $\{0,1\}^\NN$ is a function from $\NN$ to $\{0,1\}$, the Cantor space is a subset of $\RR^\NN$.
The compacta below keep the names under which Argyros, Dodos and Kanellopoulos introduced them, such as $A(2^\NN)$, $\hat{D}(2^\NN)$ and $A(2^{<\NN})$, in which $2^\NN$ stands for the Cantor space and $2^{<\NN}$ for the set of finite sequences.

The following compacta appear in the classification:
\begin{itemize}[leftmargin=*]
\item
  the \emph{one-point compactification} $A(2^\NN)$ of a discrete set of size continuum, realized as
  $\{\delta_\sigma:\sigma\in\{0,1\}^\NN\}\cup\{0\}$, where
  $\delta_\sigma$ is the characteristic function of $\{\sigma\}$, and its separable extension $\hat{A}(2^\NN)$, the pointwise closure of $\{v_t:t\in\{0,1\}^{<\NN}\}$, which equals $\{v_t:t\in\{0,1\}^{<\NN}\}\cup A(2^\NN)$;
\item
  the \emph{split Cantor set} $S(2^\NN)=\{x^+_\sigma,x^-_\sigma:\sigma\in\{0,1\}^\NN\}$, the \emph{split interval}
  $S(I)$, that is, $I\times\{0,1\}$ for the unit interval $I=[0,1]$, ordered lexicographically, with the order topology, and the extended split Cantor sets $\hat{S}_+(2^\NN)$ and $\hat{S}_-(2^\NN)$ defined below;
  the space $\hat{S}_+(2^\NN)$ is homeomorphic to a closed subspace of $S(I)$;
\item
  the \emph{Alexandroff duplicate} $D(2^\NN)$ of the Cantor set and its separable extension $\hat{D}(2^\NN)$, and the extended duplicate $\hat{D}(S(2^\NN))$ of the split Cantor set, the last two being defined below;
  the duplicate $D(2^\NN)$ is the set $\{0,1\}^\NN\times\{0,1\}$
  in which the points $(\sigma,1)$ are isolated and the basic neighborhoods of a point $(\sigma,0)$ are the sets $(U\times\{0,1\})\setminus\{(\sigma,1)\}$, for open neighborhoods $U$ of $\sigma$ in $\{0,1\}^\NN$.
\end{itemize}
Among these, the space $S(2^\NN)$ is hereditarily separable but not metrizable, the space $D(2^\NN)$ is first countable but not separable, and the space $A(2^\NN)$ is not first countable (see~\cite{Argyros-Dodos-Kanellopoulos:2008b}).

%#
%### SUBSECTION: The heptachotomy
%#
\subsection{The heptachotomy}
A \emph{dyadic subtree} of $\{0,1\}^{<\NN}$ is a subset of $\{0,1\}^{<\NN}$ that, ordered by extension, is isomorphic to $\{0,1\}^{<\NN}$;
we write it as $(s_t)_{t\in\{0,1\}^{<\NN}}$, using its canonical enumeration, the unique such isomorphism $t\mapsto s_t$ that also preserves the lexicographic order (see~\cite[Section~2]{Argyros-Dodos-Kanellopoulos:2008b}).
A dyadic subtree is \emph{regular} if each of its levels is contained in a level of $\{0,1\}^{<\NN}$.
Fix a regular dyadic subtree $\Theta=(s_t)_{t\in\{0,1\}^{<\NN}}$ such that the sequences $s^\frown0^\infty$, for $s\in\Theta$, are pairwise distinct, and so are the sequences $s^\frown1^\infty$;
a family indexed by the nodes $s$ of $\Theta$ is regarded as indexed by $\{0,1\}^{<\NN}$ through the canonical enumeration $t\mapsto s_t$.

\begin{definition}\label{def:equivalent}
  Let $X$ be a Polish space.
  \begin{enumerate}[label=\textup{(\arabic*)}]
  \item
    For topological spaces $X_1,X_2$ and a countable index set $J$, let $f_j\colon X_1\to\RR$ and $g_j\colon X_2\to\RR$ be functions for $j\in J$, with each family pointwise bounded.
    The families $\{f_j:j\in J\}$ and $\{g_j:j\in J\}$ are \emph{equivalent} if the assignment $f_j\mapsto g_j$ is well-defined and extends to a homeomorphism from the closure of $\{f_j:j\in J\}$ in $\RR^{X_1}$ onto the closure of $\{g_j:j\in J\}$ in $\RR^{X_2}$, both with the product topology.
  \item
    A family of pairwise distinct functions $\{f_t:t\in\{0,1\}^{<\NN}\}$ that is relatively compact in $\Bone(X)$ is \emph{minimal} if it is equivalent to $\{f_{s_t}:t\in\{0,1\}^{<\NN}\}$ for every dyadic subtree $(s_t)$ of $\{0,1\}^{<\NN}$.
  \end{enumerate}
\end{definition}

The seven minimal families are the following, and the pointwise closure of each of them is a separable Rosenthal compactum that contains the family as a discrete set (see~\cite{Argyros-Dodos-Kanellopoulos:2008b}):
\begin{enumerate}[label=\textup{(\arabic*)},leftmargin=*]
\item
  $D_1=\{\frac1{|t|+1}v_t:t\in\{0,1\}^{<\NN}\}$, whose closure $A(2^{<\NN})$ is the one-point compactification of a countable discrete set;
\item
  $D_2=\{s^\frown0^\infty:s\in\Theta\}$, a family of points of $\{0,1\}^\NN\subseteq\RR^\NN$, whose closure is denoted $2^{\le\NN}$;
\item
  $D_3=\{x^+_{s^\frown0^\infty}:s\in\Theta\}$, whose closure is $\hat{S}_+(2^\NN)$;
\item
  $D_4=\{x^-_{s^\frown1^\infty}:s\in\Theta\}$, whose closure is $\hat{S}_-(2^\NN)$;
\item
  $D_5=\{v_t:t\in\{0,1\}^{<\NN}\}$, whose closure is $\hat{A}(2^\NN)$;
\item
  $D_6=\{(v_t,t^\frown0^\infty):t\in\{0,1\}^{<\NN}\}$, whose closure is $\hat{D}(2^\NN)$;
\item
  $D_7=\{(v_s,x^+_{s^\frown0^\infty}):s\in\Theta\}$, whose closure is $\hat{D}(S(2^\NN))$.
\end{enumerate}
The closures in (1) and (2) are metrizable, those in
(3) and (4) are hereditarily separable but not metrizable, those in (6) and (7) are first countable but not hereditarily separable, and the closure in (5) is not first countable (see~\cite{Argyros-Dodos-Kanellopoulos:2008b}).

\begin{fact}[Heptachotomy;
    see {\cite[Theorem~2]{Argyros-Dodos-Kanellopoulos:2008b}}]\label{fact:hepta} \leavevmode
  \begin{enumerate}[label=\textup{(\alph*)}]
  \item
    Up to equivalence, there are exactly seven minimal families, namely $D_1,\dots,D_7$, and they are pairwise non-equivalent.
  \item
    Let $X$ be a Polish space and $\{f_t:t\in\{0,1\}^{<\NN}\}$ a family of pairwise distinct functions that is relatively compact in $\Bone(X)$.
    Then there is a regular dyadic subtree $(s_t)$ of $\{0,1\}^{<\NN}$ such that $\{f_{s_t}:t\in\{0,1\}^{<\NN}\}$ is equivalent to one of $D_1,\dots,D_7$.
  \end{enumerate}
\end{fact}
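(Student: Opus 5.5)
Since the statement is quoted from Argyros, Dodos and Kanellopoulos, the plan is to reconstruct their argument. It has three parts: a Ramsey-theoretic regularization of the tree $\{f_t\}$, a description of its pointwise closure in terms of limits along branches and one-sided limits, and a case analysis that matches the regularized tree with one of $D_1,\dots,D_7$.

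\emph{Regularization.} I start with part (b). Since $\overline{\{f_t\}}\subseteq\Bone(X)$ is a separable Rosenthal compactum, it is Fréchet--Urysohn by Bourgain--Fremlin--Talagrand. Every point of the closure is therefore a limit of a sequence of $f_t$'s, and every sequence has a convergent subsequence, by \cref{fact:rosenthal} in its Baire-1 form, \cref{lem:relcompact}. The sequences that matter in the dyadic tree are of three kinds. The first runs along a branch $\sigma$, namely $f_{\sigma|n}$. The other two are combs converging to $\sigma$ from the left or from the right in the lexicographic order $\prec$, that is, nodes $t_n$ with $t_n{}^\frown0^\infty$ increasing or decreasing to $\sigma$.

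Using Milliken's partition theorem for strong (regular dyadic) subtrees, in the form for analytic colourings due to Louveau--Shelah--Veličković, I pass to a regular dyadic subtree on which all of these sequences converge. Along the subtree this yields, for each $\sigma\in\{0,1\}^\NN$, three limits: $g_\sigma$ along the branch, and $g^-_\sigma$, $g^+_\sigma$ along left and right combs. A further application of the same theorem, together with Galvin's theorem (Borel colourings of pairs of reals have perfect homogeneous sets), arranges that each of the relations $g_\sigma=g^+_\sigma$, $g_\sigma=g^-_\sigma$, $g^+_\sigma=g^-_\sigma$ holds either for all $\sigma$ or for none. It also arranges that each of the maps $\sigma\mapsto g_\sigma$, $\sigma\mapsto g^\pm_\sigma$ is either constant or injective. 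The colourings involved are analytic because the functions are Baire-1, so each condition is expressed by countably many real inequalities over a countable dense set of $X$. This step is the main obstacle. It requires checking the definability of each colouring and verifying that the homogeneity obtained is compatible with the canonical enumeration. I would first try to encode each property of $\sigma$ as a colouring of strong subtrees of height~$2$ or $3$, since the relevant sequences are determined by such configurations.

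\emph{Case analysis.} In each homogeneous case I then identify $\overline{\{f_{s_t}\}}$ with the corresponding canonical compactum. The identification sends $f_{s_t}$ to the $t$-th element of the model family and sends $g_\sigma$, $g^\pm_\sigma$ to the matching limits in the model; Fréchet--Urysohn ensures that the closure contains nothing else.
\begin{itemize}
\item If all limits coincide and are constant, the family is $D_1$.
\item If all limits coincide and are injective in $\sigma$, the family is $D_2$.
\item If exactly one one-sided limit differs from the branch limit, the family is $D_3$ or $D_4$, according to which side differs.
\item If the branch limits are constant and the one-sided limits collapse onto them, the family is $D_5$.
\item If the branch limits split off as isolated twins over a metrizable, respectively split-Cantor, set of one-sided limits, the family is $D_6$, respectively $D_7$.
\end{itemize}
Continuity of the resulting bijection follows because convergence in both spaces is determined by the same sequences along branches and combs.

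\emph{Part (a).} Minimality of each $D_i$ is a direct check: restricting to a dyadic subtree and reindexing by the canonical enumeration preserves the defining description. For pairwise non-equivalence I use topological invariants of the closures: metrizability, hereditary separability and first countability separate the four groups. Within a group, the families are told apart as follows.
\begin{itemize}
\item $D_1$ has a single non-isolated point, whereas $D_2$ has continuum many.
\item $D_3$ and $D_4$ differ in which side of each split pair is a limit of the family.
\item $D_6$ and $D_7$ differ in whether the set of non-isolated points is metrizable.
\end{itemize}
Each distinction survives equivalence, since an equivalence is required to map the family onto the family.
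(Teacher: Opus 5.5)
The paper gives no proof of this fact; it quotes it from Argyros, Dodos and Kanellopoulos. Your outline follows their architecture: Ramsey regularization along regular dyadic subtrees, chain and comb limits, and matching of equality patterns. The case analysis, however, is wrong in one case and incomplete.

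\emph{The case $D_5$.} For $D_5=\{v_t\}$, the chain $v_{\sigma|n}$ converges to $\delta_\sigma$, while every left or right comb converges to $0$. So the branch limits are injective, and both one-sided limits equal a single constant different from them. The configuration you assign to $D_5$ (branch limits constant, one-sided limits equal to them) is that of $D_1$. The actual $D_5$ configuration is absent from your list.

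\emph{Exhaustiveness.} More seriously, nothing shows that only seven configurations survive. Your dichotomies still allow, for instance, $g=g^-$ injective with $g^+$ constant, or $g$ constant with $g^\pm$ injective. They also leave cross-coincidences $g^a_\sigma=g^b_\tau$ with $\sigma\ne\tau$ uncontrolled. The missing ingredient is a one-sided continuity lemma: if $\sigma_n\to\sigma$ strictly $\prec$-increasingly in the body of the regularized tree, then every cluster point of $(g^{a_n}_{\sigma_n})$, whatever the types $a_n$, equals $g^-_\sigma$; symmetrically from the right. It follows from the Fréchet--Urysohn property of the closure of $\{(f_t,t)\}$ in the product of $K$ with the compact metrizable space of finite and infinite $0$--$1$ sequences, which is again a Rosenthal compactum. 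That property yields a left comb at $\sigma$ converging to the cluster point. The lemma kills the spurious patterns: a constant $g^+$ forces $g^-_\sigma=\lim_n g^+_{\sigma_n}$ to be the same constant. Galvin's theorem on pairs then removes the cross-coincidences.

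\emph{The identification.} It should not rest on ``the same sequences along branches and combs'', since convergence in the closure also involves sequences of limit points. Instead, check that the two indexed families have the same convergent index sequences. This follows from matching the full equality patterns, together with the distinctness of each $f_{s_t}$ from all limits. The closure of $\{(f_{s_t},d_t)\}$ in the product of the two closures is then the graph of a bijection between compact spaces, hence of a homeomorphism.

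\emph{Definability.} Your justification is false: equality and convergence of Baire-1 functions are not decided on a countable dense set, since $\delta_\sigma$ and $0$ agree off $\sigma$. Properties of limits also depend on infinitely many nodes, so colourings of strong subtrees of height $2$ or $3$ cannot encode them. The relevant sets are co-analytic, in general not Borel. The infinite-dimensional Ramsey theorems still apply, because they hold for analytic partitions, and Galvin's theorem holds for colourings with the Baire property.

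\emph{Part (a).} Your invariant for $D_6$ versus $D_7$ fails. In both closures the set of non-isolated points is non-metrizable, because the chain limits $(\delta_\sigma,\sigma)$, respectively $(\delta_\sigma,x^+_\sigma)$, form an uncountable discrete subset of it. The difference appears in the second derived set, which is the set of comb limits: a Cantor set for $D_6$, non-metrizable for $D_7$. More simply, left and right comb limits coincide for $D_6$ and differ for $D_7$.

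For $D_3$ versus $D_4$, both sides of the split pairs occur as limits of the family in each case. Moreover, the map $f\mapsto1-f\circ c$ from the proof of \cref{thm:convex} carries $D_4$ onto a family of the form of $D_3$. So ``an equivalence maps the family onto the family'' cannot separate them. What does separate them is that an equivalence preserves the index $t$, hence sends chains to chains and left combs to left combs. The chain limit equals the left-comb limit in $D_3$ and the right-comb limit in $D_4$.

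Finally, ``exactly seven'' also needs the remark that a minimal family is, by minimality, equivalent to its restriction to the subtree supplied by (b).
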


%#
%### SUBSECTION: Todorčević’s trichotomy
%#
\subsection{Todorčević’s trichotomy}
Let $X$ be a Polish space, $K\subseteq\Bone(X)$ a separable Rosenthal compactum and $\{f_n:n\in\NN\}$ a countable dense subset of $K$.
An injection $\iota\colon\{0,1\}^{<\NN}\to\NN$ is \emph{increasing} if $\iota(s)<\iota(t)$ whenever $s$ precedes $t$ in the order of $\{0,1\}^{<\NN}$ by length and, within each length, by the lexicographic order.
A minimal family $D_i$ \emph{canonically embeds} into
$K$ (with respect to $\{f_n\}$) if there is an increasing injection $\iota\colon\{0,1\}^{<\NN}\to\NN$ such that $\{f_{\iota(t)}:t\in\{0,1\}^{<\NN}\}$ is equivalent to $D_i$.
In that case the closure of $D_i$ is homeomorphic to a closed subspace of $K$.

\begin{fact}[Todorčević’s trichotomy in canonical form;
    see {\cite{Todorcevic:1999}} and {\cite[Theorem~3]{Argyros-Dodos-Kanellopoulos:2008b}}]\label{fact:tricho} Let $X$ be a Polish space, $K\subseteq\Bone(X)$ a separable Rosenthal compactum and $\{f_n:n\in\NN\}$ a countable dense subset of $K$.
  \begin{enumerate}[label=\textup{(\alph*)}]
  \item
    If $K$ consists of bounded functions, is metrizable and is not separable in the supremum norm, then the family $D_2$ canonically embeds into $K$.
  \item
    If $K$ is hereditarily separable but not metrizable, then the family $D_3$ or the family $D_4$ canonically embeds into $K$.
  \item
    If $K$ is first countable but not hereditarily separable, then the family $D_6$ or the family $D_7$ canonically embeds into $K$.
  \item
    If $K$ is not first countable, then the family $D_5$ canonically embeds into $K$.
  \end{enumerate}
\end{fact}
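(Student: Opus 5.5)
The plan is to deduce each clause from the heptachotomy (\cref{fact:hepta}), applied to a suitably chosen tree $\{f_{\iota(t)}:t\in\{0,1\}^{<\NN}\}$ drawn from the dense sequence, and then to eliminate the minimal families whose closures are incompatible with the hypotheses on $K$.

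Two bookkeeping observations come first. If $\iota\colon\{0,1\}^{<\NN}\to\NN$ is an increasing injection and $(s_t)$ is a regular dyadic subtree, then $t\mapsto\iota(s_t)$ is again increasing. This holds because the canonical enumeration of a regular subtree preserves length and the lexicographic order within each length. Hence any equivalence produced by \cref{fact:hepta}(b) for the subtree is a canonical embedding of the corresponding $D_i$. Moreover, the closure of $\{f_{\iota(s_t)}\}$ is a closed subspace of $K$. So if $K$ is first countable, hereditarily separable or metrizable, so is that closure. By the list following \cref{def:equivalent}, this rules out $D_5$ when $K$ is first countable, rules out $D_5,D_6,D_7$ when $K$ is hereditarily separable, and rules out $D_3,\dots,D_7$ when $K$ is metrizable.

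The main step is the choice of $\iota$: the closure of the selected tree must inherit the \emph{failure} of the property in question. Every subtree's closure must fail it too, since passing to a subtree must not destroy the witness. For (d), I would take a point $x\in K$ without a countable base. By the Bourgain--Fremlin--Talagrand theorem, $K$ is Fréchet--Urysohn, so $x$ is a limit of a sequence from $\{f_n\}$. Todorčević's analysis of non-$G_\delta$ points then gives a perfect set $P\subseteq X$ and, for each $\sigma\in P$, points of $K$ that differ from $x$ essentially only near $\sigma$. I would build $\iota$ by a Cantor scheme: choose $f_{\iota(t)}$, with indices increasing in the order of $\{0,1\}^{<\NN}$, so that it is close to $x$ off a small neighbourhood $U_t$ of a portion of $P$, and is separated from $x$ on $U_t$, with $U_{t^\frown0},U_{t^\frown1}\subseteq U_t$ disjoint. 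Every branch of every subtree then produces a limit distinct from $x$ only along its own branch. This yields continuum many limits accumulating only at $x$, so no subtree closure is first countable, and among $D_1,\dots,D_7$ only $D_5$ remains.

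For (b) and (c) the scheme is the same. Failure of metrizability of a separable Rosenthal compactum gives, by Todorčević, an uncountable family of "split pairs": pairs of limits that are separated by no countable set of points. Failure of hereditary separability gives an uncountable discrete subspace, which can be chosen to be the isolated twins of a duplicate. A perfect-set refinement turns these uncountable families into a Cantor scheme of dense elements whose branch limits keep the splitting or the discreteness. For (a), norm nonseparability of a metrizable $K$ gives an uncountable $\varepsilon$-separated set. Since $K$ is compact metrizable, a perfect $\varepsilon$-separated set can be extracted, and the Cantor scheme around it yields a tree whose closure is metrizable but not norm separable. This excludes $D_1$, whose closure $A(2^{<\NN})$ is countable, and leaves $D_2$.

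I expect the main obstacle to be the construction of the tree in each case, and in particular guaranteeing that the witness survives passage to \emph{arbitrary} dyadic subtrees. This is the Ramsey-theoretic heart of Todorčević's theorem. I would try first to obtain it by invoking Todorčević's trichotomy in its original, non-canonical form for the existence of the witnessing subspaces, and then to use the perfect-set and Cantor-scheme refinement only to index those witnesses by the dense sequence.
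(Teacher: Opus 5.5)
Your reduction to the heptachotomy is sound. Composing an increasing injection with the canonical enumeration of a regular dyadic subtree gives an increasing injection, and the closure of a subtree family is a closed subspace of $K$. So \cref{fact:hepta} reduces each clause to one task: produce an increasing $\iota$, with the $f_{\iota(t)}$ pairwise distinct, such that every regular dyadic subtree has a closure that still fails the relevant property.

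For (a) your Cantor scheme, taken in a compatible metric of $K$, does this. Every branch of every subtree runs along a distinct branch of the original tree and converges to a distinct point, so every subtree closure is uncountable and $D_1$ is excluded. This uses only that $K$ is uncountable. The paper gives no proof of this Fact; it quotes it from Todor\v{c}evi\'c and from Argyros, Dodos and Kanellopoulos.

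The gap is in (b)--(d), where this step is the entire content of the theorem and your sketch does not supply it.

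\emph{The Cantor scheme cannot do the work.} Outside the metrizable case nothing makes the sequence along a branch converge, let alone to a prescribed witness. The heptachotomy gets convergence only after passing to a subtree, and then the witness may be lost. More basically, a subtree closure fails to be metrizable, hereditarily separable or first countable because of the values of its limits at uncountably many points of $X$. Each stage of a scheme drawn from $\{f_n\}$ constrains only finitely many coordinates. So you cannot make a branch limit agree with $x$ off $U_t$, or differ from $x$ only along its own branch, and you cannot rule out unwanted limits in an arbitrary subtree.

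\emph{The inputs you cite are not the available ones.} For (d), Todor\v{c}evi\'c's theorem gives a copy of $A(2^\NN)$ inside $K$, by an argument using forcing and absoluteness. It does not give a perfect set $P\subseteq X$ with witnesses localized near its points. For (b), what fails is not the separation of a pair, since one point always separates two distinct functions. It is the simultaneous separation of uncountably many pairs by a single countable set, which finitely many coordinates at a time cannot see.

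\emph{The fallback is the theorem itself.} You propose to take the non-canonical witnesses (a split interval, a duplicate, or $A(2^\NN)$) and index them by the dense sequence. That is precisely the canonical-form theorem of Argyros, Dodos and Kanellopoulos. The points of such a copy are mostly not among the $f_n$. Arranging approximating sequences coherently along one dyadic tree, so that every subtree keeps the witness, is what their Ramsey theory for trees and perfect sets provides (see the historical remarks of \cref{sec:classification}). Nothing in your proposal replaces it.

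As it stands, the proposal reduces (b)--(d) to their hard core rather than proving them.
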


In particular, a non-metrizable separable Rosenthal compactum contains a copy of
$\hat{S}_+(2^\NN)$ or $\hat{S}_-(2^\NN)$, of $\hat{D}(2^\NN)$ or $\hat{D}(S(2^\NN))$, or of $\hat{A}(2^\NN)$, according as it is hereditarily separable, first countable but not hereditarily separable, or not first countable.
If $K$ is metrizable and infinite, then the family $D_1$ canonically embeds into $K$, as recorded in \cref{fig:tree}.
Indeed, an infinite compact metrizable space has a non-isolated point $g$.
Since the enumerated family is dense, we can choose indices $n_0<n_1<\cdots$ such that the functions $f_{n_k}$ are pairwise distinct, differ from $g$ and converge to $g$.
Enumerate $\{0,1\}^{<\NN}$ as $t_0,t_1,\dots$ by length and, within each length, lexicographically, and let $\iota(t_k)=n_k$, which defines an increasing injection.
The closure of $\{f_{\iota(t)}:t\in\{0,1\}^{<\NN}\}$ and the closure $D_1\cup\{0\}$ of $D_1$ each consist of a sequence of pairwise distinct isolated points together with its limit, and every bijection between the isolated points of two such spaces extends, by sending limit to limit, to a homeomorphism.
Hence the family $\{f_{\iota(t)}:t\in\{0,1\}^{<\NN}\}$ is equivalent to $D_1$.

\begin{figure}[ht]
  \centering
  \makebox[\textwidth][c]{\input{tree_body.tex}}
  \caption{Todorčević’s trichotomy as a decision tree, with the canonical compacta and the minimal families of the heptachotomy at its leaves (compare \cite[Section~4.2]{Duenez-Iovino-MatosWiederhold-Salvetti-Tall:2026}), where the dashed leaf is absent for full observable compacta (see \cref{thm:convex} and \cref{cor:nip2}). The NIP level at each class is the largest level it reaches.}
  \label{fig:tree}
\end{figure}

%#
%### SUBSECTION: Levels of NIP
%#
\subsection{Levels of NIP}
We now transfer the classification to NIP families, as in our work on deep computations~\cite[Section~4.2]{Duenez-Iovino-MatosWiederhold-Salvetti-Tall:2026}.

\begin{definition}\label{def:niplevels}
  Let $Y$ be a compact metrizable space and $\cR\subseteq\Cp(Y)$ a pointwise bounded NIP family whose closure $\overline\cR$ is separable.
  The family $\cR$ is \emph{NIP$^{\mathrm{fc}}$}, \emph{NIP$^{\mathrm{hs}}$} or \emph{NIP$^{\mathrm{met}}$} if $\overline\cR$ is, respectively, first countable, hereditarily separable or metrizable.
\end{definition}

We write the levels with superscripts, rather than as NIP$_1$, NIP$_2$, NIP$_3$ as in~\cite[Section~4.2]{Duenez-Iovino-MatosWiederhold-Salvetti-Tall:2026}, because in model theory NIP$_k$ denotes Shelah’s $k$-dependence.

By the implications among the three invariants, we have NIP$^{\mathrm{met}}\Rightarrow$NIP$^{\mathrm{hs}}\Rightarrow$NIP$^{\mathrm{fc}}\Rightarrow$NIP.

The top level has a concrete meaning.
A set $Z\subseteq Y$ \emph{separates} a set $K$ of functions on $Y$ if any two distinct members of $K$ differ at some point of $Z$.

\begin{proposition}\label{prop:countable-probes}
  Let $Y$ be a set and $K\subseteq\RR^Y$ a set that is compact in the topology of pointwise convergence.
  Then the space $K$ is metrizable if and only if some countable set $Z\subseteq Y$ separates $K$.
\end{proposition}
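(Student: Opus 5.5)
We prove the two implications separately, using only compactness and the product topology on $\RR^Y$. The topology of $K$ is the one induced from $\RR^Y$.

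For the direction from separation to metrizability, suppose a countable set $Z=\{z_0,z_1,\dots\}\subseteq Y$ separates $K$. We consider the restriction map $\pi\colon K\to\RR^Z$, $\pi(f)=f|_Z$.
\begin{itemize}
\item The map $\pi$ is continuous, because each coordinate $f\mapsto f(z_k)$ is continuous for the topology of pointwise convergence.
\item It is injective, because $Z$ separates $K$.
\item Since $K$ is compact and $\RR^Z$ is Hausdorff, $\pi$ is a homeomorphism onto its image.
\end{itemize}
The space $\RR^Z$ is a countable product of metrizable spaces, hence metrizable. Concretely, after truncating each coordinate to $[-1,1]$ by $\arctan$, one may take the metric $d(f,g)=\sum_k 2^{-k}\min(1,|f(z_k)-g(z_k)|)$. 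Therefore $K$ is metrizable.

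For the converse, suppose $K$ is compact metrizable. Then $K$ is second countable, and so is $K\times K$. The diagonal $\Delta$ is closed, so $W=(K\times K)\setminus\Delta$ is an open subspace of a second countable space. Hence $W$ is Lindelöf.

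For each pair $(f,g)\in W$ we choose a point $y_{f,g}\in Y$ with $f(y_{f,g})\neq g(y_{f,g})$. The set
\[
  U_{f,g}=\{(f',g')\in K\times K: f'(y_{f,g})\neq g'(y_{f,g})\}
\]
is open in $K\times K$, since evaluation at a point is continuous. It contains $(f,g)$ and is contained in $W$. The sets $U_{f,g}$ therefore cover $W$.

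By the Lindelöf property, countably many of them, say $U_{f_k,g_k}$ for $k\in\NN$, already cover $W$. We put $Z=\{y_{f_k,g_k}:k\in\NN\}$. If $f\neq g$ lie in $K$, then $(f,g)$ belongs to some $U_{f_k,g_k}$, and so $f$ and $g$ differ at $y_{f_k,g_k}\in Z$. Hence the countable set $Z$ separates $K$.

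The only point requiring care is the converse direction. There, one passes from a countable base of $K$ to countably many separating points, and the Lindelöf argument on the off-diagonal set $W$ is what accomplishes this. An alternative route works through the compact metrizable space $C(K)$ and the density of the algebra generated by the evaluations, but the covering argument above avoids any appeal to Stone--Weierstrass.
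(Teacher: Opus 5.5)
Your proof is correct and follows the paper's argument: for one direction, restriction to $Z$ is a continuous injection of the compact space $K$ into the metrizable space $\RR^Z$; for the other, a countable subcover of the Lindel\"of off-diagonal set $W\subseteq K\times K$ by the open sets on which two functions differ at a fixed point yields the countable separating set. The $\arctan$ truncation is superfluous, since the $\min(1,\cdot)$ in your metric already accounts for unbounded coordinates.
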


\begin{proof}
  Suppose that a countable set $Z$ separates $K$.
  Restriction to $Z$ is a continuous injection of $K$ into $\RR^Z$, which is metrizable because $Z$ is countable.
  A continuous injection of a compact space into a Hausdorff space is a homeomorphism onto its image, so the space $K$ is metrizable.

  Conversely, suppose that $K$ is metrizable.
  Being compact, it is second countable, and so is $K\times K$.
  Hence the open set $W=\{(f,g)\in K\times K:f\ne g\}$ is Lindelöf.
  For $y\in Y$, the set $U_y=\{(f,g)\in K\times K:f(y)\ne g(y)\}$ is open, and these sets cover $W$.
  If $\{U_y:y\in Z\}$ is a countable subcover, then the set $Z$ separates $K$.
\end{proof}

Thus an NIP family with separable closure is NIP$^{\mathrm{met}}$ exactly when countably many points of $Y$ distinguish all functions in its closure.
When the functions are responses and the points are probes, this says that countably many probes distinguish all idealized responses.
The other two levels have readings of the same kind, for which we need two notions.
Let $K$ be a set of real functions on a set $Y$.
A member $f$ of $K$ is \emph{countably determined} if there is a countable set $Z_f\subseteq Y$ such that $f$ is the only member of $K$ that agrees with $f$ on $Z_f$.
A set $E\subseteq K$ is \emph{finitely separated} if, for each $f\in E$, there are a finite set $F_f\subseteq Y$ and a real number $\varepsilon_f>0$ such that every $g\in E\setminus\{f\}$ satisfies $|g(y)-f(y)|\ge\varepsilon_f$ for some $y\in F_f$.

\begin{proposition}\label{prop:levels-separation}
  Let $Y$ be a compact metrizable space and $\cR\subseteq\Cp(Y)$ a pointwise bounded NIP family whose closure $K=\overline\cR$ is separable.
  \begin{enumerate}[label=\textup{(\alph*)}]
  \item
    The family is NIP$^{\mathrm{met}}$ if and only if some countable set $Z\subseteq Y$ separates $K$.
  \item
    The family is NIP$^{\mathrm{fc}}$ if and only if every member of $K$ is countably determined.
  \item
    The family is NIP$^{\mathrm{hs}}$ if and only if no subset of $K$ of cardinality $2^{\aleph_0}$ is finitely separated.
  \end{enumerate}
\end{proposition}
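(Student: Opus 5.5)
Part (a) is \cref{prop:countable-probes} applied to the compact set $K$, since NIP$^{\mathrm{met}}$ means by definition that $K$ is metrizable. The remaining two parts translate first countability and hereditary separability of $K$ into statements about probes, using the topology of pointwise convergence directly. Throughout, $K$ is compact Hausdorff.

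For (b), I will use the standard fact that in a compact Hausdorff space a point $f$ has a countable base exactly when $\{f\}$ is a $G_\delta$ set. Suppose first that $f$ is countably determined by $Z_f$. Then
\[
  \{f\}=\bigcap_{y\in Z_f,\,k\ge1}\{g\in K:|g(y)-f(y)|<1/k\},
\]
a countable intersection of open sets, so $f$ has countable character. Conversely, suppose $f$ has a countable base $(W_m)$. Each $W_m$ contains a basic neighborhood determined by finitely many probes $F_m\subseteq Y$ and some $\varepsilon_m>0$. Put $Z_f=\bigcup_m F_m$. If $g\in K$ agrees with $f$ on $Z_f$, then $g$ lies in every $W_m$, so $g=f$ by the Hausdorff property. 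Hence (b) holds.

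For (c), I will use the characterization, valid for compact spaces, that $K$ is hereditarily separable if and only if it contains no uncountable left-separated subspace. I will also use the countable chain condition: a first-countable compactum of the kind considered has weight at most continuum, so cardinality $2^{\aleph_0}$ is the relevant threshold. The plan runs as follows.
\begin{enumerate}[label=\textup{(\arabic*)}]
\item A finitely separated set $E$ is exactly a discrete subspace of $K$. Indeed, for each $f\in E$ the basic open set $\{g:|g(y)-f(y)|<\varepsilon_f\text{ for }y\in F_f\}$ meets $E$ only in $f$. Conversely, every point isolated in $E$ has such a basic neighborhood.
\item If $K$ is hereditarily separable, every discrete subspace is countable, so no finitely separated set of size $2^{\aleph_0}$ exists.
\item Conversely, assume $K$ is not hereditarily separable. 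Either $K$ is not first countable, or it is. In the first case \cref{fact:tricho}(d) embeds $\hat{A}(2^\NN)$, whose points $\delta_\sigma$ form a discrete set of size continuum. In the second case \cref{fact:tricho}(c) embeds $\hat{D}(2^\NN)$ or $\hat{D}(S(2^\NN))$, whose isolated twins form a discrete set of size continuum. In both cases these sets are discrete in $K$, because their closures are closed subspaces of $K$ and discreteness is inherited.
\end{enumerate}

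The main obstacle is step (3): an abstract failure of hereditary separability gives only an uncountable discrete set, possibly of size $\aleph_1$, whereas the statement asks for one of size $2^{\aleph_0}$. This is why I route through the canonical witnesses of \cref{fact:tricho}, which supply continuum-sized discrete sets without any appeal to CH. The remaining care is to check that discreteness in a closed copy inside $K$ gives finite separating data in the sense of the definition. This holds because subspace neighborhoods are traces of basic pointwise neighborhoods of $K$.
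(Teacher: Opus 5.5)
Your proof is correct and follows the paper's argument. Parts (a) and (b) match it exactly, and for (c) you, like the paper, identify finitely separated sets with discrete subspaces and then use the criterion that a separable Rosenthal compactum is hereditarily separable if and only if it has no discrete subspace of cardinality $2^{\aleph_0}$. The paper cites this criterion from Pol's theorem and notes that it also follows from Todor\v{c}evi\'c's trichotomy; that is the route you make explicit via the sets $\{\delta_\sigma\}$ in $\hat{A}(2^\NN)$ and the twins in $\hat{D}(2^\NN)$ or $\hat{D}(S(2^\NN))$.

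Two minor points. The asides about left-separated subspaces and the countable chain condition play no role and can be dropped. Also, the $\delta_\sigma$ and the twins are not isolated points of the extended compacta, since they are limits of tree members; they only form discrete subspaces, which is all your argument needs.
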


\begin{proof}
  Part (a) is \cref{prop:countable-probes}.
  For (b), fix $f\in K$ and choose a countable determining set $Z_f\subseteq Y$ when one exists.
  The singleton $\{f\}$ is the intersection of the open sets $\{g\in K:|g(y)-f(y)|<1/m\}$, indexed by $y\in Z_f$ and positive integers $m$.
  In a compact Hausdorff space, a point whose singleton is a countable intersection of open sets has a countable neighborhood base.
  Conversely, let $(U_k)_{k\in\NN}$ be a countable neighborhood base at $f$.
  For each $k\in\NN$, choose a basic product neighborhood of $f$ contained in $U_k$, defined by a finite set $F_k\subseteq Y$ of coordinates.
  The set $Z_f=\bigcup_{k\in\NN}F_k$ determines $f$, since any member of $K$ agreeing with $f$ on these coordinates belongs to every $U_k$ and hence equals $f$.
  (c) A set is finitely separated exactly when it is a discrete subspace, and a separable Rosenthal compactum is hereditarily separable if and only if it has no discrete subspace of cardinality
  $2^{\aleph_0}$, by Pol’s theorem, which also follows from Todorčević’s trichotomy (see~\cite{Argyros-Dodos-Kanellopoulos:2008b} and~\cite[Theorem~0.4]{Glasner-Megrelishvili:2022}).
\end{proof}

The three levels thus measure how many points of $Y$ are needed to determine a limit in the closure $K$:
one countable set for all limits (NIP$^{\mathrm{met}}$), a countable set for each limit (NIP$^{\mathrm{fc}}$), or no family of size continuum in which each member can be isolated by finitely many points at a positive tolerance (NIP$^{\mathrm{hs}}$).
In the last condition, both the finite set and the tolerance may depend on the member being isolated.
When limits are taken as types for a fixed formula and points as parameters, the levels describe how instances of the formula determine a type.
This partly answers a question, raised in our work on deep computations~\cite{Duenez-Iovino-MatosWiederhold-Salvetti-Tall:2026}, that asks for a non-topological characterization of the levels;
a finitary characterization, in the spirit of the fat-shattering dimension, remains open (see \cref{ques:finitary-levels}).
The fat-shattering dimension at a given gap is the supremum of the sizes of the finite sets of points at which the functions realize every pattern of high and low values, around thresholds that may vary from point to point.

\begin{keytheorem}\label{thm:nipclasses}
  Let $Y$ be a compact metrizable space and $\{r_n:n\in\NN\}\subseteq\Cp(Y)$ a pointwise bounded NIP family.
  Then exactly one of the following holds:
  \begin{enumerate}[label=\textup{(\roman*)}]
  \item
    the family is NIP$^{\mathrm{met}}$;
  \item
    the family is NIP$^{\mathrm{hs}}$ but not NIP$^{\mathrm{met}}$, and there is an increasing injection $\iota\colon\{0,1\}^{<\NN}\to\NN$ such that $\{r_{\iota(t)}:t\in\{0,1\}^{<\NN}\}$ is equivalent to $D_3$ or to $D_4$;
  \item
    the family is NIP$^{\mathrm{fc}}$ but not NIP$^{\mathrm{hs}}$, and there is an increasing injection $\iota\colon\{0,1\}^{<\NN}\to\NN$ such that $\{r_{\iota(t)}:t\in\{0,1\}^{<\NN}\}$ is equivalent to $D_6$ or to $D_7$;
  \item
    the family is not NIP$^{\mathrm{fc}}$, and there is an increasing injection $\iota\colon\{0,1\}^{<\NN}\to\NN$ such that $\{r_{\iota(t)}:t\in\{0,1\}^{<\NN}\}$ is equivalent to $D_5$.
  \end{enumerate}
\end{keytheorem}

The following proof merely combines the facts above.

\begin{proof}
  By \cref{thm:dichotomy} and \cref{lem:relcompact}, the closure $K$ of the family is a Rosenthal compactum, and the family is a countable dense subset of it.
  The four cases are mutually exclusive by the definitions and cover all possibilities.
  In cases (ii)--(iv), the required injection is provided by \cref{fact:tricho}(b)--(d), applied to $K$ and the dense subset $\{r_n\}$.
\end{proof}

Thus a non-metrizable NIP closure is always explained by a subfamily of the original functions, arranged along the dyadic tree and equivalent to one of five minimal families.
Applied to realized types, the complete descriptions of elements of the structure in terms of the parameters, the classification of \cref{thm:nipclasses} yields that of NIP formulas recalled in \cref{sec:main}
(see~\cite[Section~4.2]{Duenez-Iovino-MatosWiederhold-Salvetti-Tall:2026}).
There the realized types are viewed as continuous functions on a compact metrizable space of parameters, and this proviso matters:
the space of types of a formula over a countable set of parameters, with no topology on the parameters, is always metrizable, since the countable set of parameters separates the types (see \cref{prop:countable-probes}).

%#
%### SUBSECTION: Historical remarks
%#
\subsection{Historical remarks}
Todorčević proved his trichotomy in~1999~\cite{Todorcevic:1999}, using forcing and absoluteness in the step for points without a countable neighborhood base;
Argyros, Dodos and Kanellopoulos extracted the canonical form and the heptachotomy from the methods of his paper in~2008~\cite{Argyros-Dodos-Kanellopoulos:2008b},
using infinite-dimensional Ramsey theory for trees and perfect sets~\cite{Halpern-Lauchli:1966,Milliken:1981,Louveau-Shelah-Velickovic:1993}.
Glasner and Megrelishvili used the trichotomy to build a hierarchy of tame dynamical systems~\cite{Glasner-Megrelishvili:2022}, and the levels NIP$^{\mathrm{fc}}$, NIP$^{\mathrm{hs}}$
and NIP$^{\mathrm{met}}$ were introduced for limits of computations in our work on deep computations~\cite{Duenez-Iovino-MatosWiederhold-Salvetti-Tall:2026}.

In both settings, Todorčević’s trichotomy is applied to families of functions without linear structure, namely enveloping semigroups and sets of computations, and all its classes can occur.
The idealized responses of an observable (see \cref{sec:responses}), by contrast, form a convex set (see the proof of \cref{cor:nip2}), and convexity collapses NIP$^{\mathrm{hs}}$ onto NIP$^{\mathrm{met}}$ (see \cref{thm:convex} and \cref{rem:convex-literature});
the shorter hierarchy is a feature of quantum, and more generally of probabilistic, response spaces, although all seven minimal families still occur along the dyadic tree (see \cref{thm:hepta-natural}).

%#
%## SECTION: The idealized responses of an observable form a compact set of functions
%#
\section{The idealized responses of an observable form a compact set of functions}\label{sec:responses}

\noindent To apply the classification to a bipartite observable, we retain from Bob’s operator algebra the effective observables produced by Alice’s preparations.
The states of the smallest unital C$^*$-algebra containing them form a compact metrizable probe space, on which each actual preparation has a continuous response.
We identify the pointwise closure of these responses with the responses obtained from all states of Alice’s operator algebra, with Bob’s system idealized first.
The compactum to be classified thus records preparations through their responses to the fixed observable.

%#
%### SUBSECTION: Observables and responses
%#
\subsection{Observables and responses}
We work in the setting of Part~I~\cite{Duenez-Iovino:2026a} and recall its operator theory briefly.
Alice’s and Bob’s systems are described by separable complex Hilbert spaces $H_1$ and $H_2$, whose inner products are linear in the second slot.
For a Hilbert space $H$, we write $B(H)$ for the algebra of bounded operators on $H$;
the trace-class operators, those $x\in B(H)$ with $\Tr|x|<\infty$, where $|x|=(x^*x)^{1/2}$, form a Banach space $\cT(H)$ under the trace norm $\|x\|_1=\Tr|x|$.
A \emph{density operator} on $H$ is a positive trace-class operator of trace~$1$;
density operators are the \emph{actual states} of the system, and we write $\fD(H)$ for the set they form.
A bounded self-adjoint operator $C$ on $H_1\otimes H_2$ is an \emph{observable}, and its expectation when the two systems are prepared independently in actual states $\rho\in\fD(H_1)$ and $\sigma\in\fD(H_2)$ is
\[
  \varphi_C(\rho,\sigma)=\Tr\bigl((\rho\otimes\sigma)C\bigr).
\]
For fixed $C$ and $\rho\in\fD(H_1)$, Bob’s preparation is the probe variable: the response on actual probes is $r_\rho\colon\fD(H_2)\to[-\|C\|,\|C\|]$, defined by $r_\rho(\sigma)=\varphi_C(\rho,\sigma)$.
Fixing Alice’s state $\rho$ produces an observable on Bob’s system alone, the \emph{effective observable} $T_C(\rho)=\Tr_1((\rho\otimes I)C)$, obtained by tracing out Alice’s system.
It is characterized by $\varphi_C(\rho,\sigma)=\Tr(\sigma\,T_C(\rho))$ for all $\sigma\in\fD(H_2)$.
The map $\rho\mapsto T_C(\rho)$ extends linearly to a bounded operator $T_C\colon\cT(H_1)\to B(H_2)$ with $\|T_C\|\le\|C\|$, the \emph{slice map}.

Idealized states are described through C$^*$-algebras.
A \emph{C$^*$-algebra} is a norm-closed algebra of bounded operators closed under adjoints;
we use only unital ones, which contain the identity.
A \emph{state} of a unital C$^*$-algebra $\fA$ is a linear functional $\omega$ on $\fA$ that is positive, in the sense that $\omega(a^*a)\ge0$ for all $a\in\fA$, and satisfies $\omega(I)=1$.
The set $\cS(\fA)$ of states, with the weak$^*$ topology (pointwise convergence on $\fA$), is compact;
it is metrizable when $\fA$ is norm-separable.
Every density operator $\rho$ defines a state $a\mapsto\Tr(\rho a)$;
the remaining states of $B(H)$, which no density operator represents, are the \emph{idealized} states, and each of them is the weak$^*$ limit of some net of actual states (see~\cite[Lemma~2.2]{Duenez-Iovino:2026a}).

\subsection{The space of probes}
The responses of Alice’s preparations involve Bob’s system only through the effective observables $T_C(\rho)$.
The natural probes are therefore the states of the smallest unital C$^*$-algebra containing them.

\begin{definition}\label{def:probe-space}
  Let $C$ be an observable on $H_1\otimes H_2$.
  The \emph{probe algebra} $\fA_C\subseteq B(H_2)$ is the unital C$^*$-algebra generated by $T_C(\cT(H_1))$, and the \emph{probe space} is $Y_C=\cS(\fA_C)$.
  The \emph{record} of an actual probe $\sigma\in\fD(H_2)$ is the functional $\mathfrak{p}_\sigma\colon\fA_C\to\CC$ given by $\mathfrak{p}_\sigma(a)=\Tr(\sigma a)$.
  For $\rho\in\fD(H_1)$, the \emph{completed response} of $\rho$ is the function $\mathfrak{r}_\rho\colon Y_C\to[-\|C\|,\|C\|]$ given by $\mathfrak{r}_\rho(\omega_2)=\omega_2(T_C(\rho))$, and the \emph{response family} is $\cR_C=\{\mathfrak{r}_\rho:\rho\in\fD(H_1)\}$.
\end{definition}

The record $\mathfrak{p}_\sigma$ is a state of $\fA_C$, and hence a point of $Y_C$, rather than a density operator;
different density operators may have the same record.
In a completed response the preparation $\rho$ remains actual: only the domain of probes is completed.

\begin{lemma}\label{lem:probe-space}
  Let $C$ be an observable on $H_1\otimes H_2$.
  \begin{enumerate}[label=\textup{(\alph*)}]
  \item
    The probe space $Y_C$ is compact and metrizable.
  \item
    For every $\rho\in\fD(H_1)$, the completed response $\mathfrak{r}_\rho$ is continuous on $Y_C$ and bounded by $\|C\|$, so the response family $\cR_C$ is a pointwise bounded subset of $\Cp(Y_C)$.
  \item
    The records $\mathfrak{p}_\sigma$ of actual probes $\sigma\in\fD(H_2)$ form a dense subset of $Y_C$, and for $\rho\in\fD(H_1)$ and $\sigma\in\fD(H_2)$ we have $\mathfrak{r}_\rho(\mathfrak{p}_\sigma)=r_\rho(\sigma)=\varphi_C(\rho,\sigma)$.
  \end{enumerate}
\end{lemma}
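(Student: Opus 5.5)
For (a), I would first show that the probe algebra $\fA_C$ is norm-separable and then invoke the fact, recalled above, that the state space of a norm-separable unital C$^*$-algebra is weak$^*$ compact and metrizable. Separability comes from the slice map. Since $H_1$ is separable, the trace class $\cT(H_1)$ is norm-separable; for instance, finite rank operators with matrix entries in $\QQ+i\QQ$ relative to a fixed orthonormal basis are dense. The map $T_C$ is bounded with $\|T_C\|\le\|C\|$, so $T_C(\cT(H_1))$ is norm-separable. Take a countable norm-dense subset $G$ of it. The noncommutative polynomials with coefficients in $\QQ+i\QQ$ in the elements of $G$, their adjoints and $I$ form a countable set, and its norm closure is a unital C$^*$-algebra containing $T_C(\cT(H_1))$, hence containing $\fA_C$. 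So $\fA_C$ is separable, and $Y_C=\cS(\fA_C)$ is compact metrizable.

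For (b), I would argue as follows. For fixed $\rho$, the map $\omega_2\mapsto\omega_2(T_C(\rho))$ is evaluation at the element $T_C(\rho)\in\fA_C$, and it is continuous by the very definition of the weak$^*$ topology. The operator $T_C(\rho)$ is self-adjoint: $C$ is self-adjoint and $\rho\ge0$, so $(\rho\otimes I)C$ has a self-adjoint partial trace. Equivalently, $\Tr(\sigma T_C(\rho))=\Tr((\rho\otimes\sigma)C)$ is real for every $\sigma$. States are positive, so they take real values on self-adjoint elements, and they have norm $\omega_2(I)=1$. Hence $|\mathfrak{r}_\rho(\omega_2)|\le\|T_C(\rho)\|\le\|T_C\|\,\|\rho\|_1\le\|C\|$, and $\cR_C\subseteq\Cp(Y_C)$ is pointwise bounded, indeed uniformly bounded.

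For (c), the identity $\mathfrak{r}_\rho(\mathfrak{p}_\sigma)=\Tr(\sigma T_C(\rho))=\varphi_C(\rho,\sigma)=r_\rho(\sigma)$ is just the characterization of $T_C(\rho)$. That $\mathfrak{p}_\sigma$ is a state of $\fA_C$ is also clear: it is the restriction to $\fA_C$ of the normal state $\Tr(\sigma\,\cdot\,)$ of $B(H_2)$.

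The main point is density. I would use the Hahn--Banach separation theorem in the weak$^*$ topology, which is the standard argument that the vector states of a concretely represented C$^*$-algebra are weak$^*$ dense in its states. Let $P$ be the weak$^*$ closure of $\{\mathfrak{p}_\sigma\}$. The set $\{\mathfrak{p}_\sigma\}$ is convex because $\fD(H_2)$ is, so $P$ is a weak$^*$ closed convex subset of $Y_C$. Suppose some state $\omega\notin P$. The weak$^*$ continuous functionals are evaluations at elements of $\fA_C$, and the real parts of such evaluations are evaluations at self-adjoint elements. So there would be a self-adjoint $a\in\fA_C$ and a real $c$ with $\omega(a)>c\ge\Tr(\sigma a)$ for all $\sigma\in\fD(H_2)$. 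Taking $\sigma$ to be rank-one projections, this gives $\langle\xi,a\xi\rangle\le c$ for all unit vectors $\xi$, that is, $a\le cI$ in $B(H_2)$. Because $cI-a\ge0$ in $B(H_2)$ and spectra do not depend on the ambient C$^*$-algebra, $cI-a$ is positive in $\fA_C$ as well. Then $\omega(a)\le c$, a contradiction. Hence $P=Y_C$. The only delicate step is this use of spectral permanence, namely that positivity in $B(H_2)$ passes to $\fA_C$. An alternative is to extend $\omega$ to a state of $B(H_2)$ by Hahn--Banach and Krein, and then use the density of normal states cited from Part~I together with continuity of restriction. I would mention this route as a cross-check.
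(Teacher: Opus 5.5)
Your proof is correct. Parts (a), (b) and the identity in (c) follow the paper's proof, with more detail on separability and on the self-adjointness of $T_C(\rho)$.

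The density of the records is where you take a different route.

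\emph{The paper's route.} It extends each state of $\fA_C$ to a state of $B(H_2)$ (Kadison--Ringrose, Theorem~4.3.13). Restriction is then a continuous surjection $\cS(B(H_2))\to Y_C$, and the paper pushes forward the density of normal states in $\cS(B(H_2))$ cited from Part~I.

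\emph{Your route.} You run the Hahn--Banach separation argument directly in $\fA_C^*$. The only operator-algebraic input is that a self-adjoint $a\in\fA_C$ with $a\le cI$ in $B(H_2)$ satisfies $\omega(a)\le c$ for every state $\omega$ of $\fA_C$. Spectral permanence gives this. So does the bound $|\omega(x)|\le\|x\|$ applied to $x=a-tI$ for a suitable real $t$, since the norm of $\fA_C$ is the operator norm.

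\emph{What each buys.} Your argument is self-contained, avoids both the extension theorem and the citation, and works for any unital C$^*$-subalgebra of $B(H_2)$. It is in effect the standard proof of the cited density fact, carried out inside the subalgebra. The paper's argument is shorter given its references. It also records along the way that every probe is the restriction of a possibly idealized state of Bob's full system, which is what \cref{lem:closure} uses when it evaluates idealized responses through extensions $\widetilde{\omega}_2$.

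One small precision: the real part of an evaluation functional equals evaluation at a self-adjoint element only on hermitian functionals. That suffices here, because $\omega$ and every point of $P$ are states.
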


\begin{proof}
  (a) Because $H_1$ is separable, so is $\cT(H_1)$;
  its image under the norm-continuous slice map is therefore separable, and so is the C$^*$-algebra $\fA_C$ that this image generates.
  The compact space $Y_C$ is thus metrizable.

  (b) The response $\mathfrak{r}_\rho$ is evaluation at the element $T_C(\rho)$ of $\fA_C$, and evaluation at a fixed element is weak$^*$ continuous.
  Since $\rho\ge0$ and $C$ is self-adjoint, the operator $T_C(\rho)$ is self-adjoint, whence $\mathfrak{r}_\rho$ is real-valued;
  moreover, $|\mathfrak{r}_\rho(\omega_2)|\le\|T_C(\rho)\|\le\|C\|$.

  (c) By the Hahn--Banach theorem, every state of $\fA_C$ extends to a state of $B(H_2)$
  (see \cite[Theorem~4.3.13]{Kadison-Ringrose:1983a});
  restriction to $\fA_C$ is thus a weak$^*$-continuous map of the state space of $B(H_2)$ onto $Y_C$.
  The states given by density operators are weak$^*$ dense in the state space of $B(H_2)$ (see \cite[Lemma~2.2]{Duenez-Iovino:2026a}), and their restrictions are therefore dense in $Y_C$.
  The final identity restates the characterization of the effective observable.
\end{proof}

By \cref{lem:probe-space}(c) and \cref{lem:nipclosure}, Bob’s actual probes alone decide whether $\cR_C$ has the independence property.
The family $\cR_C$ has it exactly when there are actual states $\rho_1,\rho_2,\ldots$ of Alice’s system and reals $a<b$ with the following property:
for all disjoint finite $E,F\subseteq\NN$, some actual state $\sigma$ of Bob’s system satisfies $\varphi_C(\rho_n,\sigma)\le a$ for $n\in E$ and $\varphi_C(\rho_n,\sigma)\ge b$ for $n\in F$.
A single probe then lets Bob read any finite pattern off Alice’s sequence.

\begin{definition}\label{def:nip-observable}
  An observable $C$ on $H_1\otimes H_2$ is \emph{NIP for Alice’s responses} if the response family $\cR_C$ is NIP.
\end{definition}

%#
%### SUBSECTION: Idealized responses
%#
\subsection{Idealized responses}
To describe the pointwise closure of $\cR_C$, we extend the slice map from trace-class operators to functionals on $B(H_1)$.
For a bounded operator $T\colon V\to W$ between Banach spaces, its transpose is $T'\colon W^*\to V^*$, given by $T'(\lambda)=\lambda\circ T$, and its bitranspose is $T''\colon V^{**}\to W^{**}$.
Under the canonical embeddings of each Banach space into its bidual, the bitranspose extends $T$ and is weak$^*$-to-weak$^*$ continuous.
We regard the slice map as a map into $\fA_C$, so that the relevant maps are $T_C'\colon\fA_C^*\to B(H_1)$ and $T_C''\colon B(H_1)^*\to\fA_C^{**}$, where $\cT(H_1)^*$ is identified with $B(H_1)$.
The canonical image of a density operator $\rho$ in $B(H_1)^*$ is the functional $a\mapsto\Tr(\rho a)$; the same domain also contains every idealized state of $B(H_1)$.

For a fixed observable $C$ on $H_1\otimes H_2$ and a state $\omega_1\in\cS(B(H_1))$, define $\mathfrak{r}_{\omega_1}\colon Y_C\to[-\|C\|,\|C\|]$ by
\[
  \mathfrak{r}_{\omega_1}(\omega_2)=\langle T_C''(\omega_1),\omega_2\rangle
  =\omega_1(T_C'\omega_2).
\]
For each $\omega_2\in Y_C$, the operator $T_C'\omega_2$ is self-adjoint with norm at most $\|C\|$, so the displayed value is real and satisfies the stated bound.
The response depends affinely on each state separately, and when $\omega_1$ is the state represented by a density operator $\rho$ it agrees with $\mathfrak{r}_\rho$ of \cref{def:probe-space}.
We recall from Part~I~\cite{Duenez-Iovino:2026a} the two products of $\omega_1\in\cS(B(H_1))$ and $\widetilde{\omega}_2\in\cS(B(H_2))$, evaluated at $C$.
For nets $(\rho_i)_{i\in I}$ and $(\sigma_j)_{j\in J}$ of actual states, indexed by directed sets $I$ and $J$ and converging weak$^*$ to $\omega_1$ and $\widetilde{\omega}_2$, respectively, the products satisfy
\[
  (\omega_1\oA\widetilde{\omega}_2)(C)=\lim_j\lim_i\varphi_C(\rho_i,\sigma_j),
  \qquad
  (\omega_1\oB\widetilde{\omega}_2)(C)=\lim_i\lim_j\varphi_C(\rho_i,\sigma_j).
\]
The subscript names the system idealized first, that is, the one whose limit is taken inside.

\begin{lemma}\label{lem:closure}
  Let $C$ be an observable on $H_1\otimes H_2$.
  Then
  \[
    \overline{\cR_C}=\{\mathfrak{r}_{\omega_1}:\omega_1\in\cS(B(H_1))\}.
  \]
  Moreover, if $\omega_1$ is a state of $B(H_1)$, $\omega_2\in Y_C$ and $\widetilde{\omega}_2$ is a state of $B(H_2)$ extending $\omega_2$, then $\mathfrak{r}_{\omega_1}(\omega_2)=(\omega_1\oB\widetilde{\omega}_2)(C)$.
\end{lemma}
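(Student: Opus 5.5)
The plan is to show that the map $\Phi\colon\cS(B(H_1))\to[-\|C\|,\|C\|]^{Y_C}$, $\omega_1\mapsto\mathfrak{r}_{\omega_1}$, is continuous from the weak$^*$ topology to the topology of pointwise convergence, and then to combine compactness of the state space with density of the actual states. For continuity, fix $\omega_2\in Y_C$. The value $\mathfrak{r}_{\omega_1}(\omega_2)=\omega_1(T_C'\omega_2)$ is evaluation of $\omega_1$ at the fixed element $T_C'\omega_2\in B(H_1)$, so it is weak$^*$ continuous in $\omega_1$. Hence each coordinate of $\Phi$ is continuous, and so is $\Phi$.

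The state space $\cS(B(H_1))$ is weak$^*$ compact, so its image $\{\mathfrak{r}_{\omega_1}\}$ is compact, hence closed, in the Hausdorff product space. This image contains $\cR_C$, because on normal states $\mathfrak{r}_{\omega_1}$ agrees with $\mathfrak{r}_\rho$, as noted before the lemma; this gives $\overline{\cR_C}\subseteq\Phi(\cS(B(H_1)))$. For the reverse inclusion, take any state $\omega_1$. By \cite[Lemma~2.2]{Duenez-Iovino:2026a} it is the weak$^*$ limit of a net $(\rho_i)$ of density operators. By continuity, $\mathfrak{r}_{\rho_i}\to\mathfrak{r}_{\omega_1}$ pointwise, so $\mathfrak{r}_{\omega_1}\in\overline{\cR_C}$.

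For the second assertion, let $\widetilde\omega_2$ extend $\omega_2$, and choose nets $\rho_i\to\omega_1$ and $\sigma_j\to\widetilde\omega_2$ of actual states. For fixed $i$, we have
\[
  \lim_j\varphi_C(\rho_i,\sigma_j)=\lim_j\Tr(\sigma_jT_C(\rho_i))=\widetilde\omega_2(T_C(\rho_i))=\omega_2(T_C(\rho_i)).
\]
The last equality holds because $T_C(\rho_i)\in\fA_C$ and $\widetilde\omega_2$ extends $\omega_2$. This common value is $\mathfrak{r}_{\rho_i}(\omega_2)=\rho_i(T_C'\omega_2)$. Taking the limit in $i$ gives $\omega_1(T_C'\omega_2)=\mathfrak{r}_{\omega_1}(\omega_2)$, which by the displayed formula for $\oB$ equals $(\omega_1\oB\widetilde\omega_2)(C)$.

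The main point to check is the identity $\langle T_C''\omega_1,\omega_2\rangle=\omega_1(T_C'\omega_2)$. We also need that $T_C'\omega_2$ lies in $B(H_1)=\cT(H_1)^*$, and that pairing it with a normal state gives $\Tr(\rho\,T_C'\omega_2)=\omega_2(T_C(\rho))$. Both follow from the definitions of transpose and bitranspose. Everything else is compactness and continuity.
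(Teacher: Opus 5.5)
Your proposal is correct and follows the paper's proof. The set equality comes from coordinatewise weak$^*$ continuity of $\omega_1\mapsto\mathfrak{r}_{\omega_1}$, compactness of the image and weak$^*$ density of actual states. For the second assertion you unpack the iterated-limit formula for $\oB$ directly, using that $\widetilde\omega_2$ and $\omega_2$ agree on $T_C(\rho_i)\in\fA_C$, where the paper cites the product formula of Part~I; this is the same computation.
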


\begin{proof}
For each probe $\omega_2\in Y_C$, the coordinate $\omega_1\mapsto \mathfrak{r}_{\omega_1}(\omega_2)=\omega_1(T_C'\omega_2)$ is weak$^*$ continuous on $\cS(B(H_1))$.
  Hence the response map $\omega_1\mapsto \mathfrak{r}_{\omega_1}$ is continuous into $\RR^{Y_C}$ with its pointwise topology, and its image $K$ is compact.
  Since the bitranspose extends the slice map, the image contains $\cR_C$ and therefore contains its pointwise closure.
  Conversely, actual states are weak$^*$ dense in $\cS(B(H_1))$ (see \cite[Lemma~2.2]{Duenez-Iovino:2026a}), so continuity gives $K\subseteq\overline{\cR_C}$.
  For the final identity, if $\widetilde{\omega}_2\in\cS(B(H_2))$ restricts to $\omega_2\in Y_C$, then
  \[
    \mathfrak{r}_{\omega_1}(\omega_2)=\omega_1(T_C'\omega_2)
    =(\omega_1\oB\widetilde{\omega}_2)(C),
  \]
  by the product formula of \cite[Lemma~2.4]{Duenez-Iovino:2026a}: the functional $x\mapsto\widetilde{\omega}_2(T_Cx)$ on $\cT(H_1)$ equals $x\mapsto \omega_2(T_Cx)$.
  Thus the value is independent of the extension $\widetilde{\omega}_2$.
\end{proof}

We call the members of $\overline{\cR_C}$ the \emph{idealized responses};
they include the completed responses of actual preparations.
By \cref{lem:closure}, the response map $\omega_1\mapsto \mathfrak{r}_{\omega_1}$ sends $\cS(B(H_1))$ continuously onto this compactum; different states may have the same response.
A probe in $Y_C$, the restriction of a state of $B(H_2)$, need not be represented by a density operator, and an idealized response evaluates it with Bob’s system idealized first, whichever extension of the probe is chosen.
On actual probes the two orders agree (see \cite[Corollary~2.5]{Duenez-Iovino:2026a}): for every state $\omega_1$ of $B(H_1)$ and every $\sigma\in\fD(H_2)$, we have $\mathfrak{r}_{\omega_1}(\mathfrak{p}_\sigma)=(\omega_1\oB\sigma)(C)=(\omega_1\oA\sigma)(C)$.

%#
%### SUBSECTION: Real-valued predicates and bounded confinements
%#
\subsection{Real-valued predicates and bounded confinements}\label{sec:logical-framework}
The operator-algebraic setting determines the real-valued predicates to which we apply the independence property.
For expectation predicates, the fixed positive gap $b-a$ in \cref{def:ip} separates responses at a prescribed resolution, while affine dependence on the state records probabilistic mixing.
We work in Keisler’s model theory of real-valued structures~\cite{Keisler:2023},
in which no metric is given a priori and each formula carries its own bound,
rather than in the continuous logic of Ben Yaacov and Usvyatsov~\cite{BenYaacov-Usvyatsov:2010}, in which each sort, that is, each kind of object, carries a metric given in advance and all formulas take values in $[0,1]$.
For the bounded observables of this paper either setting would serve;
the distinction matters for unbounded quantities.

Unbounded quantities are unavoidable in quantum mechanics:
the position $Q$ and the momentum $P$ of a particle satisfy the canonical commutation relation
$QP-PQ=i\hbar I$ on a common dense invariant domain, where $\hbar$ is the reduced Planck constant and $I$ is the identity operator, and no pair of bounded operators satisfies this relation on the whole Hilbert space~\cite{Wintner:1947,Wielandt:1949}.
For comparisons, unboundedness is harmless, because $[A<B]$ is a projection, unchanged when $A$ and $B$ are replaced by the same strictly increasing function of themselves (see \cref{prop:cv-helly}).
For expectation values it is not:
the energy $\rho\mapsto\Tr(\rho H)$, where the Hamiltonian $H$ of the system is an unbounded self-adjoint operator and the state $\rho$ ranges over the density operators for which this trace is defined, has no bounded range on its full domain, and so cannot be normalized to $[0,1]$ by an invertible affine change of scale.
A nonlinear rescaling would destroy the affine dependence on mixtures;
another approach adds a point at infinity~\cite{Yaacov:2008}.
In Keisler’s setting each formula $\varphi$ takes its values in an interval $[-r_\varphi,r_\varphi]$, whose radius $r_\varphi$ depends on the formula, and an unbounded quantity enters through its restrictions to confinements, sets of states on each of which its expectation is bounded:
for example, states supported in a bounded spectral interval of the quantity or, for the energy when $H\ge0$, states satisfying the constraint $\Tr(\rho H)\le E$ for a fixed energy $E\ge0$, in the units of $H$.
The confinements play the role of the shards of our joint work with Matos-Wiederhold, Salvetti and Tall~\cite{Duenez-Iovino-MatosWiederhold-Salvetti-Tall:2026}, the parts of a state space on which every predicate stays within prescribed bounds.

Idealized states, and with them the instabilities studied in this series, can arise only from confinements that are large in the trace norm, the norm that measures how well two states can be distinguished.
Indeed, a set of density operators that is relatively compact in the trace norm has no idealized state in its weak$^*$ closure, since its trace-norm closure is weak$^*$ compact.
Moreover, when Bob’s states are restricted to such a set, the expectation of every bounded observable is stable on the restricted domain, because every sequence of Bob’s states then has a subsequence that converges in the trace norm, and the expectation is Lipschitz in Bob’s state, uniformly in Alice’s:
we have $|\varphi_C(\rho,\sigma)-\varphi_C(\rho,\sigma')|\le\|C\|\,\|\sigma-\sigma'\|_1$ for all density operators $\rho$, $\sigma$ and $\sigma'$.

%#
%## SECTION: Stability, NIP and the independence property are continuity, membership in the first Baire class and $\beta\NN$
%#
\section{Stability, NIP and the independence property are continuity, membership in the first Baire class and $\beta\NN$}\label{sec:levels}

\noindent Stability and NIP admit a common description in terms of the regularity of idealized responses.
Recall from Part~I~\cite{Duenez-Iovino:2026a} that an observable is stable when, for all sequences of actual preparations of the two systems, the two iterated limits of its expectation values, one taken first over Alice’s sequence and the other first over Bob’s, agree whenever both exist.
In this sense its expectation value is independent of the order in which the two systems are idealized.
Here an idealized response is the expectation of a preparation of Alice’s system, actual or idealized, as a function of Bob’s probe, with Bob’s system idealized first.
In \cref{thm:levels} below, we characterize stability by continuity of every idealized response and NIP by membership of every idealized response in the first Baire class;
when NIP fails, a sequence of actual responses has a closure homeomorphic to the space $\beta\NN$ of ultrafilters on the natural numbers.
Within NIP, the finer classification concerns the topology of the compactum formed by these functions, and the affine dependence on preparations imposes a further restriction through convexity.

To state the result, we recall the notation of \cref{sec:responses}:
an observable $C$ is a bounded self-adjoint operator on the tensor product
$H_1\otimes H_2$ of the Hilbert spaces of Alice’s and Bob’s systems, its probe space
$Y_C$ is the compact metrizable state space of its probe algebra (see \cref{def:probe-space} and \cref{lem:probe-space}), and its response family $\cR_C$ consists of the responses of Alice’s actual preparations;
the closure $\overline{\cR_C}$ of this family is the set of idealized responses (see \cref{lem:closure}).

For stability we need one more fact from operator theory, Kadison’s function representation:
on the state space of a unital C$^*$-algebra $\fA$, the continuous affine real functions are exactly the evaluations $\omega_2\mapsto \omega_2(z)$ at self-adjoint elements $z\in\fA$ (see~\cite{Kadison:1951a}).

\begin{keytheorem}\label{thm:levels}
  Let $C$ be an observable on $H_1\otimes H_2$.
  \begin{enumerate}[label=\textup{(\alph*)}]
  \item
    The observable $C$ is stable if and only if every idealized response is continuous, that is, $\overline{\cR_C}\subseteq\Cp(Y_C)$.
  \item
    The observable $C$ is NIP for Alice’s responses if and only if every idealized response is Baire-1, that is, $\overline{\cR_C}\subseteq\Bone(Y_C)$.
    In that case, the set $\overline{\cR_C}$ of idealized responses is a Rosenthal compactum.
  \item
    If the observable $C$ is not NIP for Alice’s responses, then there is a sequence $(\rho_n)_{n\in\NN}$ of actual states of Alice’s system such that the closure of $\{\mathfrak{r}_{\rho_n}\}$ is homeomorphic to $\beta\NN$.
  \end{enumerate}
\end{keytheorem}

\begin{proof}
  Part (a) rests on the characterization of stability by weak compactness (see \cite[Theorem~5.1]{Duenez-Iovino:2026a}), and parts (b) and (c) rest on the dichotomy of \cref{thm:dichotomy}.

  For part (a), recall that a bounded operator between Banach spaces is \emph{weakly compact} if it maps bounded sets to relatively weakly compact sets,
  and that, by Gantmacher’s theorem~\cite{Gantmacher:1940}, this holds exactly when its bitranspose takes values in the original target space.
  If $C$ is stable, the slice map is weakly compact as a map into $B(H_2)$, and therefore also as a map into the closed subspace $\fA_C$, whose weak topology is the restriction of that of $B(H_2)$.
  The bitranspose $T_C''$ then maps $B(H_1)^*$ into $\fA_C$, whence every idealized response $\mathfrak{r}_{\omega_1}$ is evaluation at an element of $\fA_C$ and is continuous on $Y_C$.

  Conversely, suppose that every idealized response is continuous, and let $\omega_1\in\cS(B(H_1))$ be given.
  The response $\mathfrak{r}_{\omega_1}$ is real and affine on $Y_C$ by its defining formula, so Kadison’s function representation gives a self-adjoint $z\in\fA_C$ with $\mathfrak{r}_{\omega_1}(\omega_2)=\omega_2(z)$ for every $\omega_2\in Y_C$.
  Thus the elements $T_C''(\omega_1)$ and $z$ of $\fA_C^{**}$ agree on all states of $\fA_C$; since these states span $\fA_C^*$, the elements are equal.
  Every functional in $B(H_1)^*$ is a complex linear combination of states (see \cite[Lemma~2.2]{Duenez-Iovino:2026a}), so linearity implies $T_C''(B(H_1)^*)\subseteq\fA_C$.
  Gantmacher’s theorem therefore makes $T_C\colon\cT(H_1)\to\fA_C$ weakly compact.
  Composing with the bounded inclusion $\fA_C\hookrightarrow B(H_2)$ preserves weak compactness, and the characterization in \cite[Theorem~5.1]{Duenez-Iovino:2026a} implies that $C$ is stable.

  By \cref{lem:probe-space}, the dichotomy of \cref{thm:dichotomy} applies to the family $\cR_C\subseteq\Cp(Y_C)$, and part (b) follows from it together with \cref{lem:relcompact}.
  For part (c), let actual states $\rho_n\in\fD(H_1)$, for $n\in\NN$, and reals $a<b$ witness the independence property of $\cR_C$.
  By compactness of $Y_C$ (see the proof of \cref{thm:dichotomy}), for every $I\subseteq\NN$ some probe $\omega_2$ satisfies $\mathfrak{r}_{\rho_n}(\omega_2)\le a$ for $n\in I$ and $\mathfrak{r}_{\rho_n}(\omega_2)\ge b$ for $n\notin I$.
  No subsequence of $(\mathfrak{r}_{\rho_n})$ can therefore converge pointwise, and \cref{fact:rosenthal} yields a subsequence whose closure is homeomorphic to $\beta\NN$;
  its states form the required sequence.
\end{proof}

In terms of preparations and probes, \cref{thm:levels} describes how the responses of Alice’s preparations depend on Bob’s probe at each level:
\begin{itemize}[leftmargin=*]
\item
  for a stable observable, the response of each idealized preparation of Alice’s system depends continuously on Bob’s probe, idealized probes included;
\item
  for an observable that is NIP but unstable, such as the comparison of photon numbers
  (see \cref{sec:orders}), these responses can be discontinuous, but each is a pointwise limit of a sequence of continuous functions, and together they form a Rosenthal compactum;
\item
  for an observable with the independence property, a sequence of Alice’s actual preparations lets Bob read every finite pattern with a single probe,
  and every such sequence has a subsequence whose closure is a copy of $\beta\NN$ (see the proof of \cref{thm:levels}(c)).
\end{itemize}

\begin{remark}\label{rem:uniform}
  In Part~I~\cite{Duenez-Iovino:2026a} we distinguished stability (no infinite ladder of positive margin) from uniform stability (bounded ladders at every margin, equivalently stability on ultrapowers).
  A ladder is a pair of sequences of Alice’s and of Bob’s preparations, of the same length, finite or infinite, whose expectation values are high when Alice’s index does not exceed Bob’s and low otherwise, or the reverse, and it has margin
  $\delta$ when every high value exceeds every low value by at least $\delta$.
  The same distinction appears here.
  For a real number $\delta>0$, a finite set of functions on a set
  $Y$ is \emph{$\delta$-shattered} if there are reals $a<b$ with $b-a\ge\delta$ such that, for any two disjoint subsets $E$ and $F$ of it, some point $y\in Y$ satisfies $r(y)\le a$ for $r\in E$ and $r(y)\ge b$ for $r\in F$.
  The notion of NIP in \cref{def:ip} forbids an infinite sequence all of whose finite patterns are realized with fixed reals $a<b$, which we call an \emph{independent} sequence;
  \emph{uniform NIP} bounds, for each $\delta>0$, the size of the $\delta$-shattered subsets of the family, and so implies NIP.
  For a uniformly bounded family, such as a family of responses, uniform NIP amounts, by the pigeonhole principle applied to the thresholds and by Assouad’s duality argument (see \cref{rem:symmetry}), to the finiteness of the fat-shattering dimension at every gap.

  Comparisons $[A<B]$, the projections that represent the event “Bob’s value of $B$ exceeds Alice’s value of $A$” for quantities $A$ of Alice’s system and $B$ of Bob’s, are uniformly NIP, with bounds independent of the Hilbert spaces
  (see \cref{cor:comparisons-uniform}), and they are unstable when both quantities have pure point spectra unbounded above, by the ladder argument of \cref{sec:orders}.
  By \cref{thm:levels}, their idealized responses form a Rosenthal compactum of Baire-1 functions, not all continuous.

  As in Part~I, the uniform notion is the one that survives ultrapowers.
  Recall that, for an ultrafilter $\mathcal{U}$ on $\NN$, every bounded sequence
  $(x_k)_{k\in\NN}$ of complex numbers has a unique ultralimit $\lim_{k,\mathcal{U}}x_k$, the complex number $x$ such that $\{k:|x_k-x|<\varepsilon\}\in\mathcal{U}$ for every $\varepsilon>0$.
  The extension of a family of functions on $Y$
  to the ultrapower by $\mathcal{U}$ consists of the functions that send a sequence $(y_k)_{k\in\NN}$ of points of $Y$ to the ultralimit $\lim_{k,\mathcal{U}}r_k(y_k)$, one function for each sequence $(r_k)_{k\in\NN}$ of members of the family;
  each value is a single ultralimit along the common index $k$, not the value of a limit function at a limit point.
  A uniformly bounded family is uniformly NIP if and only if its extension to the ultrapower by every nonprincipal ultrafilter on $\NN$ is NIP.
  Indeed, by Łoś’s theorem, which says that the value of a formula at elements of the ultrapower, given by sequences, is the ultralimit of its values at the terms of the sequences, a
  $\delta$-shattered set of size $m$ in the ultrapower yields, for every $\delta'<\delta$, a $\delta'$-shattered set of size $m$ in the original family.
  Hence, if the sizes of the $\delta$-shattered subsets of the original family are bounded for every $\delta>0$, so are those of its extension, which then has no infinite independent sequence.
  Conversely, suppose that these sizes are unbounded for some $\delta>0$.
  The thresholds of the shattered sets lie in a bounded interval, because the family is uniformly bounded, and rounding each lower threshold upwards and each upper threshold downwards to a grid of mesh $\delta/3$ leaves a gap of at least $\delta/3$.
  Only finitely many rounded pairs occur, and the pigeonhole principle gives fixed reals $a<b$ with $b-a\ge\delta/3$ with which subsets of arbitrarily large finite size are shattered, and hence subsets of every finite size.
  In the ultrapower by any nonprincipal ultrafilter on $\NN$, which is countably incomplete, these finite shattered sets combine, by $\aleph_1$-saturation (every countable set of conditions, each finite subset of which is realized in the ultrapower, is realized in it), into an infinite independent sequence.

  Uniform NIP also has a statistical interpretation.
  After normalizing responses to $[0,1]$, consider independently sampled probe--outcome pairs from an unknown joint distribution.
  In agnostic learning with absolute loss, one seeks a predictor whose mean absolute error is within a prescribed tolerance of the infimum over the response class, with a prescribed probability of success.
  For the response classes considered here, distribution-free sample bounds of this kind exist exactly when the fat-shattering dimension is finite at every positive gap~\cite{Alon-BenDavid-CesaBianchi-Haussler:1997}, that is, exactly when the class is uniformly NIP.
  Part~I instead studies prediction along an adaptive sequence of probes, with a threshold fixed within each protocol and mistake bounds uniform in the threshold.
\end{remark}

%#
%### SUBSECTION: The finer classification of NIP observables
%#
\subsection{The finer classification of NIP observables}
For NIP observables, the classification of \cref{sec:classification} applies to the idealized responses.
We first check that they form a separable compactum.

\begin{lemma}\label{lem:separable}
  Let $C$ be an observable on $H_1\otimes H_2$ and $(\rho_n)_{n\in\NN}$ a sequence of density operators that is dense in $\fD(H_1)$ for the trace norm.
  Then the responses $\mathfrak{r}_{\rho_n}$ are dense in $\overline{\cR_C}$.
\end{lemma}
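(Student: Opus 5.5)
The plan is to show that the response map $\rho\mapsto\mathfrak{r}_\rho$ is continuous from $\fD(H_1)$ with the trace norm into $\RR^{Y_C}$ with the pointwise topology, and in fact uniformly so. Then we combine this with the description of $\overline{\cR_C}$ as the pointwise closure of $\cR_C=\{\mathfrak{r}_\rho:\rho\in\fD(H_1)\}$.

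First, for $\rho,\rho'\in\fD(H_1)$ and any probe $\omega_2\in Y_C$ we have
\[
  |\mathfrak{r}_\rho(\omega_2)-\mathfrak{r}_{\rho'}(\omega_2)|
  =|\omega_2(T_C(\rho-\rho'))|
  \le\|T_C(\rho-\rho')\|
  \le\|C\|\,\|\rho-\rho'\|_1 .
\]
Here we use that a state has norm one and that the slice map satisfies $\|T_C\|\le\|C\|$ as a map from $\cT(H_1)$ with the trace norm (see \cref{sec:responses}). Hence the response map is $\|C\|$-Lipschitz from the trace norm to the supremum norm on $Y_C$. In particular it is continuous into the pointwise topology, which is coarser than uniform convergence.

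Second, let $\mathfrak{r}_\rho\in\cR_C$ be given. By hypothesis there is a subsequence $\rho_{n_k}\to\rho$ in trace norm, and by the first step $\mathfrak{r}_{\rho_{n_k}}\to\mathfrak{r}_\rho$ uniformly on $Y_C$, hence pointwise. Thus $\cR_C$ lies in the pointwise closure $D$ of $\{\mathfrak{r}_{\rho_n}\}$. Since $D$ is closed in $\RR^{Y_C}$, it contains $\overline{\cR_C}$. The reverse inclusion $D\subseteq\overline{\cR_C}$ is trivial, so the responses $\mathfrak{r}_{\rho_n}$ are dense in $\overline{\cR_C}$.

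There is no real obstacle. The only point needing care is the norm bound $\|T_C(x)\|\le\|C\|\,\|x\|_1$ for non-positive trace-class $x=\rho-\rho'$, which is exactly the boundedness of the slice map recorded earlier. One should also note that $\fD(H_1)$ admits such a dense sequence because $\cT(H_1)$ is separable, as observed in the proof of \cref{lem:probe-space}(a). Consequently $\overline{\cR_C}$ is separable, and for an NIP observable it is a separable Rosenthal compactum by \cref{thm:levels}(b).
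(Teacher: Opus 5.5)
Your proof is correct and follows the paper's argument: you bound $\sup_{\omega_2\in Y_C}|\mathfrak{r}_\rho(\omega_2)-\mathfrak{r}_{\rho'}(\omega_2)|$ by $\|C\|\,\|\rho-\rho'\|_1$ and conclude that the pointwise closure of $\{\mathfrak{r}_{\rho_n}\}$ contains $\cR_C$, hence $\overline{\cR_C}$. One wording point: trace-norm density gives indices $n_k$ with $\rho_{n_k}\to\rho$, which need not form a subsequence, but nothing depends on this.
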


\begin{proof}
  For density operators $\rho$ and $\rho'$, we have $\sup_{\omega_2\in Y_C}|\mathfrak{r}_\rho(\omega_2)-\mathfrak{r}_{\rho'}(\omega_2)|\le\|T_C(\rho-\rho')\|\le\|C\|\,\|\rho-\rho'\|_1$.
  Every response is therefore a uniform limit, and hence a pointwise limit, of a sequence of responses $\mathfrak{r}_{\rho_n}$;
  the closure of $\{\mathfrak{r}_{\rho_n}\}$ thus contains $\cR_C$, and with it $\overline{\cR_C}$.
  Sequences such as $(\rho_n)$ exist, because $\cT(H_1)$ is separable.
\end{proof}

\begin{definition}\label{def:observable-levels}
  An observable $C$ that is NIP for Alice’s responses is \emph{NIP$^{\mathrm{fc}}$}, \emph{NIP$^{\mathrm{hs}}$} or \emph{NIP$^{\mathrm{met}}$} if the response family $\cR_C$ is, respectively, NIP$^{\mathrm{fc}}$, NIP$^{\mathrm{hs}}$ or NIP$^{\mathrm{met}}$ (see \cref{def:niplevels}).
\end{definition}

\begin{proposition}\label{prop:stable-metrizable}
  Every stable observable is NIP$^{\mathrm{met}}$:
  its idealized responses form a compact metrizable space.
\end{proposition}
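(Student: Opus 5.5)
The plan is to show that stability places every idealized response inside a norm-compact set of continuous functions, and then to read off metrizability. First, stability makes $C$ NIP, so that the levels of \cref{def:observable-levels} apply. By \cref{thm:levels}(a), every idealized response is continuous on $Y_C$, hence Baire-1. By \cref{thm:levels}(b), the observable is then NIP for Alice's responses. By \cref{lem:separable}, the closure $\overline{\cR_C}$ is separable, so the level NIP$^{\mathrm{met}}$ is meaningful.

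For metrizability I would use the proof of \cref{thm:levels}(a) directly. Stability makes the slice map $T_C\colon\cT(H_1)\to\fA_C$ weakly compact, and Gantmacher's theorem gives $T_C''(B(H_1)^*)\subseteq\fA_C$. Thus each idealized response is $\mathfrak{r}_{\omega_1}(\omega_2)=\omega_2(z)$ with $z=T_C''(\omega_1)\in\fA_C$. The map $\omega_1\mapsto T_C''(\omega_1)$ is weak$^*$-to-weak continuous from $\cS(B(H_1))$ into $\fA_C$, since the bitranspose is weak$^*$-to-weak$^*$ continuous and takes values in $\fA_C$. Its image $W$ is therefore a weakly compact subset of the norm-separable Banach space $\fA_C$, which was shown separable in \cref{lem:probe-space}(a).

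Weakly compact subsets of separable Banach spaces are metrizable in the weak topology. Indeed, a countable norming subset of the dual, for instance a sequence of norm-one functionals dense in $Y_C$ together with their real parts, gives a countable family of weakly continuous functions separating the points of $W$, and a continuous injection into $\RR^\NN$ from a compact space is an embedding.

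It remains to transfer this to $\overline{\cR_C}$. Evaluation $z\mapsto(\omega_2\mapsto\omega_2(z))$ is a continuous map from $W$ with its weak topology onto $\overline{\cR_C}$ with the pointwise topology, because each coordinate $\omega_2\in Y_C\subseteq\fA_C^*$ is weakly continuous. It is injective, since states span $\fA_C^*$. Being a continuous bijection between compact Hausdorff spaces, it is a homeomorphism, and so $\overline{\cR_C}$ is metrizable.

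An alternative route is \cref{prop:countable-probes}: a countable dense set $\{\omega^{(k)}\}\subseteq Y_C$ separates continuous functions on $Y_C$, and all idealized responses are continuous by \cref{thm:levels}(a). This gives a countable separating set of probes, and hence metrizability, immediately, and I would present this as the main argument since it is shorter. The main point to check is only that the density of $\{\omega^{(k)}\}$ in $Y_C$ is available, which holds because $Y_C$ is compact metrizable by \cref{lem:probe-space}(a).
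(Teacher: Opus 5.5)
Your main argument is the paper's proof. Continuity of every idealized response comes from \cref{thm:levels}(a), NIP from \cref{thm:levels}(b), and separability from \cref{lem:separable}; metrizability then follows from \cref{prop:countable-probes}, because a countable dense set of probes in $Y_C$ separates continuous functions. The Gantmacher route you sketch first is also correct, and it shows that $\overline{\cR_C}$ is homeomorphic to a weakly compact subset of the separable space $\fA_C$, but the proof does not need it.
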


\begin{proof}
  By \cref{thm:levels}(a), every idealized response is continuous on the compact metrizable space $Y_C$, hence Baire-1, and the observable $C$ is therefore NIP for Alice’s responses by \cref{thm:levels}(b);
  the set of idealized responses is separable by \cref{lem:separable}.
  A countable dense set of probes separates continuous functions, and \cref{prop:countable-probes} shows that the idealized responses form a metrizable space.
\end{proof}

%#
%### SUBSECTION: Mixtures force metrizability
%#
\subsection{Mixtures force metrizability}
The idealized responses of an observable form a convex set, because responses depend affinely on Alice’s state
(see the proof of \cref{cor:nip2}).
For separable convex Rosenthal compacta, hereditary separability implies metrizability (see \cref{thm:convex} below).
This is a special case of a known result on locally connected Rosenthal compacta (see \cref{rem:convex-literature}), for which we give a direct proof.
The mechanism is visible in the split interval $S(I)$ of \cref{sec:classification}, which is homeomorphic to the set of indicator functions $\one_{(a,1]}$ and $\one_{[a,1]}$ on $[0,1]$, for $0\le a\le1$.
For each such $a$, the two indicators form a split pair, two functions that differ only at $a$, and a convex set of functions that contains both of them contains their midpoint, which takes the value $\frac12$ at $a$.
These midpoints form an uncountable discrete set:
the midpoint for a point $b\ne a$ takes the value $0$ or $1$ at $a$.

\begin{keytheorem}[see {\cite[Theorems~5 and~7]{Todorcevic:1999}}, {\cite[Lemma~2.4]{Gruenhage:1988b}} and \cref{rem:convex-literature}]\label{thm:convex}
  Let $Y$ be a set, and let $K\subseteq\RR^Y$ be a separable compact convex set which, with the topology of pointwise convergence, is a Rosenthal compactum.
  If $K$ is hereditarily separable, then $K$ is metrizable.
\end{keytheorem}

\begin{proof}
  The proof proceeds by contradiction: if $K$ is hereditarily separable but not metrizable, the trichotomy supplies a topological copy in $K$ of one of the extended split Cantor sets.
  We take midpoints of its split pairs using the convex structure of $K\subseteq\RR^Y$; the embedding need not preserve affine combinations.
  The argument below uses convergence from either side of the split pairs to prove that these midpoints contain a nonseparable subspace, contradicting hereditary separability.

  Suppose, then, that $K$ is hereditarily separable but not metrizable.
  By \cref{fact:tricho}(b), there is a homeomorphism $h$ from $\hat{S}_+(2^\NN)$ or $\hat{S}_-(2^\NN)$, the closures of the minimal families $D_3$ and $D_4$ of \cref{sec:classification}, respectively, onto a subspace of $K$.
  We treat the case $\hat{S}=\hat{S}_+(2^\NN)$.
  The other case is symmetric:
  the map $c$ that exchanges the digits $0$ and $1$ reverses the lexicographic order
  $\prec$ of $\{0,1\}^\NN$, and the map $f\mapsto1-f\circ c$ carries $D_4$ to a family of the form of $D_3$ for the tree $c(\Theta)$, the image under $c$ of the regular dyadic subtree $\Theta$ fixed in \cref{sec:classification}.
  Write $Q=\{s^\frown0^\infty:s\in\Theta\}$ for the countable set of points indexing $D_3$, and $\Sigma$ for the closure of $Q$ in $\{0,1\}^\NN$.
  Recall that, for $\sigma\in\{0,1\}^\NN$, the functions $x^+_\sigma$ and $x^-_\sigma$ are the characteristic functions of $\{\zeta\in\{0,1\}^\NN:\sigma\preceq\zeta\}$ and $\{\zeta\in\{0,1\}^\NN:\sigma\prec\zeta\}$;
  thus $D_3$ consists of the functions $x^+_q$ for $q\in Q$.

  We first record the structure of $\hat{S}$.
  If $(\sigma_i)_{i\in\NN}$ is a strictly increasing sequence in $\{0,1\}^\NN$ converging to $\sigma$, then $x^\pm_{\sigma_i}\to x^+_\sigma$ pointwise;
  if it is strictly decreasing, then $x^\pm_{\sigma_i}\to x^-_\sigma$.
  Indeed, for $\zeta\in\{0,1\}^\NN$, the value of $x^\pm_{\sigma_i}$ at $\zeta$ is eventually $1$ if $\zeta\succ\sigma$, eventually $0$ if $\zeta\prec\sigma$, and at $\zeta=\sigma$ it is eventually $1$ in the increasing case and $0$ in the decreasing case.
  Since $\hat{S}$ is Fréchet--Urysohn (every point in the closure of a subset is the limit of a sequence from the subset;
  see~\cite[Theorem~3F]{Bourgain-Fremlin-Talagrand:1978})
  and every sequence of distinct points of $\{0,1\}^\NN$ has a strictly monotone subsequence, every point of $\hat{S}\setminus D_3$ has the form $x^+_\eta$ or $x^-_\eta$ with $\eta\in\Sigma$;
  and if $x^{e_i}_{\sigma_i}\in\hat{S}$ for a strictly monotone sequence $(\sigma_i)$ and signs $e_i\in\{+,-\}$, then the limit computed above lies in $\hat{S}$, because $\hat{S}$ is closed.
  Note also that $x^+_\sigma=x^-_\eta$ exactly when $\sigma$ is the immediate successor of $\eta$ in $\prec$, and that two of these functions with the same sign are equal only when their indices are.

  We use the following fact repeatedly:
  if $A\subseteq\{0,1\}^\NN$ is uncountable, then all but countably many points of $A$ are limits of strictly increasing and of strictly decreasing sequences from $A$.
  Indeed, if $\sigma\in A$ is neither the least element
  $0^\infty$ of $\{0,1\}^\NN$ nor a limit of a strictly increasing sequence from $A$, there is $\zeta_\sigma\prec\sigma$ with no point of $A$ strictly between them, and the intervals $(\zeta_\sigma,\sigma]$ for distinct such $\sigma$ are disjoint;
  each of them either contains a basic clopen set or makes $\zeta_\sigma$ the immediate predecessor of $\sigma$.
  The first alternative occurs for only countably many $\sigma$, because the intervals are disjoint and the basic clopen sets are countable, and so does the second, because only countably many points of $\{0,1\}^\NN$ have an immediate predecessor.
  The decreasing case is the same, with the greatest element $1^\infty$ in place of $0^\infty$.
  Let $P$ be the set of $\sigma\in\Sigma$ that are limits of strictly increasing and of strictly decreasing sequences from $Q$.
  The set $\Sigma$ is uncountable, because distinct branches of the dyadic tree
  $\Theta$ have distinct limits.
  By the same argument applied to $Q$, all but countably many points of $\Sigma$ belong to $P$ (an open interval that misses $Q$ also misses $\Sigma$);
  hence $P$ is uncountable.
  For $\sigma\in P$, both $x^+_\sigma$ and $x^-_\sigma$ belong to $\hat{S}$, they differ at $\sigma$, and neither equals $x^\pm_\eta$ for $\eta\ne\sigma$, because $\sigma$ has neither an immediate predecessor nor an immediate successor in $\prec$.

  The split pairs indexed by $P$ now supply the midpoints.
  For $\sigma\in P$, let $F^\pm_\sigma=h(x^\pm_\sigma)$ and $M_\sigma=\frac12(F^+_\sigma+F^-_\sigma)$, which lies in $K$ by convexity.
  Averaging is continuous for the pointwise topology, and so is $h$;
  consequently, if $(\sigma_i)$ is a strictly increasing (respectively, decreasing) sequence in $P$ converging to $\sigma$, then $M_{\sigma_i}\to h(x^+_\sigma)$ (respectively, $h(x^-_\sigma)$).
  The proof that the midpoints form a non-separable subspace takes three steps.

  \emph{Step 1: each value is taken by at most countably many midpoints.}
  Let $m\in K$ and suppose that $A=\{\sigma\in P:M_\sigma=m\}$ is uncountable.
  Choose $\sigma\in A$ that is a limit of strictly increasing and of strictly decreasing sequences from $A$.
  Then $h(x^+_\sigma)=m=h(x^-_\sigma)$, and since $h$ is injective, $x^+_\sigma=x^-_\sigma$, a contradiction.

  \emph{Step 2: only countably many midpoints lie on the copy of $\hat{S}$.}
  Let $E=\{\sigma\in P:M_\sigma\in h(\hat{S})\}$.
  By Step 1 and the countability of $D_3$, all but countably many
  $\sigma\in E$ satisfy $M_\sigma=h(x^{e(\sigma)}_{\eta(\sigma)})$ for some sign $e(\sigma)$ and some $\eta(\sigma)\in\Sigma$, and $\eta(\sigma)\ne\sigma$, since a midpoint of two distinct points differs from both.
  Suppose that uncountably many such $\sigma$ exist.
  Fix a metric $d$ on $\{0,1\}^\NN$ that induces its topology;
  then for some positive integer $n$ the set $E_n$ of these $\sigma$ with $d(\sigma,\eta(\sigma))\ge1/n$ is uncountable.
  Choose $\sigma\in E_n$ and a strictly monotone sequence $(\sigma_i)$ in $E_n$ converging to $\sigma$, say increasing.
  Passing to a subsequence, we may assume that the signs $e(\sigma_i)$ are constant and that the points $\eta(\sigma_i)$ are either constant or strictly monotone, converging to some $\eta^*$.
  Then the midpoints $M_{\sigma_i}$ converge both to $h(x^+_\sigma)$ and to $h(x^{\pm}_{\eta^*})$.
  Since $h$ is injective, $x^+_\sigma=x^\pm_{\eta^*}$, and since $\sigma\in P$ has no immediate predecessor, $\eta^*=\sigma$.
  Thus $d(\sigma_i,\eta(\sigma_i))\to0$, contrary to $\sigma_i\in E_n$.
  Consequently, the set $E$ is countable.

  \emph{Step 3: the midpoints form a non-separable subspace.}
  Let $P'=P\setminus E$, an uncountable set, and let $L=\{M_\sigma:\sigma\in P'\}$.
  Let $Z\subseteq L$ be countable.
  By Step 1, there is $\sigma\in P'$ with $M_\sigma\notin Z$.
  If $M_\sigma$ were in the closure of $Z$, then, since Rosenthal compacta are Fréchet--Urysohn (see~\cite[Theorem~3F]{Bourgain-Fremlin-Talagrand:1978}), there would be a sequence $(\sigma_i)$ in $P'$ with $M_{\sigma_i}\in Z$ and $M_{\sigma_i}\to M_\sigma$.
  As $M_\sigma\notin Z$, no point occurs infinitely often in $(\sigma_i)$, so a subsequence is strictly monotone and converges to some $\eta\in\Sigma$, and then the midpoints $M_{\sigma_i}$ converge to $h(x^+_\eta)$ or $h(x^-_\eta)$.
  Thus $M_\sigma\in h(\hat{S})$, contrary to $\sigma\notin E$.
  The subspace $L$ is therefore not separable, and $K$ is not hereditarily separable, a contradiction.
\end{proof}

\begin{remark}\label{rem:convex-literature}
  The proof uses convexity only to know that midpoints of points of $K$ lie in $K$.
  The same midpoints decide a claim about the split interval $S(I)=[0,1]\times\{0,1\}$, ordered lexicographically, with the order topology.
  In a remark following his Proposition~10~\cite{Godefroy:1980}, Godefroy states that the space $P(S(I))$ of Borel probability measures on $S(I)$, with the weak$^*$ topology, is hereditarily separable.
  This is not the case:
  for $0<a<1$, let $a^-=(a,0)$ and $a^+=(a,1)$, and let $m_a$ be the midpoint $\frac12(\delta_{a^-}+\delta_{a^+})$ of the point masses at $a^-$ and $a^+$.
  The set $\{s\in S(I):s\le a^-\}$ is clopen, so its indicator $f_a$ is continuous, and $m_a(f_a)=\frac12$, while $m_b(f_a)\in\{0,1\}$ for $b\ne a$.
  Hence the set $\{m_a:0<a<1\}$ is an uncountable discrete subspace of $P(S(I))$, as in the split-pair construction used for \cref{thm:convex}.
  Todorčević has pointed out to us that \cref{thm:convex} is known, and that it holds in the larger class of locally connected Rosenthal compacta, which contains the convex ones:
  it follows from his Theorems~5 and~7~\cite{Todorcevic:1999} and from an argument of Gruenhage~\cite[Lemma~2.4]{Gruenhage:1988b}.
  The proof above avoids local connectedness: it combines the canonical form of the trichotomy (see \cref{fact:tricho}(b)) and the Fréchet--Urysohn property of Rosenthal compacta with the midpoints of split pairs, the same midpoints that refute Godefroy’s remark.
  The examples of \cref{sec:orders} realize the other three of the four topological classes, the metrizable compacta and the three classes of Todorčević’s trichotomy.
  Together with these examples, \cref{thm:convex} determines which classes occur for convex response compacta, in the spirit of the Glasner--Megrelishvili hierarchy of tame systems~\cite{Glasner-Megrelishvili:2022}.
\end{remark}

\begin{corollary}\label{cor:nip2}
  Every observable that is NIP$^{\mathrm{hs}}$ is NIP$^{\mathrm{met}}$.
\end{corollary}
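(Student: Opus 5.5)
The plan is to apply \cref{thm:convex} to the compactum $K=\overline{\cR_C}\subseteq\RR^{Y_C}$ of idealized responses. To do so I will check its hypotheses one at a time: $K$ is a Rosenthal compactum, it is separable, and it is convex. Once these are in place, hereditary separability of $K$, which is what NIP$^{\mathrm{hs}}$ means by \cref{def:observable-levels,def:niplevels}, forces metrizability, that is, NIP$^{\mathrm{met}}$.

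First, let $C$ be NIP$^{\mathrm{hs}}$; in particular $C$ is NIP for Alice's responses. By \cref{thm:levels}(b), $K\subseteq\Bone(Y_C)$ is compact, and since $Y_C$ is compact metrizable by \cref{lem:probe-space}(a), $K$ is a Rosenthal compactum. Separability of $K$ is \cref{lem:separable}, using a trace-norm dense sequence in $\fD(H_1)$.

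Second, convexity, which is the only genuinely new input. By \cref{lem:closure}, $K=\{\mathfrak{r}_{\omega_1}:\omega_1\in\cS(B(H_1))\}$. The map $\omega_1\mapsto\mathfrak{r}_{\omega_1}(\omega_2)=\omega_1(T_C'\omega_2)$ is affine in $\omega_1$ for each fixed probe $\omega_2$. The state space $\cS(B(H_1))$ is convex. Hence for $t\in[0,1]$ we have $t\,\mathfrak{r}_{\omega_1}+(1-t)\,\mathfrak{r}_{\omega_1'}=\mathfrak{r}_{t\omega_1+(1-t)\omega_1'}\in K$, computed pointwise in $\RR^{Y_C}$, so $K$ is a convex subset of $\RR^{Y_C}$.

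Alternatively, one could argue without \cref{lem:closure}. The set $\cR_C$ is convex, since mixtures of density operators are density operators and $\mathfrak{r}_\rho$ is affine in $\rho$. The closure of a convex set in the locally convex space $\RR^{Y_C}$ is again convex. Either route works; I would record the first, since it is the interpretation emphasized in the text (Alice mixing preparations).

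With these hypotheses verified, \cref{thm:convex} applies directly and gives that $K$ is metrizable, so $C$ is NIP$^{\mathrm{met}}$. I expect no real obstacle here: all the difficulty sits in \cref{thm:convex}. The only point needing care is that convexity must hold in the ambient linear structure of $\RR^{Y_C}$ in which $K$ sits, not merely in some homeomorphic copy, and this holds because the responses are literally affine in the state.
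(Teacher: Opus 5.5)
Your proposal is correct and matches the paper's proof: check that $\overline{\cR_C}$ is a separable, convex Rosenthal compactum, then apply \cref{thm:convex}. The only difference is the convexity step, where the paper uses your alternative route (convexity of $\cR_C$, carried to the closure by continuity of $(f,g)\mapsto tf+(1-t)g$ in the pointwise topology); your primary argument via \cref{lem:closure}, that $\overline{\cR_C}$ is the affine image of the convex state space $\cS(B(H_1))$, is an equally valid one-line substitute.
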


\begin{proof}
  Let $C$ be an observable on $H_1\otimes H_2$ that is NIP$^{\mathrm{hs}}$.
  We show that the set $\overline{\cR_C}$ of idealized responses satisfies the hypotheses of \cref{thm:convex}.
  Since the slice map and evaluation at a probe are linear and $\fD(H_1)$ is convex, the identity $\mathfrak{r}_{t\rho+(1-t)\rho'}=t\,\mathfrak{r}_\rho+(1-t)\,\mathfrak{r}_{\rho'}$, for $\rho,\rho'\in\fD(H_1)$ and $0\le t\le1$, shows that the family $\cR_C$ is convex.
  For fixed $t$, the map $(f,g)\mapsto tf+(1-t)g$ is continuous on $\RR^{Y_C}\times\RR^{Y_C}$ for the pointwise topology and sends $\cR_C\times\cR_C$ into $\cR_C$;
  it therefore sends $\overline{\cR_C}\times\overline{\cR_C}$, the closure of $\cR_C\times\cR_C$, into $\overline{\cR_C}$, which is thus convex.
  By \cref{lem:separable} and \cref{thm:levels}(b), the set $\overline{\cR_C}$ is a separable Rosenthal compactum, and it is hereditarily separable because $C$ is NIP$^{\mathrm{hs}}$.
  By \cref{thm:convex}, it is metrizable, that is, the observable $C$ is NIP$^{\mathrm{met}}$.
\end{proof}

For observables, then, at most three of the four topological classes can occur, and in \cref{sec:orders} we show that all three do.
For families of continuous functions all four occur, as the minimal families show, and the class that observables miss is realized by $D_3$:
the members of $D_3$, the characteristic functions of the sets
$\{\zeta\in\{0,1\}^\NN:\zeta\succeq s^\frown0^\infty\}$ for
$s$ in the regular dyadic subtree $\Theta$, are continuous on the Cantor set, since each of these sets is clopen, and their closure
$\hat{S}_+(2^\NN)$ is hereditarily separable but not metrizable, so the family $D_3$ is NIP$^{\mathrm{hs}}$ but not NIP$^{\mathrm{met}}$.

This example comes from the lexicographic order of the Cantor set, which is isomorphic to the order that the real line induces on the middle-thirds Cantor set;
the members of $D_3$ are realized types of this order, and they take only the values $0$ and $1$.
Expectations bring in mixtures, and the midpoints of the split pairs turn the split interval into a space with an uncountable discrete subset;
for the order of positions, the resulting space is the space of nondecreasing functions from
$[0,1]$ to $[0,1]$ that take the value $0$ at $0$ and $1$ at $1$, a closed subspace of the Helly space of all nondecreasing functions from $[0,1]$ to $[0,1]$ (see \cref{sec:orders}).
This is why, for averages over Alice’s mixed preparations, hereditary separability forces metrizability (see \cref{thm:convex}).

In the next theorem, increasing injections, equivalence of families and the minimal families $D_5$, $D_6$ and $D_7$ are those of \cref{sec:classification}.

\begin{keytheorem}\label{thm:observable-classes}
  Let $C$ be an observable on $H_1\otimes H_2$ that is NIP for Alice’s responses, and let $(\rho_n)_{n\in\NN}$ be a trace-norm dense sequence in $\fD(H_1)$.
  Then exactly one of the following holds:
  \begin{enumerate}[label=\textup{(\roman*)}]
  \item
    the observable $C$ is NIP$^{\mathrm{met}}$;
  \item
    the observable $C$ is NIP$^{\mathrm{fc}}$ but not NIP$^{\mathrm{hs}}$, and there is an increasing injection $\iota\colon\{0,1\}^{<\NN}\to\NN$ such that the responses $\{\mathfrak{r}_{\rho_{\iota(t)}}:t\in\{0,1\}^{<\NN}\}$ are equivalent to $D_6$ or to $D_7$;
  \item
    the observable $C$ is not NIP$^{\mathrm{fc}}$, and there is an increasing injection $\iota\colon\{0,1\}^{<\NN}\to\NN$ such that the responses $\{\mathfrak{r}_{\rho_{\iota(t)}}:t\in\{0,1\}^{<\NN}\}$ are equivalent to $D_5$.
  \end{enumerate}
\end{keytheorem}

\begin{proof}
  By \cref{lem:probe-space}, the family $\{\mathfrak{r}_{\rho_n}\}$ is a pointwise bounded family of continuous functions on the compact metrizable space $Y_C$, and it is NIP because it is contained in $\cR_C$.
  By \cref{lem:separable}, its closure is $\overline{\cR_C}$.
  Hence \cref{thm:nipclasses} applies, and its case (ii) is excluded by \cref{cor:nip2}.
\end{proof}

Physically, when the idealized responses of an NIP observable do not form a metrizable space, the reason is always visible in Alice’s actual preparations:
a family of them, arranged along the dyadic tree, has idealized limits that reproduce one of three canonical compacta, the closures of the minimal families $D_5$, $D_6$ and $D_7$.
In the next section we locate the comparisons $[A<B]$ in this classification.

%#
%### SUBSECTION: Historical remarks
%#
\subsection{Historical remarks}
Gantmacher characterized weakly compact operators by their bitransposes in~1940~\cite{Gantmacher:1940}, and Kadison proved the function representation used in \cref{thm:levels} in~1951~\cite{Kadison:1951a}.
Grothendieck’s theorem~\cite{Grothendieck:1952}, on which the characterizations of stability in Part~I~\cite{Duenez-Iovino:2026a} rest, lies behind the characterization of stability by the continuity of the idealized responses in \cref{thm:levels}(a);
the NIP version of this characterization in \cref{thm:levels}(b), with Baire-1 functions in place of continuous ones, follows our work on deep computations~\cite{Duenez-Iovino-MatosWiederhold-Salvetti-Tall:2026}.

%#
%## SECTION: Three physical orders realize the three classes that occur for observables
%#
\section{Three physical orders realize the three classes that occur for observables}\label{sec:orders}

\noindent Each of the three classes of \cref{thm:observable-classes} contains a physically natural order, and the three orders are unstable and NIP.
The comparison of photon numbers has a metrizable compactum of idealized responses and is thus NIP$^{\mathrm{met}}$ (see \cref{thm:orders}).
The comparison of the positions of two particles, in a box or on a line, or of the quadratures of two modes of light
(a quadrature is one of the two field amplitudes of a mode, which behave like a position and a momentum),
has idealized responses that form a copy of the space of nondecreasing functions from $[0,1]$ to $[0,1]$ that take the value $0$ at $0$ and $1$ at $1$, a closed subspace of the Helly space;
this space is first countable but not hereditarily separable, and the comparison is therefore NIP$^{\mathrm{fc}}$ but not NIP$^{\mathrm{hs}}$ (see \cref{thm:orders} and \cref{prop:cv-helly}).
The causal order of a spacetime with one dimension of space,
“Bob’s detection lies in the causal future of Alice’s event”, is NIP but not NIP$^{\mathrm{fc}}$:
since a light cone can shrink to a point, its compactum of idealized responses is not first countable (see \cref{thm:causal}).
The comparison of two phases, “Bob’s angle is ahead of Alice’s by less than half a turn”, a cyclic relation rather than an order, falls in the same class, because on a circle the jumps at the two ends of a half-turn can cancel (see \cref{thm:phases}).

%#
%### SUBSECTION: Relations between classical quantities
%#
\subsection{Relations between classical quantities}
All three orders relate a classical quantity measured by Alice to one measured by Bob, and their responses can be computed from classical probability distributions.
Let $(X_1,m_1)$ and $(X_2,m_2)$ be $\sigma$-finite measure spaces such that the Hilbert spaces
$H_1=L^2(X_1,m_1)$ and $H_2=L^2(X_2,m_2)$ are separable, fix an integer $d\ge1$, let $\alpha\colon X_1\to\RR^d$ and $\beta\colon X_2\to\RR$ be measurable, and let $R\subseteq\RR^d\times\RR$ be a Borel set.
The \emph{relation observable} $C_R$ is multiplication on $L^2(X_1\times X_2,m_1\otimes m_2)=H_1\otimes H_2$ by the indicator of $\{(s,t):(\alpha(s),\beta(t))\in R\}$;
it represents the event “Alice’s value and Bob’s value are related by $R$”.
In this section, the letters $\rho$ and $\sigma$ denote actual states of Alice’s and of Bob’s systems, that is, density operators on $H_1$ and on $H_2$.
For a state $\rho$ of Alice’s system, the \emph{distribution} of Alice’s value in
$\rho$ is the Borel probability measure on $\RR^d$
given by $\mu_\rho(U)=\Tr\bigl(\rho\,M_{\one_{\alpha^{-1}(U)}}\bigr)$
for Borel sets $U\subseteq\RR^d$, where $M_f$ denotes multiplication by $f$, and the distribution $\mu_\sigma$ of Bob’s value in a state $\sigma$ of Bob’s system is defined in the same way.
For $y\in\RR$, let $R^y=\{x\in\RR^d:(x,y)\in R\}$, and, for a state $\rho$ of Alice’s system, let $g_\rho\colon\RR\to[0,1]$ be given by $g_\rho(y)=\mu_\rho(R^y)$, the probability that Alice’s value is related by $R$ to the value $y$;
it is Borel measurable, by Tonelli’s theorem.
Recall from \cref{sec:responses} that the effective observable $T_C(\rho)$ of an observable $C$ is the observable on Bob’s system obtained by fixing Alice’s state $\rho$.

By the spectral theorem, every self-adjoint operator on a separable Hilbert space is unitarily equivalent to multiplication by a real measurable function on a space $L^2(X,m)$ of this kind.
Thus, for self-adjoint operators $A$ on $H_1$ and $B$ on $H_2$, realized as multiplication by $\alpha$ and by $\beta$, the \emph{comparison} $[A<B]$ is the relation observable $C_R$ for $R=\{(x,y):x<y\}$;
it is the projection $(E_A\otimes E_B)(R)$, where $E_A$ and $E_B$ are the spectral measures, so it does not depend on the realization.
When $A$ and $B$ are bounded, the operator $I\otimes B-A\otimes I$
is multiplication by $\beta(t)-\alpha(s)$, and $[A<B]$ is its spectral projection for the interval $(0,\infty)$, or equivalently the spectral projection of $A\otimes I-I\otimes B$ for the interval $(-\infty,0)$.

\begin{lemma}\label{lem:product}
  Let $C_R$ be a relation observable, $\rho$ a state of Alice’s system and $\sigma$ a state of Bob’s system.
  Then the expectation value of $C_R$ when the two systems are prepared independently in these states is
  \[
    \varphi_{C_R}(\rho,\sigma)=(\mu_\rho\otimes\mu_\sigma)(R)=\int g_\rho\,d\mu_\sigma,
  \]
  and the effective observable $T_{C_R}(\rho)$ is multiplication by $g_\rho\circ\beta$.
\end{lemma}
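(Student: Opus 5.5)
The plan is to compute the effective observable first and then read the expectation value off it. Both identities follow from Fubini--Tonelli once we reduce to the case where $\rho$ and $\sigma$ are pure. Write $\chi$ for the indicator of $\{(s,t):(\alpha(s),\beta(t))\in R\}$ on $X_1\times X_2$.

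\textbf{Step 1: pure states.} Let $\rho=|\psi\rangle\langle\psi|$ and $\sigma=|\phi\rangle\langle\phi|$ with unit vectors $\psi\in L^2(X_1,m_1)$ and $\phi\in L^2(X_2,m_2)$. Then $\rho\otimes\sigma$ is the projection onto $\psi\otimes\phi$, and
\[
  \varphi_{C_R}(\rho,\sigma)=\int\!\!\int \chi(s,t)\,|\psi(s)|^2|\phi(t)|^2\,dm_1(s)\,dm_2(t).
\]
By Tonelli's theorem, which applies because the integrand is nonnegative and the measures are $\sigma$-finite, we may integrate first in $s$. The inner integral is $\int\one_{R^{\beta(t)}}(\alpha(s))|\psi(s)|^2\,dm_1(s)=\mu_\rho(R^{\beta(t)})=g_\rho(\beta(t))$. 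Here we use that $\mu_\rho(U)=\int_{\alpha^{-1}(U)}|\psi|^2\,dm_1$ for a pure state. This gives $\int (g_\rho\circ\beta)|\phi|^2\,dm_2=\int g_\rho\,d\mu_\sigma$, since $\mu_\sigma$ is the pushforward of $|\phi|^2m_2$ under $\beta$. Viewing $\alpha\times\beta$ as a map into $\RR^d\times\RR$, the same integral also equals $(\mu_\rho\otimes\mu_\sigma)(R)$, because the pushforward of a product measure under a product map is the product of the pushforwards.

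\textbf{Step 2: mixtures.} A general density operator has a spectral decomposition $\rho=\sum_k\lambda_k|\psi_k\rangle\langle\psi_k|$ that converges in trace norm, and similarly for $\sigma$. Each side of the identities is affine in $\rho$ and in $\sigma$ and continuous under such decompositions, so the pure case extends. For $\varphi_{C_R}$ this holds because it is trace-norm continuous. For $\mu_\rho$ it holds by definition, and the integrals then follow by monotone convergence with nonnegative weights summing to $1$. Hence $\varphi_{C_R}(\rho,\sigma)=(\mu_\rho\otimes\mu_\sigma)(R)=\int g_\rho\,d\mu_\sigma$ in general.

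\textbf{Step 3: the effective observable.} The function $g_\rho\circ\beta$ is measurable with values in $[0,1]$, so multiplication by it is a bounded self-adjoint operator $M$ on $H_2$. Step 2 gives $\Tr(\sigma M)=\int g_\rho\,d\mu_\sigma=\varphi_{C_R}(\rho,\sigma)=\Tr(\sigma\,T_{C_R}(\rho))$ for every density operator $\sigma$. A bounded operator is determined by its expectations on all density operators, by polarization applied to vector states. Therefore $T_{C_R}(\rho)=M$.

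The only delicate point is measurability. We need $(s,t)\mapsto\chi(s,t)$ to be $m_1\otimes m_2$-measurable, which holds because it is the preimage of the Borel set $R$ under the measurable map $\alpha\times\beta$. We also need $g_\rho$ to be Borel, which is the Tonelli statement already noted before the lemma. Beyond these checks, the argument is routine.
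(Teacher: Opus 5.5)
Your proposal is correct and follows essentially the same route as the paper: Tonelli's theorem and push-forward of $|\psi|^2m_1$ and $|\phi|^2m_2$ for pure states, extension to density operators through trace-norm convergent convex decompositions and affinity in each state, and identification of $T_{C_R}(\rho)$ through its values $\langle\phi,T_{C_R}(\rho)\phi\rangle$ on unit vectors. The only difference is cosmetic: you integrate first in $s$ to get $\int g_\rho\,d\mu_\sigma$ directly in the pure case, whereas the paper first obtains $(\mu_\rho\otimes\mu_\sigma)(R)$ and then cites Tonelli for the second equality.
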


\begin{proof}
  For unit vectors $\psi\in H_1$ and $\chi\in H_2$, with distributions $\mu_\psi$ and $\mu_\chi$ in the corresponding pure states, Tonelli’s theorem and a change of variables give
  \[
    \langle\psi\otimes\chi,C_R(\psi\otimes\chi)\rangle
    =\iint\one_R\bigl(\alpha(s),\beta(t)\bigr)\,|\psi(s)|^2|\chi(t)|^2\,dm_1(s)\,dm_2(t)
    =(\mu_\psi\otimes\mu_\chi)(R).
  \]
  A density operator is a countable convex combination of pure states, $\rho=\sum_j\lambda_j|\psi_j\rangle\langle\psi_j|$, the series converging in trace norm;
  here $j$ ranges over a countable set, the weights $\lambda_j\ge0$ have sum $1$, and $|\psi_j\rangle\langle\psi_j|$ is the orthogonal projection onto the line spanned by a unit vector $\psi_j\in H_1$.
  Accordingly, the distribution of Alice’s value in $\rho$ is $\mu_\rho=\sum_j\lambda_j\mu_{\psi_j}$.
  Both sides of the first identity are affine in each state and respect such countable combinations, so the first identity holds for all states;
  the second is Tonelli’s theorem.
  Finally, for every unit vector $\chi\in H_2$,
  \[
    \langle\chi,T_{C_R}(\rho)\chi\rangle=\varphi_{C_R}(\rho,|\chi\rangle\langle\chi|)
    =\int g_\rho\bigl(\beta(t)\bigr)|\chi(t)|^2\,dm_2(t),
  \]
  and a bounded operator on a complex Hilbert space is determined by these values.
\end{proof}

When the functions $g_\rho$ are monotone, or differences of monotone functions, the responses cannot shatter large sets (see \cref{prop:uniform-orders} below).
Recall from \cref{rem:uniform} that a finite set of responses is $\delta$-shattered if every pattern on it is realized with reals $a<b$ such that $b-a\ge\delta$.
For a real number $\delta>0$, let $m_\delta=\lceil9/\delta\rceil$, and let $N(\delta)$ be the largest positive integer $n$ with $2^n\le(2nm_\delta+1)^{m_\delta}$;
it exists because $2^n$ grows faster than any polynomial in $n$.

\begin{lemma}\label{lem:monotone-shatter}
  Let $\delta>0$ be a real number and $n$ a positive integer, let $h_1,\dots,h_n\colon\RR\to\RR$ be functions of the form $h_i=u_i-v_i$, where $u_i,v_i\colon\RR\to[0,1]$ are nondecreasing, and let $a<b$ be reals with $b-a\ge\delta$.
  Suppose that for every $S\subseteq\{1,\dots,n\}$ there is a Borel probability measure $\nu_S$ on $\RR$ with $\int h_i\,d\nu_S\ge b$ for $i\in S$ and $\int h_i\,d\nu_S\le a$ for $i\notin S$.
  Then $2^n\le(2nm_\delta+1)^{m_\delta}$, and hence $n\le N(\delta)$.
\end{lemma}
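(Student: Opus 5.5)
The plan is a counting argument. I will show that the pattern $S$ is determined, up to the gap, by a finite combinatorial datum taking at most $(2nm_\delta+1)^{m_\delta}$ values. Since distinct sets $S$ must give distinct data, this yields $2^n\le(2nm_\delta+1)^{m_\delta}$. Write $m=m_\delta$, so that $1/m\le\delta/9$. The datum has two ingredients: an approximation of each measure $\nu_S$ by an equally weighted sum of $m$ point masses, and an approximation of each monotone function $u_i,v_i$ by a step function with $m$ steps.

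\emph{Step 1: replacing $\nu_S$ by $m$ quantile points.}
Let $Q(s)=\inf\{x:\nu_S((-\infty,x])\ge s\}$ be the quantile function of $\nu$, for $s\in(0,1)$. The pushforward of Lebesgue measure on $(0,1)$ under $Q$ is $\nu$, and $Q$ is nondecreasing. Hence, for nondecreasing $u\colon\RR\to[0,1]$, we have $\int u\,d\nu=\int_0^1u(Q(s))\,ds$ with $u\circ Q$ nondecreasing and $[0,1]$-valued. Put $x_k=Q((k-\frac12)/m)$ for $k=1,\dots,m$. A Riemann sum with one sample in each interval $((k-1)/m,k/m]$ of a monotone function with range in $[0,1]$ has error at most $1/m$. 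Therefore $\bigl|\int u\,d\nu-\frac1m\sum_ku(x_k)\bigr|\le 1/m$, and so $\bigl|\int h_i\,d\nu_S-\frac1m\sum_kh_i(x_k)\bigr|\le2/m$. I expect this to be the main technical point, in particular that atoms of $\nu$ cause no trouble. Working with $Q$ and Lebesgue measure rather than with $\nu$ directly is what handles atoms, and I would check this carefully.

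\emph{Step 2: discretizing $u_i$ and $v_i$.}
For a nondecreasing $u\colon\RR\to[0,1]$, each superlevel set $\{u\ge j/m\}$, for $j=1,\dots,m$, is an up-ray, of the form $[c,\infty)$, $(c,\infty)$, $\emptyset$ or $\RR$. The step function $\tilde u=\frac1m\sum_{j=1}^m\one_{\{u\ge j/m\}}$ satisfies $|u-\tilde u|\le 1/m$. Doing this for all $u_i,v_i$ gives $2nm$ up-rays. Up-rays are linearly ordered by inclusion, so together they cut $\RR$ into at most $2nm+1$ regions, each region being a set of points that lie in exactly the same rays. Write $\tilde h_i=\tilde u_i-\tilde v_i$. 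Then $\tilde h_i(x)$ depends only on the region containing $x$, and
\[
  \Bigl|\int h_i\,d\nu_S-\frac1m\sum_{k=1}^m\tilde h_i(x_k)\Bigr|\le\frac4m\le\frac{4\delta}9<\frac\delta2 .
\]

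\emph{Step 3: counting.}
Attach to $S$ the $m$-tuple of regions containing $x_1,\dots,x_m$, where the points $x_k$ are computed from $\nu_S$. This tuple determines the numbers $t_i=\frac1m\sum_k\tilde h_i(x_k)$. By Step 2, we have $t_i>b-\delta/2\ge(a+b)/2$ for $i\in S$ and $t_i<a+\delta/2\le(a+b)/2$ for $i\notin S$. Hence $S=\{i:t_i>(a+b)/2\}$ is recovered from the tuple. The map from subsets $S$ to tuples is therefore injective into a set of size at most $(2nm+1)^m$, which gives $2^n\le(2nm_\delta+1)^{m_\delta}$. By the definition of $N(\delta)$ as the largest $n$ satisfying this inequality, it follows that $n\le N(\delta)$.
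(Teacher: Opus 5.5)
Your proposal is correct and follows essentially the paper's proof: the same quantile points $y_k$, the same discretization $\lfloor mu\rfloor/m$ of each $u_i,v_i$ into at most $2nm+1$ joint regions, and the same injective map $S\mapsto w(S)$ with the $4/m$ error bound. The one difference is how you derive $\bigl|\int u\,d\nu-\frac1m\sum_k u(y_k)\bigr|\le 1/m$. You use the quantile transform and a Riemann-sum bound for the monotone function $u\circ Q$, whereas the paper compares the distribution functions of $\nu$ and $\nu'$ on up-rays and then applies the layer-cake formula; both handle atoms correctly.
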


\begin{proof}
  Write $m=m_\delta$.
  We first approximate a measure by $m$ equally weighted points.
  Let $\nu$ be a Borel probability measure on $\RR$, with distribution function $F(t)=\nu((-\infty,t])$.
  For $k=1,\dots,m$, let $y_k=\min\{t:F(t)\ge(k-\frac12)/m\}$;
  the minimum exists because $F$ is nondecreasing and right continuous, with limits $0$ and $1$ at $-\infty$ and $+\infty$, and $0<(k-\frac12)/m<1$.
  Let $\nu'$ be the average of the $m$ point masses at the points $y_1,\dots,y_m$.
  Since $y_k\le t$ exactly when $F(t)\ge(k-\frac12)/m$, we have
  $\nu'((-\infty,t])=\lfloor mF(t)+\frac12\rfloor/m$, whence $|\nu((-\infty,t])-\nu'((-\infty,t])|\le1/m$ for every $t$;
  by taking limits from the left, we obtain the same bound for the half-lines $(-\infty,t)$.
  Hence $|\nu(J)-\nu'(J)|\le1/m$ for every set $J$ of the form $\emptyset$, $\RR$, $[t,\infty)$ or $(t,\infty)$.
  For a nondecreasing $u\colon\RR\to[0,1]$, every set $\{u>s\}$ has this form, and $\int u\,d\nu=\int_0^1\nu(\{u>s\})\,ds$;
  therefore
  \begin{equation}\label{eq:quantile}
  \Bigl|\int u\,d\nu-\int u\,d\nu'\Bigr|\le\frac1m.
  \end{equation}

  Next we discretize the functions.
  Let $\tilde{u}_i=\lfloor mu_i\rfloor/m$ and $\tilde{v}_i=\lfloor mv_i\rfloor/m$.
  These functions are nondecreasing, take values in $\{0,\frac1m,\dots,1\}$, and differ from $u_i$ and $v_i$ by at most $\frac1m$.
  Each of the $2n$ functions $\tilde{u}_i,\tilde{v}_i$ changes value at most $m$ times along $\RR$, so the map $Q(y)=(\tilde{u}_1(y),\dots,\tilde{u}_n(y),\tilde{v}_1(y),\dots,\tilde{v}_n(y))$ takes at most $2nm+1$ values.

  For each $S$, let $y^S_1,\dots,y^S_m$ be the points constructed from $\nu_S$ as above, and let $w(S)=(Q(y^S_1),\dots,Q(y^S_m))$.
  By~\eqref{eq:quantile}, applied to $u_i$ and to $v_i$, and by the bound on the discretization,
  \[
    \Bigl|\int h_i\,d\nu_S-\frac1m\sum_{k=1}^m\bigl(\tilde{u}_i(y^S_k)-\tilde{v}_i(y^S_k)\bigr)\Bigr|
    \le\frac4m\le\frac{4\delta}9 ,
  \]
  and the average on the left is determined by $w(S)$.
  If $S\ne S'$, choose $i$ in one of the two sets but not the other, say $i\in S\setminus S'$.
  Then $\int h_i\,d\nu_S-\int h_i\,d\nu_{S'}\ge\delta>\frac{8\delta}9$, so the two averages differ, and $w(S)\ne w(S')$.
  Thus the map $S\mapsto w(S)$ is injective, and $2^n\le(2nm+1)^m$.
\end{proof}

\begin{proposition}\label{prop:uniform-orders}
  Let $C_R$ be a relation observable and $c\ge1$ a constant such that, for every state $\rho$ of Alice’s system, $g_\rho=c\,(u_\rho-v_\rho)$ for two nondecreasing functions $u_\rho,v_\rho\colon\RR\to[0,1]$.
  Then, for every $\delta>0$, no set of more than $N(\delta/(2c))$ responses of Alice’s states is $\delta$-shattered by points of the probe space.
  Hence the observable $C_R$ is uniformly NIP, with bounds that do not depend on the Hilbert spaces, and in particular it is NIP.
\end{proposition}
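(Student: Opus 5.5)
The plan is to reduce the statement to \cref{lem:monotone-shatter}, applied to the normalized functions $h_i=g_{\rho_i}/c=u_{\rho_i}-v_{\rho_i}$, with probe measures playing the role of the $\nu_S$. The one real issue is that shattering takes place in the probe space $Y_C$, whose points are idealized probes, whereas \cref{lem:monotone-shatter} requires Borel probability measures on $\RR$. So the first task is to move the witnesses from $Y_C$ to actual probes, and from actual probes to distributions.

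Suppose $\rho_1,\dots,\rho_n$ are states of Alice's system whose responses $\mathfrak{r}_{\rho_i}$ are $\delta$-shattered by points of $Y_C$ with reals $a<b$, $b-a\ge\delta$. Fix $\varepsilon>0$ with $b-a-2\varepsilon\ge\delta/2$; for instance, take $\varepsilon=\delta/4$. For each $S\subseteq\{1,\dots,n\}$, pick a witnessing probe $\omega_S\in Y_C$ with $\mathfrak{r}_{\rho_i}(\omega_S)\ge b$ for $i\in S$ and $\le a$ for $i\notin S$. Each $\mathfrak{r}_{\rho_i}$ is continuous by \cref{lem:probe-space}(b), and the records $\mathfrak{p}_\sigma$ of actual probes are dense by \cref{lem:probe-space}(c). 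Since only finitely many functions are involved, there is an actual probe $\sigma_S$ with $|\varphi_{C_R}(\rho_i,\sigma_S)-\mathfrak{r}_{\rho_i}(\omega_S)|<\varepsilon$ for all $i$. This is the finite version of \cref{lem:nipclosure}.

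Next apply \cref{lem:product}, which gives $\varphi_{C_R}(\rho_i,\sigma_S)=\int g_{\rho_i}\,d\mu_{\sigma_S}$. Set $\nu_S$ to be the pushforward distribution $\mu_{\sigma_S}$ on $\RR$. Dividing by $c$, we get $\int h_i\,d\nu_S\ge(b-\varepsilon)/c$ for $i\in S$ and $\le(a+\varepsilon)/c$ for $i\notin S$. The new gap is at least $\delta/(2c)$. (The pattern in \cref{lem:monotone-shatter} has $b$ on $S$ and $a$ off $S$; this is the same as the convention in the shattering definition after renaming $S$ as its complement.) Thus \cref{lem:monotone-shatter} with $\delta/(2c)$ in place of $\delta$ gives $n\le N(\delta/(2c))$, which is the stated bound. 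The hypothesis $c\ge1$ is not used in the argument; it only reflects that $g_\rho$ takes values in $[0,1]$.

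For the consequences, the bound depends only on $\delta$ and $c$, not on $H_1$, $H_2$, $\alpha$ or $\beta$, so the family $\cR_{C_R}$ is uniformly NIP in the sense of \cref{rem:uniform}. Uniform NIP implies NIP: an independent sequence with reals $a<b$ would $\delta$-shatter arbitrarily large finite sets with $\delta=b-a$. The main obstacle is the passage from idealized probes in $Y_C$ to genuine measures on $\RR$. Continuity of the completed responses of actual preparations settles it, at the cost of halving the gap, which is why the bound involves $N(\delta/(2c))$ rather than $N(\delta/c)$.
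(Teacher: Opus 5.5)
Your proposal is correct and follows the paper's proof essentially step for step. The paper likewise moves the thresholds to $a+\delta/4$ and $b-\delta/4$, and uses the continuity of the completed responses and the density of the records (\cref{lem:probe-space}) to replace each witnessing probe by the record of an actual state $\sigma_S$; it then applies \cref{lem:product} and \cref{lem:monotone-shatter} to $h_i=g_{\rho_i}/c$ with $\nu_S=\mu_{\sigma_S}$ and gap $\delta/(2c)$, and you are right that the hypothesis $c\ge1$ plays no role in the argument itself.
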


\begin{proof}
  Suppose that the responses $\mathfrak{r}_{\rho_1},\dots,\mathfrak{r}_{\rho_n}$ are $\delta$-shattered with reals $a<b$, the pattern $S$ being realized by a probe $\omega_{2,S}$ in the probe space $Y_{C_R}$ of $C_R$.
  Let $a'=a+\delta/4$ and $b'=b-\delta/4$.
  The set of probes $\omega_2$ with $\mathfrak{r}_{\rho_i}(\omega_2)>b'$
  for $i\in S$ and $\mathfrak{r}_{\rho_i}(\omega_2)<a'$ for $i\notin S$ is open, by \cref{lem:probe-space}(b), and contains $\omega_{2,S}$, so, by \cref{lem:probe-space}(c), it contains the record $\mathfrak{p}_{\sigma_S}$ of an actual state $\sigma_S$ of Bob’s system.
  By \cref{lem:product}, the value of $\mathfrak{r}_{\rho_i}$
  there is $\int g_{\rho_i}\,d\mu_{\sigma_S}$, and \cref{lem:monotone-shatter}, applied with $h_i=g_{\rho_i}/c=u_{\rho_i}-v_{\rho_i}$, $\nu_S=\mu_{\sigma_S}$ and the reals $a'/c<b'/c$, yields $n\le N(\delta/(2c))$.
  An infinite independent sequence with reals $a<b$ would contain $(b-a)$-shattered sets of every finite size, so the observable $C_R$ is NIP.
\end{proof}

\begin{corollary}\label{cor:comparisons-uniform}
  Let $A$ and $B$ be self-adjoint operators on $H_1$ and on $H_2$.
  Then, for every $\delta>0$, no set of more than $N(\delta/2)$ responses of Alice’s states for the comparison $[A<B]$ is $\delta$-shattered by points of its probe space.
  In particular, the comparison $[A<B]$ is uniformly NIP, with bounds independent of the Hilbert spaces, and it is NIP.
\end{corollary}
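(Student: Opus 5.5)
The plan is to deduce \cref{cor:comparisons-uniform} from \cref{prop:uniform-orders} with $c=1$. To do so, I will realize $[A<B]$ as a relation observable and check that every $g_\rho$ is a difference of two nondecreasing functions with values in $[0,1]$.

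By the spectral theorem, as recalled before \cref{lem:product}, realize $A$ and $B$ as multiplication by measurable real functions $\alpha$ on $X_1$ and $\beta$ on $X_2$, with $d=1$. Then $[A<B]$ is the relation observable $C_R$ for $R=\{(x,y):x<y\}$. Neither the expectation values nor the probe space depend on the chosen realization, because $[A<B]=(E_A\otimes E_B)(R)$ and unitary equivalence transports all the data. Hence it suffices to work in this realization.

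Next I compute the section functions. For $y\in\RR$ we have $R^y=(-\infty,y)$, so $g_\rho(y)=\mu_\rho((-\infty,y))$, the left-continuous distribution function of Alice's value in $\rho$. This function is nondecreasing with values in $[0,1]$. Taking $u_\rho=g_\rho$ and $v_\rho=0$, which is also nondecreasing with values in $[0,1]$, gives $g_\rho=1\cdot(u_\rho-v_\rho)$, so the hypothesis of \cref{prop:uniform-orders} holds with $c=1$.

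\Cref{prop:uniform-orders} then states that no set of more than $N(\delta/2)$ responses is $\delta$-shattered by points of the probe space $Y_{[A<B]}$. This is a bound depending only on $\delta$, hence independent of $H_1$, $H_2$, $A$ and $B$. Uniform NIP and NIP follow as in the final sentence of that proof. The only step needing some care is the realization: one must note that the probe algebra and responses of $[A<B]$ are intrinsic, so that the shattering statement transfers from the multiplication model to the original operators. Unboundedness of $A$ and $B$ causes no trouble, since only the spectral measures enter.
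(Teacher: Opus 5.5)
Your proposal is correct and matches the paper's proof: you observe that $R^y=(-\infty,y)$, so $g_\rho(y)=\mu_\rho((-\infty,y))$ is nondecreasing with values in $[0,1]$, and you apply \cref{prop:uniform-orders} with $c=1$, $u_\rho=g_\rho$ and $v_\rho=0$. Your extra remark, that realizing $A$ and $B$ as multiplication operators is harmless because unitary equivalence transports the states, the probe algebra and the responses, is a correct detail that the paper leaves implicit.
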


\begin{proof}
  Here $R^y=(-\infty,y)$, so the function $g_\rho(y)=\mu_\rho((-\infty,y))$ is nondecreasing, with values in $[0,1]$.
  The hypothesis of \cref{prop:uniform-orders} therefore holds, with $c=1$, $u_\rho=g_\rho$ and $v_\rho=0$.
\end{proof}

\begin{remark}[Commuting observables and the general framework]\label{rem:commutative}
  The general results and the commuting examples play complementary roles.
  The definitions of \cref{sec:responses} and the results of \cref{thm:levels,thm:observable-classes} and \cref{cor:nip2} apply to every bounded self-adjoint bipartite observable;
  for the swap test, for example, the probe algebra contains all compact operators.
  For every such observable the full compactum of idealized responses is convex;
  when the observable is NIP, hereditary separability therefore implies metrizability.
  The orders considered here have commutative probe algebras and realize all three resulting classes;
  the label-controlled observables of \cref{sec:hepta}, in which Alice’s label selects a test on Bob’s system, also have commutative probe algebras.
  For these realizations, integration against classical probability distributions makes the responses explicit.
  For the relation observables that satisfy the monotonicity hypothesis of \cref{prop:uniform-orders}, the proposition supplies bounds independent of the Hilbert spaces.
  In \cref{sec:ssb} we pass to parameterized magnetization families, including those of the Ising model in a transverse field, whose Hamiltonian has noncommuting terms;
  their classification follows from the same analysis of pointwise closures (see \cref{thm:qssb}).
  The question of which further phenomena arise from noncommutative probe algebras is stated in \cref{ques:noncommutative}.
\end{remark}

%#
%### SUBSECTION: Order observables
%#
\subsection{Order observables}
We now fix two model comparisons, one of discrete and one of continuous quantities, whose values lie in a compact set $\Xi\subseteq[0,1]$ containing $0$ and $1$.
The two cases are the following:
\begin{itemize}[leftmargin=*]
\item
  \emph{discrete}:
  the set $\Xi$ is $\{1-\frac1{l+1}:l\in\NN\}\cup\{1\}$, the Hilbert spaces are $H_1=H_2=\ell^2(\NN)$, and the operator $A=B$ is the diagonal operator with eigenvalue $1-\frac1{l+1}$ on the number ket $|l\rangle$;
  this operator is a bounded relabeling of the number operator, which counts photons and has eigenvalue $l$ on $|l\rangle$, and the order of its eigenvalues is that of the photon numbers;
\item
  \emph{continuous}:
  the set $\Xi$ is $[0,1]$, the Hilbert spaces are $H_1=H_2=L^2[0,1]$, and the operator $A=B$ is multiplication by $x$, the position of a particle confined to a box, measured from one end of the box in units of its length, so that it ranges over the interval $[0,1]$.
\end{itemize}
In both cases, let $C=[A<B]$ be the spectral projection of $I\otimes B-A\otimes I$ for the interval $(0,\infty)$, which represents the event “Bob’s value exceeds Alice’s”.
For a state $\rho$, let $\mu_\rho$ be the distribution of $A$ in $\rho$, a Borel probability measure on $\Xi$.
In the discrete case every probability measure on
$\Xi\setminus\{1\}$ occurs, and in the continuous case the measures that occur are exactly the absolutely continuous ones (for a probability density $p$, take the pure state given by the unit vector $\psi=\sqrt{p}$).
By \cref{lem:product}, the expectation value of $C$ for Alice’s state $\rho$ and Bob’s state $\sigma$ is
\[
  \varphi_C(\rho,\sigma)=(\mu_\rho\otimes\mu_\sigma)\bigl(\{(x,y):x<y\}\bigr)
  =\int g_\rho\,d\mu_\sigma,\qquad g_\rho(y)=\mu_\rho\bigl([0,y)\bigr).
\]
The function $g_\rho\colon \Xi\to[0,1]$ is nondecreasing, with $g_\rho(0)=0$ and $g_\rho(1)=1$ in both cases, since $\mu_\rho$ assigns no mass to the point $1$;
in the continuous case it is also continuous.
Accordingly, the effective observable is $T_C(\rho)=g_\rho(B)$.
By \cref{cor:comparisons-uniform}, the observable
$C$ is uniformly NIP, and in particular NIP.
It is unstable:
in the discrete case by the interleaving argument of Part~I~\cite[Proposition~4.6(3)]{Duenez-Iovino:2026a},
which uses only the order of the eigenvalues, and in the continuous case by the same ladder argument, with states of Alice’s and of Bob’s particles supported in small disjoint intervals around positions
$a_k$ and $b_l$, respectively, for $k,l\ge1$, interleaved as $a_1<b_1<a_2<b_2<\cdots$.
Writing $\rho_k$ and $\sigma_l$ for these preparations of Alice and Bob, respectively, we obtain $\varphi_C(\rho_k,\sigma_l)=1$ for $k\le l$ and $0$ for $k>l$;
the iterated limit taken first in $k$ is $0$, whereas the one taken first in $l$ is $1$.

%#
%### SUBSECTION: The probe space and the idealized responses
%#
\subsection{The probe space and the idealized responses}
For both comparisons of the previous subsection, the discrete and the continuous one, and for every state
$\rho$ of Alice’s system, the function $g_\rho$, whose value at a point $y\in\Xi$ is the probability that Alice’s value lies below $y$, is nondecreasing, and the pointwise limits of these functions describe all idealized responses (see \cref{prop:order-responses} below).
Recall that the probe algebra $\fA_C$ is the unital C$^*$-algebra generated by the effective observables, and that the probe space $Y_C$ is its state space.
We write $C(\Xi)$ for the algebra of continuous complex functions on $\Xi$, with the supremum norm, and $M_{\Xi}$ for the set of nondecreasing functions $g\colon \Xi\to[0,1]$ with $g(0)=0$ and $g(1)=1$, with the topology of pointwise convergence;
it is a closed subset of $[0,1]^{\Xi}$, hence compact.
Although $M_{\Xi}$ contains discontinuous functions, integration against a fixed probability measure is continuous on it, by the following lemma;
on the compact metrizable space $\Xi$, Radon and Borel probability measures coincide, so the lemma applies to every point of the probe space described in \cref{prop:order-responses}.

\begin{lemma}\label{lem:monotone-nets}
  Let $\nu$ be a Borel probability measure on $\Xi$, and let $(g_i)$ be a net in $M_{\Xi}$, indexed by a directed set, that converges pointwise to a function $g\in M_{\Xi}$.
  Then $\int g_i\,d\nu\to\int g\,d\nu$.
\end{lemma}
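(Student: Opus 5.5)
The plan is to reduce the integral of a monotone function to integrals of indicators of upper sets, using the layer-cake formula already used in the proof of \cref{lem:monotone-shatter}. For $g\in M_\Xi$ we have
\[
  \int g\,d\nu=\int_0^1\nu(\{g>s\})\,ds ,
\]
and, since $g$ is nondecreasing on $\Xi\subseteq[0,1]$, each set $\{g>s\}$ is an upper set of $\Xi$. Such a set is empty, all of $\Xi$, or of the form $\Xi\cap[t,1]$ or $\Xi\cap(t,1]$ for some $t\in[0,1]$. The same holds for each $g_i$. So it suffices to show that, for all but countably many $s\in(0,1)$, we have $\nu(\{g_i>s\})\to\nu(\{g>s\})$. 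Then bounded convergence on $[0,1]$ in the variable $s$, with all integrands bounded by $1$, gives the claim. That theorem is stated for sequences, so for a net we argue via $\liminf$ and $\limsup$; this is the point needing care, addressed below.

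\textbf{Step 1 (thresholds).} For $s\in(0,1)$, let $t_s=\inf\{x\in\Xi:g(x)>s\}$. The map $s\mapsto t_s$ is nondecreasing. Call $s$ \emph{good} if $s$ is not a value at which $g$ has a flat piece of positive $\nu$-mass and $\nu(\{t_s\})=0$.

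Since $\nu$ has only countably many atoms, and $s\mapsto t_s$ is monotone, the set of $s$ with $t_s$ an atom is a countable union of intervals. This is the delicate point: an atom $x_0$ with $g$ jumping at $x_0$ yields a whole interval of $s$ with $t_s=x_0$. So I would not exclude such $s$. Instead I would handle atoms directly, as follows.

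\textbf{Step 2 (split $\nu$).} Write $\nu=\nu_c+\sum_k c_k\delta_{x_k}$, where $\nu_c$ is atomless and the $x_k$ are the atoms. For the atomic part,
\[
  \int g_i\,d\Bigl(\sum_k c_k\delta_{x_k}\Bigr)=\sum_k c_k g_i(x_k)\to\sum_k c_kg(x_k)
\]
by pointwise convergence at each $x_k$, together with dominated summation: the tail $\sum_{k>K}c_k$ is small uniformly in $i$. This estimate is uniform and needs no sequence.

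For the atomless part $\nu_c$, fix $s$ and take $x$ with $g(x)>s$. Then eventually $g_i(x)>s$, so eventually $\{g_i>s\}\supseteq\Xi\cap[x,1]$. Letting $x\downarrow t_s$ through such points gives
\[
  \liminf_i\nu_c(\{g_i>s\})\ge\nu_c(\Xi\cap(t_s,1]).
\]
Similarly, for $x<t_s$ and $s'$ slightly larger than $s$, eventually $g_i(x)<s'$, which gives an upper bound $\limsup_i\nu_c(\{g_i>s'\})\le\nu_c(\Xi\cap[t_s,1])$ after monotone adjustments. Since $\nu_c$ has no atoms, the two bounds coincide.

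\textbf{Step 3 (pass to the integral).} To avoid dominated convergence for nets, I integrate these inequalities with Fatou-type bounds in $s$ applied to the $\liminf$ and $\limsup$ of the whole integral. Take a finite Riemann partition $0=s_0<\dots<s_N=1$ of mesh $\varepsilon$. Monotonicity of $s\mapsto\nu_c(\{g_i>s\})$ bounds $\int_0^1\nu_c(\{g_i>s\})\,ds$ between the lower and upper Riemann sums with error at most $\varepsilon$. Each sum involves finitely many $s_j$, so the net converges term by term up to the one-sided bounds from Step 2. Letting $\varepsilon\to0$ then yields $\int g_i\,d\nu_c\to\int g\,d\nu_c$.

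I expect this bookkeeping to be the main obstacle. It consists of the one-sided limits at each threshold and the replacement of dominated convergence by finite monotone Riemann sums, which is what makes the argument work for nets.
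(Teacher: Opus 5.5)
Your argument is correct, but it decomposes along a different axis from the paper's.

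\emph{Your route.} You work in the range variable. You use the layer-cake formula and one-sided bounds at each threshold, $\liminf_i\nu_c(\{g_i>s\})\ge\nu_c(\{g>s\})$ and $\limsup_i\nu_c(\{g_i>s'\})\le\nu_c(\{g>s\})$ for $s<s'$. You then replace dominated convergence for nets by finite monotone Riemann sums in $s$. Because the upper bound is only available at $s'>s$, the upper sum has to be compared with $\int\nu_c(\{g>s\})\,ds$ after shifting the partition points by one mesh; this is harmless, and you noticed it. You also treat the atoms separately.

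\emph{The paper's route.} It partitions the domain instead. It chooses $\nu$-quantile points $0=p_0<\dots<p_m=1$ of $\Xi$ so that each open cell $\Xi\cap(p_{k-1},p_k)$ has $\nu$-mass at most $1/q<\varepsilon$. On each cell, $g_i$ and $g$ are squeezed between their values at the endpoints. This gives the explicit estimate $\bigl|\int g_i\,d\nu-\int g\,d\nu\bigr|\le\varepsilon+3\max_k|g_i(p_k)-g(p_k)|$. That bound uses pointwise convergence at only finitely many points, so the passage to nets is immediate and atoms need no separate treatment.

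\emph{Simplifications to your version.} Step~1 plays no role. The split $\nu=\nu_c+\sum_kc_k\delta_{x_k}$ is also unnecessary, because the same one-sided bounds hold for the full measure. For $x$ with $g(x)<s'$, eventually $\{g_i>s'\}\subseteq\Xi\cap(x,1]$, which gives $\limsup_i\nu(\{g_i>s'\})\le\nu(\{g\ge s'\})\le\nu(\{g>s\})$ for $s<s'$. Together with the $\liminf$ bound, $\nu(\{g_i>s\})\to\nu(\{g>s\})$ at every continuity point of the monotone function $s\mapsto\nu(\{g>s\})$, atoms included.

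\emph{What each buys.} Your approach ties the lemma to the layer-cake technique of \cref{lem:monotone-shatter} and to the standard ``convergence at continuity points'' argument. The paper's gives a quantitative bound in terms of finitely many coordinates of $M_\Xi$.
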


\begin{proof}
  Let $\varepsilon>0$ be given.
  Choose a positive integer $q$ with $1/q<\varepsilon$.
  For each $j=1,\dots,q-1$, take the least point $x\in\Xi$ with $\nu(\Xi\cap[0,x])\ge j/q$, and adjoin $0$ and $1$.
  List the distinct points obtained as $0=p_0<p_1<\cdots<p_m=1$.
  Each open cell $c_k=\Xi\cap(p_{k-1},p_k)$ has measure at most $1/q<\varepsilon$:
  a cell of greater measure would contain a point at which one of the intervening levels $j/q$ is first reached, contrary to the choice of consecutive partition points.
  Let $\eta_i=\max_k|g_i(p_k)-g(p_k)|$, which tends to $0$.
  On $c_k$ both $g_i$ and $g$ take values between their values at $p_{k-1}$ and at $p_k$, because they are nondecreasing.
  Hence
  \[
    \Bigl|\int_{c_k}g_i\,d\nu-\int_{c_k}g\,d\nu\Bigr|
    \le\nu(c_k)\bigl(g(p_k)-g(p_{k-1})+2\eta_i\bigr),
  \]
  and summing over $k$, together with the contribution of the points $p_k$, shows that $|\int g_i\,d\nu-\int g\,d\nu|\le\varepsilon+3\eta_i$, since the differences $g(p_k)-g(p_{k-1})$ add up to at most $1$.
\end{proof}

\begin{proposition}\label{prop:order-responses}
  For each of the two comparisons $C=[A<B]$, the discrete and the continuous one, the probe algebra
  $\fA_C$ is $\{f(B):f\in C(\Xi)\}$, so its state space $Y_C$ is identified with the space $P(\Xi)$ of Radon probability measures on $\Xi$, with the weak$^*$ topology, by $\nu\mapsto[f(B)\mapsto\int_\Xi f\,d\nu]$.
  Under this identification, the record $\mathfrak{p}_\sigma$ of an actual state $\sigma$ of Bob’s system corresponds to $\mu_\sigma$:
  these are exactly the measures with $\mu_\sigma(\{1\})=0$ in the discrete case and the absolutely continuous measures in the continuous case.
  The map $\Phi\colon \mathfrak{r}\mapsto\bigl(\mathfrak{r}(\delta_y)\bigr)_{y\in \Xi}$, where
  $\delta_y$ is the point mass at $y$, is a homeomorphism from the compactum $\overline{\cR_C}$ of idealized responses onto $M_{\Xi}$, and every idealized response is given by $\mathfrak{r}(\nu)=\int\Phi(\mathfrak{r})\,d\nu$.
\end{proposition}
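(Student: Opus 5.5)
First I will identify the probe algebra. By \cref{lem:product} and the discussion above, $T_C(\rho)=g_\rho(B)$ for every state $\rho$, so $\fA_C$ is the unital C$^*$-algebra generated by the operators $g_\rho(B)$. In the continuous case each $g_\rho$ is continuous. The choice $\mu_\rho=$ Lebesgue measure gives $g_\rho(y)=y$, so $B$ itself lies in $\fA_C$, and then $\fA_C=\{f(B):f\in C[0,1]\}$ by the continuous functional calculus and Stone--Weierstrass. The map $f\mapsto f(B)$ is isometric because the spectrum of $B$ is all of $[0,1]$.

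The discrete case needs more care, since $g_\rho$ is a step function on $\Xi$. It is nevertheless continuous on $\Xi$, because every point except $1$ is isolated and $g_\rho(y)\to1=g_\rho(1)$ as $y\to1$. So $g_\rho(B)\in\{f(B):f\in C(\Xi)\}$, and the functional calculus $C(\Xi)\to B(\ell^2)$ is an isometric $*$-isomorphism onto its image, since $\sigma(B)=\Xi$. For the reverse inclusion, take $\mu_\rho=\delta_{1-1/(l+1)}$. Then $g_\rho=\one_{(1-1/(l+1),1]}$, and differences of these give the indicators of the single points $\{1-1/(l+1)\}$. Together with the unit, these separate points of $\Xi$, so Stone--Weierstrass gives all of $C(\Xi)$. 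The identification $Y_C\cong P(\Xi)$ then follows from the Riesz representation theorem.

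The record $\mathfrak p_\sigma$ sends $f(B)$ to $\Tr(\sigma f(B))=\int f\,d\mu_\sigma$, so it corresponds to $\mu_\sigma$. The description of which measures arise was already given in the setup: $\mu_\sigma(\{1\})=0$ in the discrete case, since $1$ is not an eigenvalue, and absolute continuity in the continuous case. Conversely, every such measure arises.

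Next I will treat the map $\Phi$. Under the identification, each completed response is $\mathfrak r_\rho(\nu)=\int g_\rho\,d\nu$, and $\Phi(\mathfrak r_\rho)=g_\rho\in M_\Xi$. The map $\Phi$ is continuous, because it consists of coordinate evaluations at the points $\delta_y$. Its image on $\overline{\cR_C}$ is therefore the closure of $\{g_\rho\}$ in $[0,1]^\Xi$, which lies in $M_\Xi$ because $M_\Xi$ is closed.

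To show that $\Phi$ is injective and yields the integral formula, I will use \cref{lem:monotone-nets}. Let $\mathfrak r\in\overline{\cR_C}$ be the limit of a net $\mathfrak r_{\rho_i}$. Then $g_{\rho_i}\to g:=\Phi(\mathfrak r)$ pointwise, so $\int g_{\rho_i}\,d\nu\to\int g\,d\nu$ for every $\nu$. This gives $\mathfrak r(\nu)=\int g\,d\nu$, and so $\Phi$ is injective. Since $\overline{\cR_C}$ is compact and $M_\Xi$ is Hausdorff, $\Phi$ is a homeomorphism onto its image.

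The main step is surjectivity: I must show that every $g\in M_\Xi$ is a pointwise limit of functions $g_\rho$. In the continuous case I will approximate $g$ by continuous nondecreasing functions $g_n$ with $g_n(0)=0$ and $g_n(1)=1$ that converge to $g$ pointwise everywhere. Such an approximation is possible for a monotone $g$ with countably many jumps, provided the value at each jump point, including intermediate values, is matched exactly. A convenient way is to set $g_n=g$ on a finite grid that contains the first $n$ points of a countable set, namely a dense set together with the jump points, and to interpolate linearly between grid points. Each $g_n$ is then the distribution function of an absolutely continuous measure, so $g_n=g_{\rho_n}$.

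At a continuity point $y$ of $g$, convergence of $g_n(y)$ follows from monotonicity and the density of the grid. At jump points it holds because those points eventually lie on the grid. The subtle point is that a nondecreasing $g$ need not be right- or left-continuous, so the interpolation must respect the exact value $g(y)$ at the jump points and not only the one-sided limits.

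In the discrete case I will take $\mu_n$ supported on finitely many points of $\Xi\setminus\{1\}$, with $g_{\mu_n}$ agreeing with $g$ on the first $n$ points of $\Xi$. This gives pointwise convergence on the isolated points, and the value at $1$ is fixed at $1$.
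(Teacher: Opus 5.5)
Your proof is correct and follows the paper's argument. You use Stone--Weierstrass and Riesz for the probe space, \cref{lem:monotone-nets} for injectivity of $\Phi$ and the integral formula, and piecewise-linear interpolation (continuous case) or finitely supported measures (discrete case) for surjectivity. The differences are cosmetic: you choose explicit generators ($B$ itself, or indicators coming from number states) where the paper uses full-support distributions, and you build sequences converging pointwise over a countable grid containing the jump points, whereas the paper only needs, for each finite $F\subseteq\Xi$, one $g_\rho$ agreeing with $g$ on $F$, which already gives density and hence surjectivity because the image of $\Phi$ is compact.
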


\begin{proof}
  Each function $g_\rho$ is continuous on $\Xi$:
  in the continuous case this follows from absolute continuity of the distribution, and in the discrete case continuity at the only accumulation point follows from $g_\rho(1-1/(l+1))\to1=g_\rho(1)$ as $l\to\infty$.
  The functions $g_\rho$ for states $\rho$ whose distribution has full support in $\Xi$ separate the points of $\Xi$, and the map $f\mapsto f(B)$ is an isometric $*$-isomorphism from $C(\Xi)$ onto its image, because the spectrum of $B$ is $\Xi$.
  By the Stone--Weierstrass theorem, the unital C$^*$-algebra generated by the operators $g_\rho(B)$ is $\{f(B):f\in C(\Xi)\}$, and the slice map takes values in it.
  By the Riesz representation theorem, its state space is $P(\Xi)$, and the response of $\rho$ is $\mathfrak{r}_\rho(\nu)=\int g_\rho\,d\nu$.

  The map $\Phi$ is continuous, because its coordinates are evaluations.
  Let $\mathfrak{r}\in\overline{\cR_C}$ be the pointwise limit of a net $(\mathfrak{r}_{\rho_i})$.
  Then the functions $g_{\rho_i}=\Phi(\mathfrak{r}_{\rho_i})$ converge pointwise to $g=\Phi(\mathfrak{r})\in M_{\Xi}$, and, by \cref{lem:monotone-nets}, $\mathfrak{r}(\nu)=\lim_i\int g_{\rho_i}\,d\nu=\int g\,d\nu$.
  Thus the response $\mathfrak{r}$ is determined by $\Phi(\mathfrak{r})$, the map $\Phi$ is injective, and, as a continuous injection from a compact space, it is a homeomorphism onto its image.
  The image is compact and contains the functions $g_\rho$.

  It remains to be shown that the functions $g_\rho$ are dense in $M_{\Xi}$.
  Let $g\in M_{\Xi}$ and a finite set $F\subseteq \Xi$ containing $0$ and $1$ be given.
  In the continuous case, the piecewise linear interpolation $\tilde{g}$ of $g$ on $F$ is continuous and nondecreasing, with values $0$ at $0$ and $1$ at $1$;
  hence $\tilde{g}=g_\rho$ for a state $\rho$ whose distribution has $\tilde{g}$ as its distribution function, which is absolutely continuous, and $\tilde{g}$ agrees with $g$ on $F$.
  In the discrete case, write $F=\{\xi_0<\xi_1<\cdots<\xi_m\}$, with $\xi_0=0$ and $\xi_m=1$.
  The probability measure with mass $g(\xi_{j+1})-g(\xi_j)$
  at $\xi_j$ for $0\le j<m$ is the distribution of a state $\rho$ with finitely many photons, and $g_\rho(\xi_j)=g(\xi_j)-g(\xi_0)=g(\xi_j)$ for every $j$, so the function $g_\rho$ agrees with $g$ on $F$.
  Every neighborhood of $g$ therefore contains some $g_\rho$.
\end{proof}

%#
%### SUBSECTION: Location in the classification
%#
\subsection{Location in the classification}
We can now place both comparisons in the classification of \cref{thm:observable-classes}.
Recall from \cref{sec:classification} that $D_6$ and $D_7$ are the two minimal families of the heptachotomy whose closures are first countable but not hereditarily separable.

\begin{keytheorem}\label{thm:orders}
  Let $C=[A<B]$ be the comparison of Alice’s and Bob’s photon numbers, in the discrete case, or of the positions of their particles in a box, in the continuous case.
  \begin{enumerate}[label=\textup{(\alph*)}]
  \item
    In the discrete case, the observable $C$ is unstable and NIP$^{\mathrm{met}}$.
  \item
    In the continuous case, the observable $C$ is unstable and NIP$^{\mathrm{fc}}$, but not NIP$^{\mathrm{hs}}$.
    Hence, for every trace-norm dense sequence $(\rho_n)_{n\in\NN}$
    of states of Alice’s particle, there is an increasing injection
    $\iota\colon\{0,1\}^{<\NN}\to\NN$ such that the responses $\{\mathfrak{r}_{\rho_{\iota(t)}}:t\in\{0,1\}^{<\NN}\}$ are equivalent to $D_6$ or to $D_7$.
  \end{enumerate}
\end{keytheorem}

\begin{proof}
  (a) By \cref{prop:order-responses}, the compactum of idealized responses is homeomorphic to a subspace of $[0,1]^{\Xi}$ with $\Xi$ countable, which is metrizable.

  (b) By \cref{prop:order-responses}, the compactum of idealized responses is homeomorphic to $M_{[0,1]}$, a copy of the closed subspace of the Helly space described at the start of this section.
  It is first countable.
  Given $g\in M_{[0,1]}$, let $Z_g$ be the countable set consisting of the rational points, the jump points of $g$ and the endpoints;
  we claim that the sets $\{h:|h(p)-g(p)|<1/n\text{ for }p\in F\}$, with $F\subseteq Z_g$ finite and $n\ge1$, form a neighborhood base at $g$.
  It suffices to show that each set $\{h:|h(y)-g(y)|<\varepsilon\}$, with $y\in[0,1]$ and $\varepsilon>0$, contains one of them.
  If $y\in Z_g$, this is clear.
  If $y\notin Z_g$, then $g$ is continuous at $y$, and there are rationals $p<y<q$ with $g(q)-g(p)<\varepsilon/2$.
  Every $h\in M_{[0,1]}$ with $|h-g|<\varepsilon/4$ at $p$ and $q$ satisfies $h(p)\le h(y)\le h(q)$, because $h$ is nondecreasing, and hence $|h(y)-g(y)|<\varepsilon$.

  The compactum is not hereditarily separable.
  For $0<a<1$, let $h_a$ be $0$ on $[0,a)$, $\frac12$ at $a$ and $1$ on $(a,1]$.
  The set $\{h\in M_{[0,1]}:\frac14<h(a)<\frac34\}$ is a neighborhood of $h_a$ that contains no other $h_b$, so the set $\{h_a:0<a<1\}$ is an uncountable discrete subspace.

  The observable is therefore NIP$^{\mathrm{fc}}$ but not NIP$^{\mathrm{hs}}$, and the last statement is \cref{thm:observable-classes}(ii).
\end{proof}

The idealized responses of the continuous comparison have a direct physical reading, which is also a probabilistic one.
They are described by all the functions $g$ in $M_{[0,1]}$, where $g(y)$ is the limiting probability that Alice’s value lies below Bob’s fixed value $y$.
Such a function need not be the distribution function of a single countably additive position measure:
its value at a jump also records how the particle approaches the jump point, from the left, from the right, or both.
The function $h_a$ describes a half-and-half mixture of a particle concentrating at $a$ from the left and one concentrating at $a$ from the right, and Bob’s idealized probe concentrated at $a$ reads the value $\frac12$.
In the discrete case no such splitting occurs at an eigenvalue, because each eigenvalue is isolated;
the additional limiting behavior is escape to infinite photon number, the point $1$ of $\Xi$, which may be mixed with an ordinary distribution on the eigenvalues.

%#
%### SUBSECTION: Unconfined quantities
%#
\subsection{Unconfined quantities}
Confinement to a box is not what produces this subspace of the Helly space:
a comparison depends only on the order of the two outcomes, so it is unchanged when both quantities are replaced by the same strictly increasing function of themselves, and a suitable function turns unconfined quantities, such as the positions of two particles on a line or the quadratures of two modes of light, into positions in a box
(see \cref{prop:cv-helly} below).

\begin{proposition}\label{prop:cv-helly}
  Let $A$ be a self-adjoint operator on the Hilbert space $H_1$ of Alice’s system and $B$ one on the Hilbert space $H_2$ of Bob’s, each unitarily equivalent to multiplication by $x$ on $L^2(\RR)$.
  Then the comparison $[A<B]$ is unstable, and its idealized responses form a copy of the space of nondecreasing functions $[0,1]\to[0,1]$ taking the value $0$ at $0$ and $1$ at $1$, a closed subspace of the Helly space.
  Hence the comparison is NIP$^{\mathrm{fc}}$ but not NIP$^{\mathrm{hs}}$.
\end{proposition}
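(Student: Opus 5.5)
The plan is to reduce to the continuous case of \cref{thm:orders}(b) by a strictly increasing relabeling. Fix a homeomorphism $\phi\colon\RR\to(0,1)$, say $\phi(x)=\frac12+\frac1\pi\arctan x$. Realize $A$ and $B$ as multiplication by $x$ on $L^2(\RR)$. The comparison $[A<B]$ is the projection $(E_A\otimes E_B)(R)$ for $R=\{(x,y):x<y\}$. Since $\phi$ is strictly increasing, $R=(\phi\times\phi)^{-1}(R)$, so $[A<B]=[\phi(A)<\phi(B)]$. The operator $\phi(A)$ is multiplication by $\phi(x)$, and under the unitary $L^2(\RR)\to L^2(0,1)$ induced by the change of variables $u=\phi(x)$ (with the Jacobian factor) it becomes multiplication by $u$ on $L^2(0,1)=L^2[0,1]$. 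The same unitary works on Bob's side. Thus $[A<B]$ is unitarily equivalent, by a product unitary $U_1\otimes U_2$, to the continuous comparison of \cref{thm:orders}.

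Next I would check that everything in the statement is invariant under such a product unitary. The map $\rho\mapsto U_1\rho U_1^*$ is a bijection of $\fD(H_1)$, and likewise on Bob's side. The expectation satisfies $\varphi_{C'}(U_1\rho U_1^*,U_2\sigma U_2^*)=\varphi_C(\rho,\sigma)$, where $C'=(U_1\otimes U_2)C(U_1\otimes U_2)^*$. Hence:
\begin{itemize}
\item stability, defined by iterated limits over actual states, is preserved;
\item the probe algebra is carried to $U_2\fA_CU_2^*$, so the probe spaces are homeomorphic;
\item the response families correspond, so the compacta of idealized responses are homeomorphic.
\end{itemize}

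Applying \cref{prop:order-responses}, the idealized responses form a copy of $M_{[0,1]}$. \Cref{thm:orders}(b) then gives that the comparison is unstable and NIP$^{\mathrm{fc}}$ but not NIP$^{\mathrm{hs}}$.

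For the instability, I could alternatively give the ladder directly on the line, with interleaved localized states $a_1<b_1<a_2<\cdots$. That argument yields iterated limits $0$ and $1$, but the transport already gives it.

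The only point needing care is that the pushforward distributions are exactly the absolutely continuous measures on $(0,1)$, equivalently on $[0,1]$. This is what \cref{prop:order-responses} uses, and it is automatic from the unitary equivalence. No real obstacle is expected: this is a bookkeeping step verifying that all notions are invariant under product unitaries.
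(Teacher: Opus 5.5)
Your proposal is correct and follows the paper's proof: the paper uses the strictly increasing diffeomorphism $\theta(x)=\frac12(1+\tanh x)$ where you use the arctan map, notes that $[A<B]=[\theta(A)<\theta(B)]$, transports to the continuous comparison by $U\otimes U$, and concludes with \cref{prop:order-responses} and \cref{thm:orders}(b). Your check that the product unitary conjugates the probe algebra and intertwines the response families spells out what the paper states in one sentence, namely that the unitary preserves expectation values and carries actual and idealized states to actual and idealized states.
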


\begin{proof}
  We may assume that $A$ and $B$ are multiplication by $x$ on $L^2(\RR)$.
  For physical quantities, $x$ denotes the numerical value in the fixed common unit used for the comparison.
  Let $\theta\colon\RR\to(0,1)$ be the increasing diffeomorphism $\theta(x)=\frac12(1+\tanh x)$.
  The operator $U$ given by $(U\psi)(u)=\psi(\theta^{-1}(u))\,|(\theta^{-1})'(u)|^{1/2}$ is a unitary from $L^2(\RR)$ onto $L^2[0,1]$ that carries multiplication by $\theta(x)$ to multiplication by $u$.
  Since $\theta$ is strictly increasing, $x<y$ exactly when $\theta(x)<\theta(y)$;
  hence the spectral projection $[A<B]$ equals $[\theta(A)<\theta(B)]$, and $U\otimes U$ carries it to the comparison of the continuous case.
  The unitary $U\otimes U$ preserves expectation values and carries actual and idealized states to actual and idealized states;
  it therefore induces a homeomorphism between the two compacta of idealized responses.
  The statement now follows from \cref{prop:order-responses} and \cref{thm:orders}(b).
\end{proof}

For the comparisons covered by \cref{prop:cv-helly}, confinement does not affect the compactum of idealized responses.
For quantities whose expectation values matter, such as energies, the rescaling by $\theta$ is not available, because it is not affine (see \cref{sec:logical-framework}).

%#
%### SUBSECTION: The causal order is not first countable
%#
\subsection{The causal order is not NIP$^{\mathrm{fc}}$}
The compactum of idealized responses of a comparison is first countable because its members are monotone, and a monotone function has at most countably many jumps, each with two sides (see the proof of \cref{thm:orders}(b)).
A relation between Bob’s value and two values of Alice’s, one on each side, removes this restriction (see \cref{thm:causal} below), and the natural example is the causal order of spacetime.
We work in a $(1+1)$-dimensional spacetime, with one dimension of time and one of space, in units in which the speed of light is $1$.
An event $(t,\xi)$, at time $t$ and position $\xi$, can influence a detection by Bob at the point
$y$ at time $0$ exactly when $|y-\xi|\le-t$, that is, when $y$ lies between the points $a=\xi+t$ and $b=\xi-t$ at which the two light rays through the event cross Bob’s time slice.
We describe Alice’s event by these null coordinates and confine everything to the causal diamond over Bob’s interval $[0,1]$, the set of events whose two light rays cross Bob’s slice inside the interval:
Alice’s Hilbert space is $H_1=L^2([0,1]^2)$, with variables $(a,b)$, Bob’s is $H_2=L^2[0,1]$, with variable $y$, and the \emph{causal-order observable} $C$ is multiplication by the indicator of
\[
  R=\{(a,b,y):a\le y\le b\}.
\]
Events with $a>b$ lie in the future of Bob’s slice and influence none of his detections.
Recall from \cref{sec:classification} that $D_5$ is the minimal family of the heptachotomy whose closure, $\hat{A}(2^\NN)$, is not first countable.

For a state $\rho$ of Alice’s system, let $\mu_\rho$ be the distribution of $(a,b)$ in $\rho$, an absolutely continuous probability measure on $[0,1]^2$, and similarly for Bob.
The observable $C$ is the relation observable $C_R$
for this relation, so, by \cref{lem:product}, its expectation value in states
$\rho$ and $\sigma$ of the two systems is $\varphi_C(\rho,\sigma)=(\mu_\rho\otimes\mu_\sigma)(R)=\int g_\rho\,d\mu_\sigma$, where $g_\rho(y)=\mu_\rho(\{(a,b):a\le y\le b\})$ is the probability that Alice’s event can influence a detection at $y$.
The function $g_\rho$ is continuous:
the sections of $R$ at $y$ and at $y'$ differ by a set contained in two strips, each of area at most $|y-y'|$, and the $\mu_\rho$-measure of such strips tends to $0$ with $|y-y'|$, because $\mu_\rho$ is absolutely continuous.
The effective observable $T_C(\rho)$ is multiplication by $g_\rho$.

\begin{keytheorem}\label{thm:causal}
  The causal-order observable is NIP but not NIP$^{\mathrm{fc}}$.
  Hence, for every trace-norm dense sequence $(\rho_n)_{n\in\NN}$
  of states of Alice’s system, there is an increasing injection
  $\iota\colon\{0,1\}^{<\NN}\to\NN$ such that the responses $\{\mathfrak{r}_{\rho_{\iota(t)}}:t\in\{0,1\}^{<\NN}\}$ are equivalent to $D_5$.
\end{keytheorem}

\begin{proof}
  For every state $\rho$, the event $\{a\le y\le b\}$ is the event $\{a\le y\}$ with the event $\{a\le y,\ b<y\}$ removed, so
  \[
    g_\rho(y)=\mu_\rho\bigl(\{(a,b):a\le y\}\bigr)-\mu_\rho\bigl(\{(a,b):a\le y,\ b<y\}\bigr),
  \]
  a difference of two nondecreasing functions with values in $[0,1]$.
  By \cref{prop:uniform-orders}, whose constant is here $1$, the observable $C$ is uniformly NIP, and in particular NIP.
  By \cref{lem:separable} the compactum of idealized responses is separable, so its level is defined.

  The probe algebra consists of operators $M_f$ of multiplication by continuous functions $f$;
  it is therefore a C$^*$-subalgebra of $\{M_f:f\in C[0,1]\}$, which is isometrically isomorphic to $C[0,1]$, because $\|M_f\|=\sup|f|$.
  Every state of $\fA_C$ extends to a state of this larger algebra
  (see~\cite[Theorem~4.3.13]{Kadison-Ringrose:1983a}), and, by the Riesz representation theorem, every state $\omega_2$ of $\fA_C$ is thus represented by a probability measure $\nu$ on $[0,1]$, in the sense that $\omega_2(M_f)=\int f\,d\nu$ for $M_f\in\fA_C$;
  in particular, $\mathfrak{r}_\rho(\omega_2)=\int g_\rho\,d\nu$.
  For each point $c\in(0,1)$, the evaluation $\omega_{2,c}(M_f)=f(c)$ is a state of $\fA_C$, Bob’s probe localized at $c$.

  Let $\rho_0$ be a state whose distribution is uniform on $[\frac12,1]\times[0,\frac12]$, an event after Bob’s slice.
  Then $g_{\rho_0}=0$, and the zero function is a response.
  For $c\in(0,1)$ and large $n$, let $\rho_{c,n}$ be a state whose distribution is uniform on $[c-\frac2n,c-\frac1n]\times[c+\frac1n,c+\frac2n]$, an event at position near $c$ shortly before Bob’s slice.
  Then $g_{\rho_{c,n}}(c)=1$, and $g_{\rho_{c,n}}(y)=0$ when $|y-c|>\frac2n$.
  Hence the function $g_{\rho_{c,n}}$ converges pointwise to the indicator of $\{c\}$, and, by dominated convergence, $\mathfrak{r}_{\rho_{c,n}}(\omega_2)\to\nu(\{c\})$ for every probe $\omega_2$ and every measure $\nu$ representing it.
  The limit $\mathfrak{r}_c$ is an idealized response, with $\mathfrak{r}_c(\omega_{2,c})=1$.

  Suppose that the zero function had a countable base of neighborhoods $(U_k)$ in the compactum of idealized responses.
  Each $U_k$ contains a set $\{\mathfrak{r}:|\mathfrak{r}(\omega_2)|<\varepsilon_k\text{ for }\omega_2\in F_k\}$ with $F_k$ finite.
  Choose a representing measure for each probe in $F=\bigcup_kF_k$;
  together these measures have countably many atoms, so some $c\in(0,1)$ is an atom of none of them.
  Then $\mathfrak{r}_c(\omega_2)=0$ for $\omega_2\in F$, so the response $\mathfrak{r}_c$ lies in every $U_k$.
  Since the space is Hausdorff, the intersection of a neighborhood base at a point is the singleton of the point;
  hence $\mathfrak{r}_c=0$, contrary to $\mathfrak{r}_c(\omega_{2,c})=1$.
  The compactum is therefore not first countable, and the last statement is \cref{thm:observable-classes}(iii).
\end{proof}

Physically, the proof says the following.
For $c\in(0,1)$, the idealized response $\mathfrak{r}_c$ is the limit of responses of preparations whose event coordinates concentrate near $(a,b)=(c,c)$ from $a<c<b$.
The corresponding light-cone intervals shrink to $c$, so only probes whose measures have an atom at $c$ detect the limiting response.
Every actual probe has an absolutely continuous position distribution and therefore gives $\mathfrak{r}_c$ the value $0$;
the idealized probe $\omega_{2,c}$ instead gives the value $1$.
There is one such idealized response for each point $c$ in this interval, and no countable family of probes distinguishes all of them from the zero response, as the neighborhood-base argument shows.
The relation $R$ is the intersection of two comparisons, $a\le y$ and $y\le b$, one for each light ray.
A single comparison has monotone idealized responses;
the two rays together allow idealized responses supported on a single point.
Here Todorčević’s trichotomy is used in full:
it turns the failure of first countability into a dyadic tree of actual preparations of Alice’s event coordinates whose responses are equivalent to the minimal family $D_5$.
The proof suggests a tree of this kind, namely preparations localized near events whose light cones meet Bob’s slice in the intervals of a Cantor construction, each event later than its predecessor, although we do not claim that the trichotomy produces this one.

%#
%### SUBSECTION: The comparison of phases is not first countable
%#
\subsection{The comparison of phases is not NIP$^{\mathrm{fc}}$}
Orders on a line give first countable compacta of idealized responses (see \cref{thm:orders} and \cref{prop:cv-helly});
on a circle, first countability can fail (see \cref{thm:phases} below).
Identify the circle $\mathbb{T}=\RR/\ZZ$ with $[0,1)$, so that angles are measured in turns and a half-turn is $\frac12$.
Alice and Bob each hold a particle on a ring, a planar rotor, with Hilbert spaces $H_1=H_2=L^2(\mathbb{T})$, the angle acting by multiplication.
The \emph{phase comparison} is the relation observable $C=C_R$ for
\[
  R=\bigl\{(x,y)\in[0,1)^2:\ y-x\in(0,\tfrac12)\bmod1\bigr\},
\]
the event “Bob’s angle is ahead of Alice’s by less than half a turn”.
For a state $\rho$ of Alice’s rotor, the distribution $\mu_\rho$ of her angle is an absolutely continuous probability measure on $\mathbb{T}$, and every such measure occurs.
By \cref{lem:product}, the phase comparison has, in states
$\rho$ and $\sigma$ of the two rotors, the expectation value $\varphi_C(\rho,\sigma)=\int g_\rho\,d\mu_\sigma$, where $g_\rho(y)=\mu_\rho\bigl((y-\frac12,y)\bigr)$ is the probability that Alice’s angle lies in the open half-circle ending at $y$;
the function $g_\rho$ is continuous on $\mathbb{T}$, because $\mu_\rho$ has no atoms.

\begin{lemma}\label{lem:phase-bv}
  For every probability measure $\mu$ on $\mathbb{T}=[0,1)$, the function $g(y)=\mu((y-\frac12,y))$ on $[0,1)$ is $2(u-v)$ for two nondecreasing functions $u,v\colon[0,1)\to[0,1]$.
\end{lemma}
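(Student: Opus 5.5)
We split the open half-circle ending at $y$ into an arc whose endpoints move monotonically as $y$ runs over $[0,1)$. Let $F(t)=\mu([0,t))$ for $t\in[0,1]$, a nondecreasing function with values in $[0,1]$, and with $F(0)=0$ and $F(1)=1$ because $\mu$ is a probability measure on $[0,1)$. We distinguish two cases according to whether the arc $(y-\frac12,y)$ wraps around $0$.

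If $\frac12\le y<1$, the arc is the interval $(y-\frac12,y)\subseteq[0,1)$, and
\[
  g(y)=F(y)-\mu\bigl([0,y-\tfrac12]\bigr).
\]
If $0\le y<\frac12$, the arc is $(y+\frac12,1)\cup[0,y)$, and
\[
  g(y)=F(y)+1-\mu\bigl([0,y+\tfrac12]\bigr).
\]
Set $G(y)=\mu([0,y-\frac12])$ for $y\ge\frac12$ and $G(y)=\mu([0,y+\frac12])-1$ for $y<\frac12$. Then $g=F-G$ on $[0,1)$ in both cases.

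Next we check that $G$ is nondecreasing on $[0,1)$. On $[0,\frac12)$ it takes values in $[-1,0]$, and it is nondecreasing there because $\mu([0,\cdot\,])$ is. On $[\frac12,1)$ it takes values in $[0,1]$ and is nondecreasing for the same reason. At the junction, the values on $[0,\frac12)$ are at most $0$ and those on $[\frac12,1)$ are at least $0$, so $G$ is nondecreasing on all of $[0,1)$, with range in $[-1,1]$.

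Finally we normalize. Put $u=\frac12 F$, a nondecreasing function with values in $[0,\frac12]\subseteq[0,1]$. Put $v=\frac12(G+1)$, a nondecreasing function with values in $[0,1]$. Then
\[
  2(u-v)=F-G-1=g-1,
\]
which is off by a constant. To absorb the constant, replace $u$ by $u=\frac12(F+1)$, which is nondecreasing with values in $[\frac12,1]\subseteq[0,1]$. With this choice $2(u-v)=F+1-G-1=g$, as required.

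The only point needing care is the behaviour of $G$ at $y=\frac12$, where the two formulas meet. The left-hand formula $\mu([0,y+\frac12])-1$ tends, as $y\uparrow\frac12$, to $\mu([0,1))-1=0$, which does not exceed $G(\frac12)=\mu(\{0\})\ge 0$. Hence there is no downward jump at $\frac12$, and monotonicity holds across the junction. This is the step I expect to be the main obstacle, together with keeping the endpoint conventions (open versus closed ends of the arc) consistent; no continuity of $\mu$ is needed.
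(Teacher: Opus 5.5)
Your proposal is correct and is essentially the paper's proof: the same split of the half-circle at $y=\frac12$, the same piecewise nondecreasing function $G$ (the paper calls it $V$), and the same normalization $u=\frac12(F+1)$, $v=\frac12(G+1)$. In a final write-up, drop the intermediate choice $u=\frac12F$. Your observation that $G\le0$ on $[0,\frac12)$ and $G\ge0$ on $[\frac12,1)$ already settles monotonicity across the junction, so the limit computation at $\frac12$ is not needed.
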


\begin{proof}
  Let $G(y)=\mu([0,y))$.
  For $y\in[\frac12,1)$ the half-circle is the interval $(y-\frac12,y)$, and $g(y)=G(y)-\mu([0,y-\frac12])$;
  for $y\in[0,\frac12)$ it is $(y+\frac12,1)\cup[0,y)$, and $g(y)=G(y)+1-\mu([0,y+\frac12])$.
  Let $V(y)=\mu([0,y+\frac12])-1$ for $0\le y<\frac12$ and $V(y)=\mu([0,y-\frac12])$ for $\frac12\le y<1$.
  Then $g=G-V$.
  The functions $G$ and $V$ are nondecreasing, with values in $[0,1]$ and $[-1,1]$, respectively;
  indeed, the function $V$ is nondecreasing on each half, and at $\frac12$ it passes from $\lim_{y\uparrow1/2}V(y)=0$ to $V(\frac12)=\mu(\{0\})\ge0$.
  Let $u=\frac12(G+1)$ and $v=\frac12(V+1)$;
  then $g=2(u-v)$.
\end{proof}

\begin{keytheorem}\label{thm:phases}
  For every $\delta>0$, no set of more than $N(\delta/4)$ responses of Alice’s states for the phase comparison is $\delta$-shattered by points of its probe space, so the phase comparison is uniformly NIP;
  it is not NIP$^{\mathrm{fc}}$.
  Hence, for every trace-norm dense sequence $(\rho_n)_{n\in\NN}$ of states of Alice’s rotor, there is an increasing injection $\iota\colon\{0,1\}^{<\NN}\to\NN$ such that the responses $\{\mathfrak{r}_{\rho_{\iota(t)}}:t\in\{0,1\}^{<\NN}\}$ are equivalent to $D_5$.
\end{keytheorem}

\begin{proof}
  By \cref{lem:phase-bv}, the function $g_\rho$ agrees on $[0,1)$ with $2(u-v)$ for two nondecreasing functions $u,v\colon[0,1)\to[0,1]$, which we extend to nondecreasing functions on $\RR$, constant on $(-\infty,0)$ and on $[1,\infty)$.
  The function $g_\rho$ vanishes off $[0,1)$, where the extended difference $2(u-v)$ need not vanish, so the hypothesis of \cref{prop:uniform-orders} holds only on $[0,1)$.
  Bob’s distributions are carried by $[0,1)$, however, and only these values of $g_\rho$ enter the proof of the proposition;
  the proof therefore applies verbatim with the constant $2$, which gives the bound $N(\delta/4)$.

  As in the proof of \cref{thm:causal}, every probe
  $\omega_2$ is represented by a probability measure $\nu$ on $\mathbb{T}$, with $\mathfrak{r}_\rho(\omega_2)=\int g_\rho\,d\nu$, and for $z\in\mathbb{T}$ the evaluation $\omega_{2,z}(M_f)=f(z)$ is an idealized probe localized at~$z$.
  An actual state $\sigma$ instead induces the probe represented by its absolutely continuous angle distribution $\mu_\sigma$.
  Let $\rho_0$ be a state in which Alice’s angle is uniformly distributed.
  Every open half-circle has measure $\frac12$, so $\mathfrak{r}_{\rho_0}(\omega_2)=\frac12$ for every probe $\omega_2$.

  For $a\in\mathbb{T}$ and large $n$, let $\rho_{a,n}$ be a state whose distribution is uniform, with weight $\frac12$ each, on the arcs $\Delta_1=(a-\frac1n,a)$ and $\Delta_2=(a+\frac12,a+\frac12+\frac1n)$:
  half of the distribution lies just behind $a$, half just ahead of the antipode.
  Write $A_y=(y-\frac12,y)$.
  If $y=a$, then the half-circle $A_y$ contains $\Delta_1$ and $\Delta_2$, so $g_{\rho_{a,n}}(a)=1$.
  If $y=a+\frac12$, then the half-circle $A_y=(a,a+\frac12)$ meets neither, and $g_{\rho_{a,n}}(a+\frac12)=0$.
  Otherwise the endpoints of $A_y$ are at some positive distance $d$ from $a$ and from $a+\frac12$;
  an open half-circle whose endpoints differ from these two antipodal points contains exactly one of them.
  Once $\frac1n<d$, the half-circle $A_y$ therefore contains the arc $\Delta_1$ just behind $a$ if it contains $a$, and the arc $\Delta_2$ just ahead of $a+\frac12$ if it contains $a+\frac12$, but not both.
  Thus $g_{\rho_{a,n}}(y)=\frac12$.
  Hence the function $g_{\rho_{a,n}}$ converges pointwise to $g_a=\frac12+\frac12\one_{\{a\}}-\frac12\one_{\{a+1/2\}}$, and, by dominated convergence,
  $\mathfrak{r}_{\rho_{a,n}}(\omega_2)\to\frac12+\frac12\nu(\{a\})-\frac12\nu(\{a+\frac12\})$ for every probe $\omega_2$ and every measure $\nu$ representing it.
  The limit $\mathfrak{r}_a$ is an idealized response, with $\mathfrak{r}_a(\omega_{2,a})=1$.

  Suppose that $\mathfrak{r}_{\rho_0}$ had a countable base of neighborhoods $(U_k)$ in the compactum of idealized responses.
  Each $U_k$ contains a set $\{\mathfrak{r}:|\mathfrak{r}(\omega_2)-\frac12|<\varepsilon_k\text{ for }\omega_2\in F_k\}$ with $F_k$ finite.
  Choose a representing measure for each probe in $F=\bigcup_kF_k$;
  together these measures have countably many atoms, so there is $a\in\mathbb{T}$ such that neither $a$ nor $a+\frac12$ is one of them.
  Then $\mathfrak{r}_a(\omega_2)=\frac12$ for every $\omega_2\in F$, so $\mathfrak{r}_a$ lies in every $U_k$, and, the space being Hausdorff, the response $\mathfrak{r}_a$ equals $\mathfrak{r}_{\rho_0}$, contrary to $\mathfrak{r}_a(\omega_{2,a})=1$.
  The compactum is therefore not first countable, and the last statement is \cref{thm:observable-classes}(iii).
\end{proof}

Here, too, the proof has a physical reading.
A state in which the angle is uniformly distributed responds $\frac12$ to every probe.
For an antipodal pair of points $a$ and $a+\frac12$, the idealized limit of preparations that place half of the angle’s distribution just behind
$a$ and half just ahead of $a+\frac12$ also responds $\frac12$, except that a probe at $a$ reads $1$ and a probe at the antipode reads $0$:
the jumps at the two ends of the half-turn point in opposite directions and cancel everywhere else.
There is one such pair for each point of the circle, and countably many probes cannot distinguish all the resulting responses from that of the uniform state.
On a line this cannot happen, because a monotone response cannot hide a jump (see the proof of \cref{thm:orders}(b));
on a circle there is no first or last point.
The same contrast appears in topological dynamics.
The enveloping semigroup of a dynamical system is the pointwise closure of its acting maps.
Glasner and Megrelishvili showed that compact metrizable linearly ordered spaces, with their order topologies and order-preserving group actions, have first countable enveloping semigroups,
while the circular analogue of the Helly space is not first countable~\cite[Theorem~8.8, Proposition~8.10 and Remark~8.11]{Glasner-Megrelishvili:2022}.
The circular space in their comparison consists of circular-order-preserving maps of the circle, whereas the responses here are real-valued;
neither result implies the other.

%#
%### SUBSECTION: Interference fringes are not NIP
%#
\subsection{Interference fringes are not NIP}
The relations studied so far in this section are all NIP.
The fixed-contrast interference patterns in \cref{prop:fringes} below show how oscillation at ever finer scales can produce the independence property.
Positions on Bob’s screen are measured in a fixed unit of length, in which the screen is the interval $[0,2\pi]$, and wavenumbers are measured in the inverse unit.
We model the experiment by a register on Alice’s side with orthonormal basis $(e_n)_{n\in\NN}$, whose label $n$ selects a fringe pattern with wavenumber $k_n=8^n$.
At position $x\in[0,2\pi]$, Bob’s detector accepts with probability $f_n(x)=\frac12(1+\cos k_nx)$, a model of the local detector response that is largest on a bright fringe.
The \emph{fringe observable} is $C=\sum_n|e_n\rangle\langle e_n|\otimes M_{f_n}$ on $\ell^2(\NN)\otimes L^2([0,2\pi],\lambda)$, where $\lambda$ is Lebesgue measure divided by $2\pi$:
the label $n$ of Alice’s state selects the test $f_n$ applied to Bob’s detector, and $C$ is thus a label-controlled observable in the sense of \cref{def:label} below.

\begin{proposition}\label{prop:fringes}
  The fringe observable is not NIP:
  Alice’s basis states $e_n$ form an independent sequence, and the closure of their responses contains a copy of $\beta\NN$.
\end{proposition}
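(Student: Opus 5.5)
The plan is first to compute the responses of the basis states, then to realize every finite pattern with an actual probe, and finally to read off $\beta\NN$ from \cref{thm:levels}(c) or \cref{fact:rosenthal}.

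\emph{Step 1: the responses.} For the basis state $\rho_n=|e_n\rangle\langle e_n|$, the effective observable is $T_C(\rho_n)=M_{f_n}$. Hence, for an actual probe $\sigma$ with position density $p$ (with respect to $\lambda$), we have $\varphi_C(\rho_n,\sigma)=\int f_n\,p\,d\lambda$.

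\emph{Step 2: the independence property.} I take $a=\frac14$ and $b=\frac34$. Given disjoint finite sets $E,F\subseteq\NN$, set $S=E\cup F$, and construct a point $x$ with $\cos(8^nx)\ge\frac23$ for $n\in F$ and $\cos(8^nx)\le-\frac23$ for $n\in E$. The construction is by nested intervals. Choose inductively, for $n=0,1,\dots,\max S$, closed intervals $J_n$ with $J_n\subseteq J_{n-1}$, each of length $2\pi/8^{n+1}$ or so. The map $x\mapsto 8^nx \bmod 2\pi$ sends $J_{n-1}$, which has length about $2\pi/8^n$, onto a full circle. So inside $J_{n-1}$ there is a subinterval of length $2\pi\cdot\frac1{8}\cdot 8^{-n}$ on which $8^nx$ lies within $\pi/8$ of $0$ (if $n\in F$) or of $\pi$ (if $n\in E$), with no constraint if $n\notin S$. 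On that window $|\cos|\ge\cos(\pi/8)>0.9$. Any $x$ in the last interval then gives $f_n(x)\ge0.95$ on $F$ and $f_n(x)\le0.05$ on $E$.

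To pass to an actual probe, take $\sigma$ with density $p$ uniform on a tiny interval around $x$, of length much smaller than $8^{-\max S}$. By continuity of the finitely many $f_n$, $n\in S$, we get $\int f_n\,p\,d\lambda\ge\frac34$ for $n\in F$ and $\le\frac14$ for $n\in E$. This realizes every finite pattern, so $(e_n)$ is an independent sequence and $\cR_C$ has the independence property. (By \cref{lem:nipclosure} it makes no difference whether one uses actual probes or points of $Y_C$.)

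\emph{Step 3: $\beta\NN$.} By compactness of $Y_C$, as in the proof of \cref{thm:levels}(c), every infinite pattern $I\subseteq\NN$ is realized by some probe. I claim that the closure of $\{\mathfrak{r}_{\rho_n}\}$ is itself homeomorphic to $\beta\NN$, not just for a subsequence. Consider the map $\mathcal U\mapsto\lim_{\mathcal U}\mathfrak{r}_{\rho_n}$, which is well defined because all values lie in the compact interval $[0,1]$. It is continuous from $\beta\NN$ onto the closure. It is injective: if $\mathcal U\ne\mathcal V$, pick $I\in\mathcal U\setminus\mathcal V$ and a probe realizing $\NN\setminus I$ low and $I$ high; the two limits then differ by at least $\frac12$ at that probe. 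A continuous bijection from a compact space onto a Hausdorff space is a homeomorphism.

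If that argument has a gap, \cref{thm:levels}(c) directly gives a subsequence whose closure is a copy of $\beta\NN$, and that copy lies inside the closure of all the responses, which is enough for the statement.

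The main obstacle is the nested-interval bookkeeping in Step 2. The base $8$ is there to leave enough room: the constant $8$ against the window fraction $\frac18$ must be checked so that each window contains a full period at the next scale.
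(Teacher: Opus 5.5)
Your plan is sound, and your Step~3 is a genuinely different ending from the paper's, but the nested-interval bookkeeping in Step~2 fails as written.

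\textbf{Step 2.} The image $8^nJ_{n-1}$ is a closed interval of length exactly $2\pi$. So the admissible set in $J_{n-1}$, where $8^nx$ lies within $\pi/8$ of $0$ or of $\pi$ modulo $2\pi$, can be cut into two pieces by the endpoints. Only the longer piece, of at least half the arc length, is guaranteed, whereas you need a single interval of the full arc length $\pi/4$.
\begin{itemize}
\item For $0\in F$, the admissible set in $[0,2\pi]$ is $[0,\frac\pi8]\cup[\frac{15\pi}8,2\pi]$, which contains no interval of length $\frac{2\pi}8$.
\item For $0,1\in E$, the choice $J_0=[\frac{7\pi}8,\frac{9\pi}8]$ is forced. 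It then leaves at step $1$ only two pieces of length $\frac{\pi}{64}$, where length $\frac{2\pi}{64}$ is needed.
\end{itemize}
The repair is to widen the window to half-width $\pi/4$. The longer piece then has image length at least $\pi/4$, that is, length at least $2\pi/8^{n+1}$ in $x$. It still gives $|\cos 8^nx|\ge\frac{\sqrt2}2>\frac23$, which is all you use later. The paper avoids the induction by prescribing the base-$8$ digits of $y=x/2\pi$ directly: digit $0$ in place $n+1$ for $n\in F$ and digit $4$ for $n\in E$. The fractional part of $8^ny$ then lies in $[0,\frac18)$ or in $[\frac12,\frac58)$.

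\textbf{Step 3.} For the $\beta\NN$ clause, the paper argues as in \cref{thm:levels}(c). Since every infinite pattern is realized, no subsequence of the responses converges pointwise, and Rosenthal's dichotomy (\cref{fact:rosenthal}) yields a subsequence whose closure is $\beta\NN$. Your fallback is exactly this.

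Your main argument is correct and avoids Rosenthal's theorem.
\begin{itemize}
\item The extension $\mathcal U\mapsto\lim_{\mathcal U}\mathfrak{r}_{e_n}$ is continuous from $\beta\NN$ into $[0,1]^{Y_C}$.
\item Its compact image contains the responses and lies in their closure, hence equals that closure.
\item For $\mathcal U\ne\mathcal V$, a probe realizing some $I\in\mathcal U\setminus\mathcal V$ high and its complement low separates the two ultralimits by at least $\frac12$, so the map is injective.
\end{itemize}
This is more elementary and proves more. The closure of the full sequence, not merely of a subsequence, is homeomorphic to $\beta\NN$. The same argument applies to every independent sequence of continuous functions on a compact space, so the passage to a subsequence in \cref{thm:levels}(c) is not needed. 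What the paper's route buys is only uniformity with the general machinery it has already set up.
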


\begin{proof}
  Let $E,F\subseteq\NN$ be finite and disjoint.
  Choose $y\in[0,1)$ whose base-$8$ digit in place $n+1$ is $0$ for $n\in F$ and $4$ for $n\in E$, and let $x=2\pi y$.
  Then $\cos k_nx=\cos(2\pi\{8^ny\})$, where $\{8^ny\}$, the fractional part of $8^ny$, lies in $[0,\frac18)$ for $n\in F$ and in $[\frac12,\frac58)$ for $n\in E$;
  hence $f_n(x)\ge\frac34$ for $n\in F$ and $f_n(x)\le\frac14$ for $n\in E$.
  By continuity of the finitely many functions $f_n$ involved, a state $\sigma$ of Bob’s detector concentrated near $x$ has expectation values at least $\frac58$ for $n\in F$ and at most $\frac38$ for $n\in E$.
  By \cref{lem:probe-space}(c), the restrictions of such states are probes;
  the sequence $(e_n)$ is therefore independent.
  As in the proof of \cref{thm:levels}(c), no subsequence of the responses converges pointwise, and \cref{fact:rosenthal} yields a subsequence whose closure is homeomorphic to $\beta\NN$.
\end{proof}

The contrast between these fringes and the orders of this section displays the finite-pattern obstruction measured by NIP.
A comparison asks on which side of a threshold a value lies;
the pattern of wavenumber $8^n$ has $8^n$ periods on Bob’s screen, each with a bright and a dark half.
The base-$8$ construction lets one prescribe bright or dark responses independently for any finite collection of labels.
This is independence of finite patterns, rather than statistical independence of the fringe intensities under a distribution of screen positions.

%#
%## SECTION: The seven minimal families in physics
%#
\section{The seven minimal families in physics}\label{sec:hepta}

\noindent The classification of a full response compactum and the classification of a selected tree of preparations concern different compact sets.
For an NIP observable, the full compactum is convex, and hereditary separability implies metrizability by \cref{cor:nip2};
the closure of a selected tree need not be convex.
Each of the seven minimal families is nevertheless realized by such a tree of actual preparations (see \cref{thm:hepta-natural} below).
Photon-number comparisons, position comparisons and spin registers provide the realizations, while the passage from functions of position to responses on probability measures preserves the indexed families and their pointwise closures.

%#
%### SUBSECTION: Label-controlled observables
%#
\subsection{Label-controlled observables}
Several of our observables are of the following kind.
Alice’s system is a register that holds a label, and the label selects the test applied to Bob’s system:
a detector that accepts Bob’s system at each point with a probability that depends on the label and varies continuously with the point.

\begin{definition}\label{def:label}
  Let $L$ be a compact metrizable space, $\lambda$ a Borel probability measure on $L$ with full support, $I$ a countable set and $(f_i)_{i\in I}$ a family of continuous functions $L\to[0,1]$.
  The \emph{label-controlled observable} is
  \[
    C=\sum_{i\in I}|e_i\rangle\langle e_i|\otimes M_{f_i}
    \qquad\text{on }\ \ell^2(I)\otimes L^2(L,\lambda),
  \]
  where $(e_i)$ is the standard basis of $\ell^2(I)$ and $M_f$ is multiplication by $f$.
\end{definition}

The sum converges strongly, that is, on each vector, because its terms act on orthogonal subspaces, and the operator $C$ lies between $0$ and the identity operator.
When $L$ contains a countable dense set $Z$ of isolated points, we take $\lambda=\sum_{x\in Z}w_x\delta_x$, a combination of the point masses $\delta_x$ with positive weights $w_x$ of sum $1$;
then the space $L^2(L,\lambda)$ is a weighted copy of $\ell^2(Z)$, and the position of Bob’s system takes its values in $Z$.
As in \cref{sec:orders}, we write $C(L)$ for the algebra of continuous complex functions on $L$, with the supremum norm, and $P(L)$ for the space of Borel probability measures on $L$, with the weak$^*$ topology.
Recall that the probe algebra $\fA_C$ of an observable $C$ is the unital C$^*$-algebra generated by its effective observables, and that the probe space $Y_C$ is the state space of $\fA_C$.

\begin{lemma}\label{lem:label}
  Let $C$ be as in \cref{def:label}, and suppose that the functions $f_i$ separate the points of $L$.
  Then the probe algebra $\fA_C$ is
  $\{M_f:f\in C(L)\}$, the probe space $Y_C$ is identified with $P(L)$, and the response of the label state $e_i$ is the function $\mathfrak{r}_{e_i}\colon P(L)\to[0,1]$ given by $\mathfrak{r}_{e_i}(\nu)=\int f_i\,d\nu$.
  The record $\mathfrak{p}_\sigma$ of an actual state $\sigma$ of Bob’s system corresponds to the measure $\mu_\sigma$ characterized by $\Tr(\sigma M_f)=\int f\,d\mu_\sigma$ for $f\in C(L)$.
  The measures obtained in this way are exactly those absolutely continuous with respect to $\lambda$.
\end{lemma}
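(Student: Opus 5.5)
The plan is to follow the proof of \cref{prop:order-responses}: compute the effective observables, generate the probe algebra by Stone--Weierstrass, and identify its state space by Riesz.

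\emph{Effective observables.} For a density operator $\rho$ on $\ell^2(I)$, write $p_i=\langle e_i,\rho e_i\rangle\ge0$, so that $\sum_ip_i=1$. The operator $C$ is block diagonal with respect to the basis $(e_i)$, so the partial trace picks out only the diagonal entries of $\rho$. This gives
\[
  T_C(\rho)=\sum_ip_iM_{f_i}=M_{g_\rho},
  \qquad g_\rho=\sum_ip_if_i\in C(L),
\]
where the series converges uniformly because $0\le f_i\le1$. In particular $T_C(|e_i\rangle\langle e_i|)=M_{f_i}$, and every effective observable lies in $\{M_f:f\in C(L)\}$.

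\emph{Probe algebra and probe space.} The map $f\mapsto M_f$ is an isometric $*$-isomorphism of $C(L)$ onto its image, because $\lambda$ has full support: a continuous $f$ with $|f|>\|M_f\|$ at some point would exceed this bound on a nonempty open set, which has positive $\lambda$-measure. The unital C$^*$-algebra generated by the $M_{f_i}$ corresponds to the unital C$^*$-subalgebra of $C(L)$ generated by the $f_i$. These functions separate points, so by Stone--Weierstrass that subalgebra is all of $C(L)$. Since it contains every $M_{g_\rho}$, the probe algebra is $\fA_C=\{M_f:f\in C(L)\}$. By the Riesz representation theorem, $Y_C$ is identified with $P(L)$ via $\nu\mapsto[M_f\mapsto\int f\,d\nu]$, and under this identification $\mathfrak{r}_{e_i}(\nu)=\nu(f_i)=\int f_i\,d\nu$.

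\emph{Records of actual probes.} For a density operator $\sigma$ on $L^2(L,\lambda)$, the functional $f\mapsto\Tr(\sigma M_f)$ is a state of $C(L)$, so it equals $\int f\,d\mu_\sigma$ for a unique $\mu_\sigma\in P(L)$. Expanding $\sigma=\sum_j\lambda_j|\psi_j\rangle\langle\psi_j|$ as in the proof of \cref{lem:product} gives
\[
  \mu_\sigma=\Bigl(\sum_j\lambda_j|\psi_j|^2\Bigr)\lambda,
\]
and the density is in $L^1(\lambda)$ with integral $1$ because the trace-norm series converges. Hence $\mu_\sigma\ll\lambda$. Conversely, if $\nu=p\,\lambda$ is a probability measure, the unit vector $\psi=\sqrt p$ gives a pure state with $\mu_\sigma=\nu$, exactly as in the continuous case of \cref{sec:orders}.

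\emph{Main obstacle.} The step needing the most care is the isometry $\|M_f\|=\sup|f|$, which depends on full support of $\lambda$. Without it, the C$^*$-algebra generated would be a quotient of $C(L)$, namely $C(\operatorname{supp}\lambda)$, and the state space would be $P(\operatorname{supp}\lambda)$ instead of $P(L)$. The remaining steps are the routine calculations already carried out in \cref{prop:order-responses}.
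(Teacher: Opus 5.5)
Your proposal is correct and follows the paper's proof essentially step by step. Both compute $T_C(\rho)=\sum_i\langle e_i,\rho e_i\rangle M_{f_i}$, use full support of $\lambda$ for the isometry $\|M_f\|=\sup|f|$, identify the probe algebra and probe space by Stone--Weierstrass and Riesz, and obtain the records from the density $\sum_j\lambda_j|\psi_j|^2$, with $\sqrt{p}$ giving the converse.
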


\begin{proof}
  For a density operator $\rho$, tracing out Alice’s system yields $T_C(\rho)=\sum_i\langle e_i,\rho e_i\rangle M_{f_i}$, a norm-convergent sum, so the slice map takes values in $\{M_f:f\in C(L)\}$.
  Since $\lambda$ has full support and $f$ is continuous, $\|M_f\|=\sup|f|$;
  hence the map $f\mapsto M_f$ is an isometric $*$-isomorphism from $C(L)$ onto its image.
  The unital C$^*$-subalgebra of $C(L)$ generated by the functions $f_i$ is $C(L)$, by the Stone--Weierstrass theorem, because they separate points.
  It follows that $\fA_C=\{M_f:f\in C(L)\}$, and, by the Riesz representation theorem, its states are the measures $\nu\in P(L)$, acting by $M_f\mapsto\int f\,d\nu$.
  Finally, if $\sigma=\sum_j\lambda_j|\psi_j\rangle\langle\psi_j|$ is a density operator on $L^2(L,\lambda)$, with weights $\lambda_j\ge0$ of sum $1$ and unit vectors $\psi_j$, then $\Tr(\sigma M_f)=\int f\,\bigl(\sum_j\lambda_j|\psi_j|^2\bigr)\,d\lambda$, so the measure $\mu_\sigma$ has density $\sum_j\lambda_j|\psi_j|^2$ with respect to $\lambda$;
  conversely, a probability measure with density $p$ with respect to $\lambda$ is $\mu_\sigma$ for the pure state given by the unit vector $\sqrt{p}$.
\end{proof}

%#
%### SUBSECTION: Transfer from points to measures
%#
\subsection{Transfer from points to measures}
For a compact metrizable space $L$, a continuous function $g\colon L\to[0,1]$ determines a response $\mathfrak{r}_g\colon P(L)\to[0,1]$ by $\mathfrak{r}_g(\nu)=\int_L g\,d\nu$.
To transfer the minimal families from functions on $L$ to responses on the probe space, we must preserve their pointwise closures under this map.
\Cref{lem:transfer} below does so for countable families whose pointwise closures consist of Baire-1 functions;
the resulting sequential properties allow us to use dominated convergence and recover each limiting response from its values on the point masses.

\begin{lemma}\label{lem:transfer}
  Let $L$ be a compact metrizable space and $(f_t)_{t\in T}$ a countable family of continuous functions $L\to[0,1]$ that is relatively compact in the space $\Bone(L)$ of Baire-1 functions on $L$.
  Then the families $\{\mathfrak{r}_{f_t}:t\in T\}$, of functions on $P(L)$, and $\{f_t:t\in T\}$, of functions on $L$, are equivalent in the sense of \cref{def:equivalent}, and the family $\{\mathfrak{r}_{f_t}:t\in T\}$ is NIP.
\end{lemma}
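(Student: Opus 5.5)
We plan to build the homeomorphism between the two closures by hand, using the point masses $\delta_x$, $x\in L$. Write $K_L$ for the pointwise closure of $\{f_t:t\in T\}$ in $\RR^L$ and $K_P$ for that of $\{\mathfrak{r}_{f_t}:t\in T\}$ in $\RR^{P(L)}$. Since $\mathfrak{r}_{f_t}(\delta_x)=f_t(x)$, restriction along $x\mapsto\delta_x$ is a continuous map $\Psi\colon K_P\to\RR^L$ with $\Psi(\mathfrak{r}_{f_t})=f_t$. The image $\Psi(K_P)$ is compact and contains every $f_t$, so it contains $K_L$. Conversely, continuity of $\Psi$ gives $\Psi(K_P)\subseteq K_L$, hence $\Psi(K_P)=K_L$. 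The map $f_t\mapsto\mathfrak{r}_{f_t}$ is well defined because the $f_t$ are continuous, so $\mathfrak{r}_{f_t}$ depends only on $f_t$. Once $\Psi$ is shown to be injective, it is a continuous bijection from a compact space onto a Hausdorff space, hence a homeomorphism. Its inverse then extends $f_t\mapsto\mathfrak{r}_{f_t}$, which is exactly equivalence in the sense of \cref{def:equivalent}.

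\emph{Injectivity of $\Psi$ is the main step, and the main obstacle.} We will prove the stronger statement that every $\mathfrak{r}\in K_P$ satisfies $\mathfrak{r}(\nu)=\int\Psi(\mathfrak{r})\,d\nu$ for all $\nu\in P(L)$. The plan is to reach each member of $K_P$ by a sequence rather than a net, so that dominated convergence applies. The argument runs in four steps.
\begin{enumerate}[label=\textup{(\arabic*)}]
\item By hypothesis and \cref{lem:relcompact}, $K_L\subseteq\Bone(L)$. Hence $K_L$ is a separable Rosenthal compactum, and it is Fréchet--Urysohn by \cite[Theorem~3F]{Bourgain-Fremlin-Talagrand:1978}.
\item We will also need $K_P$ to be Fréchet--Urysohn. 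The functions $\mathfrak{r}_{f_t}$ are continuous on the compact metrizable space $P(L)$, so it suffices to show that this family is NIP and then apply \cref{thm:dichotomy}. This also proves the last assertion of the lemma.
\item For NIP: suppose $(\mathfrak{r}_{f_{t_n}})$ and $a<b$ witnessed the independence property. By \cref{fact:rosenthal} applied to $(f_{t_n})$ on $L$, some subsequence converges pointwise, because the other alternative ($\beta\NN$) is impossible inside $\Bone(L)$. By dominated convergence the corresponding $\mathfrak{r}_{f_{t_n}}$ then converge pointwise on $P(L)$. This is impossible for an independent sequence: for an infinite, co-infinite $I$, compactness as in the proof of \cref{thm:levels}(c) gives a probe realizing the pattern ``$\le a$ on $I$, $\ge b$ off $I$'', so no subsequence converges.
\item With $K_P$ Fréchet--Urysohn, each $\mathfrak{r}\in K_P$ is a limit of a sequence $\mathfrak{r}_{f_{t_n}}$, noting that if $\mathfrak{r}$ is a member it is trivially such a limit. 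Then $f_{t_n}\to\Psi(\mathfrak{r})$ pointwise on $L$. Dominated convergence, with bound $1$, gives $\mathfrak{r}(\nu)=\lim_n\int f_{t_n}\,d\nu=\int\Psi(\mathfrak{r})\,d\nu$. The integral makes sense because $\Psi(\mathfrak{r})$ is Baire-1, hence Borel. So $\mathfrak{r}$ is determined by $\Psi(\mathfrak{r})$, and $\Psi$ is injective.
\end{enumerate}

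The delicate point is ordering the steps to avoid circularity: NIP of the measure family, and hence the Fréchet--Urysohn property of $K_P$, must be established in Step~(3) before it is used in Step~(4). Step~(3) relies only on Rosenthal's dichotomy on $L$ and dominated convergence.
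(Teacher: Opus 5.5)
Your proposal is correct and follows essentially the same route as the paper. It uses Rosenthal's dichotomy on $L$ plus dominated convergence to make subsequences of the measure responses converge, hence NIP and the Fréchet--Urysohn property of their closure via \cref{thm:dichotomy}, and then injectivity of restriction to point masses by writing each limit as $\nu\mapsto\int\Psi(\mathfrak{r})\,d\nu$ along a sequence. The one step you leave terse is that the $\beta\NN$ alternative cannot occur inside $\Bone(L)$. The paper justifies it the same way: the closure of $\{f_t\}$ is a Rosenthal compactum, hence Fréchet--Urysohn, and $\beta\NN$ is not.
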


\begin{proof}
  Write $\mathfrak{r}_t=\mathfrak{r}_{f_t}$ for $t\in T$.
  Every sequence $(f_{t_n})$ has a subsequence that converges pointwise on $L$.
  Indeed, otherwise, by \cref{fact:rosenthal}, some subsequence would have a closure homeomorphic to
  $\beta\NN$, which is not Fréchet--Urysohn (a space is Fréchet--Urysohn if every point in the closure of a subset is the limit of a sequence from that subset),
  whereas the closure of $\{f_t\}$ is a Rosenthal compactum and therefore Fréchet--Urysohn (see~\cite[Theorem~3F]{Bourgain-Fremlin-Talagrand:1978}).
  By the dominated convergence theorem, the corresponding subsequence of $(\mathfrak{r}_{t_n})$ converges pointwise on $P(L)$.

  The family $\{\mathfrak{r}_t\}$ is thus NIP.
  Indeed, if a sequence $(\mathfrak{r}_{t_n})$ and reals $a<b$ witnessed the independence property, then, since
  $P(L)$ is compact and the functions $\mathfrak{r}_t$ are continuous, for each $E\subseteq\NN$ there would be $\nu$ with $\mathfrak{r}_{t_n}(\nu)\le a$ for $n\in E$ and $\mathfrak{r}_{t_n}(\nu)\ge b$ for $n\notin E$ (see the proof of \cref{thm:dichotomy});
  taking for $E$ every other index of a pointwise convergent subsequence would yield a $\nu$ at which that subsequence does not converge, contrary to the previous paragraph.
  By \cref{thm:dichotomy}, the closure of $\{\mathfrak{r}_t\}$ is contained in $\Bone(P(L))$;
  it is a Rosenthal compactum, hence Fréchet--Urysohn.

  Let $\Phi$ send a function $G$ on $P(L)$ to the function $x\mapsto G(\delta_x)$ on $L$.
  The map $\Phi$ is continuous for the pointwise topologies, and $\Phi(\mathfrak{r}_t)=f_t$.
  The map $\Phi$ is injective on the closure of $\{\mathfrak{r}_t\}$:
  if $G$ lies in that closure, then, the closure being Fréchet--Urysohn, the function $G$ is the pointwise limit of a sequence
  $(\mathfrak{r}_{t_n})$;
  the sequence $(f_{t_n})$ then converges pointwise to $\Phi(G)$, which is therefore a bounded Borel function, and, by dominated convergence, the value $G(\nu)$ equals $\lim_n\int f_{t_n}\,d\nu=\int\Phi(G)\,d\nu$ for every $\nu$.
  Thus $G$ is determined by $\Phi(G)$.
  As a continuous injection on a compact space, the map $\Phi$ is a homeomorphism of the closure of $\{\mathfrak{r}_t\}$ onto its image;
  this image is compact, contains $\{f_t\}$ and lies in the closure of $\{f_t\}$, so it is that closure.
\end{proof}

We also need to discard points that carry no information about limits.

\begin{lemma}\label{lem:restrict}
  Let $L$ be a set, $L_0\subseteq L$, and $(g_t)_{t\in T}$ a countable pointwise bounded family of real functions on $L$ whose pointwise closure $K$ is Fréchet--Urysohn.
  Suppose that
  \begin{enumerate}[label=\textup{(\roman*)}]
  \item
    the restrictions $g_t|_{L_0}$ are pairwise distinct, and none of them lies in the closure of the others;
  \item
    if two limits of sequences $(g_{t_k})_{k\in\NN}$ with pairwise distinct indices agree on $L_0$, then they are equal.
  \end{enumerate}
  Then the families $(g_t)$ and $(g_t|_{L_0})$ are equivalent.
\end{lemma}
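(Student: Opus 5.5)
The plan is to use the restriction map $\pi\colon K\to\RR^{L_0}$, $\pi(g)=g|_{L_0}$. It is continuous for the pointwise topologies and sends $g_t$ to $g_t|_{L_0}$, so the assignment $g_t\mapsto g_t|_{L_0}$ is well defined. Its image $\pi(K)$ is compact and lies in the closure of $\{g_t|_{L_0}\}$. Since it contains that family, it equals that closure. Because $K$ is compact and $\RR^{L_0}$ is Hausdorff, it therefore suffices to prove that $\pi$ is injective on $K$; then $\pi$ is the required homeomorphism of closures, which is the equivalence of \cref{def:equivalent}.

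To prove injectivity, I would classify the points of $K$ using the Fréchet--Urysohn property. Each $f\in K$ is either some $g_t$ or the limit of a sequence $(g_{t_k})$. In the second case I can take the indices pairwise distinct: if some index occurred infinitely often, then $f$ would equal that $g_t$, and I would be back in the first case. So every point of $K$ is either a member $g_t$ of the family or a limit $\lim_k g_{t_k}$ along pairwise distinct indices. Call the latter \emph{limit points}; a member $g_t$ may also be a limit point.

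Now suppose $f,f'\in K$ with $f|_{L_0}=f'|_{L_0}$, and split into cases.
\begin{itemize}
\item If both are limit points, hypothesis (ii) gives $f=f'$ directly.
\item If $f=g_s$ and $f'=g_t$, then hypothesis (i) gives $s=t$.
\item If $f=g_s$ and $f'=\lim_k g_{t_k}$ with pairwise distinct $t_k$, I may drop the at most one $k$ with $t_k=s$. Then $g_s|_{L_0}=\lim_k g_{t_k}|_{L_0}$ lies in the closure of the other restrictions, which hypothesis (i) forbids. This case cannot occur.
\end{itemize}
Hence $f=f'$, and $\pi$ is injective.

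The main subtlety is the bookkeeping in the middle step. Distinct points of $K$ may be represented in several ways, and a point can be both some $g_t$ and a limit of others. I would handle this by applying (i) first, since it rules out any $g_s$ whose restriction is a limit of restrictions of other members, and only afterwards applying (ii) to pairs of genuine limits. Countability of $T$ is used only implicitly, to make the family countably indexed as \cref{def:equivalent} requires.
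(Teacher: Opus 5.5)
Your proof is correct and follows the paper's argument. Restriction $\pi\colon K\to\RR^{L_0}$ is continuous onto the closure of the restricted family, and by the Fréchet--Urysohn property every point of $K$ outside the family is a limit along pairwise distinct indices; hypothesis (i) then gives injectivity on the family and separates members from such limits, while (ii) gives injectivity on the limits. Your third case in fact shows a little more, namely that under (i) no member $g_s$ is a limit along pairwise distinct indices, so your two classes of points are disjoint, but this does not change the argument.
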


\begin{proof}
  Restriction $\pi\colon K\to\RR^{L_0}$ is continuous, and its image is the closure of $(g_t|_{L_0})$.
  Let $f\in K$ be different from every $g_t$.
  Since $K$ is Fréchet--Urysohn, the function $f$ is the limit of a sequence $(g_{t_k})$;
  since $K$ is Hausdorff and $f$ is not a $g_t$, no index occurs infinitely often, and, after passage to a subsequence, the indices are pairwise distinct.
  By~(i), the map $\pi$ is injective on the family;
  moreover, $\pi(g_t)\ne\pi(f)$ for every such limit $f$ and every $t$, because $\pi(f)$ lies in the closure of $\{g_s|_{L_0}:s\ne t\}$, whereas $g_t|_{L_0}$ does not.
  By~(ii), the map $\pi$ is also injective on the limits.
  Hence $\pi$ is a continuous injection of the compact space $K$, and therefore a homeomorphism onto its image.
\end{proof}

%#
%### SUBSECTION: Three settings
%#
\subsection{Three settings}
The realizations use three physical settings.

\emph{Photon numbers.}
This is the discrete comparison of \cref{sec:orders}, in which Alice’s and Bob’s values lie in $\Xi=\{x_l:l\in\NN\}\cup\{1\}$, where $x_l=1-\frac1{l+1}$.
For Alice’s number state $|l\rangle$, the state with exactly $l$ photons, the function $g_l=g_{|l\rangle}$, whose value at a point $y$ is the probability that Alice’s value lies below $y$, is the indicator of $\{x_j:j>l\}\cup\{1\}$;
the probe space is the space $P(\Xi)$ of Borel probability measures on $\Xi$, and $\mathfrak{r}_{|l\rangle}(\nu)=\int g_l\,d\nu$ (see \cref{prop:order-responses}).

\emph{Positions in a box.}
This is the continuous comparison of \cref{sec:orders}:
the probe space is $P([0,1])$, and a state $\rho$ of Alice’s particle with distribution $\mu_\rho$ has response $\nu\mapsto\int g_\rho\,d\nu$, where $g_\rho(y)=\mu_\rho([0,y))$.
We place a Cantor set in the box.
Let $\psi(\sigma)=\frac34-\sum_n\sigma(n)3^{-n-1}$ for $\sigma\in\{0,1\}^\NN$;
this is an order-reversing homeomorphism of the Cantor space $\{0,1\}^\NN$, with its lexicographic order $\prec$, onto a Cantor set $C_0\subseteq[\frac14,\frac34]$.
For a finite sequence $s\in\{0,1\}^{<\NN}$, let $I_s$ be the smallest interval containing the image $\psi(V_s)$ of the clopen set $V_s$ of infinite extensions of $s$;
it has length $\ell_s=\frac123^{-|s|}$, left endpoint $\psi(s^\frown1^\infty)$ and right endpoint $\psi(s^\frown0^\infty)$, the images of the extensions of $s$ by ones and by zeros.
If $\zeta$ is the immediate successor of $\sigma$ in $\prec$, then the interval $(\psi(\zeta),\psi(\sigma))$ is a component of $[0,1]\setminus C_0$;
the other components are $[0,\frac14)$ and $(\frac34,1]$.
The components adjacent to $I_s$ have length at least $\ell_s/2$:
each of them is either $[0,\frac14)$ or $(\frac34,1]$, of length $\frac14\ge\ell_s/2$, or the gap between $I_{u^\frown0}$ and $I_{u^\frown1}$ for some proper initial segment $u$ of $s$, of length $\ell_u/3\ge\ell_s$.

\emph{Registers.}
Alice’s system is a chain of spins of variable finite length, with Hilbert space $\ell^2(\{0,1\}^{<\NN})\cong\bigoplus_{n\in\NN}(\CC^2)^{\otimes n}$.
For $t\in\{0,1\}^{<\NN}$, the basis ket $|t\rangle=\bigotimes_{i<|t|}|t(i)\rangle$ has spin up at site $i$ exactly when $t(i)=0$, with $|0\rangle$ denoting up and $|1\rangle$ down.
Bob probes it in two ways.
In \emph{scanning}, Bob’s probe sits at a site $i\in\NN$ and asks whether Alice’s spin at that site is down, the outcome selected by $|1\rangle\langle1|$:
the test is $a_t(i)=t(i)$ for $i<|t|$ and $a_t(i)=0$ otherwise.
In \emph{template matching}, Bob holds a reference configuration $u$, finite or infinite, and asks whether Alice’s chain coincides with the beginning of his reference:
the test is $b_t(u)=1$ if $t\sqsubseteq u$ and $b_t(u)=0$ otherwise.
We extend $a_t$ to the one-point compactification
$L_a=\NN\cup\{\infty\}$ by $a_t(\infty)=0$, and we let $L_b$ be the space of finite and infinite sequences of zeros and ones, with the ultrametric $d(u,u')=2^{-|u\wedge u'|}$ for $u\ne u'$, where $u\wedge u'$ is the longest common initial segment;
it is compact, the finite sequences are isolated and dense, and the functions $b_t$ are continuous, since $\{u:t\sqsubseteq u\}$ is clopen.
The \emph{scanning}, \emph{template} and \emph{combined} observables are the label-controlled observables with labels in
$\{0,1\}^{<\NN}$, on $L_a$, $L_b$ and their disjoint sum
$L_a\oplus L_b$, with the functions $a_t$, $b_t$ and $(a_t,b_t)$, the last equal to $a_t$ on $L_a$ and to $b_t$ on $L_b$, and with measures concentrated on $\NN$, on the finite sequences, and on both, respectively.
These functions separate the points of $L_a$, of $L_b$ and of $L_a\oplus L_b$;
for instance, the functions of the empty label, $a_\emptyset=0$ and $b_\emptyset=1$, separate $L_a$ from $L_b$.
Hence \cref{lem:label} applies to all three observables.

%#
%### SUBSECTION: The realization
%#
\subsection{The realization}
Together, the three settings realize all seven minimal families.
In the next theorem, $(s_t)_{t\in\{0,1\}^{<\NN}}$ is the canonical enumeration of the regular dyadic subtree
$\Theta$ fixed in \cref{sec:classification}, and the chains $|t\rangle$, the Cantor set $C_0$, the homeomorphism $\psi$ and the intervals $I_s$, of lengths $\ell_s$, are those of the three settings.

\begin{keytheorem}\label{thm:hepta-natural}
  Each minimal family is realized by a tree of actual preparations of one of the observables of the three settings:
  for each $i=1,\dots,7$, the observable and the pure states $(\rho_t)_{t\in\{0,1\}^{<\NN}}$ of Alice’s system listed below, given by pairwise orthogonal vectors, have responses $\{\mathfrak{r}_{\rho_t}\}$ equivalent to $D_i$.
  \begin{enumerate}[label=\textup{(\arabic*)},leftmargin=*]
  \item[$D_1$:]
    the comparison of photon numbers, with Alice’s number states in any order;
  \item[$D_2$:]
    scanning, with Alice’s chains $|s_t\rangle$, whose scanning functions are $s_t^\frown0^\infty$, for $s_t\in\Theta$;
  \item[$D_3$:]
    the comparison of positions, with Alice’s particle uniformly distributed over the gap of $C_0$ immediately to the right of $\psi(s_t^\frown0^\infty)$;
  \item[$D_4$:]
    the comparison of positions, with Alice’s particle uniformly distributed over the gap of $C_0$ immediately to the left of $\psi(s_t^\frown1^\infty)$;
  \item[$D_5$:]
    template matching, with Alice’s chains $|t\rangle$;
  \item[$D_6$:]
    the combined observable, with Alice’s chains $|t\rangle$;
  \item[$D_7$:]
    the comparison of positions, with Alice’s particle distributed uniformly, half over $[\min I_{s_t}-\frac\ell2,\min I_{s_t}-\frac\ell4]$ and half over $[\max I_{s_t}+\frac\ell4,\max I_{s_t}+\frac\ell2]$, where $\ell=\ell_{s_t}$.
  \end{enumerate}
  Moreover, the causal order of spacetime realizes $D_5$ in the sense of \cref{thm:causal}.
\end{keytheorem}

\begin{proof}
  In each case the probe space is $P(L)$ for a compact metrizable space $L$, the responses are $\mathfrak{r}_{\rho_t}(\nu)=\int g_t\,d\nu$ with $g_t$ continuous, and the closure of $\{g_t\}$ consists of Baire-1 functions;
  by \cref{lem:transfer}, it suffices to show that $\{g_t\}$ is equivalent to $D_i$.
  We use the notation of \cref{sec:classification}:
  the function $v_t$ is the characteristic function of the set
  $V_t$ of infinite sequences that extend $t$, and the functions $x^+_\sigma$ and $x^-_\sigma$ are the characteristic functions of $\{\zeta\in\{0,1\}^\NN:\sigma\preceq\zeta\}$ and $\{\zeta\in\{0,1\}^\NN:\sigma\prec\zeta\}$, for the lexicographic order $\prec$.

  $D_1$.
  Index the functions $g_l$, $l\in\NN$, by $\{0,1\}^{<\NN}$ through any bijection, and write $g_t$ for the function of index $t$.
  Every sequence of pairwise distinct members converges to the indicator of $\{1\}$, which is not a member, and each member is isolated.
  Hence the closure is a countable compact space with a single non-isolated point, and so is the closure $D_1\cup\{0\}$ of $D_1$;
  the bijection $g_t\mapsto\frac1{|t|+1}v_t$, extended by sending the indicator of $\{1\}$ to $0$, is a homeomorphism.
  The space $\Xi$ is countable, so every function on it is Baire-1, every subset of a countable space being an $F_\sigma$.

  $D_2$.
  Here $g_t=a_{s_t}$ is $s_t^\frown0^\infty$ on $\NN$ and $0$ at $\infty$.
  This is the family $D_2$, extended continuously to $\NN\cup\{\infty\}$;
  every function in its closure vanishes at $\infty$, so the extension does not change the closure, and every function on the countable space $L_a$ is Baire-1.

  $D_3$ and $D_4$.
  Let $\sigma_t=s_t^\frown0^\infty$.
  If $\sigma_t\ne0^\infty$, it has an immediate predecessor $\sigma_t^-$, and $G_t=(\psi(\sigma_t),\psi(\sigma_t^-))$ is a gap;
  if $\sigma_t=0^\infty$, let $G_t=(\frac34,1]$.
  Let $\rho_t$ be uniform on $G_t$.
  Then $g_t=0$ on $[0,\psi(\sigma_t)]$, $g_t=1$ to the right of $G_t$, and $g_t$ is linear on $G_t$.
  On $C_0$, the function $g_t$ satisfies $g_t(\psi(\zeta))=0$ exactly when $\zeta\succeq\sigma_t$;
  that is, $g_t\circ\psi=1-x^+_{\sigma_t}$.
  The sequences $\sigma_t$ are pairwise distinct, and so are the gaps $G_t$.
  We apply \cref{lem:restrict} with $L_0=C_0$.
  The closure of $\{g_t\}$ consists of nondecreasing functions;
  it is therefore a Rosenthal compactum, and hence Fréchet--Urysohn.
  Condition~(i) holds because $D_3$ is discrete in its closure.
  For~(ii), let $J$ be a component of $[0,1]\setminus C_0$.
  For pairwise distinct $t_k$, eventually $G_{t_k}\ne J$, and then the function $g_{t_k}$ is constant on $\overline{J}$;
  every such limit $f$ is thus constant on $\overline{J}$, and so is determined on $J$ by its value at an endpoint of $J$ in $C_0$.
  Consequently, the family $\{g_t\}$ is equivalent to $\{g_t|_{C_0}\}$, which is carried to $D_3$ by the homeomorphism $\psi$ followed by the map $f\mapsto1-f$, a homeomorphism for the pointwise topology.
  The case of $D_4$ is symmetric:
  with $\sigma_t=s_t^\frown1^\infty$ and its immediate successor $\sigma_t^+$, the gap is $(\psi(\sigma_t^+),\psi(\sigma_t))$, or $[0,\frac14)$ if $\sigma_t=1^\infty$, and $g_t\circ\psi=1-x^-_{\sigma_t}$.

  $D_5$.
  Here $g_t=b_t$ on $L_b$, and its restriction to the infinite sequences is $v_t$.
  Let $K$ be the pointwise closure of $\{b_t\}$.
  For a finite sequence $u$, only finitely many functions $b_t$ are nonzero at $u$;
  hence any member of $K$ that is nonzero at $u$ belongs to this finite set of functions.
  Every member of $K\setminus\{b_t:t\in\{0,1\}^{<\NN}\}$ therefore vanishes on the finite sequences.
  The restriction to the closed set $\{0,1\}^\NN$ of every member of $K$ lies in $\hat{A}(2^\NN)\subseteq\Bone(\{0,1\}^\NN)$.
  The inverse image of an open set under such a member is therefore an $F_\sigma$ subset of $\{0,1\}^\NN$, together with at most countably many isolated points of $L_b$, and is thus $F_\sigma$ in $L_b$.
  By \cref{fact:baire}(2), each member of $K$ is Baire-1 on $L_b$.
  Thus $K$ is a Rosenthal compactum and is Fréchet--Urysohn.
  We may now apply \cref{lem:restrict} with $L_0=\{0,1\}^\NN$:
  condition~(i) holds because $D_5$ is discrete in its closure, and condition~(ii) holds because limits of sequences with pairwise distinct indices vanish at every finite sequence.
  The family $\{b_t\}$ is consequently equivalent to $D_5$.

  $D_6$.
  Here $g_t=(a_t,b_t)$ on $L_a\oplus L_b$.
  Every function on the countable compact metrizable space $L_a$ is Baire-1, while every member of the pointwise closure of $\{b_t\}$ is Baire-1 on $L_b$, as shown for $D_5$.
  The pointwise closure of $\{g_t\}$ therefore consists of Baire-1 functions on the disjoint sum and is a Rosenthal compactum, hence Fréchet--Urysohn.
  On $L_0=\NN\oplus\{0,1\}^\NN$, the function $g_t$ is $(t^\frown0^\infty,v_t)$, the corresponding member of $D_6$ with its two summands interchanged.
  These restrictions are discrete in their closure, and every limit of a sequence with pairwise distinct indices vanishes at $\infty$ and at each finite sequence in $L_b$.
  Thus both conditions of \cref{lem:restrict} hold, and the family $\{g_t\}$ is equivalent to $D_6$.

  $D_7$.
  Let $\ell=\ell_{s_t}$, and let $\rho_t$ be the state whose distribution is half uniform on $A_t=[\min I_{s_t}-\frac\ell2,\min I_{s_t}-\frac\ell4]$ and half uniform on $B_t=[\max I_{s_t}+\frac\ell4,\max I_{s_t}+\frac\ell2]$;
  these intervals lie in the components adjacent to $I_{s_t}$.
  On $C_0$, the function $g_t$ equals $0$ to the left of $I_{s_t}$, $\frac12$ on $I_{s_t}$ and $1$ to the right;
  that is, $g_t\circ\psi=1-x^+_{\sigma_t}+\frac12v_{s_t}$ with $\sigma_t=s_t^\frown0^\infty$, since $\psi(\sigma_t)=\max I_{s_t}$.
  Each member of the closure of $D_7$ has the form $(v,x)$, for functions $v$ and $x$ on $\{0,1\}^\NN$ with $v\le x$ that take only the values $0$ and $1$;
  hence the map $(v,x)\mapsto(1-x+\frac12v)\circ\psi^{-1}$ is a continuous injection, and it carries $D_7$ to $\{g_t|_{C_0}\}$;
  as $D_7$ is discrete in its closure, condition~(i) of \cref{lem:restrict} holds.
  For~(ii), note that $A_t$ lies in the component whose right endpoint is $\psi(s_t^\frown1^\infty)$, and $B_t$ in the component whose left endpoint is $\psi(s_t^\frown0^\infty)$.
  These sequences are pairwise distinct for $s_t\in\Theta$, so each component of $[0,1]\setminus C_0$ contains at most one interval $A_t$ and at most one interval $B_t$.
  For pairwise distinct $t_k$, eventually neither $A_{t_k}$ nor $B_{t_k}$ lies in a given component $J$, and then the function $g_{t_k}$ is constant on $\overline{J}$.
  Every such limit is therefore constant on $\overline{J}$ and is determined by its values on $C_0$.

  Finally, the photon and register preparations are given by distinct basis vectors.
  In the box, take the unit vectors $\sqrt{d\mu_{\rho_t}/dy}$.
  Within each of the $D_3$ and $D_4$ realizations, the supports lie in distinct Cantor gaps.
  For $D_7$, each gap contains at most one interval $A_t$ and at most one interval $B_u$, by the distinctness of the endpoint sequences fixed in \cref{sec:classification}.
  If an internal gap contains both, its length is at least each of $\ell_{s_t}$ and $\ell_{s_u}$;
  the interval $A_t$ lies in its right half and $B_u$ in its left half, so they meet in at most the midpoint.
  Each exterior gap contains at most one support interval.
  Within each realization, the supports of distinct preparations consequently meet in at most finitely many points, whence the vectors are pairwise orthogonal in $L^2[0,1]$.
\end{proof}

The realizations display what distinguishes the families.
\begin{itemize}[leftmargin=*]
\item
  $D_1$, \emph{escape to infinity}:
  an oscillator excited ever higher has a single idealized limit, and each actual preparation is isolated from it.
\item
  $D_2$, \emph{infinite configurations}:
  the idealized limits of the chains are the infinite spin configurations, a Cantor set of them, and countably many site probes tell them all apart, so the closure is metrizable;
  but any two configurations differ by $1$ at some site, so the closure is not separable in the supremum norm.
\item
  $D_3$ and $D_4$, \emph{one-sided localization}:
  a particle that approaches a point from only one side produces one half of a split pair, and the split Cantor set appears;
  the closure is hereditarily separable but not metrizable.
\item
  $D_5$, \emph{point-supported limits}:
  light cones that shrink to a point on Bob’s slice, or templates following longer and longer initial segments of one infinite configuration, yield a point-supported limiting profile;
  the corresponding idealized response evaluates the mass that a probe assigns to the point, and the closure of the selected responses is not first countable.
\item
  $D_6$, \emph{twins over a metrizable base}:
  each infinite configuration has two idealized versions, one that every template misses and one that the template of the configuration itself detects;
  the scanning probes alone cannot tell the twins apart, and, as seen by these probes, the limits form a Cantor set.
\item
  $D_7$, \emph{twins over a split base}:
  a particle that approaches a point from both sides at once produces the midpoint of a split pair, as the symmetric state does in the Ising model (see \cref{sec:ssb} below);
  the twins of the closure sit over the split Cantor set.
\end{itemize}

The scanning and template observables are NIP (see \cref{prop:label-nip} and \cref{cor:registers} below), which places them in the classification of \cref{thm:observable-classes};
the combined observable is NIP by \cref{prop:label-nip} as well, since its functions separate points and every sequence of them has a subsequence that converges pointwise on each of the two summands (see the proof of \cref{cor:registers}).
The proof of \cref{prop:label-nip} uses a theorem of Stegall.
Recall that a subset of a Banach space is \emph{weakly precompact} if it is bounded and every sequence in it has a weakly Cauchy subsequence,
one whose values under each continuous linear functional form a Cauchy sequence, and that the \emph{absolutely convex hull} of a set consists of the finite combinations
$\sum_k a_kx_k$ of its elements with $\sum_k|a_k|\le1$.
Stegall’s theorem states that the absolutely convex hull of a weakly precompact set is weakly precompact (see Rodríguez~\cite{Rodriguez:2021} for a quantitative version).
We say that an observable is \emph{NIP for Bob’s responses} if the observable obtained from it by exchanging the two tensor factors is NIP for Alice’s responses.

\begin{proposition}\label{prop:label-nip}
  Let $C$ be a label-controlled observable whose functions $f_i$ separate the points of $L$ and have the property that every sequence of them has a subsequence converging pointwise on $L$.
  Then the observable $C$ is NIP for Alice’s responses.
\end{proposition}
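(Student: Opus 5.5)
The plan is to show directly that no sequence of responses of Alice's actual states can be independent. The key claim is that every sequence of such responses has a subsequence converging pointwise on the whole probe space. This suffices because an independent sequence $(\mathfrak{r}_{\rho_n})$ with reals $a<b$ realizes, by compactness of $Y_C$ and continuity of the responses (as in the proof of \cref{thm:dichotomy}), every infinite pattern $I\subseteq\NN$. Applying this to the set of even-numbered terms of any subsequence gives a probe at which that subsequence oscillates between $\le a$ and $\ge b$, so the subsequence does not converge.

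\emph{Step 1: responses as integrals.} By \cref{lem:label}, $Y_C=P(L)$. For a density operator $\rho$ the proof of that lemma gives $T_C(\rho)=\sum_i p_iM_{f_i}$, where $p_i=\langle e_i,\rho e_i\rangle\ge0$ and $\sum_ip_i=1$, with norm convergence. Hence $\mathfrak{r}_\rho(\nu)=\int g_\rho\,d\nu$, where $g_\rho=\sum_ip_if_i$ lies in the norm-closed convex hull of $\{f_i\}$ in the Banach space $C(L)$.

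\emph{Step 2: weak precompactness.} By the Riesz representation theorem, the dual of $C(L)$ is the space of finite signed Borel measures on $L$. A bounded sequence in $C(L)$ that converges pointwise on $L$ is therefore weakly Cauchy, by dominated convergence. The hypothesis thus says exactly that the bounded set $\{f_i:i\in I\}$ is weakly precompact. Stegall's theorem then makes its absolutely convex hull $W$ weakly precompact.

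\emph{Step 3: passage to the norm closure.} I expect this to be the main, if routine, obstacle: weak precompactness must pass from $W$ to its norm closure $\overline{W}$. I would argue by diagonalization. Given a sequence $(g_n)$ in $\overline{W}$, choose $w_{n,k}\in W$ with $\|g_n-w_{n,k}\|<1/k$. Using weak precompactness of $W$ repeatedly, extract a single subsequence of indices $n$ along which $(w_{n,k})_n$ is weakly Cauchy for every $k$. For each functional $\mu$ of norm at most $1$, the numbers $\mu(g_n)$ then lie within $1/k$ of a Cauchy sequence for every $k$, so they form a Cauchy sequence themselves.

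\emph{Conclusion.} Let $(\rho_n)$ be any sequence of Alice's states. By Steps 1--3, the functions $g_{\rho_n}$ lie in $\overline{W}$ and so have a weakly Cauchy subsequence. Testing against each $\nu\in P(L)$ shows that $\int g_{\rho_n}\,d\nu$ converges along that subsequence, that is, the corresponding responses $\mathfrak{r}_{\rho_n}$ converge pointwise on $Y_C$. By the opening remark, $\cR_C$ has no independent sequence, so $C$ is NIP for Alice's responses.
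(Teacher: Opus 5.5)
Your proposal is correct and follows the paper's proof: write $\mathfrak{r}_\rho(\nu)=\int g_\rho\,d\nu$ with $g_\rho=\sum_ip_if_i$, deduce weak precompactness of $\{f_i\}$ from pointwise convergence by dominated convergence, apply Stegall's theorem to the absolutely convex hull, pass to its norm closure, and conclude that every sequence of responses has a subsequence converging pointwise on $P(L)=Y_C$, which an independent sequence cannot have. The only difference is in your Step~3, where the paper needs no diagonalization: it chooses a single $x_n$ in the hull with $\|x_n-g_n\|<1/n$ and extracts one weakly Cauchy subsequence of $(x_n)$.
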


\begin{proof}
  For a density operator $\rho$, let $p_i=\langle e_i,\rho e_i\rangle$, a probability distribution on $I$, and $g_\rho=\sum_ip_if_i$, a norm-convergent sum in $C(L)$;
  then $\mathfrak{r}_\rho(\nu)=\int g_\rho\,d\nu$.
  By dominated convergence, a sequence of functions $f_i$ that converges pointwise on $L$ is weakly Cauchy in $C(L)$;
  hence the set $\{f_i\}$ is weakly precompact.
  By Stegall’s theorem, so is its absolutely convex hull, and so is the norm closure of that hull:
  given $y_n$ in the closure, choose $x_n$ in the hull with $\|x_n-y_n\|<1/n$ and a weakly Cauchy subsequence of $(x_n)$;
  the corresponding subsequence of $(y_n)$ is weakly Cauchy.
  The partial sums of $g_\rho$ lie in the hull, and every function $g_\rho$ therefore lies in this closure.
  Consequently, every sequence $(g_{\rho_n})$ has a weakly Cauchy subsequence, and, since every $\nu\in P(L)$ is a continuous linear functional on $C(L)$, the corresponding subsequence of $(\mathfrak{r}_{\rho_n})$ converges pointwise on $P(L)=Y_C$.
  As in the proof of \cref{lem:transfer}, an independent sequence has no such subsequence.
\end{proof}

\begin{corollary}\label{cor:registers}
  \leavevmode
  \begin{enumerate}[label=\textup{(\alph*)}]
  \item
    The scanning observable is unstable and NIP$^{\mathrm{met}}$, but not uniformly NIP.
  \item
    The template observable is NIP but not NIP$^{\mathrm{fc}}$.
  \item
    The scanning observable is not NIP for Bob’s responses.
    Hence NIP for Alice’s responses does not imply NIP for Bob’s responses.
  \end{enumerate}
\end{corollary}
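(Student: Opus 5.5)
Each part reduces to a concrete computation on the label-controlled observables, with \cref{lem:label} identifying the probe space with $P(L)$ and a response with $\nu\mapsto\int g_\rho\,d\nu$, where $g_\rho=\sum_t p_t f_t$.

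For (a), with $L_a=\NN\cup\{\infty\}$ and $f_t=a_t$, the plan has four steps.

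\emph{NIP.} Every sequence $(a_{t_n})$ has a subsequence converging pointwise on $L_a$. Choose a subsequence whose coordinates stabilize at each site, by diagonalization, since the values lie in $\{0,1\}$; at $\infty$ all values are $0$. So \cref{prop:label-nip} applies.

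\emph{NIP$^{\mathrm{met}}$.} By \cref{lem:separable} the closure is separable. Every idealized response is $\nu\mapsto\int g\,d\nu$ for a pointwise limit $g$ of a sequence $g_{\rho_n}$. This uses Fréchet--Urysohn (available by \cref{thm:levels}(b)) and dominated convergence. Hence an idealized response is determined by its values at the countably many point masses $\delta_i$, $i\in L_a$. \Cref{prop:countable-probes} then gives metrizability.

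\emph{Instability.} By \cref{thm:levels}(a) it suffices to exhibit a discontinuous idealized response. The chains $|1^n\rangle$ have $g=\one_{[0,n)}$. These converge pointwise to $\one_\NN$, which is discontinuous at $\infty$, so the response $\nu\mapsto\nu(\NN)$ is not continuous on $P(L_a)$. A ladder is also available directly: $\rho_k=|1^k\rangle$ against the probes $\delta_l$.

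\emph{Failure of uniform NIP.} For each $n$, the chains $|t\rangle$ of length $n$ restricted to sites $0,\dots,n-1$ should be $1$-shattered. That is not the relevant direction, however: shattering needs $n$ responses and $2^n$ probes. So take instead the responses of the $n$ chains $t_j$, $j<2^n$? The cleaner choice is the $n$ responses $\rho_i=|u_i\rangle$ with $u_i\in\{0,1\}^{2^n}$ and $u_i(S)=[i\in S]$, where subsets $S\subseteq\{1,\dots,n\}$ index sites. The probe $\delta_S$ then realizes the pattern $S$ with $a=0$, $b=1$. Thus sets of every finite size are $1$-shattered.

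For (b), take the template observable on $L_b$. NIP follows from \cref{prop:label-nip}, because every sequence $(b_t)$ has a pointwise convergent subsequence. To see this, pass to a subsequence that is either constant or has distinct indices converging in $L_b$. In the latter case the subsequence converges pointwise to $\one_{\{\sigma\}}$ or to $0$, as in the $D_5$ analysis. Failure of first countability repeats the proof of \cref{thm:causal}. The chains $|\sigma|_n\rangle$ give responses converging to $\nu\mapsto\nu(\{\sigma\})$ for $\sigma\in\{0,1\}^\NN$, and the empty response comes from the zero limit. A countable family of probes has countably many atoms, so it misses some $\sigma$, and the zero response admits no countable base.

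For (c), exchanging the factors, Bob's label state $e_t$ becomes a function on Alice's probe space, namely $\nu\mapsto\int a_t\,d\nu$. Restricted to point masses at sites, its value is $\delta_i\mapsto t(i)$. Using the shattering from (a), which realized every pattern on the labels, the sequence of site probes $\delta_i$ is independent. For any disjoint finite $E,F\subseteq\NN$, choose $t$ of sufficient length with $t(i)=1$ on $F$ and $t(i)=0$ on $E$. The pure label state $e_t$ then realizes the pattern with $a=0$, $b=1$. Actual states of Bob's system (now the probes) suffice, and a state concentrated at site $i$ is actual because $\lambda$ has atoms at sites, so the roles are literal. The main obstacle is bookkeeping the exchange of factors correctly. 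In the swapped observable, the responses are indexed by Bob's states on $L^2(L_a,\lambda)$. The response of the atom state at site $i$ is $t\mapsto a_t(i)$, and these are exactly the coordinate functions, which form an independent sequence.
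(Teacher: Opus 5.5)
Your proof is correct and follows the paper's proof closely. In (a) you use the diagonal argument with \cref{prop:label-nip}, the limit $\nu\mapsto\nu(\NN)$ of the responses of $|1^k\rangle$, and a binary indexing of $2^n$ sites by $n$ chains. In (b) you use compactness of $L_b$ with \cref{prop:label-nip}, and in (c) the coordinate functions $t\mapsto a_t(i)$ of the site states.

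Two steps differ from the paper, in that you argue directly where it cites a result.
\begin{itemize}
\item For metrizability in (a), you reproduce the Fr\'echet--Urysohn plus dominated-convergence argument of \cref{lem:transfer} and conclude with \cref{prop:countable-probes}; the paper simply invokes \cref{lem:transfer}.
\item For failure of first countability in (b), you repeat the atom-counting argument of \cref{thm:causal} with the idealized responses $\nu\mapsto\nu(\{\sigma\})$. The paper instead cites the $D_5$ realization of \cref{thm:hepta-natural} and the non-first-countability of $\hat{A}(2^\NN)$.
\end{itemize}
Your route is more self-contained; the paper's is shorter.

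There is one small slip in (b). A sequence of distinct labels $t_k\to\sigma$ in $L_b$ does make $b_{t_k}\to0$ at every point other than $\sigma$. At $\sigma$ itself, however, $b_{t_k}(\sigma)=[t_k\sqsubseteq\sigma]$ can oscillate: alternate $\sigma|_k$ with $\sigma|_k{}^\frown(1-\sigma(k))$. So you need one further extraction making this value constant, as the paper does, before claiming pointwise convergence. The same example, using only the labels $\sigma|_k{}^\frown(1-\sigma(k))$, gives an explicit sequence whose responses converge to the zero function. State this explicitly, since your neighborhood-base argument needs $0$ to lie in the closure, and ``the zero limit'' leaves it implicit.

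In (c), state the roles once and cleanly; the first half of your paragraph mixes the original and exchanged directions. In the exchanged observable:
\begin{itemize}
\item the site states $\sigma_i$ are the preparations;
\item the records of the register states $e_t$ are the probes, by \cref{lem:probe-space}(c);
\item the expectation value is $\varphi_C(e_t,\sigma_i)=a_t(i)$.
\end{itemize}
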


\begin{proof}
  (a) Every sequence of functions $a_t$ on the countable set $L_a$ has a pointwise convergent subsequence, by a diagonal argument, so the observable is NIP by \cref{prop:label-nip}.
  For a trace-norm dense sequence $(\rho_n)_{n\in\NN}$, the closure of the responses is that of $\{\mathfrak{r}_{\rho_n}\}$ (see \cref{lem:separable}).
  Every function on the countable compact metrizable space $L_a$ is Baire-1, so \cref{lem:transfer} applies and identifies this closure with the closure of $\{g_{\rho_n}\}$ in $\RR^{L_a}$;
  since $L_a$ is countable, this space is metrizable.
  For each positive integer $m$, the chains $t_1,\dots,t_m$ of length $2^m$ with $t_j(i)$ the $j$-th binary digit of $i$ are shattered, with values $0$ and $1$, by the site probes $i<2^m$;
  hence the observable is not uniformly NIP.
  Dually, on the first $m$ site probes, the chains of length $m$ realize every binary pattern;
  the fat-shattering dimension of the response family is therefore infinite at every positive gap at most $1$, with the gap convention of \cref{rem:uniform}.
  Finally, the observable is unstable:
  the functions $a_{1^k}$ of the chains $1^k$ of $k$
  down spins, for $k\in\NN$, converge pointwise to the indicator of $\NN$, so the idealized response $\nu\mapsto\nu(\NN)$, which is not continuous on $P(L_a)$, is a limit of responses (see \cref{thm:levels}(a)).

  (b) Let $(t_k)_{k\in\NN}$ be a sequence of labels in $\{0,1\}^{<\NN}$.
  By compactness of $L_b$, there is a strictly increasing map $j\colon\NN\to\NN$ such that $(t_{j(k)})_{k\in\NN}$ converges in $L_b$.
  If its limit is finite, this subsequence is eventually constant, so the corresponding functions converge pointwise.
  If its limit is an infinite sequence $\sigma\in\{0,1\}^\NN$, then the values satisfy $b_{t_{j(k)}}(u)\to0$ for each finite sequence $u$, and $b_{t_{j(k)}}(\zeta)\to0$ for each infinite sequence $\zeta\ne\sigma$:
  eventually $t_{j(k)}$ agrees with $\sigma$ beyond the first place where $\sigma$ and $\zeta$ differ.
  Passing to a further subsequence on which $b_{t_{j(k)}}(\sigma)$ is constant gives pointwise convergence on all of $L_b$.
  Hence the observable is NIP by \cref{prop:label-nip}.
  By \cref{thm:hepta-natural}, the closure of its responses contains a closed copy of $\hat{A}(2^\NN)$, which is not first countable.

  (c) For $i\in\NN$, consider the state of Bob’s system in which his probe sits at the site $i$;
  these states form an independent sequence:
  for disjoint finite sets $E,F\subseteq\NN$, choose a positive integer $q$ larger than every member of $E\cup F$ and let Alice’s chain $t$ of length $q$ have $t(i)=1$ exactly for $i\in F$.
  Then $a_t(i)=1$ for $i\in F$ and $a_t(i)=0$ for $i\in E$.
  Alice’s actual states restrict to points of the probe space of the exchanged observable (see \cref{lem:probe-space}(c)), so the family of Bob’s responses has the independence property.
  This is the relation $i\in u$ of \cref{rem:symmetry}, realized by a physical observable.
\end{proof}

The scanning observable shows the difference between NIP and uniform NIP (see \cref{rem:uniform}):
its idealized responses form a metrizable compactum, yet the response family has infinite fat-shattering dimension at a fixed positive gap.
In distribution-free learning from independent samples, the aim is to predict an unknown response with small mean absolute error under the probe distribution, with a prescribed probability of success.
No sample bound depending only on the required accuracy and probability of success can achieve this for the scanning family uniformly over target responses and probe distributions~\cite{Alon-BenDavid-CesaBianchi-Haussler:1997}.
The obstruction is already visible on the first $m$ sites, where chains of length $m$ realize every binary pattern.

%#
%## SECTION: Spontaneous symmetry breaking is a split pair
%#
\section{Spontaneous symmetry breaking is a split pair}\label{sec:ssb}

\noindent In this section and the next we show that the levels of \cref{sec:classification} have a physical meaning beyond bipartite observables.
Several notions of physics rest on an order of limits of a family of responses:
a preparation, such as a finite sample, has a response, such as its magnetization as a function of an external field, and the notion depends on whether the sample becomes infinite before or after the field is switched off.

The common setting is a pointwise bounded family $\cR$ of continuous functions on a compact metrizable probe space $Y$;
its idealized responses are the members of the pointwise closure $\overline\cR$.
We call the family \emph{stable} when, for every sequence $(r_k)_{k\in\NN}$ in $\cR$ and every sequence $(y_l)_{l\in\NN}$ in $Y$, the iterated limits $\lim_l\lim_kr_k(y_l)$ and $\lim_k\lim_lr_k(y_l)$ agree whenever both exist.
For a uniformly bounded family, as for the magnetizations below, Grothendieck’s double limit exchange condition~\cite{Grothendieck:1952} identifies stability with continuity of every idealized response;
countably many probes then separate the closure, which is metrizable by \cref{prop:countable-probes}.
The NIP characterization of \cref{thm:dichotomy} applies to pointwise bounded families, and \cref{def:niplevels} assigns their finer levels when the closure is separable.
For a fixed observable, convexity follows from allowing all mixed preparations;
for the parameterized families considered here, we compute the closures directly from the stated properties of the physical models.

The standard example of an order of limits is spontaneous symmetry breaking:
for a ferromagnet below its critical temperature, letting the external field vanish after the volume becomes infinite yields a nonzero magnetization, whereas the other order yields zero~\cite{Friedli-Velenik:2017}.
This is an instability in our sense, absent above the critical temperature (see \cref{thm:ssb} below).
With an offset field that moves the transition point, the field at which the magnetization changes sign, the family of responses below the critical temperature is moreover NIP$^{\mathrm{fc}}$
but not NIP$^{\mathrm{hs}}$:
the idealized responses of the two pure phases, the equilibrium states magnetized in opposite directions,
form a split pair, two functions that differ only at the transition point, and the idealized response of the symmetric state, the equal mixture of the two phases, is their midpoint, as in the proof of \cref{thm:convex}.

%#
%### SUBSECTION: The Ising model
%#
\subsection{The Ising model}
We recall the facts we need;
all of them can be found in the book of Friedli and Velenik~\cite{Friedli-Velenik:2017}.
Fix an integer dimension $d\ge2$, a coupling energy $J>0$ and an inverse temperature $\beta=1/(k_BT)>0$, where $k_B$ is the Boltzmann constant and $T$ the temperature.
The external field $h$ is a real number in units of energy, so $\beta J$ and $\beta h$ are dimensionless.
For each positive integer half-width $N$, let $\Lambda_N=\{-N,\dots,N\}^d$ with periodic boundary conditions, and define $Z_N\colon\RR\to(0,\infty)$ and $p_N\colon\RR\to\RR$ by
\[
  Z_N(h)=\sum_{s\in\{-1,1\}^{\Lambda_N}}\exp\Bigl(\beta J\sum_{i\sim j}s_is_j+\beta h\sum_is_i\Bigr),
  \qquad p_N(h)=\frac1{|\Lambda_N|}\log Z_N(h),
\]
where $i\sim j$ ranges over the unordered nearest-neighbor pairs of the torus, counted once, and the sum over $i$ ranges over $\Lambda_N$;
the corresponding energy of a configuration $s$ is $E_{N,h}(s)=-J\sum_{i\sim j}s_is_j-h\sum_i s_i$.
The Gibbs state is the probability measure on configurations that gives $s$ a weight proportional to $e^{-\beta E_{N,h}(s)}$;
its normalizing constant $Z_N(h)$ is the \emph{partition function}, and the function $p_N$ is the \emph{pressure} of the finite system, that is, $-\beta$ times the free energy per site.
The magnetization per site in the Gibbs state is the function $m_N\colon\RR\to[-1,1]$ given by $m_N(h)=p_N'(h)/\beta$.
\begin{enumerate}[label=\textup{(F\arabic*)},leftmargin=*]
\item
  The function $p_N$ is smooth, convex and even.
  Hence the function $m_N$ is continuous, nondecreasing and odd, with $m_N(0)=0$.
  The bound $|m_N|\le1$ holds because $m_N(h)$ is the mean, in the Gibbs state, of the average spin $|\Lambda_N|^{-1}\sum_is_i$, which lies in $[-1,1]$.
\item
  The limit $p(h)=\lim_Np_N(h)$, the pressure, exists for every $h$;
  it is convex and even, and it is differentiable at every $h\ne0$, by the Lee--Yang theorem~\cite{Lee-Yang:1952b}.
  Since $p_N$ and $p$ are convex, $m_N(h)\to p'(h)/\beta$ at every $h\ne0$, by Griffiths’ lemma on derivatives of convex functions.
  Indeed, convexity bounds $p_N'(h)$ between the secant slopes over $[h-\eta,h]$ and $[h,h+\eta]$ for $\eta>0$;
  first let $N\to\infty$, then $\eta\to0$ at a differentiability point of $p$.
\item
  The spontaneous magnetization $m^*(\beta)=p'(0^+)/\beta$ is the right derivative of $p$ at $0$ divided by $\beta$, and $p'(0^-)/\beta=-m^*(\beta)$.
  With $J$ fixed, the critical inverse temperature $\beta_c=\inf\{\beta:m^*(\beta)>0\}$ is positive, and it is finite by Peierls’s argument~\cite{Peierls:1936,Griffiths:1964};
  since $m^*(\beta)$ is the magnetization of the plus state
  (the infinite-volume limit of the Gibbs states with all boundary spins fixed up), it is nondecreasing in $\beta$, by the Griffiths--Kelly--Sherman inequalities, and $m^*(\beta)>0$ for $\beta>\beta_c$ (see~\cite[Chapter~3]{Friedli-Velenik:2017}).
\end{enumerate}
As $p'/\beta$ is nondecreasing and odd, (F3) implies that $p'(t)/\beta\ge m^*$ for $t>0$ and $p'(t)/\beta\le-m^*$ for $t<0$.

%#
%### SUBSECTION: Preparations and probes
%#
\subsection{Preparations and probes}
A preparation consists of a finite system $\Lambda_N$ and an offset field $a\in[-1/\beta,1/\beta]$, that is, a uniform field built into the sample;
a probe is an applied field $h\in Y=[-2/\beta,2/\beta]$, a value of an external parameter rather than a preparation of a second system.
The system then feels the field $h-a$, and the response of the preparation $(N,a)$ is the function $r_{N,a}\colon Y\to[-1,1]$ given by
\[
  r_{N,a}(h)=m_N(h-a)\qquad(h\in Y).
\]
The offset is a device that moves the transition point;
without it, the family $\{m_N|_Y:N\ge1\}$ has only one possible transition point, at $h=0$.
Let $\cR=\{r_{N,a}:N\ge1,\ a\in[-1/\beta,1/\beta]\}$.
Each $r_{N,a}$ is continuous and nondecreasing, and $r_{N,a'}\to r_{N,a}$ uniformly as $a'\to a$, because $m_N$ is uniformly continuous on compact sets.
The responses with $\beta a$ rational are therefore dense in $\cR$, and the closure $\overline\cR$ is separable.
Every function in $\overline\cR$ is nondecreasing, and so of the first Baire class;
thus $\overline\cR$ is a Rosenthal compactum, and, by \cref{thm:dichotomy}, the family $\cR$ is NIP.

We write $m^*=m^*(\beta)$, and for $a\in[-1/\beta,1/\beta]$ and $c\in[-m^*,m^*]$ we let $g_{a,c}\colon Y\to\RR$ be given by $g_{a,c}(h)=p'(h-a)/\beta$ for $h\ne a$ and $g_{a,c}(a)=c$.
These functions describe the idealized responses.

\begin{lemma}\label{lem:ssb-closure}
  The following hold:
  \begin{enumerate}[label=\textup{(\roman*)}]
  \item
    every $g_{a,c}$ with $-1/\beta<a<1/\beta$ belongs to $\overline\cR$;
  \item
    every function in $\overline\cR\setminus\cR$ equals $g_{a,c}$ for some $a\in[-1/\beta,1/\beta]$ and $c\in[-m^*,m^*]$.
  \end{enumerate}
\end{lemma}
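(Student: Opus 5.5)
For (i), I approximate $g_{a,c}$ by responses $r_{N,a_N}$ with offsets $a_N\to a$ and $N\to\infty$, choosing the offsets so that the value at the transition point tends to $c$. The value at a probe $h\ne a$ then takes care of itself. Indeed, $h-a_N\to h-a\ne0$, and $m_N\to p'/\beta$ locally uniformly on compact subsets of $\RR\setminus\{0\}$. Local uniformity follows from (F2): the $m_N$ are nondecreasing and converge pointwise on $\RR\setminus\{0\}$ to the continuous function $p'/\beta$, which is continuous there because a convex function differentiable on an open interval is $C^1$ there. A Dini/Pólya-type argument for monotone functions then upgrades pointwise to locally uniform convergence. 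Hence $r_{N,a_N}(h)\to g_{a,c}(h)$.

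At the probe $h=a$, the value is $m_N(a-a_N)$, and I must show that for each large $N$ this quantity can be made equal to $c$ by choosing $a_N$ close to $a$. Write $a-a_N=\tau_N$. I need $\tau_N\to0$ with $m_N(\tau_N)=c$. Because $m_N$ is continuous, odd and nondecreasing, it suffices that $m_N(\eta)>c$ and $m_N(-\eta)<-c$ eventually, for each fixed small $\eta>0$. This follows from (F2) and (F3), since $m_N(\eta)\to p'(\eta)/\beta\ge m^*\ge|c|$. The boundary case $|c|=m^*$ needs care, as the inequality could be an equality. There I either use strict monotonicity of $p'$ for $\eta>0$ (the pressure is strictly convex off $0$ by Lee--Yang analyticity), or I simply take a diagonal limit, since $\overline\cR$ is closed and $g_{a,c'}\to g_{a,c}$ pointwise as $c'\to c$. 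The intermediate value theorem then gives $\tau_N\in[-\eta,\eta]$, and a diagonal choice yields $\tau_N\to0$. Because $|a|<1/\beta$, eventually $a_N\in[-1/\beta,1/\beta]$. For $m^*=0$ I take $c=0$ and $\tau_N=0$.

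For (ii), I take $f\in\overline\cR\setminus\cR$. The closure of a countable dense subfamily is a Rosenthal compactum, hence Fréchet--Urysohn, so $f$ is the pointwise limit of a sequence $r_{N_k,a_k}$. Passing to a subsequence, $a_k\to a\in[-1/\beta,1/\beta]$, and either $N_k$ is bounded or $N_k\to\infty$. If $N_k$ is bounded, a subsequence has constant $N$, and uniform continuity in the offset gives $f=r_{N,a}\in\cR$, a contradiction. So $N_k\to\infty$, and the locally uniform convergence above gives $f(h)=p'(h-a)/\beta$ for $h\ne a$. It remains to show $c=f(a)=\lim m_{N_k}(a-a_k)\in[-m^*,m^*]$. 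By monotonicity, for every $\eta>0$ eventually $m_{N_k}(-\eta)\le m_{N_k}(a-a_k)\le m_{N_k}(\eta)$. The outer terms tend to $\mp p'(\eta)/\beta$, and letting $\eta\downarrow0$ gives $|c|\le p'(0^+)/\beta=m^*$ by right-continuity of the right derivative.

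The main obstacle is the locally uniform convergence of $m_N$ near, but away from, the singular point, together with controlling the value exactly at $a$ in (i). The monotone-function Dini argument handles the first. For the second, I use the intermediate value theorem with the closedness of $\overline\cR$ for the boundary values of $c$.
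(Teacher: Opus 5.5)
Your proof is correct and follows essentially the paper's argument. In (i) you choose offsets $a_N=a-\varepsilon_N$ with $m_N(\varepsilon_N)=c$ by the intermediate value theorem, and you use closedness of $\overline\cR$ for the endpoint values $c=\pm m^*$; use that option rather than strict convexity of $p$, which is not among (F1)--(F3). In (ii) you use the Fr\'echet--Urysohn property followed by the monotone bound $|m_{N_k}(a-a_k)|\le m_{N_k}(\eta)$, exactly as the paper does.

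The remaining differences are cosmetic. For $h\ne a$ the paper uses the direct sandwich $m_N(t-\eta')\le r_{N,a_N}(h)\le m_N(t+\eta')$ instead of invoking P\'olya's theorem. It also reduces to $0\le c<m^*$ by symmetry, which repairs your condition ``$m_N(-\eta)<-c$''; what you actually need is $m_N(\eta)>|c|$.
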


\begin{proof}
  (i) Let $-1/\beta<a<1/\beta$.
  If $m^*>0$, then, since $\overline\cR$ is closed and $g_{a,c}\to g_{a,\pm m^*}$ pointwise as $c\to\pm m^*$, it suffices to take $|c|<m^*$, and by symmetry $0\le c<m^*$;
  if $m^*=0$, then $c=0$.
  If $c=0$, then $r_{N,a}\to g_{a,0}$, by (F1) at $h=a$ and by (F2) elsewhere.
  If $0<c<m^*$, then, for every $\eta>0$, we have $m_N(\eta)\to p'(\eta)/\beta\ge m^*>c$, by (F2) and (F3).
  Since $m_N(0)=0$, for all large $N$ the intermediate value theorem yields $\varepsilon_N\in(0,1/\beta)$ with $m_N(\varepsilon_N)=c$;
  for each $\eta\in(0,1/\beta)$, eventually $m_N(\eta)>c$, and then $\varepsilon_N<\eta$, because $m_N$ is nondecreasing.
  Hence $\varepsilon_N\to0$.
  Let $a_N=a-\varepsilon_N$.
  Then $r_{N,a_N}(a)=m_N(\varepsilon_N)=c$.
  For $h\ne a$, let $t=h-a$ and fix $\eta'$ with $0<\eta'<|t|$;
  for large $N$, we have $t-\eta'\le h-a_N\le t+\eta'$, and hence, by monotonicity,
  $m_N(t-\eta')\le r_{N,a_N}(h)\le m_N(t+\eta')$.
  By (F2), the outer terms converge to $p'(t-\eta')/\beta$ and $p'(t+\eta')/\beta$, which tend to $p'(t)/\beta$ as $\eta'\to0$, because $p'$ is continuous at $t\ne0$.
  Finally, as $a>-1/\beta$, the offsets $a_N$ lie in $[-1/\beta,1/\beta]$ for large $N$, and $r_{N,a_N}\to g_{a,c}$.

  (ii) Let $g\in\overline\cR\setminus\cR$.
  Since $\overline\cR$ is a Rosenthal compactum, it is Fréchet--Urysohn~\cite[Theorem~3F]{Bourgain-Fremlin-Talagrand:1978}, so the function $g$ is the limit of a sequence $r_{N_k,a_k}$.
  Passing to a subsequence, we may assume that $a_k\to a$ and that either $N_k$ is constant or $N_k\to\infty$.
  In the first case, uniform continuity forces $g=r_{N,a}\in\cR$, which is excluded.
  In the second case, the argument of the previous paragraph shows that $g(h)=p'(h-a)/\beta$ for $h\ne a$.
  Finally, given $\eta>0$, we have $|a-a_k|<\eta$ for all large $k$, and then $|r_{N_k,a_k}(a)|=|m_{N_k}(a-a_k)|\le m_{N_k}(\eta)$, because $m_{N_k}$ is odd and nondecreasing;
  the right side tends to $p'(\eta)/\beta$, and $p'(\eta)/\beta\to m^*$ as $\eta\to0$.
  Hence $|g(a)|\le m^*$, and $g=g_{a,g(a)}$.
\end{proof}

%#
%### SUBSECTION: The classification detects the phase transition
%#
\subsection{The classification detects the phase transition}
By \cref{lem:ssb-closure}, the classification of the family $\cR$ of responses with offsets reduces to the behavior of $p'$ at $0$.

\begin{keytheorem}\label{thm:ssb}
  Let $d\ge2$ and $\beta>0$ be given, let $m^*(\beta)$ be the spontaneous magnetization and $\beta_c$ the critical inverse temperature of~(F3), and let $\cR$ be the family of responses $r_{N,a}$ with offsets.
  \begin{enumerate}[label=\textup{(\alph*)}]
  \item
    If $m^*(\beta)=0$, in particular if $\beta<\beta_c$, then every idealized response is continuous:
    the family $\cR$ is stable, and $\overline\cR$ is metrizable.
  \item
    If $m^*(\beta)>0$, in particular if $\beta>\beta_c$, then the family $\cR$ is unstable, and $\overline\cR$ is first countable but not hereditarily separable:
    the family is NIP$^{\mathrm{fc}}$ but not NIP$^{\mathrm{hs}}$.
  \end{enumerate}
  The family $\{m_N|_Y:N\ge1\}$ without offsets is stable exactly when $m^*(\beta)=0$, and it is NIP$^{\mathrm{met}}$ for every $\beta$.
\end{keytheorem}

\begin{proof}
  (a) If $m^*=0$, then the pressure $p$ is differentiable everywhere, and its derivative $p'$, being the derivative of a convex function, is continuous.
  By \cref{lem:ssb-closure}(ii), every idealized response outside $\cR$ has the form $g_{a,0}(h)=p'(h-a)/\beta$ and is therefore continuous;
  so are the responses in $\cR$.
  By Grothendieck’s theorem~\cite{Grothendieck:1952}, a uniformly bounded family of continuous functions on a compact space whose pointwise closure consists of continuous functions satisfies the double limit condition, so the family $\cR$ is stable.
  The closure is metrizable by the argument of \cref{prop:stable-metrizable}:
  evaluation at the rational points of $Y$ is a continuous injection on continuous functions.

  (b) Let $m^*>0$.
  {\emergencystretch=1em
  For the family without offsets, take $h_l=1/(\beta l)$ for $l\ge1$ and a sequence $(N_k)_{k\in\NN}$ of positive integers with $N_k\to\infty$.
  By (F2) and (F3), $\lim_l\lim_km_{N_k}(h_l)=\lim_lp'(1/(\beta l))/\beta=m^*$, whereas, by continuity of each $m_N$ and (F1), $\lim_k\lim_lm_{N_k}(h_l)=\lim_km_{N_k}(0)=0$.
  The family $\{m_N\}$, and with it the family $\cR$, is therefore unstable.\par}

  Every function in $\overline\cR$ is nondecreasing, and the argument in the proof of \cref{thm:orders}(b) shows that the space of nondecreasing functions $Y\to[-1,1]$ is first countable;
  hence so is $\overline\cR$.
  For $a\in(-1/\beta,1/\beta)$, the neighborhood $\{g:|g(a)|<m^*/2\}$ of $g_{a,0}$ contains no $g_{b,0}$ with $b\ne a$, because $|g_{b,0}(a)|=|p'(a-b)/\beta|\ge m^*$.
  The set $\{g_{a,0}:-1/\beta<a<1/\beta\}$ is thus an uncountable discrete subspace, and $\overline\cR$ is not hereditarily separable.

  The first part of the last statement follows from part (a), since $\{m_N\}=\{r_{N,0}\}$ is a subfamily of $\cR$, and from the computation at the beginning of~(b).
  For the second part, the sequence $(m_N)$ converges pointwise, by (F1) and (F2), so its closure is countable, and a countable compact Hausdorff space is metrizable.
\end{proof}

For an interior offset $a\in(-1/\beta,1/\beta)$, the proof describes the limiting responses at the transition point $h=a$.
At $h=a$, the idealized responses take every value $c$ between $-m^*$ and $m^*$.
Suppose first that $m^*>0$.
The extreme values $\pm m^*$ are the magnetizations of the two pure phases, obtained by letting the field reach the transition from one side after the volume has become infinite;
they are the values at $a$ of the idealized responses $g_{a,m^*}$ and $g_{a,-m^*}$, which form a split pair.
The value $c=(2\lambda-1)m^*$ is the magnetization of the mixture of the two phases with weights
$\lambda$ and $1-\lambda$, for $0\le\lambda\le1$, and it is reached through finite-size rounding of the transition, the smoothing of the jump of the magnetization in a finite sample~\cite{Borgs-Kotecky:1990}, by tuning the offset as the volume grows.
As $a$ varies over $(-1/\beta,1/\beta)$, the symmetric responses $g_{a,0}$ form the uncountable discrete set that destroys hereditary separability.
They are the midpoints of the split pairs, as in the proof of \cref{thm:convex}.
If $m^*=0$, in particular above the critical temperature, the two extreme values coincide, and all idealized responses are continuous.
In this sense the classification detects the phase transition:
passing from $\beta<\beta_c$ to $\beta>\beta_c$ moves the family with offsets from the stable, metrizable class to the class of NIP$^{\mathrm{fc}}$
families that are not NIP$^{\mathrm{hs}}$, the class of the compactum of idealized responses of the comparison of positions in a box, a closed subspace of the Helly space.
Without offsets, the family stays NIP$^{\mathrm{met}}$, and only stability registers the transition.

%#
%### SUBSECTION: The quantum Ising model
%#
\subsection{The quantum Ising model}\label{sec:qising}
The analogue of \cref{thm:ssb} holds for a Hamiltonian whose terms do not commute, the Ising model in a transverse field (see \cref{thm:qssb} below).
The only ingredient of the classical proof that requires an argument for this model is the differentiability of the pressure away from zero field.
It can be derived from the Griffiths--Hurst--Sherman inequality,
which holds for this model through its space-time representation, a description of the quantum model by random configurations in one more dimension, that of imaginary time~\cite[Lemma~4.6]{Bjornberg-Grimmett:2009};
we derive it instead from a Lee--Yang theorem, which yields more (see \cref{prop:q-ly,prop:qanalytic} below):
the partition function has no zeros off the imaginary axis of the complex longitudinal field (the field along the axis of the couplings), and the pressure is real analytic away from zero field.

Fix a positive integer $d$, a coupling $J>0$, a transverse field $\Gamma\ge0$ and an inverse temperature $\beta>0$;
the coupling and the transverse field are energies, and the inverse temperature is the reciprocal of the thermal energy $k_BT$, so that $\beta J$ and $\beta\Gamma$ are pure numbers.
For each positive integer half-width $N$, take the torus $\Lambda_N=\{-N,\dots,N\}^d$ with periodic boundary conditions and Hilbert space $(\CC^2)^{\otimes\Lambda_N}$.
The Pauli matrices
$\sigma^x=\left(\begin{smallmatrix}0&1\\1&0\end{smallmatrix}\right)$
and $\sigma^z=\left(\begin{smallmatrix}1&0\\0&-1\end{smallmatrix}\right)$ act at each site $i\in\Lambda_N$ as $\sigma_i^x$ and $\sigma_i^z$, respectively, with the identity on the other tensor factors.
Let $S_N=\sum_{i\in\Lambda_N}\sigma_i^z$.
Define the Hamiltonian and the entire function $Z_N\colon\CC\to\CC$ by
\[
  H_N=-J\sum_{i\sim j}\sigma^z_i\sigma^z_j-\Gamma\sum_i\sigma^x_i,\qquad
  Z_N(h)=\Tr\exp\bigl(-\beta(H_N-hS_N)\bigr),
\]
where $h\in\CC$, the bond sum runs over unordered nearest-neighbor pairs, counted once, and the site sum runs over $\Lambda_N$;
the longitudinal field $h$ is an energy, as in the classical model.
For real $h$, the partition function is positive.
Define the pressure, $-\beta$ times the free energy per site, by $p_N\colon\RR\to\RR$, $p_N(h)=|\Lambda_N|^{-1}\log Z_N(h)$.
Define the magnetization $m_N\colon\RR\to[-1,1]$ by $m_N=p_N'/\beta$.
For $\Gamma=0$ this is the classical model with the same $J$, $\beta$ and $h$.
\begin{enumerate}[label=\textup{(Q\arabic*)},leftmargin=*]
\item
  The function $p_N$ is smooth, convex and even, and $m_N(h)$ is the expectation of $S_N/|\Lambda_N|$ in the Gibbs state, the density operator $e^{-\beta(H_N-hS_N)}/Z_N(h)$;
  hence the function $m_N$ is continuous, nondecreasing and odd, with $|m_N|\le1$, because $\|S_N\|=|\Lambda_N|$, and $m_N(0)=0$.
  For self-adjoint matrices $X$ and $S$ and real $u$, write $A=X+uS$.
  The matrix derivative formula
  \[
    \frac d{du}e^{X+uS}=\int_0^1e^{(1-t)A}Se^{tA}\,dt
  \]
  and cyclicity of the trace give $\frac d{du}\Tr e^{X+uS}=\Tr(Se^{X+uS})$.
  By the Gibbs variational formula, the function $u\mapsto\log\Tr e^{X+uS}=\sup_\rho\{\Tr\rho(X+uS)-\Tr\rho\log\rho\}$, where the supremum runs over density operators $\rho$ and $0\log0=0$, is convex as a supremum of affine functions.
  Taking $X=-\beta H_N$, $S=S_N$ and $u=\beta h$ gives $p_N'(h)=\beta m_N(h)$.
  The unitary $U=\prod_i\sigma^x_i$ commutes with $H_N$ and satisfies $US_NU^*=-S_N$, so $Z_N(-h)=Z_N(h)$ for all complex $h$.
\item
  The limit $p(h)=\lim_Np_N(h)$ exists for every real $h$;
  it is convex and even.
  For free boundary conditions, in which the bonds that wrap around the torus are omitted, the existence is standard~\cite{Simon:1993}, and the periodic Hamiltonian differs in operator norm from the free one by at most
  $J$ times the number of these bonds, which is at most $d(2N+1)^{d-1}$, so the two pressures differ by at most $\beta Jd/(2N+1)$, because $|\log\Tr e^A-\log\Tr e^B|\le\|A-B\|$ for self-adjoint matrices $A$ and $B$.
\end{enumerate}

The proof of \cref{prop:q-ly} below uses the Suzuki--Trotter mapping~\cite{Suzuki:1976b} between quantum spin systems and classical Ising systems in one more dimension to show
that the partition function of the quantum model is a limit of positive multiples of partition functions of classical Ising models on a larger graph, in which each site of the torus is repeated
$M$ times, for a positive integer $M$ that tends to infinity, one copy for each slice of “imaginary time”.
All couplings, between neighboring sites of the torus and between consecutive copies of a site, are ferromagnetic, so the classical Lee--Yang theorem applies to each of these classical models:
none of their partition functions vanishes off the imaginary axis.
A limit of holomorphic functions without zeros has no zeros either, unless it vanishes identically (see \cref{fact:toolbox}(b)), and so the quantum partition function inherits the zero-free half-planes.
The analyticity of the pressure then follows, in \cref{prop:qanalytic}, from a uniform bound and two classical theorems of complex analysis.
The two proofs use only the following standard facts.

\begin{fact}\label{fact:toolbox}
  \leavevmode
  \begin{enumerate}[label=\textup{(\alph*)}]
  \item
    \emph{Lee--Yang theorem}~\cite{Lee-Yang:1952b} (see~\cite[Chapter~3]{Friedli-Velenik:2017}).
    Let $G$ be a finite graph with vertex set $V$ and edge set $E$, let $J_e\ge0$ for $e\in E$, and let $h_v$ be complex numbers with $\operatorname{Re}h_v>0$ for $v\in V$.
    Then $\sum_{s\in\{-1,1\}^V}\exp\bigl(\sum_{e=\{u,w\}\in E}J_es_us_w+\sum_{v\in V}h_vs_v\bigr)\ne0$.
  \item
    \emph{Hurwitz’s theorem}~\cite{Titchmarsh:1939}.
    If holomorphic functions without zeros on a connected open set $U\subseteq\CC$ converge uniformly on compact subsets of $U$ to $f$, then either $f$ has no zeros in $U$ or $f$ vanishes identically.
  \item
    \emph{Borel--Carathéodory inequality}~\cite{Titchmarsh:1939}.
    If $f$ is holomorphic on a neighborhood of the closed disc $|z-z_0|\le R$ and $\operatorname{Re}f\le A$ there, then $|f(z)-f(z_0)|\le\frac{2r}{R-r}\bigl(A-\operatorname{Re}f(z_0)\bigr)$ for $|z-z_0|\le r<R$.
  \item
    \emph{Vitali’s theorem}~\cite{Titchmarsh:1939}.
    A sequence of holomorphic functions on a connected open set $U$ that is uniformly bounded on each compact subset of $U$ and converges at every point of a set with an accumulation point in $U$ converges uniformly on compact subsets of $U$;
    by Weierstrass’s theorem, so do the derivatives.
  \end{enumerate}
\end{fact}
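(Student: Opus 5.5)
Since the four items of \cref{fact:toolbox} are classical and independent of one another, I would treat each by its standard proof, citing \cite{Titchmarsh:1939} and \cite{Friedli-Velenik:2017} for the details.

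For (a) I would use Asano's contraction method. Write the sum as a multiaffine polynomial in the variables $z_v=e^{-2h_v}$, after factoring out $\prod_v e^{h_v}$. Then $\operatorname{Re}h_v>0$ becomes $|z_v|<1$. For a single edge with $J_e\ge0$, the two-variable polynomial $1+e^{-2J_e}(z_u+z_w)+z_uz_w$ (up to a positive factor) has no zeros in the open polydisc $\{|z_u|<1,|z_w|<1\}$. Solving for $z_w$ gives a M\"obius map with $|z_w|\ge1$ whenever $|z_u|<1$, because the coefficient $e^{-2J_e}$ lies in $[0,1]$. Next I would take the product over edges, using a separate variable for each endpoint occurrence. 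Then I would identify repeated copies of a vertex variable by Asano contraction. The contraction sends $a+bz_1+cz_2+dz_1z_2$ to $a+dz$, and it preserves zero-freeness in the unit disc. This step needs a short Grace--Walsh--Szeg\H{o} style argument, and I expect it to be the only point where real care is required. The monomials that survive are exactly the spin configurations, so the polynomial that remains is the partition function.

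For (b) I would use the argument principle. Suppose $f\not\equiv0$ and $f(z_0)=0$. Choose a small circle around $z_0$ on which $f$ does not vanish. By uniform convergence, $f_n'/f_n\to f'/f$ uniformly on that circle. The zero counts $\frac1{2\pi i}\oint f_n'/f_n$ are $0$, while the limit count is at least $1$, which is a contradiction.

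For (c) I would translate so that $z_0=0$ and put $g=f-f(0)$ and $A'=A-\operatorname{Re}f(0)\ge0$. The map $w\mapsto w/(2A'-w)$ sends the half-plane $\operatorname{Re}w\le A'$ into the closed unit disc and fixes $0$. The Schwarz lemma applied to $\phi=g/(2A'-g)$ on $|z|\le R$ gives $|\phi(z)|\le r/R$. Solving for $g$ then yields $|g|\le 2A'r/(R-r)$. The case $A'=0$ is trivial by the maximum principle for $\operatorname{Re}f$.

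For (d) I would argue with normal families. Local uniform boundedness together with the Cauchy estimates gives local equicontinuity, so Montel's theorem (Arzel\`a--Ascoli) shows that every subsequence has a further subsequence converging locally uniformly. Any two such limits agree on a set with an accumulation point in the connected set $U$, so they coincide by the identity theorem. Hence the whole sequence converges locally uniformly. Finally, convergence of the derivatives follows from Cauchy's integral formula for $f_n'$ on small circles, which is Weierstrass's theorem.
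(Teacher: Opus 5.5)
Your four arguments are correct. The paper gives no proof of this fact and simply cites Lee--Yang, Friedli--Velenik and Titchmarsh, and your sketches are the standard proofs found there: Asano contraction, the argument principle, Schwarz's lemma applied to $g/(2A'-g)$, and Montel's theorem with the identity theorem. The step you flag in (a) is short: $a=P(0,0)\ne0$, and if $|d|>|a|$ the quadratic $a+(b+c)w+dw^2$ has a root $w_0$ with $|w_0|<1$, since the product of its roots is $a/d$, giving $P(w_0,w_0)=0$. Also remember the factor $1+z_v$ contributed by any isolated vertex.
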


\begin{proposition}[Lee--Yang theorem for the quantum Ising model]\label{prop:q-ly}
  For every $N$, the function $Z_N$ is entire, $Z_N(h)\ne0$ whenever $\operatorname{Re}h\ne0$, and $|Z_N(h)|\le Z_N(\operatorname{Re}h)$.
\end{proposition}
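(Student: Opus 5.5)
The plan follows the outline given before \cref{fact:toolbox}.

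\emph{Entire.} Since $H_N$ and $S_N$ are finite matrices, $h\mapsto\exp(-\beta(H_N-hS_N))$ is given by a power series that converges in norm for every $h$, so $Z_N$ is entire.

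\emph{Trotter approximation.} Write $Z_N(h)$ with the Lie--Trotter formula. Set $X=\beta J\sum_{i\sim j}\sigma^z_i\sigma^z_j+\beta hS_N$, which is diagonal in the $\sigma^z$ basis, and $W=\beta\Gamma\sum_i\sigma^x_i$. Then
\[
  Z_N(h)=\lim_{M\to\infty}\Tr\bigl(e^{X/M}e^{W/M}\bigr)^M .
\]
The convergence is uniform for $h$ in compact subsets of $\CC$, because the standard Trotter error bound depends only on $\|X\|$ and $\|W\|$, and these are bounded on compacta.

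Expand the trace in the $\sigma^z$ basis, inserting resolutions of the identity at $M$ imaginary-time slices. Each factor $e^{W/M}$ is a tensor product of single-site $2\times2$ matrices
\[
  e^{(\beta\Gamma/M)\sigma^x}=\begin{pmatrix}\cosh\epsilon&\sinh\epsilon\\ \sinh\epsilon&\cosh\epsilon\end{pmatrix},\qquad \epsilon=\beta\Gamma/M .
\]
For $\Gamma>0$ its entries are positive, so each entry can be written $c\,e^{Ks s'}$ with $c=\sqrt{\cosh\epsilon\sinh\epsilon}>0$ and $K=\frac12\log\coth\epsilon>0$. The case $\Gamma=0$ is the classical model and needs no Trotter step.

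The Trotter product is then $c^{M|\Lambda_N|}$ times a classical Ising partition function on $\Lambda_N\times\ZZ_M$. Its couplings are $\beta J/M\ge0$ within each slice and $K\ge0$ between consecutive slices, and every one of the $M|\Lambda_N|$ sites carries the field $\beta h/M$. By \cref{fact:toolbox}(a), applied with $h_v=\beta h/M$, this function has no zeros when $\operatorname{Re}h>0$. The evenness $Z_N(-h)=Z_N(h)$ from (Q1) covers $\operatorname{Re}h<0$; alternatively, the classical partition function is itself even under the global spin flip.

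\emph{Hurwitz.} Apply \cref{fact:toolbox}(b) on the half-plane $\{\operatorname{Re}h>0\}$ to the zero-free Trotter approximants. Their limit $Z_N$ is either zero-free there or identically zero. It is not identically zero, since $Z_N(h)>0$ for real $h$. Hence $Z_N(h)\ne0$ whenever $\operatorname{Re}h\ne0$.

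\emph{Bound $|Z_N(h)|\le Z_N(\operatorname{Re}h)$.} Each classical approximant satisfies this bound termwise. Indeed, writing $h=x+iy$, every summand $\exp(\cdots+(\beta h/M)\sum s)$ has modulus equal to the corresponding summand at $x$, and all summands at real field are positive. Multiplying by the positive constant $c^{M|\Lambda_N|}$ and passing to the limit $M\to\infty$ gives the inequality for $Z_N$.

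\emph{Main obstacle.} The delicate step is justifying uniform convergence of the Trotter limit on compacta for complex $h$. The trace is taken of a product of non-normal matrices, but the standard bound
\[
  \bigl\|e^{A+B}-(e^{A/M}e^{B/M})^M\bigr\|\le\frac{\|[A,B]\|}{2M}\,e^{\|A\|+\|B\|}
\]
holds for arbitrary matrices. It therefore gives local uniformity, and with it the hypothesis of \cref{fact:toolbox}(b).
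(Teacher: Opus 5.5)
Your proposal is correct and follows essentially the same route as the paper: the Lie--Trotter approximation, the Suzuki--Trotter rewriting as $c_\varepsilon^{M|\Lambda_N|}$ times a ferromagnetic classical partition function with uniform field $\beta h/M$, the Lee--Yang theorem, Hurwitz's theorem on $\{\operatorname{Re}h>0\}$, evenness for $\operatorname{Re}h<0$, and the termwise triangle inequality for the bound. The only difference is minor: you justify locally uniform Trotter convergence by the explicit commutator error bound, which is valid for arbitrary matrices, whereas the paper takes the principal logarithm to write $e^{A/M}e^{B/M}=\exp\bigl((A+B)/M+R_M\bigr)$ with $\|R_M\|=O(M^{-2})$.
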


\begin{proof}
  Write $X=\beta J\sum_{i\sim j}\sigma^z_i\sigma^z_j+\beta hS_N$, which is diagonal in the basis $\{|s\rangle:s\in\{-1,1\}^{\Lambda_N}\}$ of eigenvectors of the operators $\sigma^z_i$, and $W=\beta\Gamma\sum_i\sigma^x_i$.
  The function $Z_N(h)=\Tr e^{X+W}$ is entire, because the exponential series converges locally uniformly in $h$.

  \emph{Step 1: the Lie--Trotter formula.}
  For matrices $A$ and $B$, we have $e^{A/M}e^{B/M}=I+(A+B)/M+O(M^{-2})$, uniformly for $A$ and $B$ in a bounded set;
  by taking the principal logarithm, we obtain $e^{A/M}e^{B/M}=\exp\bigl((A+B)/M+R_M\bigr)$ with $\|R_M\|=O(M^{-2})$, and hence $(e^{A/M}e^{B/M})^M=\exp(A+B+MR_M)\to e^{A+B}$, uniformly for $A$ and $B$ in a bounded set.
  Applied to $X$ and $W$, this yields $Z_{N,M}(h)=\Tr(e^{X/M}e^{W/M})^M\to Z_N(h)$ as $M\to\infty$, uniformly for $h$ in compact subsets of $\CC$.

  \emph{Step 2: a classical ferromagnet.}
  Suppose first that $\Gamma>0$, and let $\varepsilon=\beta\Gamma/M$.
  Since $e^{\varepsilon\sigma^x}=\cosh\varepsilon\,I+\sinh\varepsilon\,\sigma^x$, the matrix element
  $\langle s|e^{\varepsilon\sigma^x}|s'\rangle$ of a single spin is
  $\cosh\varepsilon$ if $s=s'$ and $\sinh\varepsilon$
  if $s\ne s'$, that is, $\langle s|e^{\varepsilon\sigma^x}|s'\rangle=c_\varepsilon e^{K_\varepsilon ss'}$ with $c_\varepsilon=(\sinh\varepsilon\cosh\varepsilon)^{1/2}>0$ and $K_\varepsilon=\frac12\log\coth\varepsilon>0$.
  Insertion of the basis between the factors of $(e^{X/M}e^{W/M})^M$ shows that, for $M\ge3$,
  \[
    Z_{N,M}(h)=c_\varepsilon^{M|\Lambda_N|}\sum_{s^1,\dots,s^M}\exp\sum_{k=1}^M\Bigl(\frac{\beta J}M\sum_{i\sim j}s^k_is^k_j
    +K_\varepsilon\sum_is^k_is^{k+1}_i+\frac{\beta h}M\sum_is^k_i\Bigr),
  \]
  where $s^1,\dots,s^M$ range over $\{-1,1\}^{\Lambda_N}$ and $s^{M+1}=s^1$.
  This is $c_\varepsilon^{M|\Lambda_N|}$ times the partition function of a classical Ising model on the finite graph $\Lambda_N\times\{1,\dots,M\}$, with nonnegative couplings and the same dimensionless field $\beta h/M$ at every site.
  By \cref{fact:toolbox}(a), it does not vanish when $\operatorname{Re}h>0$.
  Moreover, by the triangle inequality applied to the sum, $|Z_{N,M}(h)|\le Z_{N,M}(\operatorname{Re}h)$.
  If $\Gamma=0$, then the function $Z_N$ itself is a classical Ising partition function, and the same two statements hold directly.

  \emph{Step 3: Hurwitz’s theorem.}
  The functions $Z_{N,M}$ have no zeros in the connected open set $\{\operatorname{Re}h>0\}$ and converge to $Z_N$ uniformly on its compact subsets, and $Z_N$ is not identically zero there, since it is positive on the positive real axis.
  By \cref{fact:toolbox}(b), the function $Z_N$ has no zeros in $\{\operatorname{Re}h>0\}$, and by the evenness in~(Q1) none in $\{\operatorname{Re}h<0\}$.
  The inequality $|Z_N(h)|\le Z_N(\operatorname{Re}h)$ follows from Step~2 in the limit.
\end{proof}

\begin{proposition}\label{prop:qanalytic}
  The pressure $p$ is real analytic on $\RR\setminus\{0\}$, and $m_N(h)\to p'(h)/\beta$ for every $h\ne0$.
\end{proposition}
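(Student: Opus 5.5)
The plan is to derive the analyticity from the zero-free region of \cref{prop:q-ly} by taking logarithms and controlling them uniformly in $N$. Fix $h_0>0$; by evenness, $h_0<0$ is the same. Choose $R$ with $0<R<h_0$, so that the closed disc $D=\{|h-h_0|\le R\}$ lies in the half-plane $\operatorname{Re}h>0$. On a neighborhood of $D$ the function $Z_N$ has no zeros, and the disc is simply connected, so there is a holomorphic branch $F_N=|\Lambda_N|^{-1}\log Z_N$ on it, normalized to be real on the real axis, where it agrees with $p_N$.

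The first step is a uniform bound. By \cref{prop:q-ly}, $\operatorname{Re}F_N(h)=|\Lambda_N|^{-1}\log|Z_N(h)|\le p_N(\operatorname{Re}h)$. The functions $p_N$ satisfy $|p_N(t)-p_N(0)|\le\beta|t|$, because $|m_N|\le1$, and $p_N(0)$ is bounded uniformly in $N$, for instance by $\log2+\beta Jd+\beta\Gamma$ from operator norms. Hence $\operatorname{Re}F_N\le A$ on $D$ for a constant $A$ independent of $N$, and also $\operatorname{Re}F_N(h_0)=p_N(h_0)$ is bounded below uniformly. The Borel--Carathéodory inequality of \cref{fact:toolbox}(c) then bounds $|F_N(h)-F_N(h_0)|$ uniformly on the disc of radius $R/2$. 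Since $F_N(h_0)=p_N(h_0)$ is real and bounded, the family $(F_N)$ is uniformly bounded there.

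The second step applies Vitali's theorem, \cref{fact:toolbox}(d). The $F_N$ converge on the real segment $(h_0-R/2,h_0+R/2)$, since they equal $p_N$ there and (Q2) gives pointwise convergence to $p$. So they converge locally uniformly on the open disc to a holomorphic $F$ with $F=p$ on the real segment, and their derivatives converge as well. Thus $p$ is real analytic near $h_0$, and $m_N(h)=p_N'(h)/\beta=F_N'(h)/\beta\to F'(h)/\beta=p'(h)/\beta$ for real $h$ near $h_0$. Alternatively, once differentiability of $p$ is known, the last convergence follows from the convexity argument in (F2).

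The main obstacle is the uniform lower control needed for Borel--Carathéodory: only an upper bound on $\operatorname{Re}F_N$ at the centre is needed together with the real value $F_N(h_0)$. I expect this to be routine given the bound $|Z_N(h)|\le Z_N(\operatorname{Re}h)$, but one must check that the branch of the logarithm is chosen consistently, which the normalization on the real axis ensures.
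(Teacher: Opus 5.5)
Your proposal is correct and follows the paper's proof essentially step for step. The steps match: a logarithm of $Z_N$ normalized to agree with $p_N$ on the positive axis; the bound $\operatorname{Re}F_N\le p_N(\operatorname{Re}h)$ from \cref{prop:q-ly}, combined with the uniform estimate $|p_N(x)-\log2|\le\beta(dJ+\Gamma)+\beta|x|$; the Borel--Carath\'eodory inequality on a disc about $h_0$; and Vitali's theorem, with convergence of the derivatives.

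The differences are cosmetic. The paper takes a single logarithm on the whole half-plane and works on the union of the discs $|h-x_0|<x_0/4$, whereas you take a local branch on each disc. One small correction to your last paragraph: Borel--Carath\'eodory needs an upper bound for $\operatorname{Re}F_N$ on the large disc together with a lower bound for $F_N(h_0)$, not an upper bound at the centre. Your second paragraph does supply exactly these two bounds.
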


\begin{proof}
  By evenness it suffices to treat $h>0$.
  The half-plane $\{\operatorname{Re}h>0\}$ is simply connected and $Z_N$ does not vanish there, so the function $Z_N$ has a holomorphic logarithm on it, and we let $f_N$ be $|\Lambda_N|^{-1}$ times the logarithm that is real on $(0,\infty)$;
  thus $f_N=p_N$ on $(0,\infty)$ and $\operatorname{Re}f_N(h)=|\Lambda_N|^{-1}\log|Z_N(h)|\le p_N(\operatorname{Re}h)$, by \cref{prop:q-ly}.

  We bound $p_N$ on the real line uniformly in $N$.
  The torus has $d|\Lambda_N|$ bonds, whence $\|\beta H_N\|\le\beta(dJ+\Gamma)|\Lambda_N|$.
  Since the trace of the exponential of a self-adjoint matrix of size
  $2^{|\Lambda_N|}$ and norm at most $c$ lies between $2^{|\Lambda_N|}e^{-c}$ and $2^{|\Lambda_N|}e^c$, it follows that $|p_N(0)-\log2|\le\beta(dJ+\Gamma)=:b$;
  as $|p_N'|\le\beta$, also $|p_N(x)-\log2|\le b+\beta|x|$ for real $x$.

  Fix $x_0>0$, and let $R=x_0/2$ and $r=x_0/4$.
  On the disc $|h-x_0|\le R$ we have $\operatorname{Re}f_N\le\log2+b+\frac32\beta x_0$, and $f_N(x_0)=p_N(x_0)\ge\log2-b-\beta x_0$.
  By the Borel--Carathéodory inequality (see \cref{fact:toolbox}(c)),
  \[
    |f_N(h)-f_N(x_0)|\le\frac{2r}{R-r}\bigl(\sup_{|z-x_0|\le R}\operatorname{Re}f_N(z)-\operatorname{Re}f_N(x_0)\bigr)
    \le2\bigl(2b+\tfrac52\beta x_0\bigr)
  \]
  for $|h-x_0|\le r$.
  Since also $|f_N(x_0)|\le\log2+b+\beta x_0$, the functions $f_N$ are uniformly bounded on each disc $\{|h-x_0|<x_0/4\}$, and therefore locally uniformly bounded on the connected open set $U$ that is the union of these discs, which contains $(0,\infty)$.
  They converge on $(0,\infty)$, by~(Q2).
  By Vitali’s theorem (see \cref{fact:toolbox}(d)), they converge uniformly on compact subsets of $U$ to a holomorphic function $F$, which equals $p$ on $(0,\infty)$;
  hence the pressure $p$ is real analytic there, and $m_N=f_N'/\beta\to F'/\beta=p'/\beta$ on $(0,\infty)$.
\end{proof}

We define the spontaneous magnetization of the quantum model by
\[
  m^*(\beta,\Gamma)=\frac{p'(0^+)}{\beta};
\]
by evenness, $p'(0^-)/\beta=-m^*(\beta,\Gamma)$.
With offsets as in the classical model, let $r_{N,a}(h)=m_N(h-a)$ for $N\ge1$, $a\in[-1/\beta,1/\beta]$ and $h\in Y=[-2/\beta,2/\beta]$, and let $\cR_\Gamma$ be the family of these functions.
For $a\in[-1/\beta,1/\beta]$ and $c\in[-m^*,m^*]$, where $m^*=m^*(\beta,\Gamma)$, let $g_{a,c}$ be defined as before \cref{lem:ssb-closure}, with the pressure $p$ of the quantum model.

\begin{keytheorem}\label{thm:qssb}
  Let $d\ge1$, $J>0$, $\Gamma\ge0$ and $\beta>0$ be given, let $m^*(\beta,\Gamma)$ be the spontaneous magnetization of the quantum model, and let $\cR_\Gamma$ be its family of responses $r_{N,a}$ with offsets.
  \begin{enumerate}[label=\textup{(\alph*)}]
  \item
    If $m^*(\beta,\Gamma)=0$, then every idealized response is continuous:
    the family $\cR_\Gamma$ is stable, and its closure is metrizable.
  \item
    If $m^*(\beta,\Gamma)>0$, then the family $\cR_\Gamma$ is unstable, and its closure is first countable but not hereditarily separable:
    the family is NIP$^{\mathrm{fc}}$ but not NIP$^{\mathrm{hs}}$.
    For each $a\in(-1/\beta,1/\beta)$, the functions $g_{a,m^*}$ and
    $g_{a,-m^*}$ lie in the closure and differ only at $a$, and their midpoints $g_{a,0}$ form an uncountable discrete set.
  \end{enumerate}
  The family $\{m_N\}$ without offsets is stable exactly when $m^*(\beta,\Gamma)=0$, and it is NIP$^{\mathrm{met}}$ in all cases.
\end{keytheorem}

\begin{proof}
  The discussion preceding \cref{lem:ssb-closure}, the lemma and the proof of \cref{thm:ssb} use only the following properties:
  the function $m_N$ is continuous, nondecreasing and odd, with $|m_N|\le1$ and $m_N(0)=0$;
  the limit $p$ exists, is convex and even, and is differentiable at every $h\ne0$, with $m_N(h)\to p'(h)/\beta$;
  and $p'(0^\pm)/\beta=\pm m^*$.
  These hold by~(Q1), (Q2) and \cref{prop:qanalytic}, and the proofs apply verbatim with $m^*(\beta,\Gamma)$ in place of $m^*(\beta)$.
\end{proof}

When $m^*>0$, for each interior offset $a\in(-1/\beta,1/\beta)$ the values $g_{a,\pm m^*}(a)=\pm m^*$ are the magnetizations of the two pure phases at the transition point $h=a$, and the value $g_{a,0}(a)=0$ is that of the symmetric state.
The classification distinguishes vanishing from nonvanishing spontaneous magnetization at positive temperature.
It does not determine the critical parameters; those enter through the physical results recalled below.
Both cases of the theorem occur, by results of Björnberg and Grimmett~\cite{Bjornberg-Grimmett:2009}:
for $d=1$ and $\beta<\infty$ the spontaneous magnetization vanishes, and for
$d\ge2$ and $\Gamma>0$ there is a critical value of
$2J/\Gamma$, depending on $d$ and $\beta\Gamma$, below which $m^*=0$ and above which $m^*>0$ (see~\cite[Section~1 and Theorems~1.1--1.2]{Bjornberg-Grimmett:2009}, where the existence of the critical value is attributed to standard methods).
Their conventions match ours as follows.
They use the zero-field Hamiltonian
$-\frac12\lambda\sum\sigma^z\sigma^z-\delta\sum\sigma^x$, and, with a longitudinal term $-\gamma\sum\sigma^z$, the parameters correspond to
$\lambda=2J$, $\delta=\Gamma$ and $h=\gamma$.
They define the spontaneous magnetization as the limit, as
$\gamma\downarrow0$, of the infinite-volume magnetization at field $\gamma$ on the same periodic boxes $\Lambda_N$, through the space-time representation.
By translation invariance and \cref{prop:qanalytic}, the infinite-volume magnetization is
$p'(\gamma)/\beta$, and since the right derivative of a convex function is the right limit of its derivative, their spontaneous magnetization is $p'(0^+)/\beta=m^*(\beta,\Gamma)$.
Order at low temperature and small $\Gamma/J$ goes back to Ginibre~\cite{Ginibre:1969}.

The Lee--Yang theorem~\cite{Lee-Yang:1952b} was extended to general classical ferromagnets by Newman~\cite{Newman:1974} and by Lieb and Sokal~\cite{Lieb-Sokal:1981},
and Lee--Yang theorems for quantum spin systems go back to Asano~\cite{Asano:1970b} and to Suzuki and Fisher~\cite{Suzuki-Fisher:1971b}.
The formulation of Fröhlich and Rodriguez~\cite[Theorem~5]{Frohlich-Rodriguez:2012b}, applied to $H_N$ after relabeling the axes, with $h_1=h$, $h_2=\Gamma$ and $h_3=0$ in their condition $\operatorname{Re}h_1>|h_2|+|h_3|$, gives the zero-free region $\operatorname{Re}h>\Gamma$.
\Cref{prop:q-ly} gives the whole half-plane $\operatorname{Re}h>0$ for the transverse-field Ising model, by a Suzuki--Trotter approximation by classical ferromagnets followed by Hurwitz’s theorem, and \cref{prop:qanalytic} derives from it the analytic input needed for the response classification;
the argument uses standard quantum Lee--Yang methods, in the spirit of Simon and Griffiths~\cite{Simon-Griffiths:1973} and of Suzuki~\cite{Suzuki:1976b}.

%#
%## SECTION: Chiral symmetry breaking and metallic conduction are instabilities
%#
\section{Chiral symmetry breaking and metallic conduction are instabilities}\label{sec:spectral}

\noindent The chiral-condensate and current-response applications distinguish failure of stability from failure of metrizability.
For the response families considered below, the limiting response can be discontinuous at one probe value while the full pointwise closure remains metrizable.
Under the spectral hypotheses stated below, we compute
the discrepancy between the iterated limits from the limiting density of Dirac eigenvalues at zero for the condensate and from the mass at zero of a limiting weighted excitation measure for the current response (see \cref{thm:chiral,thm:sc}).
The density describes the near-zero spectrum of the Dirac operator governing quarks in a gluon field;
the atom describes the current spectral weight accumulating at zero excitation energy.
A common lemma places both response families in NIP$^{\mathrm{met}}$ and identifies stability with continuity at the exceptional probe value.

\begin{lemma}\label{lem:single}
  Let $Q$ be a compact metrizable space, $q_0$ a point of $Q$ that is not isolated, and $q\mapsto g_q$ a map from $Q$ into $\RR^{[0,1]}$ that is continuous for the topology of pointwise convergence.
  Suppose that $g_q$ is continuous on $[0,1]$ for $q\ne q_0$, that $G=g_{q_0}$ is continuous on $(0,1]$, and that the one-sided limit
  $G(0^+)=\lim_{s\to0^+}G(s)$ exists as a finite real number.
  Then the family $\cR=\{g_q:q\ne q_0\}$ is NIP$^{\mathrm{met}}$, and it is stable if and only if $G$ is continuous at $0$.
  For every sequence $(q_k)_{k\in\NN}$ in $Q\setminus\{q_0\}$ converging to $q_0$ and every sequence $(s_l)_{l\in\NN}$ in $(0,1]$ converging to $0$, we have
  \[
    \lim_l\lim_kg_{q_k}(s_l)=G(0^+),\qquad\lim_k\lim_lg_{q_k}(s_l)=G(0).
  \]
\end{lemma}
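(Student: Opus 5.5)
The plan is to start from the map $\Gamma\colon q\mapsto g_q$. It is continuous from the compact metrizable space $Q$ into $\RR^{[0,1]}$ with the pointwise topology, so its image $\Gamma(Q)$ is compact. That image contains $\cR$, and since $q_0$ is not isolated it also contains $G$ as a limit of points of $\cR$. Because $\Gamma(Q\setminus\{q_0\})=\cR$ is dense in $\Gamma(Q)$, we get $\overline\cR=\Gamma(Q)=\cR\cup\{G\}$. A continuous image of a compact metrizable space in a Hausdorff space is metrizable (it is compact and has a countable network, hence is second countable). Alternatively, one can argue directly through \cref{prop:countable-probes}, though a continuous image need not be injective, so the first route is cleaner. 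Every member of $\overline\cR$ is either continuous or equal to $G$. The function $G$ is Baire-1: it is continuous off $0$, so $G=\lim_n G_n$ with $G_n$ equal to $G$ on $[1/n,1]$ and affine between $(0,G(0))$ and $(1/n,G(1/n))$. Hence \cref{thm:dichotomy} gives NIP. The closure is separable because it is metrizable and compact. So $\cR$ is NIP$^{\mathrm{met}}$ by \cref{def:niplevels}.

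Next I would compute the iterated limits. Take $q_k\to q_0$ with $q_k\ne q_0$, and $s_l\to0$ in $(0,1]$. For fixed $l$, continuity of $\Gamma$ at the coordinate $s_l$ gives $\lim_k g_{q_k}(s_l)=G(s_l)$, and then $\lim_l G(s_l)=G(0^+)$. For fixed $k$, the function $g_{q_k}$ is continuous at $0$, so $\lim_l g_{q_k}(s_l)=g_{q_k}(0)$. Then $\lim_k g_{q_k}(0)=G(0)$ by pointwise continuity of $\Gamma$ at the coordinate $0$. This proves the displayed formulas.

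For the stability equivalence, suppose first that $G(0)\ne G(0^+)$. Since $q_0$ is not isolated in the metrizable space $Q$, such sequences $(q_k)$ exist, and the displayed formulas then exhibit unequal iterated limits, so $\cR$ is unstable. Conversely, if $G$ is continuous at $0$, then every member of the closure $\cR\cup\{G\}$ is continuous on $[0,1]$. The family is uniformly bounded, since $\Gamma(Q)$ is compact in the product and so each coordinate is bounded, but I must check whether that bound is uniform in $s$. If it is not, I would instead invoke the stability criterion in the pointwise-bounded form of Grothendieck's double limit theorem quoted at the start of \cref{sec:ssb}. Either way, Grothendieck's theorem gives stability, since the closure consists of continuous functions. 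I expect this uniform-boundedness point to be the only delicate step. If Grothendieck's theorem cannot be applied cleanly, I would verify the double limit condition directly: given $r_k\in\cR$ and $y_l\to y$, pass to subsequences with $r_k\to f$ in the closure, a continuous function, and $y_l\to y$. Both iterated limits then equal $f(y)$, using the continuity of $f$ for one order and, for the other, equicontinuity-type control coming from compactness. Failing that, I would fall back on Grothendieck.
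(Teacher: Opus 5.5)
Your proofs of NIP$^{\mathrm{met}}$, of the two iterated limits and of instability when $G(0)\ne G(0^+)$ are correct and agree with the paper. There is one difference. The paper gets metrizability from \cref{prop:countable-probes}: every member of $\overline\cR=\cR\cup\{G\}$ is continuous on $(0,1]$, so two distinct members differ at a rational point of $[0,1]$, possibly at $0$. Your continuous-image argument is equally valid. Your worry that \cref{prop:countable-probes} would need $q\mapsto g_q$ to be injective is unfounded, since that proposition concerns the compact set of functions, not its parametrization.

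The gap is in the converse, stability when $G$ is continuous at $0$, which you leave unresolved. Your suspicion about uniform boundedness is justified. Take $Q=\{0\}\cup\{1/n:n\ge1\}$, $q_0=0$, $g_0=0$, and let $g_{1/n}$ be a tent of height $n$ at $1/n$, with pairwise disjoint supports away from $0$. Then $q\mapsto g_q$ is pointwise continuous, every hypothesis holds, $G=0$ is continuous, and $\sup_{q,s}|g_q(s)|=\infty$. So the uniformly bounded form of Grothendieck's theorem quoted at the start of \cref{sec:ssb} does not apply as stated. The same example with height $1$ is not equicontinuous at $0$, so the ``equicontinuity-type control coming from compactness'' you invoke for one order does not exist. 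It is also not needed.

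The paper argues directly. Given $p_k\in Q\setminus\{q_0\}$ and $t_l\in[0,1]$ with both iterated limits of $g_{p_k}(t_l)$ existing, pass to subsequences with $p_k\to p_*$ in $Q$ and $t_l\to t_*$; this changes neither existing iterated limit.
\begin{itemize}
\item First order: $\lim_k g_{p_k}(t_l)=g_{p_*}(t_l)$ by pointwise continuity in $q$, and $g_{p_*}(t_l)\to g_{p_*}(t_*)$ because $g_{p_*}$ is continuous; this holds even if $p_*=q_0$, since $G$ is now continuous.
\item Second order: $\lim_l g_{p_k}(t_l)=g_{p_k}(t_*)$ by continuity of the single function $g_{p_k}$, and $g_{p_k}(t_*)\to g_{p_*}(t_*)$ by pointwise continuity in $q$ at the one coordinate $t_*$.
\end{itemize}
Both iterated limits equal $g_{p_*}(t_*)$, with no boundedness or equicontinuity used. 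This is exactly the easy half of Grothendieck's criterion (pointwise closure inside $C(X)$ for compact $X$ implies interchange of limits). That is why your pointwise-bounded fallback is sound, although the paper does not quote that form.
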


\begin{proof}
  The image of $Q$ is compact, contains $\cR$ and is contained in $\overline\cR$, because $q_0$ is not isolated;
  it is thus $\overline\cR$.
  Its members are continuous on $(0,1]$, so two distinct members differ at a rational point of $[0,1]$, and $\overline\cR$ is metrizable by \cref{prop:countable-probes}.
  The function $G$ is of the first Baire class, as the pointwise limit of the continuous functions that equal $G$ on $[1/n,1]$ and are linear on $[0,1/n]$ with value $G(0)$ at $0$.
  For each $s\in[0,1]$, the function $q\mapsto g_q(s)$ is continuous on the compact space $Q$, so the family $\cR$ is pointwise bounded.
  Every member of its closure is of the first Baire class, and \cref{thm:dichotomy} therefore shows that $\cR$ is NIP;
  its closure is compact metrizable and hence separable, so it is NIP$^{\mathrm{met}}$.

  Suppose first that $G$ is continuous at $0$, so that every $g_q$, including $g_{q_0}$, is continuous on $[0,1]$.
  To verify stability, let $(p_k)_{k\in\NN}$ be a sequence in $Q\setminus\{q_0\}$ and $(t_l)_{l\in\NN}$ a sequence in $[0,1]$ for which both iterated limits of $g_{p_k}(t_l)$ exist as real numbers.
  By compact metrizability, we may pass to subsequences for which $p_k\to p_*$ in $Q$ and $t_l\to t_*$ in $[0,1]$;
  these passages do not change either existing iterated limit.
  Pointwise continuity in the parameter gives $\lim_k g_{p_k}(t_l)=g_{p_*}(t_l)$ for each $l$, while continuity in the probe gives $\lim_l g_{p_k}(t_l)=g_{p_k}(t_*)$ for each $k$.
  Consequently,
  \[
    \lim_l\lim_k g_{p_k}(t_l)
      =\lim_l g_{p_*}(t_l)=g_{p_*}(t_*),\qquad
    \lim_k\lim_l g_{p_k}(t_l)
      =\lim_k g_{p_k}(t_*)=g_{p_*}(t_*).
  \]
  Thus the two iterated limits agree, as required.

  Now take sequences $(q_k)$ and $(s_l)$ as in the statement.
  For each $l$, pointwise continuity in the parameter gives $\lim_k g_{q_k}(s_l)=G(s_l)$;
  for each $k$, continuity of $g_{q_k}$ at $0$ gives $\lim_l g_{q_k}(s_l)=g_{q_k}(0)$.
  It follows that
  \[
    \lim_l\lim_k g_{q_k}(s_l)=\lim_l G(s_l)=G(0^+),\qquad
    \lim_k\lim_l g_{q_k}(s_l)=\lim_k g_{q_k}(0)=G(0).
  \]
  Such a sequence $(q_k)$ exists because $q_0$ is not isolated and $Q$ is metrizable, and we may take $s_l=1/(l+1)$.
  If $G$ is discontinuous at $0$, then $G(0^+)\ne G(0)$;
  the two displayed limits are finite and unequal, so $\cR$ is not stable.
\end{proof}

%#
%### SUBSECTION: The chiral condensate
%#
\subsection{The chiral condensate}
We work with a lattice regularization, in which Euclidean spacetime (spacetime with imaginary time) is replaced by a lattice of points in a box of volume $V$;
we use lattice units, with $\hbar=c=1$ and lattice spacing $1$, in which masses and volumes are pure numbers.
The quarks are described by an anti-self-adjoint Dirac operator $D_V$, as for staggered fermions, a standard discretization of quarks on the lattice;
the Dirac operator governs the motion of the quarks in a given gauge field, the gluon field with which they interact.
For each gauge field in the chosen ensemble, we assume that the spectrum consists of nonzero pairs $\pm i\lambda_n$, indexed by $n$ in a finite set, with $\lambda_n>0$.
This assumption excludes exact zero modes, eigenvectors with eigenvalue $0$;
if a finite-volume operator has $n_0$ such modes, they contribute $n_0/(mV)$ to its condensate at mass $m>0$ before averaging over gauge fields.
The spectral density is averaged over gauge fields with a measure that does not depend on the mass $m$.
Two standard settings have this property:
the quenched theory, in which the quarks do not act back on the gauge field, and the valence condensate at a fixed sea quark mass, where the valence quarks are those of the condensate, whose mass $m$ varies, and the sea quarks are those that act back on the gauge field.
The second is a standard setting for the Banks--Casher relation, whose original form uses a measure that depends on $m$.
For each finite volume $V$, define the condensate $\Sigma_V\colon[0,\infty)\to[0,\infty)$ by the following formula for $m>0$, together with $\Sigma_V(0)=0$:
\[
  \Sigma_V(m)=\Bigl\langle\frac1V\operatorname{Tr}(D_V+m)^{-1}\Bigr\rangle
  =\int\frac{2m}{m^2+\lambda^2}\,d\kappa_V(\lambda),\qquad
  \kappa_V=\Bigl\langle\frac1V\sum_n\delta_{\lambda_n}\Bigr\rangle,
\]
where the angle brackets denote the average over gauge fields and $\delta_{\lambda_n}$ is the point mass at $\lambda_n$.
We assume:
\begin{enumerate}[label=\textup{(C\arabic*)},leftmargin=*]
\item
  each $\kappa_V$ is a positive measure on $(0,\Lambda]$, of total mass at most $M_0$, against which the function $\lambda\mapsto1/\lambda$ is integrable, where the cutoff $\Lambda$ and the bound $M_0$ do not depend on $V$;
\item
  as $V\to\infty$ along a given sequence of volumes, the measure $\kappa_V$ converges weakly on $[0,\Lambda]$ to a measure $\kappa$ that, on some interval $[0,\varepsilon_0]$, has a density $\rho$ continuous at $0$.
\end{enumerate}
The cutoff and the bound on the total mass in (C1) hold on the lattice;
the integrability of $1/\lambda$ holds for each single gauge field, whose spectrum is finite and avoids $0$, and for the average over gauge fields it is an assumption on the finite-volume spectrum near $0$.
Hypothesis (C2) is the physical assumption at issue, the existence of a limiting density $\rho$ of near-zero eigenvalues per unit volume, continuous at $0$.
A preparation is a volume $V$, a probe is a mass $m\in[0,1]$, and the response is the restriction $\Sigma_V|_{[0,1]}$.
We retain the notation $\Sigma_V$ for this restriction when discussing the response family.
Each $\Sigma_V$ is continuous on $[0,1]$, with value $0$ at $0$, by dominated convergence, because $2m/(m^2+\lambda^2)\le1/\lambda$ for $\lambda>0$.

\begin{lemma}\label{lem:poisson}
  Assume \textup{(C1)} and \textup{(C2)}.
  Then $\Sigma_V(m)\to\Sigma(m)=\int\frac{2m}{m^2+\lambda^2}\,d\kappa(\lambda)$ for every $m>0$, the function $\Sigma$ is continuous on $(0,1]$, and $\Sigma(m)\to\pi \rho(0)$ as $m\to0^+$.
\end{lemma}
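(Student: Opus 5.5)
For fixed $m>0$, the plan is to apply weak convergence (C2) directly. Since the integrand $\lambda\mapsto 2m/(m^2+\lambda^2)$ is continuous and bounded on $[0,\Lambda]$, and each $\kappa_V$ is carried by $(0,\Lambda]\subseteq[0,\Lambda]$, we get $\Sigma_V(m)\to\int\frac{2m}{m^2+\lambda^2}\,d\kappa(\lambda)=\Sigma(m)$. The weak limit $\kappa$ is a positive measure of total mass at most $M_0$, because the constant $1$ is a test function. Continuity of $\Sigma$ on $(0,1]$ then follows from dominated convergence in $m$. On a compact subinterval $[m_1,1]$ with $m_1>0$, the integrands are continuous in $m$ and bounded by $2/m_1$, which is $\kappa$-integrable since $\kappa$ is finite.

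For the limit as $m\to0^+$, split the integral at $\varepsilon\in(0,\varepsilon_0]$.

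The far part, over $(\varepsilon,\Lambda]$, is bounded by $\frac{2m}{\varepsilon^2}M_0$ and so tends to $0$ as $m\to0^+$.

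For the near part, over $[0,\varepsilon]$, use the density $\rho$ from (C2) and write
\[
  \int_0^\varepsilon\frac{2m}{m^2+\lambda^2}\rho(\lambda)\,d\lambda
  =\rho(0)\int_0^\varepsilon\frac{2m}{m^2+\lambda^2}\,d\lambda
  +\int_0^\varepsilon\frac{2m}{m^2+\lambda^2}\bigl(\rho(\lambda)-\rho(0)\bigr)\,d\lambda .
\]
The first term equals $2\rho(0)\arctan(\varepsilon/m)$, which tends to $\pi\rho(0)$ as $m\to0^+$. Since $\int_0^\infty\frac{2m}{m^2+\lambda^2}\,d\lambda=\pi$, the second term is at most $\pi\sup_{[0,\varepsilon]}|\rho-\rho(0)|$ in absolute value. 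By continuity of $\rho$ at $0$, this bound is small once $\varepsilon$ is small. Hence $\limsup_{m\to0^+}|\Sigma(m)-\pi\rho(0)|\le\pi\sup_{[0,\varepsilon]}|\rho-\rho(0)|$ for every $\varepsilon\in(0,\varepsilon_0]$, and letting $\varepsilon\to0$ gives $\Sigma(m)\to\pi\rho(0)$.

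The main obstacle is a point of interpretation rather than a computation: what "$\kappa$ has a density on $[0,\varepsilon_0]$" means. I would read it as saying that $\kappa$ restricted to $[0,\varepsilon_0]$ is $\rho\,d\lambda$. In particular this gives $\kappa(\{0\})=0$, and it requires $\rho$ to be bounded near $0$, which follows from continuity at $0$ after shrinking $\varepsilon_0$ if necessary. An atom of $\kappa$ at $0$ would make the near integral blow up like $2\kappa(\{0\})/m$. This is exactly what the density hypothesis rules out, so I would state explicitly that it is used here.
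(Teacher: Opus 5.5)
Your proof is correct and follows the paper's route: weak convergence gives the pointwise limit, dominated convergence gives continuity on $(0,1]$, and the limit at $0^+$ comes from splitting at $\varepsilon$, with the far part bounded by $2mM_0/\varepsilon^2$ and no atom at $0$ because $\kappa$ has a density there. The only difference is cosmetic. For the near part, the paper substitutes $\lambda=mu$ and applies dominated convergence to $\int_0^{\varepsilon/m}\frac{2}{1+u^2}\rho(mu)\,du$, whereas you subtract $\rho(0)$, compute $2\arctan(\varepsilon/m)$ explicitly, and control the remainder by the modulus of continuity of $\rho$ at $0$.
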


\begin{proof}
  For fixed $m>0$, the function $\lambda\mapsto2m/(m^2+\lambda^2)$ is continuous and bounded on $[0,\Lambda]$, so the first statement is weak convergence, and the second follows from dominated convergence.
  For the third, choose $\varepsilon\in(0,\varepsilon_0]$ so small that $\rho$ is bounded on $[0,\varepsilon]$;
  since $\kappa$ has the density $\rho$ there, it has no atom at $0$.
  On $(\varepsilon,\Lambda]$ the integrand is at most $2m/\varepsilon^2$, and the contribution of this interval therefore tends to $0$.
  On $[0,\varepsilon]$, the substitution $\lambda=mu$ shows that
  \[
    \int_0^\varepsilon\frac{2m}{m^2+\lambda^2}\rho(\lambda)\,d\lambda
    =\int_0^{\varepsilon/m}\frac{2}{1+u^2}\rho(mu)\,du,
  \]
  and the right side tends to $\rho(0)\int_0^\infty\frac{2\,du}{1+u^2}=\pi \rho(0)$, by dominated convergence.
\end{proof}

\begin{keytheorem}\label{thm:chiral}
  Assume \textup{(C1)} and \textup{(C2)}.
  \begin{enumerate}[label=\textup{(\alph*)}]
  \item
    The iterated limits of the condensate are
    \[
      \lim_{m\to0}\lim_{V\to\infty}\Sigma_V(m)=\pi \rho(0),\qquad
      \lim_{V\to\infty}\lim_{m\to0}\Sigma_V(m)=0.
    \]
  \item
    The family $\{\Sigma_V\}$ is stable if and only if $\rho(0)=0$.
  \item
    In all cases the family is NIP$^{\mathrm{met}}$.
  \end{enumerate}
\end{keytheorem}

\begin{proof}
  Let $Q$ consist of the numbers $1/V$ along the sequence of volumes, together with $q_0=0$, let $g_{1/V}=\Sigma_V$, and let $G(m)=\Sigma(m)$ for $m>0$ and $G(0)=0$.
  By \cref{lem:poisson} and $\Sigma_V(0)=0$, the map $q\mapsto g_q$ is continuous, the function $G$ is continuous on $(0,1]$, and $G(0^+)=\pi \rho(0)$.
  The theorem now follows from \cref{lem:single}.
\end{proof}

Letting the volume grow before the mass vanishes keeps a density of near-zero modes, and the condensate is $\pi \rho(0)$;
the other order sends the mass to $0$ in each finite volume, and the integrability in \textup{(C1)} makes its condensate tend to $0$.
Thus the family is stable exactly when the limiting valence condensate $\pi\rho(0)$ of the chosen ensemble vanishes.
When $\Sigma_0=\pi \rho(0)>0$, the microscopic regime with $\mu=mV\Sigma_0$ fixed concerns spectral data at the scale $1/V$.
Chiral random-matrix ensembles provide explicit condensate profiles in this regime, with their dependence on the microscopic spectrum, flavor content and topology~\cite{Leutwyler-Smilga:1992,Verbaarschot-Wettig:2000}.
The condensate family has a single exceptional probe value, $m=0$.
In the Ising model, by contrast, the offsets produce a continuum of transition points, and the intermediate values at each of them are values of idealized responses, obtained by tuning the offset as the volume grows;
a condensate family with a continuum of transition points is sought in \cref{sec:questions}.

%#
%### SUBSECTION: The current response
%#
\subsection{The current response}
The Scalapino--White--Zhang criterion concerns the current response of a clean many-body system, one free of disorder, at zero temperature.
Let $H$ be its Hamiltonian, with ground state $\Omega$ of energy $E_0$;
fix two perpendicular directions in space, and, for a real number
$q>0$, let $J_q$ be the transverse current operator at the wave vector of length $q$ along the first direction, the component along the second direction of the Fourier transform of the current density.
For a finite system of volume $V$, the operator $J_q$ is extensive.
We let $j_q=J_q/\sqrt V$, so that its spectral measure has weights normalized per unit volume.
For an infinite system, the objects below are the infinite-volume limits of the corresponding finite-volume spectral measures, and \cref{thm:sc} below is a statement about these measures.
We use units in which $\hbar=1$, so that frequencies are energies, and fixed units of length and energy, such as the lattice spacing and the hopping amplitude of a lattice model.
We consider the case in which the spectral measures of the opposite Fourier currents $J_q$ and $J_{-q}=J_q^*$ agree.
We restrict the parameters to $q\in(0,1]$ and $\omega\in[0,1]$.
Under the integrability assumption (S1) below, the two Lehmann resolvents combine to give the paramagnetic current response at imaginary frequency $i\omega$ in the form
\[
  \Pi(q,i\omega)=\int\frac{2\nu}{\nu^2+\omega^2}\,d\mu_q(\nu),
\]
where $\mu_q$ is the spectral measure of $H-E_0$ in the vector $j_q\Omega$, restricted to $(0,\infty)$;
for a finite system, $\mu_q=\sum_k|\langle\psi_k,j_q\Omega\rangle|^2\delta_{E_k-E_0}$, the sum running over the
$k$ with $E_k>E_0$, where the vectors $\psi_k$, for $k$ in a countable index set, form an orthonormal basis of eigenvectors of $H$, with $H\psi_k=E_k\psi_k$.
The total response on the same parameter domain is $\mathcal{K}(q,i\omega)=D_0-\Pi(q,i\omega)$, where
$D_0$ is the diamagnetic term, the expectation, per unit volume, of minus the kinetic energy of the bonds parallel to the current on a lattice, and, in the continuum, the ratio of the density of the charge carriers to their mass.
Following Scalapino, White and Zhang~\cite{Scalapino-White-Zhang:1993}, we define the \emph{superfluid weight} and the \emph{Drude weight}, up to their factor~$\pi e^2$, where $e$ is the charge of the carriers, by
\[
  D_s=\lim_{q\to0}\mathcal{K}(q,0),\qquad D=\lim_{\omega\to0}\lim_{q\to0}\mathcal{K}(q,i\omega).
\]
The superfluid weight is the transverse static limit of Scalapino, White and Zhang.
Their Drude weight, and that of Kohn~\cite{Kohn:1964}, is taken at $q=0$ and continued from imaginary to real frequency;
it agrees with $D$ when the spectral measures $\mu_q$ depend continuously on $q$ at $q=0$ away from zero energy and when, in addition, the imaginary-frequency limit agrees with the real-frequency one.
In an insulator $D_s=D=0$;
in a clean normal metal $D>0=D_s$;
positive $D_s$ describes a superconducting response.
A gap in the transverse current spectral weight that persists as $q\to0$ is a sufficient condition for $D=D_s$~\cite{Scalapino-White-Zhang:1993}.
In a disordered metal the Drude weight vanishes, and at positive temperature the normal fluid, the thermally excited carriers that do not condense, yields $D>D_s$ even in a superconductor;
both cases lie outside the setting of this subsection, a clean system at zero temperature.

It is convenient to absorb a factor into the measure:
let $d\zeta_q(\nu)=\frac2\nu\,d\mu_q(\nu)$, so that
\[
  \Pi(q,i\omega)=\int\frac{\nu^2}{\nu^2+\omega^2}\,d\zeta_q(\nu),\qquad
  \Pi(q,0)=\zeta_q\bigl((0,\infty)\bigr).
\]
We assume:
\begin{enumerate}[label=\textup{(S\arabic*)},leftmargin=*]
\item
  each $\zeta_q$, for $0<q\le1$, is a positive measure on $(0,\Lambda]$ of total mass at most $D_0$, where the bandwidth $\Lambda$ does not depend on $q$;
\item
  there is a measure $\zeta_0$ on $[0,\Lambda]$ such that, for every continuous function
  $h$ on $[0,\Lambda]$, the map $q\mapsto\int h\,d\zeta_q$, with each $\zeta_q$ regarded as a measure on $[0,\Lambda]$, is continuous on $(0,1]$ and converges to $\int h\,d\zeta_0$ as $q\to0$.
\end{enumerate}
The bound in (S1) is the statement that $\mathcal{K}(q,0)\ge0$, that is, that the ground state is a local minimum of the energy as a function of a static vector potential.
This is an assumption:
for lattice fermions it can fail, as Lieb’s flux-phase theorem shows~\cite{Lieb:1994}.
The compact support in (S1) is a convenience:
since $\nu^2/(\nu^2+\omega^2)$ extends continuously to $[0,\infty]$, the proof below works for tight families of measures on $(0,\infty)$.
The limit $\zeta_0$ may have an atom at $0$, which carries the weight of current-carrying excitations whose energy tends to $0$ with $q$.
A preparation is the length $q\in(0,1]$ of the wave vector, a probe is a frequency $\omega\in[0,1]$, and the response is $g_q(\omega)=\Pi(q,i\omega)$.

\begin{keytheorem}\label{thm:sc}
  Assume \textup{(S1)} and \textup{(S2)}.
  \begin{enumerate}[label=\textup{(\alph*)}]
  \item
    We have $D_s=D_0-\zeta_0\bigl([0,\Lambda]\bigr)$ and $D=D_0-\zeta_0\bigl((0,\Lambda]\bigr)$, whence $D-D_s=\zeta_0(\{0\})$.
  \item
    The family of responses $\{g_q:0<q\le1\}$ is stable if and only if $D=D_s$.
    In particular, insulating responses with $D=D_s=0$ and superconducting responses satisfying $D=D_s>0$ are stable, whereas clean normal-metal responses with $D>0=D_s$ are unstable.
  \item
    In all cases the family is NIP$^{\mathrm{met}}$.
  \end{enumerate}
\end{keytheorem}

\begin{proof}
  For each $\omega\in(0,1]$, define $\chi_\omega\colon[0,\Lambda]\to[0,1]$ by $\chi_\omega(\nu)=\nu^2/(\nu^2+\omega^2)$.
  This function is continuous and satisfies $\chi_\omega(0)=0$.
  Each $g_q$ is continuous on $[0,1]$, by dominated convergence.
  By (S2), applied to the continuous function $1$, we have $g_q(0)=\zeta_q((0,\Lambda])\to\zeta_0([0,\Lambda])$, and the formula for $D_s$ follows;
  applied to $\chi_\omega$ with $\omega>0$, (S2) shows that $g_q(\omega)\to G(\omega)=\int_{(0,\Lambda]}\chi_\omega\,d\zeta_0$, and, by dominated convergence, $G(\omega)\to\zeta_0((0,\Lambda])$ as $\omega\to0$, which yields the formula for $D$.
  This proves~(a).
  Let $g_0=G$, with $G$ extended to $0$ by $G(0)=\zeta_0([0,\Lambda])$.
  The map $[0,1]\to[0,D_0]^{[0,1]}$, given by $q\mapsto g_q$, is continuous for the pointwise topology, by (S2) and the computation just made, and $G$ is continuous on $(0,1]$ with $G(0)-G(0^+)=\zeta_0(\{0\})=D-D_s$.
  Parts (b) and~(c) now follow from \cref{lem:single}.
\end{proof}

Under \textup{(S1)} and \textup{(S2)}, the discrepancy $D-D_s$ is exactly the mass at zero of the limiting weighted current spectral measure $\zeta_0$.
A positive mass at zero therefore has two equivalent descriptions:
the two orders of limits of the current response disagree, and the family of responses is unstable.
When this mass vanishes, the limiting response is continuous and the family is stable.
In both cases the closure remains metrizable, so the distinction between stable and unstable current responses lies within NIP$^{\mathrm{met}}$.

%#
%## SECTION: Questions
%#
\section{Questions}\label{sec:questions}

We close with problems concerning the possible response compacta, the role of noncommutative probe algebras, and extensions of the classification to further physical response families.
We also ask for finite combinatorial tests of the finer topological levels and for a proof without forcing of Todorčević’s step for points without a countable neighborhood base.

\begin{enumerate}[label=\textup{(\arabic*)},leftmargin=*]
\item
  \emph{Light cones.}
  In \cref{thm:causal} Bob detects at a fixed time.
  Locate the causal order in the classification when Bob’s detection time is also uncertain, and in spacetimes of dimension $1+d$, with $d\ge2$ dimensions of space.
\item
  \emph{The minimal families of the box.}
  By \cref{thm:hepta-natural}, the comparison of positions in a box realizes the minimal families $D_3$, $D_4$ and $D_7$ of the heptachotomy.
  Does it realize $D_6$?
  Which of $D_6$ and $D_7$ does \cref{thm:orders}(b) produce from a given dense sequence?
\item
\label{ques:noncommutative} \emph{Noncommutative probes.}
  Find a natural observable that is NIP but unstable and whose probe algebra is not commutative, and locate it in the classification.
  Can noncommutativity produce a phenomenon in the classification that observables with commutative probe algebras cannot produce, for instance in the minimal families of the heptachotomy that arise from a given dense sequence of preparations?
\item
  \emph{Orientations.}
  By \cref{thm:phases}, the comparison of phases on a circle is NIP but not NIP$^{\mathrm{fc}}$.
  Locate in the classification the comparison of directions on a sphere, “Bob’s direction lies in the open hemisphere centered at Alice’s”, and relative orientations in higher dimensions.
\item
\label{ques:finitary-levels} \emph{Finitary levels.}
  Is there a finitary, combinatorial characterization of NIP$^{\mathrm{fc}}$, NIP$^{\mathrm{hs}}$ and NIP$^{\mathrm{met}}$, in the spirit of the fat-shattering dimension for uniform NIP, sharpening \cref{prop:levels-separation}?
\item
  \emph{Quantum phase transitions.}
  Extend \cref{thm:qssb} to the ground state of the Ising model in a transverse field, where the transition is driven by the transverse field $\Gamma$ at zero temperature.
\item
  \emph{A continuum of transitions.}
  Is there a natural family of chiral condensates with a continuum of transition points, obtained for instance from twisted boundary conditions or a chemical potential, that is NIP$^{\mathrm{fc}}$ but not NIP$^{\mathrm{hs}}$?
\item
  \emph{Positive temperature.}
  Extend \cref{thm:sc} to positive temperature, where the Matsubara frequencies, the imaginary frequencies at which thermal response functions are evaluated, form a discrete set, and the Drude weight is obtained by extrapolation.
\item
  \emph{Unbounded formulas.}
  In Keisler’s setting, classify the energy functional
  $\rho\mapsto\Tr(\rho H)$ of a Hamiltonian $H$, taking as shards (the parts of the state space on which every predicate stays within prescribed bounds) the sets of states that satisfy the energy constraints $\Tr(\rho H)\le E$, for real numbers $E$.
\item
  \emph{Forcing.}
  Can Todorčević’s step that produces a copy of $A(2^\NN)$ from a point without a countable neighborhood base be proved without forcing?
\end{enumerate}

%#
%## SECTION: Acknowledgments
%#
\section*{Acknowledgments}
We thank Stevo Todorčević for pointing out to us that \cref{thm:convex} is known, in the larger class of locally connected Rosenthal compacta, and for the references that establish it.

%#
%## SECTION: Declarations
%#
\section*{Declarations}
\noindent\emph{Competing interests.}
The authors have no competing interests to
declare.

\noindent\emph{Data availability.}
No data were generated or analyzed in this
work.

\bibliographystyle{amsalpha}
\bibliography{references}

\end{document}

%% file: venn_body.tex
\begin{tikzpicture}[font=\scriptsize, x=1cm, y=1cm]
  \colorlet{vtext}{deepblue!85!black}
  % fills, outermost first
  \fill[urlpurple!8]  \VennRing{-7.8}{7.8}{4.6};
  \fill[urlpurple!17] \VennRing{-7.6}{5.4}{4.25};
  \fill[midblue!13]   \VennRing{-7.4}{3.0}{3.9};
  \fill[midblue!19]   \VennRing{-7.2}{0.5}{3.6};
  \fill[midblue!25]   \VennRing{-7.05}{-0.1}{3.35};
  \fill[deepblue!62]  \VennRing{-6.75}{-3.05}{2.75};
  \fill[deepblue!85]  (-4.9,-1.1) ellipse (1.65 and 1.4);
  % boundaries
  \draw[urlpurple!75, line width=0.7pt] \VennRing{-7.8}{7.8}{4.6};
  \draw[urlpurple, line width=0.7pt]    \VennRing{-7.6}{5.4}{4.25};
  \draw[midblue, line width=0.7pt]      \VennRing{-7.4}{3.0}{3.9};
  \draw[midblue, line width=0.8pt, dash pattern=on 2.5pt off 1.5pt] \VennRing{-7.2}{0.5}{3.6};
  \draw[deepblue, line width=0.7pt]     \VennRing{-7.05}{-0.1}{3.35};
  \draw[deepblue, line width=0.8pt]     \VennRing{-6.75}{-3.05}{2.75};
  \draw[white, line width=0.6pt]        (-4.9,-1.1) ellipse (1.65 and 1.4);
  % class names
  \node[urlpurple!90!black, font=\footnotesize\bfseries] at (6.0,2.0) {not NIP};
  \node[urlpurple!90!black, font=\footnotesize\bfseries] at (3.5,2.05) {NIP};
  \node[vtext, font=\footnotesize\bfseries] at (0.95,2.35) {NIP$^{\mathrm{fc}}$};
  \node[vtext, font=\footnotesize\bfseries] at (-2.3,2.25) {NIP$^{\mathrm{met}}$};
  \node[white, font=\footnotesize\bfseries] at (-4.9,2.6) {stable};
  % contents
  \node[text width=1.9cm, align=center, vtext] at (6.25,-0.15)
  {\textit{$\beta\NN$ inside}\\[3pt] interference fringes of ever finer spacing\\[3pt] a scanned spin chain, roles exchanged};
  \node[text width=1.9cm, align=center, vtext] at (4.05,-0.15)
  {\textit{$\hat{A}(2^\NN)$ inside}\\[3pt] light cones shrinking to points\\[3pt] phases on a circle\\[3pt] templates growing without end\\[4pt] \textbf{$D_5$}};
  \node[text width=1.95cm, align=center, vtext] at (1.5,-0.2)
  {\textit{duplicate inside}\\[3pt] positions in a box\\[1pt] quadratures of light\\[3pt] Ising with offsets, classical and quantum\\[3pt] \textbf{$D_6$, $D_7$}};
  \node[text width=2.55cm, align=center, vtext] at (-1.55,-0.2)
  {\textit{metrizable, unstable}\\[3pt] photon numbers\\[1pt] a scanned spin chain\\[3pt] nonzero limiting chiral condensate, clean metal, ordered Ising\\[4pt] \textbf{$D_1$, $D_2$}};
  \node[text width=3.1cm, align=center, white] at (-4.9,1.35)
  {\textit{continuous}\\[2pt] persistently gapped\\ transverse-current response\\[1pt] disordered Ising};
  \node[white, font=\scriptsize\bfseries, align=center] at (-4.9,-0.15) {uniformly\\ stable};
  \node[text width=2.5cm, align=center, white] at (-4.9,-1.55)
  {\textit{(Part~I)}\\[2pt] swap test\\[1pt] Hong--Ou--Mandel coincidences\\[1pt] equality tests};
  % NIP^{hs} callout
  \draw[vtext, -{Stealth[length=4pt]}, line width=0.5pt] (-2.6,-4.85) -- (-2.2,-3.35);
  \node[text width=8.6cm, align=center, vtext, anchor=north] at (-1.0,-4.85)
  {dashed ring: \textbf{NIP$^{\mathrm{hs}}$}, the split interval. For observables it merges with NIP$^{\mathrm{met}}$,
    since mixtures force metrizability; it lives on in trees of preparations: \textbf{$D_3$, $D_4$}};
\end{tikzpicture}

%%% Local Variables:
%%% mode: LaTeX
%%% TeX-master: "Qbipartite_2_NIP"
%%% End:

%% file: hepta_body.tex
\begin{tikzpicture}[font=\scriptsize, x=1cm, y=1cm]
  \colorlet{vtext}{deepblue!85!black}
  \colorlet{bump}{urlpurple!85!black}
  % column backgrounds and headers
  \HeptaCol{0}{\strut\\ metrizable}{midblue!25}{NIP$^{\mathrm{met}}$}
  \HeptaCol{3.85}{hereditarily\\ separable}{midblue!18}{NIP$^{\mathrm{hs}}$}
  \HeptaCol{7.7}{first\\ countable}{midblue!11}{NIP$^{\mathrm{fc}}$}
  \HeptaCol{11.55}{not first\\ countable}{urlpurple!15}{NIP}
  % ---------------- D1: escape to infinity ----------------
  \begin{scope}[shift={(0,-0.45)}]
    \HeptaPanel{0}{0}{$D_1$}{escape to infinity}{photon numbers}
    \foreach \k/\y in {0/-1.75,1/-1.45,2/-1.18,3/-0.94,4/-0.73}{
        \draw[deepblue!70, line width=0.6pt] (1.15,\y) -- (2.5,\y);
        \fill[bump] (1.825,\y) circle (1.6pt);}
    \draw[-{Stealth[length=4pt]}, deepblue, line width=0.6pt] (1.825,-0.65) -- (1.825,-0.3);
    \node[vtext] at (2.15,-0.36) {$\infty$};
  \end{scope}
  % ---------------- D2: infinite configurations ----------------
  \begin{scope}[shift={(0,-4.2)}]
    \HeptaPanel{0}{0}{$D_2$}{infinite configurations}{a spin chain read site by site}
    \def\bits{1,0,1,1,0,1,0,0,1,1}
    \foreach \row/\len in {0/3,1/5,2/7,3/10}{
    \pgfmathsetmacro{\y}{-0.5-0.38*\row}
    \foreach \b [count=\j from 0] in \bits{
    \ifnum\j<\len
      \pgfmathsetmacro{\x}{0.45+0.27*\j}
      \ifnum\b=0 \draw[-{Stealth[length=2.2pt]}, bump, line width=0.55pt] (\x,\y-0.12) -- (\x,\y+0.12);
      \else \draw[-{Stealth[length=2.2pt]}, deepblue, line width=0.55pt] (\x,\y+0.12) -- (\x,\y-0.12); \fi
    \fi}}
    \node[vtext] at (3.3,-1.64) {$\cdots$};
  \end{scope}
  % ---------------- D3: approach from the right ----------------
  \begin{scope}[shift={(3.85,-0.45)}]
    \HeptaPanel{0}{0}{$D_3$}{approach from the right}{a particle in a box}
    \draw[deepblue, line width=0.6pt] (0.35,-1.6) -- (3.3,-1.6);
    \fill[deepblue] (1.2,-1.6) circle (1.5pt); \node[vtext] at (1.2,-1.82) {$a$};
    \HeptaBump{2.5}{0.5}{0.3}{-1.6}\HeptaBump{1.8}{0.28}{0.55}{-1.6}\HeptaBump{1.45}{0.15}{0.95}{-1.6}
    \draw[-{Stealth[length=3.5pt]}, vtext, line width=0.5pt] (3.0,-0.38) -- (1.35,-0.38);
  \end{scope}
  % ---------------- D4: approach from the left ----------------
  \begin{scope}[shift={(3.85,-4.2)}]
    \HeptaPanel{0}{0}{$D_4$}{approach from the left}{a particle in a box}
    \draw[deepblue, line width=0.6pt] (0.35,-1.6) -- (3.3,-1.6);
    \fill[deepblue] (2.45,-1.6) circle (1.5pt); \node[vtext] at (2.45,-1.82) {$a$};
    \HeptaBump{0.65}{0.5}{0.3}{-1.6}\HeptaBump{1.57}{0.28}{0.55}{-1.6}\HeptaBump{2.05}{0.15}{0.95}{-1.6}
    \draw[-{Stealth[length=3.5pt]}, vtext, line width=0.5pt] (0.65,-0.38) -- (2.3,-0.38);
  \end{scope}
  % ---------------- D6: twins over a Cantor set ----------------
  \begin{scope}[shift={(7.7,-0.45)}]
    \HeptaPanel{0}{0}{$D_6$}{twins over a Cantor set}{a spin chain read by sites and by templates}
    \draw[deepblue, line width=0.6pt] (0.35,-1.6) -- (3.3,-1.6);
    \foreach \x in {0.5,0.72,1.18,1.4,2.25,2.47,2.93,3.15}{
        \fill[deepblue] (\x,-1.6) circle (1.5pt);
        \draw[bump, densely dotted, line width=0.4pt] (\x,-1.53) -- (\x,-0.95);
        \draw[bump, fill=white, line width=0.6pt] (\x,-0.88) circle (1.8pt);}
    \node[vtext,align=center] at (1.825,-0.45) {twins, isolated\\in the duplicate};
  \end{scope}
  % ---------------- D7: approach from both sides ----------------
  \begin{scope}[shift={(7.7,-4.2)}]
    \HeptaPanel{0}{0}{$D_7$}{approach from both sides at once}{a particle in a box, half on each side}
    \draw[deepblue, line width=0.6pt] (0.35,-1.6) -- (3.3,-1.6);
    \fill[deepblue] (1.825,-1.6) circle (1.5pt); \node[vtext] at (1.825,-1.82) {$a$};
    \HeptaBump{0.55}{0.4}{0.22}{-1.6}\HeptaBump{2.7}{0.4}{0.22}{-1.6}
    \HeptaBump{1.2}{0.22}{0.4}{-1.6}\HeptaBump{2.23}{0.22}{0.4}{-1.6}
    \HeptaBump{1.55}{0.12}{0.7}{-1.6}\HeptaBump{2.0}{0.12}{0.7}{-1.6}
    \node[vtext] at (1.825,-0.33) {$\tfrac12+\tfrac12$};
    \draw[-{Stealth[length=3.5pt]}, vtext, line width=0.5pt] (0.6,-0.6) -- (1.45,-0.6);
    \draw[-{Stealth[length=3.5pt]}, vtext, line width=0.5pt] (3.05,-0.6) -- (2.2,-0.6);
  \end{scope}
  % ---------------- D5: light cones shrinking to a point ----------------
  \begin{scope}[shift={(11.55,-2.35)}]
    \HeptaPanel{0}{0}{$D_5$}{a light cone shrinks to a point}{the causal order of spacetime; templates}
    \draw[deepblue, line width=0.7pt] (0.3,-0.55) -- (3.35,-0.55);
    \node[vtext] at (1.825,-0.3) {Bob's slice};
    \foreach \h/\c in {0.95/40,0.6/60,0.32/85}{
        \fill[bump!\c] (1.825,-0.55-\h) circle (1.4pt);
        \draw[bump!\c, line width=0.5pt] (1.825-\h,-0.55) -- (1.825,-0.55-\h) -- (1.825+\h,-0.55);
        \draw[bump!\c, line width=1.6pt] (1.825-\h,-0.55) -- (1.825+\h,-0.55);}
    \fill[deepblue] (1.825,-0.55) circle (1.5pt);
  \end{scope}
\end{tikzpicture}

%%% Local Variables:
%%% mode: LaTeX
%%% TeX-master: "Qbipartite_2_NIP"
%%% End:

%% file: tree_body.tex
\begin{tikzpicture}[font=\scriptsize, x=1cm, y=1cm,
  inner/.style={align=center, inner sep=2pt, text=deepblue!85!black, font=\scriptsize\itshape},
  leaf/.style={rounded corners=3pt, draw=deepblue!70, line width=0.5pt, fill=white,
      text=black, align=center, text width=3.1cm, inner sep=4pt},
  edge/.style={draw=deepblue!70, line width=0.5pt, -{Stealth[length=3.5pt]}}]
  \node[rounded corners=3pt, draw=deepblue!80, line width=0.6pt, align=center, inner sep=5pt,
    font=\footnotesize] (root) at (0,0) {$K$ a separable Rosenthal compactum};
  \node[leaf, anchor=north] (met) at (-4.6,-1.3)
  {\textbf{metrizable}\\[1pt] NIP$^{\mathrm{met}}$\\[1pt] $D_1$ if $K$ is infinite;\\ $D_2$ under the hypotheses of\\ \cref{fact:tricho}(a)};
  \node[inner, anchor=north] (nmet) at (2.6,-1.3) {not metrizable};
  \node[leaf, dashed, anchor=north, text width=3.4cm] (hs) at (-0.4,-2.9)
    {\textbf{hereditarily separable}\\[1pt] NIP$^{\mathrm{hs}}$; contains $\hat{S}_+(2^\NN)$\\ or $\hat{S}_-(2^\NN)$\\[1pt] $D_3$ or $D_4$\\[1pt] \textit{empty for full observable compacta}};
  \node[inner, anchor=north] (nhs) at (5.2,-2.9) {not hereditarily separable};
  \node[leaf, anchor=north, text width=3.0cm] (fc) at (3.4,-4.6)
  {\textbf{first countable}\\[1pt] NIP$^{\mathrm{fc}}$; contains $\hat{D}(2^\NN)$\\ or $\hat{D}(S(2^\NN))$\\[1pt] $D_6$ or $D_7$};
  \node[leaf, anchor=north, text width=3.0cm] (nfc) at (7.0,-4.6)
  {\textbf{not first countable}\\[1pt] NIP; contains $\hat{A}(2^\NN)$\\[1pt] $D_5$};
  \draw[edge] (root.south) -- (met.north);
  \draw[edge] (root.south) -- (nmet.north);
  \draw[edge] (nmet.south) -- (hs.north);
  \draw[edge] (nmet.south) -- (nhs.north);
  \draw[edge] (nhs.south) -- (fc.north);
  \draw[edge] (nhs.south) -- (nfc.north);
\end{tikzpicture}

%%% Local Variables:
%%% mode: LaTeX
%%% TeX-master: "Qbipartite_2_NIP"
%%% End:

%% file: Qbipartite_2_NIP.bbl
\newcommand{\etalchar}[1]{$^{#1}$}
\def\cprime{$'$}
\providecommand{\bysame}{\leavevmode\hbox to3em{\hrulefill}\thinspace}
\providecommand{\MR}{\relax\ifhmode\unskip\space\fi MR }
% \MRhref is called by the amsart/book/proc definition of \MR.
\providecommand{\MRhref}[2]{%
  \href{http://www.ams.org/mathscinet-getitem?mr=#1}{#2}
}
\providecommand{\href}[2]{#2}
\begin{thebibliography}{ABDCBH97}

\bibitem[Aar07]{Aaronson:2007}
Scott Aaronson, \emph{The learnability of quantum states}, Proceedings of the
  Royal Society A \textbf{463} (2007), 3089--3114.

\bibitem[ABDCBH97]{Alon-BenDavid-CesaBianchi-Haussler:1997}
Noga Alon, Shai Ben-David, Nicol{\`o} Cesa-Bianchi, and David Haussler,
  \emph{Scale-sensitive dimensions, uniform convergence, and learnability},
  Journal of the ACM \textbf{44} (1997), no.~4, 615--631.

\bibitem[ADK08]{Argyros-Dodos-Kanellopoulos:2008b}
Spiros~A. Argyros, Pandelis Dodos, and Vassilis Kanellopoulos, \emph{A
  classification of separable {Rosenthal} compacta and its applications},
  Dissertationes Mathematicae \textbf{449} (2008), 1--55.

\bibitem[Ai92]{Arkhangelskii:1992}
A.~V. Arkhangelski\u~i, \emph{Topological function spaces}, Mathematics and its
  Applications (Soviet Series), vol.~78, Kluwer Academic Publishers Group,
  Dordrecht, 1992, Translated from the Russian by R. A. M. Hoksbergen.
  \MR{1144519}

\bibitem[Asa70]{Asano:1970b}
Taro Asano, \emph{Lee--{Y}ang theorem and the {G}riffiths inequality for the
  anisotropic {H}eisenberg ferromagnet}, Physical Review Letters \textbf{24}
  (1970), 1409--1411.

\bibitem[Ass83]{Assouad:1983}
Patrick Assouad, \emph{Densit\'e et dimension}, Annales de l'Institut Fourier
  \textbf{33} (1983), no.~3, 233--282.

\bibitem[BC80]{Banks-Casher:1980}
Tom Banks and A.~Casher, \emph{Chiral symmetry breaking in confining theories},
  Nuclear Physics B \textbf{169} (1980), 103--125.

\bibitem[{Ben}08]{Yaacov:2008}
Ita{\"i} {Ben Yaacov}, \emph{Continuous first order logic for unbounded metric
  structures}, Journal of Mathematical Logic \textbf{8} (2008), no.~2,
  197--223.

\bibitem[{Ben}09]{Yaacov:2009}
\bysame, \emph{Continuous and random {V}apnik--{C}hervonenkis classes}, Israel
  Journal of Mathematics \textbf{173} (2009), 309--333.

\bibitem[BFT78]{Bourgain-Fremlin-Talagrand:1978}
J.~Bourgain, D.~H. Fremlin, and M.~Talagrand, \emph{Pointwise compact sets of
  {B}aire-measurable functions}, Amer. J. Math. \textbf{100} (1978), no.~4,
  845--886. \MR{509077}

\bibitem[BG09]{Bjornberg-Grimmett:2009}
Jakob~E. Bj{\"o}rnberg and Geoffrey~R. Grimmett, \emph{The phase transition of
  the quantum {I}sing model is sharp}, Journal of Statistical Physics
  \textbf{136} (2009), 231--273, arXiv:0901.0328.

\bibitem[BK90]{Borgs-Kotecky:1990}
Christian Borgs and Roman Koteck{\'y}, \emph{A rigorous theory of finite-size
  scaling at first-order phase transitions}, Journal of Statistical Physics
  \textbf{61} (1990), 79--119.

\bibitem[BYU10]{BenYaacov-Usvyatsov:2010}
Ita{\"{\i}} Ben~Yaacov and Alexander Usvyatsov, \emph{Continuous first order
  logic and local stability}, Trans. Amer. Math. Soc. \textbf{362} (2010),
  no.~10, 5213--5259. \MR{2657678}

\bibitem[CHY15]{Cheng-Hsieh-Yeh:2015}
Hao-Chung Cheng, Min-Hsiu Hsieh, and Ping-Cheng Yeh, \emph{The learnability of
  unknown quantum measurements}, 2015, arXiv:1501.00559.

\bibitem[Deb13]{Debs:2013}
Gabriel Debs, \emph{Descriptive aspects of {Rosenthal} compacta}, Recent
  Progress in General Topology III, Atlantis Press, 2013, pp.~205--227.

\bibitem[DI26]{Duenez-Iovino:2026a}
E.~Due{\~n}ez and J.~Iovino, \emph{Classification of bipartite observables {I}:
  stable and uniformly stable observables}, preprint; arXiv:2610.11002, 2026.

\bibitem[DIMW{\etalchar{+}}26]{Duenez-Iovino-MatosWiederhold-Salvetti-Tall:2026}
E.~Due{\~n}ez, J.~Iovino, T.~Matos-Wiederhold, L.~Salvetti, and F.~D. Tall,
  \emph{Complexity of deep computations via topology of function spaces}, 2026,
  arXiv:2601.00528.

\bibitem[FHS13]{Farah-Hart-Sherman:2013}
Ilijas Farah, Bradd Hart, and David Sherman, \emph{Model theory of operator
  algebras {I}: Stability}, Bull. Lond. Math. Soc. \textbf{45} (2013), no.~4,
  825--838.

\bibitem[FR12]{Frohlich-Rodriguez:2012b}
J{\"u}rg Fr{\"o}hlich and Pierre-Fran{\c{c}}ois Rodr{\'\i}guez, \emph{Some
  applications of the {L}ee--{Y}ang theorem}, Journal of Mathematical Physics
  \textbf{53} (2012), 095218, arXiv:1205.6643.

\bibitem[FV17]{Friedli-Velenik:2017}
Sacha Friedli and Yvan Velenik, \emph{Statistical mechanics of lattice systems:
  A concrete mathematical introduction}, Cambridge University Press, 2017.

\bibitem[Gan40]{Gantmacher:1940}
V.~Gantmacher, \emph{{\"Uber} schwache totalstetige {Operatoren}}, Mat. Sbornik
  N.S. \textbf{7(49)} (1940), 301--308.

\bibitem[Gin69]{Ginibre:1969}
Jean Ginibre, \emph{Existence of phase transitions for quantum lattice
  systems}, Communications in Mathematical Physics \textbf{14} (1969),
  205--234.

\bibitem[GM22]{Glasner-Megrelishvili:2022}
Eli Glasner and Michael Megrelishvili, \emph{{Todorčević'} trichotomy and a
  hierarchy in the class of tame dynamical systems}, Transactions of the
  American Mathematical Society \textbf{375} (2022), no.~7, 4513--4548.

\bibitem[God80]{Godefroy:1980}
Gilles Godefroy, \emph{Compacts de {R}osenthal}, Pacific J. Math. \textbf{91}
  (1980), no.~2, 293--306. \MR{615679}

\bibitem[Gri64]{Griffiths:1964}
Robert~B. Griffiths, \emph{Peierls proof of spontaneous magnetization in a
  two-dimensional {I}sing ferromagnet}, Physical Review \textbf{136} (1964),
  A437--A439.

\bibitem[Gro52]{Grothendieck:1952}
A.~Grothendieck, \emph{Crit\`eres de compacit\'{e} dans les espaces
  fonctionnels g\'{e}n\'{e}raux}, Amer. J. Math. \textbf{74} (1952), 168--186.
  \MR{47313}

\bibitem[Gru88]{Gruenhage:1988b}
Gary Gruenhage, \emph{{[TITLE TO BE CONFIRMED]}}, General Topology and its
  Relations to Modern Analysis and Algebra {VI} ({P}roc. {S}ixth {P}rague
  {T}opological {S}ymposium, 1986) (Zden{\v{e}}k Frol{\'\i}k, ed.), Research
  and Exposition in Mathematics, vol.~16, Heldermann, Berlin, 1988.

\bibitem[GS84]{Gurevich-Schmitt:1984}
Yuri Gurevich and Peter~H. Schmitt, \emph{The theory of ordered abelian groups
  does not have the independence property}, Transactions of the American
  Mathematical Society \textbf{284} (1984), no.~1, 171--182.

\bibitem[HL66]{Halpern-Lauchli:1966}
J.~D. Halpern and H.~Läuchli, \emph{A partition theorem}, Transactions of the
  American Mathematical Society \textbf{124} (1966), 360--367.

\bibitem[HP11]{Hrushovski-Pillay:2011b}
Ehud Hrushovski and Anand Pillay, \emph{On {NIP} and invariant measures},
  Journal of the European Mathematical Society \textbf{13} (2011), no.~4,
  1005--1061.

\bibitem[Kad51]{Kadison:1951a}
Richard~V. Kadison, \emph{A representation theory for commutative topological
  algebra}, Memoirs of the American Mathematical Society, vol.~7, American
  Mathematical Society, 1951.

\bibitem[Kei23]{Keisler:2023}
H.~Jerome Keisler, \emph{Model theory for real-valued structures}, Beyond
  First-Order Model Theory, Vol.\ 2 (Jos{\'e} Iovino, ed.), CRC Press, 2023,
  arXiv:2005.11851, pp.~43--101.

\bibitem[Koh64]{Kohn:1964}
Walter Kohn, \emph{Theory of the insulating state}, Physical Review
  \textbf{133} (1964), A171--A181.

\bibitem[KPS86]{Knight-Pillay-Steinhorn:1986}
Julia~F. Knight, Anand Pillay, and Charles Steinhorn, \emph{Definable sets in
  ordered structures. {II}}, Transactions of the American Mathematical Society
  \textbf{295} (1986), no.~2, 593--605.

\bibitem[KR83]{Kadison-Ringrose:1983a}
Richard~V. Kadison and John~R. Ringrose, \emph{Fundamentals of the theory of
  operator algebras, vol.~i}, Academic Press, 1983.

\bibitem[Lie94]{Lieb:1994}
Elliott~H. Lieb, \emph{Flux phase of the half-filled band}, Physical Review
  Letters \textbf{73} (1994), 2158--2161.

\bibitem[LS81]{Lieb-Sokal:1981}
Elliott~H. Lieb and Alan~D. Sokal, \emph{A general {L}ee--{Y}ang theorem for
  one-component and multicomponent ferromagnets}, Communications in
  Mathematical Physics \textbf{80} (1981), no.~2, 153--179.

\bibitem[LS92]{Leutwyler-Smilga:1992}
H.~Leutwyler and A.~Smilga, \emph{Spectrum of {D}irac operator and role of
  winding number in {QCD}}, Physical Review D \textbf{46} (1992), 5607--5632.

\bibitem[LSV93]{Louveau-Shelah-Velickovic:1993}
Alain Louveau, Saharon Shelah, and Boban Veličkovi{\'{c}}, \emph{Borel
  partitions of infinite subtrees of a perfect tree}, Annals of Pure and
  Applied Logic \textbf{63} (1993), no.~3, 271--281.

\bibitem[LY52]{Lee-Yang:1952b}
T.~D. Lee and C.~N. Yang, \emph{Statistical theory of equations of state and
  phase transitions. {II}. {L}attice gas and {I}sing model}, Physical Review
  \textbf{87} (1952), 410--419.

\bibitem[Mil81]{Milliken:1981}
Keith Milliken, \emph{A partition theorem for the infinite subtrees of a tree},
  Transactions of the American Mathematical Society \textbf{263} (1981),
  137--148.

\bibitem[New74]{Newman:1974}
Charles~M. Newman, \emph{Zeros of the partition function for generalized
  {I}sing systems}, Communications on Pure and Applied Mathematics \textbf{27}
  (1974), 143--159.

\bibitem[Pei36]{Peierls:1936}
Rudolf Peierls, \emph{On {I}sing's model of ferromagnetism}, Proceedings of the
  Cambridge Philosophical Society \textbf{32} (1936), 477--481.

\bibitem[PS86]{Pillay-Steinhorn:1986a}
Anand Pillay and Charles Steinhorn, \emph{Definable sets in ordered structures.
  {I}}, Transactions of the American Mathematical Society \textbf{295} (1986),
  no.~2, 565--592.

\bibitem[Rod21]{Rodriguez:2021}
Jos{\'e} Rodr{\'i}guez, \emph{$\varepsilon$-weakly precompact sets in {B}anach
  spaces}, 2021, arXiv:2102.01546.

\bibitem[Ros74]{Rosenthal:1974}
H.~P. Rosenthal, \emph{A characterization of {B}anach spaces containing
  $l\sp{1}$}, Proc. Nat. Acad. Sci. U.S.A. \textbf{71} (1974), 2411--2413.
  \MR{358307}

\bibitem[Sau72]{Sauer:1972}
N.~Sauer, \emph{On the density of families of sets}, J. Combinatorial Theory
  Ser. A \textbf{13} (1972), 145--147. \MR{307902}

\bibitem[SF71]{Suzuki-Fisher:1971b}
Masuo Suzuki and Michael~E. Fisher, \emph{Zeros of the partition function for
  the {H}eisenberg, ferroelectric, and general {I}sing models}, Journal of
  Mathematical Physics \textbf{12} (1971), 235--246.

\bibitem[SG73]{Simon-Griffiths:1973}
Barry Simon and Robert~B. Griffiths, \emph{The $(\phi^4)_2$ field theory as a
  classical {I}sing model}, Communications in Mathematical Physics \textbf{33}
  (1973), 145--164.

\bibitem[She71]{Shelah:1971b}
Saharon Shelah, \emph{Stability, the f.c.p., and superstability; model
  theoretic properties of formulas in first order theory}, Ann. Math. Logic
  \textbf{3} (1971), no.~3, 271--362. \MR{0317926}

\bibitem[She72]{Shelah:1972b}
\bysame, \emph{A combinatorial problem; stability and order for models and
  theories in infinitary languages}, Pacific J. Math. \textbf{41} (1972),
  247--261. \MR{307903}

\bibitem[She90]{Shelah:1990a}
S.~Shelah, \emph{Classification theory and the number of nonisomorphic models},
  second ed., Studies in Logic and the Foundations of Mathematics, vol.~92,
  North-Holland Publishing Co., Amsterdam, 1990. \MR{MR1083551 (91k:03085)}

\bibitem[Sim93]{Simon:1993}
Barry Simon, \emph{The statistical mechanics of lattice gases}, vol.~I,
  Princeton University Press, Princeton, NJ, 1993.

\bibitem[Sim15a]{Simon:2015b}
Pierre Simon, \emph{A guide to {NIP} theories}, Lecture Notes in Logic,
  vol.~44, Association for Symbolic Logic, Chicago, IL; Cambridge Scientific
  Publishers, Cambridge, 2015. \MR{3560428}

\bibitem[Sim15b]{Simon:2015a}
\bysame, \emph{Rosenthal compacta and {NIP} formulas}, Fund. Math. \textbf{231}
  (2015), no.~1, 81--92. \MR{3361236}

\bibitem[Suz76]{Suzuki:1976b}
Masuo Suzuki, \emph{Relationship between $d$-dimensional quantal spin systems
  and $(d+1)$-dimensional {I}sing systems: equivalence, critical exponents and
  systematic approximants of the partition function and spin correlations},
  Progress of Theoretical Physics \textbf{56} (1976), 1454--1469.

\bibitem[SWZ93]{Scalapino-White-Zhang:1993}
Douglas~J. Scalapino, Steven~R. White, and Shoucheng Zhang, \emph{Insulator,
  metal, or superconductor: The criteria}, Physical Review B \textbf{47}
  (1993), 7995--8007.

\bibitem[Tit39]{Titchmarsh:1939}
E.~C. Titchmarsh, \emph{The theory of functions}, second ed., Oxford University
  Press, Oxford, 1939.

\bibitem[Tka11]{Tkachuk:2011}
Vladimir~V. Tkachuk, \emph{A {$C_p$}-theory problem book: Topological and
  function spaces}, Problem Books in Mathematics, Springer, 2011.

\bibitem[Tod97]{Todorcevic:1997}
Stevo Todorcevic, \emph{Topics in topology}, Lecture Notes in Mathematics, vol.
  1652, Springer Berlin, Heidelberg, 1997.

\bibitem[Tod99]{Todorcevic:1999}
Stevo Todor{\v{c}}evi\'{c}, \emph{Compact subsets of the first {B}aire class},
  J. Amer. Math. Soc. \textbf{12} (1999), no.~4, 1179--1212. \MR{1685782}

\bibitem[Val84]{Valiant:1984}
L.~G. Valiant, \emph{A theory of the learnable}, Commun. ACM \textbf{27}
  (1984), no.~11, 1134–1142.

\bibitem[VC71]{Vapnik-Chervonenkis:1971}
Vladimir~N Vapnik and Alexey~Ya Chervonenkis, \emph{On the uniform convergence
  of relative frequencies of events to their probabilities}, Theory of
  Probability \& Its Applications \textbf{16} (1971), no.~2, 264--280.

\bibitem[VC74]{Vapnik-Chervonenkis:1974}
\bysame, \emph{Theory of pattern recognition}, Nauka, Moscow, 1974, German
  Translation: Theorie der Zeichenerkennung, Akademie-Verlag, Berlin, 1979.

\bibitem[VW00]{Verbaarschot-Wettig:2000}
J.~J.~M. Verbaarschot and T.~Wettig, \emph{Random matrix theory and chiral
  symmetry in {QCD}}, Annual Review of Nuclear and Particle Science \textbf{50}
  (2000), 343--410.

\bibitem[Wie49]{Wielandt:1949}
Helmut Wielandt, \emph{{\"U}ber die {U}nbeschr{\"a}nktheit der {O}peratoren der
  {Q}uantenmechanik}, Mathematische Annalen \textbf{121} (1949), 21.

\bibitem[Wil96]{Wilkie:1996}
A.~J. Wilkie, \emph{Model completeness results for expansions of the ordered
  field of real numbers by restricted {P}faffian functions and the exponential
  function}, Journal of the American Mathematical Society \textbf{9} (1996),
  no.~4, 1051--1094.

\bibitem[Win47]{Wintner:1947}
Aurel Wintner, \emph{The unboundedness of quantum-mechanical matrices},
  Physical Review \textbf{71} (1947), 738--739.

\bibitem[Zil16]{Zilber:2016b}
Boris Zilber, \emph{The semantics of the canonical commutation relation}, 2016,
  arXiv:1604.07745.

\end{thebibliography}
